\newcommand{\cmp}{Comm. Math. Phys.~}
\newcommand{\jmp}{J. Math. Phys.~}
\newcommand{\jpa}{J. Phys. A~}
\newcommand{\njp}{New. J. Phys.~}
\newcommand{\prl}{Phys. Rev. Lett.~}
\newcommand{\pra}{Phys. Rev. A~}
\definecolor{myurlcolor}{rgb}{0,0,0.7}
\newcommand{\red}{\textcolor{red}}
\newcommand{\blue}{\textcolor{blue}}
\newcommand{\proj}[1]{| #1\rangle\!\langle #1 |}
\newcommand{\tinyspace}{\mspace{1mu}}
\newcommand{\op}[1]{\operatorname{#1}}
\newcommand{\abs}[1]{\left\lvert\tinyspace #1 \tinyspace\right\rvert}
\newcommand{\norm}[1]{\left\lVert\tinyspace #1 \tinyspace\right\rVert}
\renewcommand{\det}{\operatorname{det}}
\renewcommand{\t}{{\scriptscriptstyle\mathsf{T}}}
\newcommand{\setft}[1]{\mathrm{#1}}
\newcommand{\lin}[1]{\setft{L}\left(#1\right)}
\newcommand{\density}[1]{\setft{D}\left(#1\right)}
\newcommand{\unitary}[1]{\setft{U}\left(#1\right)}
\renewcommand{\vec}{\op{vec}}
\newcommand{\spn}{\op{span}}
\newcommand{\sign}{\op{sign}}
\def\GL{\mathsf{GL}}
\def\SL{\mathsf{SL}}
\def\SU{\mathsf{SU}}
\def\gl{\mathfrak{gl}}
\def\u{\mathfrak{u}}
\def\g{\mathfrak{g}}
\def\liet{\mathfrak{t}}
\def\Ad{\mathrm{Ad}}
\def \dif {\mathrm{d}}
\def \diag {\mathrm{diag}}
\def\complex{\mathbb{C}}
\def\real{\mathbb{R}}
\def\natural{\mathbb{N}}
\def\integer{\mathbb{Z}}
\def\I{\mathbb{1}}
\newenvironment{mylist}[1]{\begin{list}{}{
    \setlength{\leftmargin}{#1}
    \setlength{\rightmargin}{0mm}
    \setlength{\labelsep}{2mm}
    \setlength{\labelwidth}{8mm}
    \setlength{\itemsep}{0mm}}}
    {\end{list}}
\def\ot{\otimes}
\newcommand{\inner}[2]{\langle #1 , #2\rangle}
\newcommand{\iinner}[2]{\langle #1 | #2\rangle}
\newcommand{\out}[2]{| #1\rangle\langle #2 |}
\newcommand{\Inner}[2]{\left\langle #1 , #2\right\rangle}
\newcommand{\Innerm}[3]{\left\langle #1 \left| #2 \right| #3 \right\rangle}
\newcommand{\defeq}{\stackrel{\smash{\textnormal{\tiny def}}}{=}}
\newcommand{\Hom}{\mathrm{Hom}}
\newcommand{\End}{\mathrm{End}}
\newcommand{\pa}[1]{(#1)}
\newcommand{\Pa}[1]{\left(#1\right)}
\newcommand{\Br}[1]{\left[#1\right]}
\newcommand{\set}[1]{\{#1\}}
\newcommand{\Set}[1]{\left\{#1\right\}}
\newcommand{\ket}[1]{|#1\rangle}
\newcommand{\Ket}[1]{\left|#1\right\rangle}
\def\Jamiolkowski{J}
\newcommand{\jam}[1]{\Jamiolkowski\pa{#1}}
\DeclareMathOperator{\trace}{Tr}
\newcommand{\ptr}[2]{\trace_{#1}\pa{#2}}
\newcommand{\Ptr}[2]{\trace_{#1}\Pa{#2}}
\newcommand{\tr}[1]{\ptr{}{#1}}
\newcommand{\Tr}[1]{\Ptr{}{#1}}
\newcommand{\Abs}[1]{\left|\tinyspace#1\tinyspace\right|}
\def\cA{\mathcal{A}}\def\cB{\mathcal{B}}\def\cC{\mathcal{C}}\def\cE{\mathcal{E}}
\def\cF{\mathcal{F}}\def\cH{\mathcal{H}}\def\cI{\mathcal{I}}
\def\cM{\mathcal{M}}\def\cN{\mathcal{N}}\def\cO{\mathcal{O}}
\def\cT{\mathcal{T}}
\def\cV{\mathcal{V}}\def\cX{\mathcal{X}}\def\cY{\mathcal{Y}}
\def\bP{\mathbf{P}}\def\bQ{\mathbf{Q}}
\def\bsA{\boldsymbol{A}}\def\bsB{\boldsymbol{B}}\def\bsC{\boldsymbol{C}}\def\bsD{\boldsymbol{D}}
\def\bsF{\boldsymbol{F}}\def\bsG{\boldsymbol{G}}\def\bsJ{\boldsymbol{J}}
\def\bsM{\boldsymbol{M}}\def\bsN{\boldsymbol{N}}
\def\bsP{\boldsymbol{P}}\def\bsT{\boldsymbol{T}}
\def\bsU{\boldsymbol{U}}\def\bsV{\boldsymbol{V}}\def\bsW{\boldsymbol{W}}\def\bsX{\boldsymbol{X}}\def\bsY{\boldsymbol{Y}}
\def\bsZ{\boldsymbol{Z}}
\def\bsd{\boldsymbol{d}}
\def\bsu{\boldsymbol{u}}\def\bsv{\boldsymbol{v}}
\def\rG{\mathrm{G}}
\def\rL{\mathrm{L}}
\def\rT{\mathrm{T}}
\def\sE{\mathscr{E}}
\def\T{\textsf{T}}
\def\U{\textsf{U}}
\newtheorem{thrm}{Theorem}[section]
\newtheorem{prop}[thrm]{Proposition}
\newtheorem{cor}[thrm]{Corollary}
\theoremstyle{definition}
\newtheorem{definition}[thrm]{Definition}
\newtheorem{remark}[thrm]{Remark}
\newtheorem{exam}[thrm]{Example}
\numberwithin{equation}{section}
\newcounter{questionnumber}
\begin{document}

\title{\bf\large Matrix Integrals over Unitary Groups: An Application of Schur-Weyl Duality}

\author{\blue{Lin Zhang}\footnote{E-mail: godyalin@163.com}\\
  {\it\small School of Science, Hangzhou Dianzi University, Hangzhou 310018,
  PR~China}}

\date{}
\maketitle
\begin{abstract}
The integral formulae pertaining to the unitary group $\U(d)$ have
been comprehensively reviewed, yielding fresh results and innovative
proofs. Central to the derivation of these formulae lies the
employment of Schur-Weyl duality, a classical and powerful theorem
from the representation theory of groups. This duality serves as a
bridge, establishing a profound connection between the
representation theory of finite groups (or permutation groups) and
that of classical Lie groups, specifically the unitary groups. From
the perspective of Schur-Weyl duality, it becomes evident that the
computation of matrix integrals over the unitary group is
intricately intertwined with the so-called \blue{\texttt Weingarten
function}. The explicit evaluation of this function is heavily
dependent on three crucial aspects: firstly, the dimensions of the
irreducible representations of the unitary groups; secondly, the
dimensions of the irreducible representations of permutation groups;
and thirdly, the irreducible characters of permutation groups. For
the first two aspects, we can rely on well-established formulae.
Specifically, the dimensions of irreducible representations of both
unitary and permutation groups can be determined using the
\blue{\texttt hook-length formula} attributed to Frame, Robinson,
and Thrall, as well as the \blue{\texttt hook-content formula}
proposed by Stanley. However, the third aspect poses a more
intricate challenge. Unfortunately, despite significant efforts,
there remains no unifying closed-form formula for the generic
irreducible characters of permutation groups, except for a few
special cases involving particular partitions. Given the
significance of these irreducible characters, it is crucial to have
a comprehensive understanding of them. Fortunately, all the
information pertaining to the irreducible characters belonging to a
given permutation group is encoded in a so-called \blue{\texttt
character table}. For the convenience of researchers and
practitioners, we have included these character tables, encompassing
groups of orders up to six. These tables provide a concise and
accessible summary of the irreducible characters, enabling a deeper
understanding and easier manipulation of the integral formulae over
the unitary group $\U(d)$.
\end{abstract}
\newpage
\tableofcontents
\newpage
\section{Introduction}

This review article focuses primarily on the derivation and
application of useful matrix integrals over the unitary group.
Central to our approach in computing these integrals over the
unitary group $\U(d)$ is the Schur-Weyl duality, a powerful tool
derived from the representation theory of groups. Before delving
into the specifics of Schur-Weyl duality, it is imperative to
briefly introduce its underlying principles and explore its
applications in quantum information theory.

In classical information theory, the method of types serves as a
fundamental tool for executing various tasks, including estimating
probability distributions, randomness concentration, and data
compression. Notably, it has been demonstrated that the Schur basis
can be leveraged to generalize the classical method of types,
thereby enabling us to perform quantum analogues of these tasks.
This generalization is particularly relevant when studying systems
exhibiting permutation symmetry, as the Schur basis offers a natural
framework for such investigations.

The Schur transformation, which is closely related to Schur-Weyl
duality, finds applications in a wide range of quantum information
tasks. For instance, it can be employed to estimate the spectrum of
an unknown mixed state, enabling us to gain insights into the
quantum properties of complex systems. Additionally, the Schur
transformation facilitates universal distortion-free entanglement
concentration using only local operations, a crucial step in quantum
communication and quantum computing. Furthermore, it can be used to
encode information into decoherence-free subsystems, enhancing the
resilience of quantum information against environmental noise.
Beyond these applications, the Schur transformation also finds use
in scenarios where communication is conducted without a shared
reference frame, enabling robust quantum communication even in the
absence of a common reference point. Moreover, it plays a pivotal
role in the universal compression of quantum data, providing
efficient means to reduce the storage requirements for quantum
information.

Explicit results related to Schur-Weyl duality are numerous and
diverse. For instance, Keyl and Werner pioneered the use of
Schur-Weyl duality to estimate the spectrum of an unknown mixed
state $\rho$ with $d$ levels from its $k$-fold product state
\cite{Keyl}. This approach provides a practical means to analyze the
quantum properties of complex systems without full knowledge of
their states. Moreover, Harrow contributed significantly to the
computational aspect of Schur and Clebsch-Gordan transforms,
designing efficient quantum circuits for their implementation
\cite{Harrow,Bacon}. These circuits facilitate the manipulation of
quantum states and enable the execution of complex quantum
algorithms. Christandl, in his series of works
\cite{Christ2006,Christ2012}, employed Schur-Weyl duality to
investigate the intricate structure of multipartite quantum states.
His group-theoretic approach not only offered novel insights into
the entanglement properties of quantum systems but also led to the
derivation of entropy inequalities for von Neumann entropy,
including the strong subadditivity property. Gour, on the other
hand, leveraged Schur-Weyl duality to classify multipartite
entanglement in the finite-dimensional setting \cite{Gour}. This
classification scheme provides a framework for understanding and
manipulating entanglement, a crucial resource in quantum information
processing. Furthermore, generalizations of Schur-Weyl duality have
found applications in quantum estimation tasks. For instance,
Mashhad and Spekkens explored such generalizations and demonstrated
their utility in quantum estimation problems
\cite{Mashhad,Spekkens}. These generalizations broaden the scope of
Schur-Weyl duality, making it a versatile tool for addressing a wide
range of quantum information challenges.

In summary, Schur-Weyl duality has emerged as a pivotal tool in
quantum information theory, not only enabling the derivation of
explicit results but also facilitating the analysis and manipulation
of quantum systems. This duality offers a powerful framework for
understanding and manipulating quantum states, with potential
applications spanning a broad range from spectrum estimation and
entanglement classification to quantum state estimation,
entanglement concentration, and quantum data compression.
Additionally, this review article delves into the intricate
connections between Schur-Weyl duality, matrix integrals over the
unitary group, and their diverse applications in quantum information
theory. By leveraging the Schur basis and Schur transformation, we
acquire valuable insights into quantum systems, opening up new
avenues for addressing complex quantum challenges.

\section{Schur-Weyl duality}

In this section, we give the details of Schur-Weyl duality. A generalization of this duality is obtained within the
framework of the infinite-dimensional $C^*$-algebras \cite{Neeb}. In
order to arrive at Schur-Weyl duality, we need the following
ancillary results, well-known facts in representation theory. We assume familiarity with knowledge of representation theory of a compact Lie group or finite group \cite{Sepanski,Wallach}.

Before proceeding, let us recall some notions concerning von Neumann
algebras. Let $\cH$ be a Hilbert space and $\lin{\cH}$ the all
bounded linear operator defines on $\cH$. Assume that $\cM$ is a
subset of $\lin{\cH}$, we denote its \emph{commutant} $\cM'$ by the
set of all bounded operators on $\cH$ commuting with every operator
in $\cM$, that is
$$
\cM' \defeq \Set{\bsM'\in\lin{\cH}:
[\bsM',\bsM]=\bsM'\bsM-\bsM\bsM'=0\text{\ for all\ }\bsM\in\cM}.
$$
One has
\begin{eqnarray*}
\cM &\subseteq& \cM'' = \cM^{(\mathrm{iv})} = \cM^{(\mathrm{vi})} =
\cdots\\
\cM' &=& \cM''' = \cM^{(\mathrm{v})} = \cM^{(\mathrm{vii})} = \cdots
\end{eqnarray*}

\begin{definition}
A $\ast$-algebra $\cM$ on $\cH$ is said to be a \emph{von Neumann
algebra} if $\cM=\cM''$. The \emph{center} $\cC$ of a von Neumann
algebra $\cM$ is defined by $\cC \defeq \cM\bigcap\cM'$. A von
Neumann algebra is called a \emph{factor} if $\cC=\complex\I$.
\end{definition}

It will be useful throughout this paper to make use of a simple
correspondence between the spaces $\Hom(\cX, \cY)$ and $\cY \ot
\cX$, for given complex Euclidean spaces $\cX$ and $\cY$. The
mapping
\begin{eqnarray}
\vec: \Hom(\cX,\cY) \longrightarrow \cY \ot \cX
\end{eqnarray}
can be defined to be the linear mapping that represents a change of
bases from the standard basis of $\Hom(\cX, \cY)$ to the standard
basis of $\cY \ot \cX$. Specifically,
\begin{eqnarray}
\vec(\out{i}{j}) = \ket{i}\ot\ket{j} = \ket{ij}
\end{eqnarray}
for all $i,j$ at which point the mapping is determined for every
operator $\bsA \in \Hom(\cX, \cY)$ by linearity.

Next we review the Schmidt decomposition for bipartite pure states.
\begin{prop}\label{bi-vector}
Let $\ket{\psi} = \sum_{i,j=1}^{d_A,d_B}\gamma_{ij}\ket{a_i b_j} \in
\cH_A \ot \cH_B$ be a vector in the tensor product of two Hilbert
spaces, where $\set{\ket{a_i}}_{i=1}^{d_A}$ is an orthonormal basis
for $\cH_A$ and $\set{\ket{b_i}}_{i=1}^{d_B}$ is an orthonormal  for
$\cH_B$, respectively. Then there exists an orthonormal basis
$\set{\ket{\alpha_i}}$ of $\cH_A$ and an orthonormal basis
$\set{\ket{\beta_j}}$ of $\cH_B$ such that
\begin{center}
\fcolorbox{purple}{lightgray}{
\parbox{16.3cm}{
\begin{eqnarray}
\ket{\psi} = \sum_{k=1}^R \lambda_k \ket{\alpha_k \beta_k}
\end{eqnarray}}
}
\end{center}
holds, with positive real coefficients $\lambda_k$. The $\lambda_k$
are uniquely determined as the square roots of the eigenvalues of
the matrix $\Gamma\Gamma^\dagger$, where $\Gamma=[\gamma_{ij}]$ is
the matrix formed by the coefficients of $\ket{\psi}$. The number $R
\leqslant \min \set{d_A,d_B}$ is called the Schmidt rank of
$\ket{\psi}$. If the $\lambda_k$ are pairwise different, then also
the $\ket{\alpha_k}$ and $\ket{\beta_k}$ are unique up to a phase.
\end{prop}

\begin{proof}
For the given vector $\ket{\psi} \in \cH_A \ot \cH_B$, there exist a
matrix $\Gamma_{\psi}:\cH_B \to \cH_A$ such that
$\vec(\Gamma_{\psi}) = \ket{\psi}$, where
$\Gamma_{\psi}=[\gamma_{ij}]$. Now by the SVD of a matrix, there
exist positive real $\set{\lambda_k}_{k=1}^R$(singular values of
$\Gamma_{\psi}$), and an orthonormal set
$\set{\ket{\alpha_k}}_{k=1}^R$ in $\cH_{A}$ and an orthonormal set
$\{|\beta_{k}\rangle\}_{k=1}^{R}$ in $\cH_{B}$ such that
$$
\Gamma_{\psi} = \sum_{k=1}^R \lambda_k
\ket{\alpha_k}(\ket{\beta_k})^\t,
$$
which implies that
$$\ket{\psi} = \vec(\Gamma_{\psi}) = \sum_{k=1}^{R} \lambda_k \ket{\alpha_k}\ket{\beta_k},$$
where $R$ is the rank of $\Gamma_{\psi}$. It is clearly that $R
\leqslant \min \set{d_A,d_B}$.
\end{proof}

When the $\vec$ mapping is generalized to multipartite spaces,
caution should be given to the bipartite case (multipartite
situation similarly). Specifically, for given complex Euclidean
spaces $\cX_{A/B}$ and $\cY_{A/B}$,
\begin{eqnarray}
\vec: \Hom(\cX_A \ot \cX_B,\cY_A \ot \cY_B) \longrightarrow \cY_A
\ot \cX_A \ot \cY_B \ot \cX_B
\end{eqnarray}
is defined to be the linear mapping that represents a change of
bases from the standard basis of $\Hom(\cX_A \ot \cX_B,\cY_A \ot
\cY_B)$ to the standard basis of $\cY_A \ot \cX_A \ot \cY_B \ot
\cX_B$. Concretely,
\begin{eqnarray}
\vec(\out{m}{n} \ot \out{\mu}{\nu}): = \ket{mn} \ot \ket{\mu\nu},
\end{eqnarray}
where $\set{\ket{n}}$ is an orthonormal basis for $\cX_A$ and
$\set{\ket{\nu}}$ is an orthonormal basis for $\cX_B$, while
$\set{\ket{m}}$ is an orthonormal basis for $\cY_A$ and
$\set{\ket{\mu}}$ is an orthonormal basis for $\cY_B$. Analogously,
the mapping is determined for every operator $\bsX \in \Hom(\cX_A
\ot \cX_B,\cY_A \ot \cY_B)$ by linearity. Note that if $\bsX = \bsA
\ot \bsB$, where $\bsA \in \Hom(\cX_A,\cY_A)$ and $\bsB \in
\Hom(\cX_B,\cY_B)$, then
\begin{eqnarray}
\vec(\bsA \ot \bsB) = \vec(\bsA)\ot \vec(\bsB).
\end{eqnarray}

\begin{thrm}\label{bi-operator}
Suppose that $\set{\bsA_i}$ and $\set{\bsB_j}$ are some orthonormal
base for $\End(\cH_A)$ and $\End(\cH_B)$, respectively. $\bsW$ is
any operator in $\End(\cH_A \ot \cH_B) \equiv \End(\cH_A) \ot
\End(\cH_B)$ and it can be written as $\bsW =
\sum_{i,j=1}^{d^2_A,d^2_B} \gamma_{ij} \bsA_i \ot \bsB_j$. Then
there exist an orthonormal basis $\set{\bsG^A_i}$ for $\End(\cH_A)$
and an orthonormal basis $\set{\bsG^B_j}$ for $\End(\cH_B)$,
respectively, such that
\begin{center}
\fcolorbox{purple}{lightgray}{
\parbox{16.3cm}{
\begin{eqnarray}
\bsW = \sum_{k=1}^R \lambda_k \bsG^A_k \ot \bsG^B_k
\end{eqnarray}}}
\end{center}
holds, with positive real coefficients $\lambda_k$. The $\lambda_k$
are uniquely determined as the square roots of the eigenvalues of
the matrix $\Gamma\Gamma^\dagger$, where $\Gamma = [\gamma_{ij}]$ is
the matrix formed by the coefficients of $\bsW$ and $\gamma_{ij} =
\Tr{(\bsA_i \ot \bsB_j)^\dagger \bsW}$. The number $R \leqslant \min
\set{d^2_A,d^2_B}$.
\end{thrm}

\begin{proof}
Since $\bsW$ is any operator in $\End(\cH_A \ot \cH_B) \equiv
\End(\cH_A) \ot \End(\cH_B)$, and it can be written as
$$
\bsW = \sum_{i,j=1}^{d^2_A, d^2_B} c_{ij} \bsA_i \ot \bsB_j,
$$
where $\set{\bsA_i}$ and $\set{\bsB_j}$ are some orthonormal base
for $\End(\cH_A)$ and $\End(\cH_B)$, respectively. By the
vectorization of bipartite operators,
$$
\vec(\bsW) = \sum_{i,j=1}^{d^2_A,d^2_B} \gamma_{ij} \vec(\bsA_i) \ot
\vec(\bsB_j) \in \cH_A \ot \cH_A \ot \cH_B \ot \cH_B.
$$
Now write $\cH^{\ot2}_A = \cH_A \ot \cH_A$ and $\cH^{\ot2}_B = \cH_B
\ot \cH_B$, then $\cH_A \ot \cH_A \ot \cH_B \ot \cH_B = \cH^{\ot2}_A
\ot \cH^{\ot2}_B$ can be considered a bipartite space. So employing
the Schmidt decomposition of bipartite vectors gives rise to
$$
\vec(\bsW) = \sum_{k=1}^R \lambda_k \ket{\alpha_k} \ot \ket{\beta_k}
\in \cH^{\ot2}_A \ot \cH^{\ot2}_B,
$$
where $\ket{\alpha_k} \in \cH^{\ot2}_A$ and $\ket{\beta_k} \in
\cH^{\ot2}_B$ for which there exist an orthonormal basis
$\set{\bsG^A_i}$ for $\End(\cH_A)$ and an orthonormal basis
$\set{\bsG^B_j}$ for $\End(\cH_B)$, respectively, such that
$$
\vec(\bsG^A_k) = \ket{\alpha_k}~~\text{and}~~\vec(\bsG^B_k) =
\ket{\beta_k},
$$
which implies that
$$
\vec(\bsW) = \sum_{k=1}^R \lambda_k \vec(\bsG^A_k) \ot
\vec(\bsG^B_k) \Longleftrightarrow \bsW = \sum_{k=1}^R \lambda_k
\bsG^A_k \ot \bsG^B_k.
$$
Clearly, $R$ is the rank of $\bsW$ and
$R\leqslant\min\set{d^2_A,d^2_B}$.
\end{proof}

\begin{cor}\label{sct22}
With the above notations, if $\bsW$ is a Hermite operator, then
$\set{\bsG^{A/B}_k}_{k=1}^R$ can be required as two collections of
orthonormal Hermite matrices.
\end{cor}

\begin{prop}
Let $V$ and $W$ be finite dimensional complex vector spaces. If $\cM
\subseteq \End(V)$ and $\cN \subseteq \End(W)$ are von Neumann
algebras, then
\begin{center}
\fcolorbox{purple}{lightgray}{
\parbox{6cm}{
$$
(\cM\ot\cN)'=\cM'\ot\cN'.
$$
}}
\end{center}
\end{prop}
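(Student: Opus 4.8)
The plan is to reduce the statement to the single-factor case, in which one tensor leg carries the full endomorphism algebra, and then to intersect the two resulting commutants. One inclusion is immediate: if $A\in\cM'$ and $B\in\cN'$, then $A\ot B$ commutes with every elementary tensor $M\ot N$ ($M\in\cM$, $N\in\cN$), since $(A\ot B)(M\ot N)=AM\ot BN=MA\ot NB=(M\ot N)(A\ot B)$; as such tensors span $\cM\ot\cN$, this gives $\cM'\ot\cN'\subseteq(\cM\ot\cN)'$. The real content is the reverse inclusion, and the whole argument below is purely algebraic (in finite dimensions the commutant manipulations need no closure hypotheses beyond $\cM$ and $\cN$ being unital).

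First I would record that $\cM\ot\cN$ is generated as an algebra by the two commuting unital subalgebras $\cM\ot\idsup{W}$ and $\idsup{V}\ot\cN$, because $(M\ot\idsup{W})(\idsup{V}\ot N)=M\ot N$ and such products span $\cM\ot\cN$. Since the commutant of a set equals the commutant of the algebra it generates, and the commutant of a union is the intersection of the commutants, this yields
\[
(\cM\ot\cN)' = (\cM\ot\idsup{W})' \cap (\idsup{V}\ot\cN)'.
\]

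Next I would compute each of these full-leg commutants explicitly. Fixing matrix units $\{F_{kl}\}$ for $\End(W)$ and writing an arbitrary $T\in\End(V\ot W)$ uniquely as $T=\sum_{kl}C_{kl}\ot F_{kl}$ with $C_{kl}\in\End(V)$, the relation $T(M\ot\idsup{W})=(M\ot\idsup{W})T$ holds for all $M\in\cM$ if and only if $C_{kl}M=MC_{kl}$ for every $k,l$, i.e.\ iff each slice $C_{kl}$ lies in $\cM'$. This shows $(\cM\ot\idsup{W})'=\cM'\ot\End(W)$, and symmetrically $(\idsup{V}\ot\cN)'=\End(V)\ot\cN'$. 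Thus the problem becomes showing that the intersection $(\cM'\ot\End(W))\cap(\End(V)\ot\cN')$ collapses to $\cM'\ot\cN'$.

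The hard part will be this final collapse. Given $T$ in the intersection, membership in the first factor lets me expand $T=\sum_r P_r\ot Q_r$, where $\{P_r\}$ is a fixed linear basis of the subspace $\cM'$ and $Q_r\in\End(W)$; it then suffices to prove each $Q_r\in\cN'$. For this I would invoke membership in the second factor: expanding $T=\sum_{ij}E_{ij}\ot D_{ij}$ in matrix units $\{E_{ij}\}$ of $\End(V)$ forces every $D_{ij}\in\cN'$, and comparing the two expansions gives $D_{ij}=\sum_r (P_r)_{ij}\,Q_r$. Because the $P_r$ are linearly independent, their coordinate vectors $\bigl((P_r)_{ij}\bigr)_{ij}$ are linearly independent, so I can choose scalars $\lambda^{ij}_s$ with $\sum_{ij}\lambda^{ij}_s (P_r)_{ij}=\delta_{rs}$; then $Q_s=\sum_{ij}\lambda^{ij}_s D_{ij}$ is a linear combination of elements of the linear subspace $\cN'$, hence $Q_s\in\cN'$. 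This delivers the reverse inclusion $(\cM\ot\cN)'\subseteq\cM'\ot\cN'$ and completes the proof.
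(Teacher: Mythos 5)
Your proof is correct, and it takes a genuinely different route from the paper's. The paper also starts from the easy inclusion and from the observation that $T\in(\cM\ot\cN)'$ must commute with $M\ot\I_W$ and $\I_V\ot N$, but it then expands $T$ once and for all in an operator--Schmidt decomposition $T=\sum_j\lambda_j A_j\ot B_j$ with \emph{both} families $\set{A_j}$ and $\set{B_j}$ orthonormal; linear independence of the $B_j$ forces $[A_j,M]=0$ and linear independence of the $A_j$ forces $[B_j,N]=0$, so a single expansion certifies membership of both tensor legs in the respective commutants. You instead factor the problem into two one-leg commutant computations, $(\cM\ot\I_W)'=\cM'\ot\End(W)$ and $(\I_V\ot\cN)'=\End(V)\ot\cN'$, and then prove the collapse $(\cM'\ot\End(W))\cap(\End(V)\ot\cN')=\cM'\ot\cN'$ by comparing two expansions of $T$ and solving for the $Q_s$ via the dual functionals $\lambda^{ij}_s$. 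What your version buys is that it needs no inner product on operator space (only linear independence of a chosen basis of $\cM'$ and of the matrix units), and the intermediate identity $(\cA\ot\I)'=\cA'\ot\End(W)$ is a reusable lemma; it is essentially the classical finite-dimensional commutation theorem for tensor products. What the paper's version buys is brevity: the orthonormality built into the operator--Schmidt decomposition lets one expansion do the work of your two, at the cost of quietly invoking the Hilbert--Schmidt structure. Both arguments use only that $\cM$ and $\cN$ are unital subalgebras, which you correctly flag; neither needs any further von Neumann structure in finite dimensions.
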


\begin{proof}
Apparently, $\cM' \ot \cN'\subseteq(\cM\ot\cN)'$. It suffices to
show that $(\cM\ot\cN)' \subseteq \cM' \ot \cN'$. For arbitrary
$\bsT \in (\cM\ot\cN)'$, by the Operator-Schmidt Decomposition,
\begin{eqnarray*}
\bsT = \sum_j \lambda_j \bsA_j\ot \bsB_j,
\end{eqnarray*}
where $\lambda_j\geqslant0$, and $\bsA_j$ and $\bsB_j$ are
orthonormal bases of $\End(\complex^m)$ and $\End(\complex^n)$,
respectively. Now for arbitrary $\bsM\in\cM$ and $\bsN\in\cN$,
$\bsM\ot\I_W, \I_V\ot \bsN\in\cM\ot\cN$, it follows that
\begin{eqnarray*}
[\bsT, \bsM\ot\I_W] = 0 = [\bsT,\I_V\ot \bsN].
\end{eqnarray*}
That is,
\begin{eqnarray*}
\sum_j \lambda_j [\bsA_j,\bsM]\ot \bsB_j = 0~~\text{and}~~ \sum_j
\lambda_j \bsA_j\ot [\bsB_j, \bsN] = 0.
\end{eqnarray*}
We drop those terms for which $\lambda_j$ are zero. Thus $\lambda_j$
is positive for all $j$ in the above two equations. Since
$\set{\bsA_j}$ and $\set{\bsB_j}$ are linearly independent,
respectively, it follows that
$$
[\bsA_j,\bsM] = 0~~\text{and}~~[\bsB_j, \bsN] = 0.
$$
This implies that $\bsA_j\in\cM'$ and $\bsB_j\in\cN'$. Therefore
$\bsT\in\cM'\ot\cN'$.
\end{proof}

\begin{prop}[The dual theorem]\label{double-commutants}
Let $V$ be a representation of a finite group $G$ with decomposition
$V \cong \bigoplus_{\alpha\in \widehat G}  V^{\oplus
n_\alpha}_\alpha\cong \bigoplus_{\alpha\in \widehat G} V_\alpha\ot
\complex^{n_\alpha}$, where $\widehat G$ be a complete set of
inequivalent irreps of $G$, and $V_\alpha$'s are irreps of $G$. Let
$\cA$ be the algebra generated by $V$ and $\cB=\cA'$ its commutant.
Then
\begin{center}
\fcolorbox{purple}{lightgray}{
\parbox{16.3cm}{
\begin{eqnarray}
&&\cA \cong \bigoplus_{\alpha\in \widehat G}\End(V_\alpha)
\ot\I_{\complex^{n_\alpha}},\\
&&\cB \cong \bigoplus_{\alpha\in \widehat G}\I_{V_\alpha}
\ot\End(\complex^{n_\alpha}).
\end{eqnarray}}}
\end{center}
Furthermore we have $\cB'=\cA$, where $\cB'$ is the commutant of
$\cB$. That is $\cA = \cA''$ and $\cB = \cB''$. Thus both $\cA$ and
$\cB$ are von Neumann algebras.
\end{prop}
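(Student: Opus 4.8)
The plan is to read off $\cB=\cA'$ directly from Schur's lemma, then to compute $\cB'$ using the commutation relation $(\cM\ot\cN)'=\cM'\ot\cN'$ proved in the preceding proposition, and finally to identify $\cA$ with $\cB'$ via Burnside's density theorem. First I would reduce to the case of a \emph{unitary} representation: averaging any inner product over $G$ produces a $G$-invariant one, so $\rho$ may be taken unitary, whence $\rho(g)^{\dagger}=\rho(g^{-1})\in\cA$ and $\cA$ is a unital $*$-subalgebra of $\End(V)$. Fixing a basis adapted to the isotypic decomposition $V\cong\bigoplus_{\alpha\in\widehat G}V_\alpha\ot\complex^{n_\alpha}$, each generator takes the block form $\rho(g)=\bigoplus_\alpha\rho_\alpha(g)\ot\I_{\complex^{n_\alpha}}$, so every element of $\cA$ already lies in $\bigoplus_\alpha\End(V_\alpha)\ot\I_{\complex^{n_\alpha}}$.

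Next I would compute the commutant. Since $\cA$ is generated by $\set{\rho(g):g\in G}$, we have $\cA'=\set{T\in\End(V):T\rho(g)=\rho(g)T~\forall g}=\End_G(V)$, the space of $G$-intertwiners. Writing $T$ in blocks according to the decomposition, the component carrying $V_\beta\ot\complex^{n_\beta}$ into $V_\alpha\ot\complex^{n_\alpha}$ is itself an intertwiner and so lies in $\Hom_G(V_\beta,V_\alpha)\ot\Hom(\complex^{n_\beta},\complex^{n_\alpha})$. By Schur's lemma $\Hom_G(V_\beta,V_\alpha)=0$ for $\alpha\neq\beta$ and $\Hom_G(V_\alpha,V_\alpha)=\complex\,\I_{V_\alpha}$, so $T=\bigoplus_\alpha\I_{V_\alpha}\ot M_\alpha$ with each $M_\alpha\in\End(\complex^{n_\alpha})$ arbitrary. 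This yields the asserted form $\cB=\cA'=\bigoplus_\alpha\I_{V_\alpha}\ot\End(\complex^{n_\alpha})$.

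To obtain $\cB'$, observe that the isotypic projections $P_\alpha=\I_{V_\alpha}\ot\I_{\complex^{n_\alpha}}$ belong to $\cB$, so any $S\in\cB'$ commutes with every $P_\alpha$ and is therefore block diagonal. On the block $\End(V_\alpha)\ot\End(\complex^{n_\alpha})$ the operator $S$ commutes with $\I_{V_\alpha}\ot\End(\complex^{n_\alpha})$, whose commutant is $\End(V_\alpha)\ot\I_{\complex^{n_\alpha}}$ by the relation $(\cM\ot\cN)'=\cM'\ot\cN'$ applied with $\cM=\complex\,\I_{V_\alpha}$ and $\cN=\End(\complex^{n_\alpha})$. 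Hence $\cB'=\bigoplus_\alpha\End(V_\alpha)\ot\I_{\complex^{n_\alpha}}$, which is precisely the claimed form of $\cA$.

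It remains to establish the equality $\cA=\cB'$; the inclusion $\cA\subseteq\cB'=\cA''$ is automatic, and I expect the reverse inclusion to be the main obstacle, since it is the only step not reducible to Schur's lemma and tensor-commutant bookkeeping. Here I would invoke Burnside's density theorem \cite{Sepanski,Wallach}: because each $V_\alpha$ is irreducible, the image of the group algebra $\complex[G]$ under $\rho_\alpha$ is \emph{all} of $\End(V_\alpha)$, and since the $V_\alpha$ are pairwise inequivalent the induced map $\complex[G]\to\bigoplus_\alpha\End(V_\alpha)$ is surjective (Wedderburn). Consequently $\cA$ fills the whole of $\bigoplus_\alpha\End(V_\alpha)\ot\I_{\complex^{n_\alpha}}=\cB'$, giving $\cA=\cB'=\cA''$ and likewise $\cB=\cB''$. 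Both are $*$-closed unital subalgebras of the finite-dimensional algebra $\End(V)$, hence von Neumann algebras, completing the argument.
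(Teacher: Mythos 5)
Your proof is correct, but it reaches the two key identifications by a genuinely different route than the paper. The paper's central device is the Schur orthogonality relations for matrix coefficients: it averages $\overline{V_{\alpha,ij}(g)}V(g)$ over $G$ to exhibit each matrix unit $E_{\alpha,ij}\ot\I_{\complex^{n_\alpha}}$ explicitly as an element of $\cA$, which gives $\cA=\bigoplus_\alpha\End(V_\alpha)\ot\I_{\complex^{n_\alpha}}$ constructively; the description of $\cB$ is then read off from this concrete form of $\cA$ (using the isotypic projections and the tensor-commutant lemma, essentially as you do for $\cB'$). You instead compute $\cB=\cA'$ first, by a block-intertwiner analysis and Schur's lemma, and you obtain the surjectivity of $\cA$ onto $\bigoplus_\alpha\End(V_\alpha)\ot\I_{\complex^{n_\alpha}}$ by citing Burnside's density theorem together with the Wedderburn surjectivity of $\complex[G]\to\bigoplus_\alpha\End(V_\alpha)$ for pairwise inequivalent irreps. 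Both routes are sound: the paper's buys an explicit averaging formula (in effect a projection of $\End(V)$ onto $\cA$) at the cost of invoking orthogonality of matrix coefficients, while yours is shorter and more structural but outsources the only nontrivial surjectivity to Burnside/Wedderburn --- and you correctly isolate that as the one step not reducible to Schur's lemma plus bookkeeping. One small point worth writing out: the claim that a $G$-map $V_\beta\ot\complex^{n_\beta}\to V_\alpha\ot\complex^{n_\alpha}$ lies in $\Hom_G(V_\beta,V_\alpha)\ot\Hom(\complex^{n_\beta},\complex^{n_\alpha})$ needs the same linear-independence trick used in the paper's preceding proposition (expand in a basis of the multiplicity factor and conclude each component on the first factor intertwines); it is standard, but it should be said rather than assumed.
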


\begin{proof}
It is easy to see that
$$
\frac{d_\alpha}{\abs{G}}\sum_{g\in
G}\overline{V_{\alpha,ij}(g)}V(g)\in \cA.
$$
That is because $\cA$ is the linear algebra generated by $V$. Since
$V$ can be decomposed into the direct sum of irreducible components
$V_\alpha$'s, by the orthogonality of the functions $V_{\alpha,ij}$,
we get
\begin{eqnarray*}
\frac{d_\alpha}{\abs{G}}\sum_{g\in G}\overline{V_{\alpha,ij}(g)}V(g)
&\cong& \frac{d_\alpha}{\abs{G}}\sum_{g\in
G}\overline{V_{\alpha,ij}(g)}\Pa{\bigoplus_{\beta\in \widehat
G}V_{\beta}(g)\ot \I_{\complex^{n_\beta}}}\\
&=& \bigoplus_{\beta\in \widehat
G}\Pa{\frac{d_\alpha}{\abs{G}}\sum_{g\in
G}\overline{V_{\alpha,ij}(g)}V_{\beta}(g)}\ot
\I_{\complex^{n_\beta}}\\
&=& \bigoplus_{\beta\in \widehat
G}\Pa{\frac{d_\alpha}{\abs{G}}\sum_{g\in
G}\overline{V_{\alpha,ij}(g)}\sum_{k,l}V_{\beta,kl}(g)E_{\beta,kl}}\ot
\I_{\complex^{n_\beta}}\\
&=& \bigoplus_{\beta\in \widehat
G}\sum_{k,l}\Pa{\frac{d_\alpha}{\abs{G}}\sum_{g\in
G}\overline{V_{\alpha,ij}(g)}V_{\beta,kl}(g)}E_{\beta,kl}\ot
\I_{\complex^{n_\beta}} \\
&=& E_{\alpha, ij}\ot\I_{\complex^{n_\alpha}},
\end{eqnarray*}
where
$$
\frac{d_\alpha}{\abs{G}}\sum_{g\in
G}\overline{V_{\alpha,ij}(g)}V_{\beta,kl}(g)=(V_{\alpha,ij},V_{\beta,kl})=\delta_{(\alpha,ij),(\beta,kl)},
$$
implying $E_{\alpha, ij}\ot\I_{\complex^{n_\alpha}}\in \cA$, hence
$\End(V_{\alpha})\ot\I_{\complex^{n_\alpha}}\subseteq \cA$ by the
linearity. Now
$$
\cA = \spn_\complex\set{V(g): g\in G} \cong \bigoplus_{\alpha\in
\widehat G}\spn_\complex\set{V_{\alpha}(g)\ot
\I_{\complex^{n_\alpha}}} = \bigoplus_{\alpha\in \widehat
G}\End(V_{\alpha})\ot\I_{\complex^{n_\alpha}}.
$$
Clearly $\bigoplus_{\alpha\in \widehat
G}\I_{V_{\alpha}}\ot\End(\complex^{n_\alpha})\subseteq \cA' = \cB$.
To see that every element in $\cB$ is of this form, i.e.
$\cB\subseteq\bigoplus_{\alpha\in \widehat
G}\I_{V_{\alpha}}\ot\End(\complex^{n_\alpha})$. Consider a
projection $P_\alpha$ onto $V_\alpha\ot\complex^{n_\alpha}$. The
projectors $P_\alpha$ form a resolution of the identity
($\I_{\cA}=\oplus_\alpha P_\alpha$) and $P_\alpha\in \cA$. Since
$\cA'=\cB$, it follows that any $\bsB\in\cB$ must commute with
$P_\alpha$: $P_\alpha \bsB = \bsB P_\alpha$. This leads to
$$
\bsB = \Pa{\bigoplus_\alpha P_\alpha}\bsB = \bigoplus_\alpha
P_\alpha \bsB P_\alpha = \bigoplus_\alpha \bsB_\alpha.
$$
Moreover, $\bsB_\alpha:=P_\alpha\bsB P_\alpha\in
(\End(V_{\alpha})\ot\I_{\complex^{n_\alpha}})' =
\I_{V_{\alpha}}\ot\End(\complex^{n_\alpha})$, thus it must be of the
form $\bsB_\alpha = \I_{V_\alpha}\ot \hat\bsB_{\alpha}$ for some
$\hat\bsB_\alpha\in \End(\complex^{n_\alpha})$. This means that
$$
\bsB = \bigoplus_\alpha \bsB_\alpha = \bigoplus_\alpha
\I_{V_\alpha}\ot \hat\bsB_{\alpha}\in \bigoplus_{\alpha\in \widehat
G}\I_{V_{\alpha}}\ot\End(\complex^{n_\alpha}).
$$
That is, $\cB\subset\bigoplus_{\alpha\in \widehat
G}\I_{V_{\alpha}}\ot\End(\complex^{n_\alpha})$. Therefore
$\cB'=\cA$.
\end{proof}

\begin{remark}
For any reducible representation $V$, there is a basis under which
the action of $V(g)$ can be expressed as
\begin{eqnarray}\label{eq-decom-of-rep}
V(g)\cong \bigoplus_{\alpha\in \widehat G}
\bigoplus^{n_\alpha}_{j=1} V_\alpha(g) = \bigoplus_{\alpha\in
\widehat G} V_\alpha(g) \ot\I_{n_\alpha},
\end{eqnarray}
where $\alpha\in\widehat G$ labels an irrep $V_\alpha$ and
$n_\alpha$ is the multiplicity of the irrep $V_\alpha$ in the
representation $V$. Here we use $\cong$ to indicate that there
exists a \blue{unitary change of basis} relating the left-hand size
to the right-hand side. Under this change of basis we obtain a
similar decomposition of the representation space $V$ (known as the
\emph{isotypic decomposition}):
\begin{eqnarray}\label{eq-iso-decom}
V \cong \bigoplus_{\alpha\in\widehat G} V_\alpha\ot
\Hom_G(V_\alpha,V).
\end{eqnarray}
Since $G$ acts trivially on $\Hom_G(V_\alpha,V)$,
Eq.~\eqref{eq-decom-of-rep} remains the same.

The value of Eq.~\eqref{eq-iso-decom} is that the unitary mapping
from the right-hand side (RHS) to the left-hand side (LHS) has a
simple explicit expression: it corresponds to the canonical map
$\varphi: \cX\ot \Hom(\cX\ot\cY)\to \cY$ given by $\varphi(x\ot f) =
f(x)$. Of course, this doesn't tell us how to describe
$\Hom_G(V_\alpha,V)$, or how to specify an orthonormal basis for the
space, but we will later find this form of the decomposition useful.
\end{remark}

\begin{remark}
In the above proof, we see that
\begin{center}
\fcolorbox{purple}{lightgray}{
\parbox{6cm}{
\begin{eqnarray*}
\frac{d_\alpha}{\abs{G}}\sum_{g\in G}\overline{V_{\alpha,ij}(g)}V(g)
\cong E_{\alpha,ij}\ot\I_{\complex^{n_\alpha}},
\end{eqnarray*}}}
\end{center}
which implies that
\begin{center}
\fcolorbox{purple}{lightgray}{
\parbox{6cm}{
\begin{eqnarray*}
\frac{d_\alpha}{\abs{G}}\sum_{g\in G}\overline{\chi_{\alpha}(g)}V(g)
\cong \I_{V_\alpha}\ot\I_{\complex^{n_\alpha}},
\end{eqnarray*}}}
\end{center}
where $\chi_\alpha=\sum^{\dim(V_\alpha)}_{i=1} V_{\alpha,ii}$ is the
\blue{irreducible character} corresponding to the irrep $V_\alpha$,
and $\I_{V_\alpha} = \sum^{\dim(V_\alpha)}_{i=1}E_{\alpha,ii}$. This
amounts to say that
\begin{center}
\fcolorbox{purple}{lightgray}{
\parbox{16cm}{
\begin{eqnarray*}
P_\alpha:=\frac{d_\alpha}{\abs{G}}\sum_{g\in
G}\overline{\chi_{\alpha}(g)}V(g): V\longrightarrow V_\alpha\ot
\Hom_G(V_\alpha,V)
\end{eqnarray*}}}
\end{center}
is a projection onto some full isotropic component related to the
irrep $V_\alpha$. Based on the observation, we get that
$\Set{P_\alpha}_{\alpha\in\widehat G}$ is the orthogonal resolution
of identity on $V$: $\sum_{\alpha\in\widehat G}P_\alpha=\I_V$.
\end{remark}

Consider a system of $k$ qudits, each with a standard local
computational basis $\set{\ket{i},i=1,\ldots,d}$. The Schur-Weyl
duality relates transforms on the system performed by local
$d$-dimensional unitary operations to those performed by permutation
of the qudits. Recall that the symmetric group $S_k$ is the group of
all permutations of $k$ objects. This group is naturally represented
in our system by
\begin{center}
\fcolorbox{purple}{lightgray}{
\parbox{16.3cm}{
\begin{eqnarray}\label{eq:P}
\bP(\pi)\ket{i_1\cdots i_k} := \ket{i_{\pi^{-1}(1)}\cdots
i_{\pi^{-1}(k)}},
\end{eqnarray}}}
\end{center}
where $\pi\in S_k$ is a permutation and $\ket{i_1\cdots i_k}$ is
shorthand for $\ket{i_1}\ot\cdots\ot\ket{i_k}$. Let $\U(d)$ denote
the group of $d\times d$ unitary operators. This group is naturally
represented in our system by
\begin{center}
\fcolorbox{purple}{lightgray}{
\parbox{16.3cm}{
\begin{eqnarray}\label{eq:Q}
\bQ(\bsU)\ket{i_1\cdots i_k} := \bsU\ket{i_1}\ot\cdots\ot
\bsU\ket{i_k},
\end{eqnarray}}}
\end{center}
where $\bsU\in\U(d)$. In fact, $\bQ(\bsU)=\bsU^{\ot k}$, which is
called the \emph{collective action} of $\bsU\in\U(d)$. Thus we have
the following famous result:
\begin{center}
\fcolorbox{purple}{lightgray}{
\parbox{16.3cm}{
\begin{thrm}[Schur]\label{th-schur}
Let $\cA = \spn_\complex\Set{\bP(\pi): \pi\in S_k}$ and $\cB =
\spn_\complex\Set{\bQ(\bsU): \bsU\in\mathsf{U}(d)}$. Then:
\begin{eqnarray}
\cA' = \cB\quad\text{and}\quad \cA = \cB'.
\end{eqnarray}
\end{thrm}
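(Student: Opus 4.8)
The plan is to prove Schur-Weyl duality by showing the two algebras are mutual commutants, using the Proposition on tensor products of commutants together with the dual theorem (Proposition~\ref{double-commutants}). The central object is the tensor power space $(\complex^d)^{\ot k}$, on which both $\bP(S_k)$ and $\bQ(\unitary{d})$ act. The key structural fact I would exploit is that these two actions commute: for any permutation $\pi$ and any unitary $U$, one checks directly that $\bP(\pi)\bQ(U) = \bQ(U)\bP(\pi)$, since $\bP(\pi)$ merely rearranges tensor slots while $\bQ(U) = U^{\ot k}$ applies the same operator to each slot. This immediately gives the easy inclusions $\cB \subseteq \cA'$ and $\cA \subseteq \cB'$.

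The substantive direction is to prove $\cA' \subseteq \cB$, i.e. that every operator commuting with all permutations is a linear combination of the $\bQ(U)$. First I would establish that $\cB = \spn\set{U^{\ot k}: U\in\unitary{d}}$ equals the full algebra of \emph{symmetric} operators, namely those in $\End\Pa{(\complex^d)^{\ot k}}$ that commute with the $S_k$-action; equivalently $\cB = \cA'$. The natural route is the double-commutant framework: apply Proposition~\ref{double-commutants} to the representation $\bP$ of the finite group $G = S_k$ on $V = (\complex^d)^{\ot k}$, decomposing $V \cong \bigoplus_\lambda V_\lambda \ot \complex^{n_\lambda}$ into $S_k$-irreps $V_\lambda$ with multiplicity spaces. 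The dual theorem then identifies $\cA = \spn\set{\bP(\pi)} \cong \bigoplus_\lambda \End(V_\lambda)\ot\I_{\complex^{n_\lambda}}$ and its commutant $\cA' \cong \bigoplus_\lambda \I_{V_\lambda}\ot\End(\complex^{n_\lambda})$.

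It then remains to verify that $\cB$, the algebra generated by the $U^{\ot k}$, fills out exactly this commutant $\bigoplus_\lambda \I_{V_\lambda}\ot\End(\complex^{n_\lambda})$. One inclusion is the commuting relation already noted, which places $\cB$ inside $\cA'$ and hence inside the block-diagonal form $\bigoplus_\lambda \I_{V_\lambda}\ot\End(\complex^{n_\lambda})$. For the reverse, I would argue that $\cB$ acts irreducibly on each multiplicity space $\complex^{n_\lambda}$: the multiplicity space carries the Schur-Weyl representation of $\unitary{d}$, and by the dual theorem applied symmetrically (now viewing $\cB$ as generating an algebra whose commutant is $\cA$), the Jacobson density theorem forces $\cB \cong \bigoplus_\lambda \I_{V_\lambda}\ot\End(\complex^{n_\lambda})$. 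Finally, applying the double-commutant conclusion $\cA = \cA''$ from Proposition~\ref{double-commutants} yields $\cB' = \cA'' = \cA$, completing both identities.

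The main obstacle I anticipate is the reverse inclusion $\cA' \subseteq \cB$: showing that the span of $U^{\ot k}$ is genuinely \emph{all} of the $S_k$-commutant, rather than merely contained in it. The commuting relation is elementary, but surjectivity onto each $\End(\complex^{n_\lambda})$ requires a density argument — one must show the matrices $\set{U^{\ot k}}$ are rich enough to separate every direction in the multiplicity algebra. The cleanest way to close this gap is to invoke that both $\cA$ and $\cB$ are von Neumann algebras (guaranteed by Proposition~\ref{double-commutants}) and that $\cB \subseteq \cA'$ with $\cB$ generating a subalgebra whose own commutant is contained in $\cA$; then the double-commutant theorem forces equality. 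The care needed is in checking that the representation $\bQ$ of $\unitary{d}$ has commutant exactly $\cA$, which is where the full strength of Schur-Weyl — the \emph{pairing} of $S_k$-irreps with $\unitary{d}$-irreps — actually gets used.
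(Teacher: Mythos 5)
Your proposal correctly handles the easy inclusions ($\cB\subseteq\cA'$ and $\cA\subseteq\cB'$ from the commutation of $\bP$ and $\bQ$) and correctly observes that, once \emph{either} hard inclusion is known, the other follows by taking commutants and using $\cA=\cA''$, $\cB=\cB''$. But the argument never actually establishes a hard inclusion, and at the decisive point it is circular. You propose to show $\cB$ fills out $\bigoplus_\lambda \I_{V_\lambda}\ot\End(\complex^{n_\lambda})$ by arguing that ``$\cB$ acts irreducibly on each multiplicity space'' and by ``viewing $\cB$ as generating an algebra whose commutant is $\cA$'' --- but the irreducibility of the multiplicity spaces under $\unitary{d}$, equivalently the statement $\cB'\subseteq\cA$, \emph{is} the content of the theorem; invoking it is assuming what is to be proved. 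Double-commutant formalism alone cannot close this: from $\cB\subseteq\cA'$ one only recovers $\cA\subseteq\cB'$ and then $\cB\subseteq\cA'$ again, never equality. (A secondary issue: Proposition~\ref{double-commutants} as stated is for finite groups, so ``applying the dual theorem symmetrically'' to the $\unitary{d}$-representation $\bQ$ would in any case require its compact-group analogue.)

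What is missing is a concrete density input showing that the operators $U^{\ot k}$ span all of the $S_k$-commutant. The paper supplies this in two ways. Its first proof shows $\cA'=\spn\set{A^{\ot k}: A\in\End(\complex^d)}$ via the polarization identity
\begin{eqnarray*}
k!\,\cT_{S_k}(A_1\ot\cdots\ot A_k) \;=\; \left.\frac{\partial^{k-1}}{\partial t_2\cdots\partial t_k}\right|_{t_2=\cdots=t_k=0}\Pa{A_1+\sum^k_{j=2}t_jA_j}^{\ot k},
\end{eqnarray*}
and then identifies $\spn\set{A^{\ot k}}$ with $\spn\set{U^{\ot k}}$ using surjectivity of the exponential map onto $\rG\rL(d,\complex)$ together with density of $\rG\rL(d,\complex)$ in $\End(\complex^d)$. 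Its second proof instead uses the non-degenerate trace form on $\cA'$ and shows that any $B\in\cA'$ orthogonal to all $\bQ(g)$ gives a polynomial vanishing on $\rG\rL(d,\complex)$, hence identically, forcing $B=0$. Some argument of this kind must be inserted where you currently appeal to irreducibility of the multiplicity spaces; without it the proof does not go through.
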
}}
\end{center}

\begin{remark}
When treated as matrix algebras, such pairs $(\cA,\cB)$ are known as
\emph{dual reductive pairs} since the collective action of the
unitary group on the tensor space and the permutation action of
tensor factors are mutual commutants.
\end{remark}

\begin{proof}[First proof]
The proof is separated into two steps:
\begin{enumerate}[(i)]
\item $\cA'= \spn_\complex\set{\bsA^{\ot k}: \bsA\in \End(\complex^d)}$.
\item $\spn_\complex\set{\bsA^{\ot k}: \bsA\in \End(\complex^d)}=\cB$.
\end{enumerate}
In order to show that the item (i) holds, note that
$\End((\complex^d)^{\ot k}) = \End(\complex^d)^{\ot k}$, Firstly we
show that
\begin{eqnarray}
\cA'= \End(\complex^d)^{\ot
k}\bigcap\bP(S_k)'=\End((\complex^d)^{\ot k})\bigcap\bP(S_k)' =
\spn_\complex\set{\bQ(\bsA): \bsA\in \End(\complex^d)}.
\end{eqnarray}
We need only show that LHS is contained in RHS (that is,
$\cA'\subset\cB$) since the reverse inclusion $\cB\subset\cA'$ is
trivial.

For arbitrary $\Gamma\in \cA'=\End(\complex^d)^{\ot
k}\bigcap\bP(S_k)'$, we have $\Gamma = \cT_{S_k}(\Gamma)$ and
$\Gamma\in \End(\complex^d)^{\ot k}$, where
$\cT_{S_k}=\frac1{k!}\sum_{\pi\in S_k}\Ad_{\bP(\pi)}$. It suffices
to show that
$$
\Gamma := \cT_{S_k}(\bsA_1\ot\cdots\ot \bsA_k)\in
\spn_{\complex}\set{\bQ(\bsA): \bsA\in \End(\complex^d)},
$$
where $\bsA_j\in \End(\complex^d)$. In what follows, we show that
each such $\Gamma$ can be written in terms of tensor products
$\bsA^{\ot k}$. Since
\begin{eqnarray*}
\frac{\dif}{\dif t}(\bsM+t\bsN)^{\ot k}
=\sum^{k-1}_{j=0}(\bsM+t\bsN)^{\ot j}\bsN(\bsM+t\bsN)^{\ot (k-j-1)},
\end{eqnarray*}
it follows that
\begin{eqnarray*}
\left.\frac{\dif}{\dif t}\right|_{t=0}(\bsM+t\bsN)^{\ot k}
=\sum^{k-1}_{j=0}\bsM^{\ot j}\bsN\bsM^{\ot (k-j-1)}.
\end{eqnarray*}
Consider the following partial derivative
\begin{eqnarray}\label{eq:partial}
\left.\frac{\partial^{k-1}}{\partial t_2\ldots\partial
t_k}\right|_{t_2=\cdots=t_k=0}\Pa{\bsA_1+\sum^k_{j=2}t_j\bsA_j}^{\ot
k},
\end{eqnarray}
which can be realized by subsequently applying
\begin{eqnarray*}
\left.\frac{\partial}{\partial t_j}\right|_{t_j=0}\Pa{\bsA_1 +
t_j\bsA_j}^{\ot k} = \lim_{t_j\to 0} \frac{(\bsA + t_j\bsA_j)^{\ot
k} - \bsA^{\ot k}}{t_j},
\end{eqnarray*}
iteratively going from $j=k$ all the way to $j=2$. The
\eqref{eq:partial} takes the form of a limit of sums of tensor
powers. Thus we obtain that
\begin{eqnarray*}
\left.\frac{\partial^{k-1}}{\partial t_2\ldots\partial
t_k}\right|_{t_2=\cdots=t_k=0}\Pa{\bsA_1+\sum^k_{j=2}t_j\bsA_j}^{\ot
k}=\sum_{\tau\in S_k}\bsA_{\tau(1)}\ot\bsA_{\tau(2)}\ot\cdots\ot
\bsA_{\tau(k)}.
\end{eqnarray*}
Since $\spn_\complex\set{\bQ(\bsA): \bsA\in \End(\complex^d)}$ is a
finite dimensional vector space, this limit is contained in
$\spn_\complex\set{\bQ(\bsA): \bsA\in \End(\complex^d)}$. On the
other hand, a direct calculation shows that
\begin{center}
\fcolorbox{purple}{lightgray}{
\parbox{16cm}{
\begin{eqnarray}
k!\cdot\Gamma = \left.\frac{\partial^{k-1}}{\partial
t_2\ldots\partial
t_k}\right|_{t_2=\cdots=t_k=0}\Pa{\bsA_1+\sum^k_{j=2}t_j\bsA_j}^{\ot
k}
\end{eqnarray}}}
\end{center}
and hence all operators $\Gamma$ are contained in
$\spn_\complex\set{\bQ(\bsA): \bsA\in \End(\complex^d)}$.

Now we turn to prove that the item (ii) holds. Firstly we show that
\begin{eqnarray*}
\spn_\complex\set{\bsU^{\ot k}: \bsU\in\U(d)} =
\spn_\complex\set{\bsT^{\ot k}: \bsT\in \GL(d,\complex)}.
\end{eqnarray*}
For any $\bsT\in \GL(d,\complex)$, there exists $\bsM\in
\End(\complex^d)$ such that
$$
\bsT = e^{\bsM}.
$$
(The elementary proof of this fact is shifted to the following
remark.) Then
\begin{eqnarray}\label{eq:exp}
\bsT^{\ot n} = (e^{\bsM})^{\ot n} = \exp\Pa{\sum^n_{j=1}\I^{\ot
j-1}\ot \bsM\ot\I^{n-j}} = \exp(\bQ_*(\bsM)),
\end{eqnarray}
where
$$
\bQ_*(\bsM) := \left.\frac{\dif}{\dif t}\right|_{t=0}
\bQ(e^{t\bsM}).
$$
Clearly $\bQ(e^{t\bsM}) = e^{t\bQ_*(\bsM)}$ for any real
$t\in\real$. In fact, $\bQ$ is a Lie group representation of $\U(d)$
or $\GL(d,\complex)$. $\bQ_*$ is a Lie algebra representation
induced by $\bQ$. If we can show that $\bQ_*(\bsM)\in
\spn_{\complex}\set{\bsU^{\ot n}: \bsU\in\U(d)}$, then by
\eqref{eq:exp}, it follows that $\bsT^{\ot n}\in
\spn_\complex\set{\bsU^{\ot n}: \bsU\in\U(d)}$.

Next, we show that $\bQ_*(\bsM)\in \spn_\complex\set{\bsU^{\ot n}:
\bsU\in\U(d)}$. For any skew-Hermitian operator $\bsX$, $e^{t\bsX}$
is a unitary, thus $\bQ(e^{t\bsX})\in \spn_\complex\set{\bsU^{\ot
n}: \bsU\in\U(d)}$. By the connection of $\bQ$ and $\bQ_*$, we have
$\bQ(e^{t\bsX}) = e^{t\bQ_*(\bsX)}$, implying that $\bQ_*(\bsX)\in
\spn_\complex\set{\bsU^{\ot n}: \bsU\in\U(d)}$, where
$\bsX\in\u(d)$, a Lie algebra of $\U(d)$. Let $\bsM =
\bsX+\sqrt{-1}\bsY$ for $\bsX,\bsY\in\u(d)$. Thus by the
complex-linearity of $\bQ_*$, it follows that
$$
\bQ_*(\bsM) = \bQ_*(\bsX) + \sqrt{-1}\bQ_*(\bsY).
$$
Since $\spn_\complex\set{\bsU^{\ot n}: \bsU\in\U(d)}$ is a
complex-linear space, it follows that
\begin{eqnarray*}
\bQ_*(\bsX) + \sqrt{-1}\bQ_*(\bsY) \in \spn_\complex\set{\bsU^{\ot
n}: \bsU\in\U(d)}
\end{eqnarray*}
whenever $\bQ_*(\bsX),\bQ_*(\bsY)\in\spn_\complex\set{\bsU^{\ot n}:
\bsU\in\U(d)}$. Therefore $\bQ_*(\bsM)\in
\spn_\complex\set{\bsU^{\ot n}: \bsU\in\U(d)}$.

Up to now, we established the fact that
\begin{eqnarray*}
\spn_\complex\set{\bsU^{\ot n}: \bsU\in\U(d)} =
\spn_\complex\set{\bsT^{\ot n}: \bsT\in \GL(d,\complex)}.
\end{eqnarray*}
Secondly, we show that
\begin{eqnarray}\label{eq:exp-inv}
\spn_\complex\set{\bsT^{\ot n}: \bsT\in \GL(d,\complex)} =
\spn_\complex\set{\bsA^{\ot n}: \bsA\in \End(\complex^d)}.
\end{eqnarray}
We use the fact that $\GL(d,\complex)$ is dense in
$\End(\complex^d)$. Indeed, for any $\bsA\in \End(\complex^d)$, by
the Singular Value Decomposition, we have
$$
\bsA = \bsU\bsD\bsV^\dagger,
$$
where $\bsU,\bsV\in \U(d)$ and $\bsD$ is a diagonal matrix whose
diagonal entries are nonnegative. Define
$$
\bsT_\varepsilon =
\bsU\Pa{\bsD+\frac{\varepsilon}{1+\varepsilon}\I}\bsV^\dagger
$$
for very small positive real $\varepsilon$. Apparently
$\bsT_\varepsilon\in \GL(d,\complex)$ and
$\norm{\bsA-\bsT_\varepsilon}<\varepsilon$. This indicates that
$\GL(d,\complex)$ is dense in $\End(\complex^d)$ in the norm
topology.

For any fixed $\bsA\in\End(\complex^d)$, we take
$\bsT\in\GL(d,\complex)$ such that $\norm{\bsA-\bsT}$ is very small.
Since
$$
\norm{\bQ(\bsA) - \bQ(\bsT)}\leqslant n
\Delta^{n-1}\norm{\bsA-\bsT},
$$
where $\Delta:= \max\set{\norm{\bsA},\norm{\bsT}}$, it follows, from
the fact that $\spn_\complex\set{\bsT^{\ot n}:
\bsT\in\GL(d,\complex)}$ is closed (in the finite-dimensional
setting), that \eqref{eq:exp-inv} is true. Therefore the proof is
complete.
\end{proof}

\begin{remark}
In this Remark, we will show that, for every $\bsT\in
\GL(d,\complex)$, there exists $\bsM\in\End(\complex^d)$ such that
$\bsT=e^{\bsM}$. This result is a famous one in Lie theory. A
general method for its proof is rather involved. To avoid usage of
advanced tools in Lie theory. We give here an elementarily proof of
it. We just use the matrix technique.

Indeed, it is easy to show that if $\bsT$ is a diagonalizable
matrix, then the conclusion is true. For a general case, we separate
the proof into two steps:

\emph{Case 1}. There is a sequence of diagonalizable matrices
$\bsT_k$ satisfying that
\begin{enumerate}[(i)]
\item $\lim_{k\to\infty} \norm{\bsT_k -\bsT}=0$,
\item If $\bsT_k = e^{\bsM_k}$, then there is a constant $c>0$ such that $\norm{\bsM_k}\leqslant
c$ holds for every $k$.
\end{enumerate}
Now we show that the \emph{existence} of $\bsT_k$. Consider the
Jordan canonical decomposition of $\bsT$ for $\bsT =
\bsP\bsJ\bsP^{-1}$. Let $t_j$ be the diagonal entries of $\bsJ$.
Note that $\bsT$ is an invertible matrix, so $t_j\neq0$ for every
$1\leqslant j\leqslant d$. Let
$$
\bsT_k := \bsP(\bsJ + \Lambda_k)\bsP^{-1},
$$
where $\Lambda_k : =
\mathrm{diag}(\lambda^k_1,\lambda^k_2,\ldots,\lambda^k_d)$. Then
$\bsT_k$ meets the conditions (i) and (ii) in \emph{Case 1} if
\begin{enumerate}[(a)]
\item $\lim_k \lambda^k_j = 0$ for $j=1,\ldots,d$;
\item $t_j +\lambda^k_j$ are all different when $j$ runs from 1 to $d$
for every given $k$. Thus $\bsT_k$ has $d$ different eigenvalues
$t_j+\lambda^k_j$, and of course $\bsT_k$ is diagonalizable;
\item there is a constant $c$ such that
$\abs{\ln(t_j+\lambda^k_j)}\leqslant c$ for every $k$ and $j$. Note
that if (b) is true, then $\norm{\bsM_k} = \max_j
\abs{\ln(t_j+\lambda^k_j)}$.
\end{enumerate}
The construction of $\lambda^k_j$ satisfying (a)--(c) is described
as follows: For any given $k$, let $\lambda^k_1 = \tfrac{t_1}{k}$,
and $\lambda^k_j$ be one of
$\tfrac{t_j}{k},\tfrac{t_j}{k+1},\ldots,\tfrac{t_j}{k+j}$ such that
$t_i + \lambda^k_i\neq t_j + \lambda^k_j$ whenever $i<j$. Apparently
(a) and (b) are satisfied. To check (c), we have
$$
\abs{\ln(t_j+\lambda^k_j)} =
\abs{\ln(t_j)+\ln(1+\lambda^k_j/t_j)}\leqslant \abs{\ln(t_j)} +
\abs{\ln(1+\lambda^k_j/t_j)},
$$
taking $c=\max_j \abs{\ln(t_j)}+\ln2$ is enough. That is
$\norm{\bsM_k} = \max_j \abs{\ln(t_j+\lambda^k_j)}\leqslant c$ for
all $k$.

\emph{Case 2}. When \emph{Case 1} holds, since the exponential
function is a smooth and continuous function, so the image of the
\emph{compact} set, $\exp\Pa{B(0,c)}$ must be closed, where $B(0,c)$
is the closed ball with radius $c$, thus the limit $\bsT$ of
$e^{\bsM_k}$ is also in $\exp\Pa{B(0,c)}$. This means that there
exists $\bsM\in B(0,c)$ such that $\bsT=e^{\bsM}$. The proof is
finished.
\end{remark}

We remark here that the above first proof of Schur-Weyl duality makes reference to PhD thesis of Christandl \cite{Christ2006}. The following second proof is taken from the book of Goodman and Wallach \cite{Wallach}.

\begin{proof}[Second proof]
Let $\set{\ket{1},\ldots,\ket{d}}$ be the standard basis for
$\complex^d$. For an ordered $k$-tuple $I=(i_1,\ldots,i_k)$ with
$i_1,\ldots,i_k\in [d]$, where $[d]:=\set{1,\ldots,d}$, define
$\abs{I}=k$ and $\ket{I}:=\ket{i_1\cdots i_k}$. The tensors
$\set{\ket{I}:I\in [d]^k}$, with $I$ ranging over the all such
$k$-tuples, give a basis for $(\complex^d)^{\ot k}$. The group $S_k$
permutes this basis by the action $\bP(\pi)\ket{I} = \ket{\pi\cdot
I}$, where for $I=(i_1,\ldots,i_k)$ and $\pi\in S_k$, we define
$$
\pi\cdot(i_1,\ldots,i_k) :=
(i_{\pi^{-1}(1)},\ldots,i_{\pi^{-1}(k)}).
$$
Note that $\pi$ changes the \emph{positions}(1 to $k$) of the
indices, not their values (1 to $d$), and we have $(\sigma\pi)\cdot
I = \sigma\cdot(\pi\cdot I)$ for $\sigma,\pi\in S_k$.

Suppose $\bsB\in\End((\complex^d)^{\ot k})$ has matrix $[b_{I,J}]$
relative to the basis $\set{\ket{I}: I\in[d]^k}$:
$\Innerm{I}{\bsB}{J} = b_{I,J}$ and
$$
\bsB\ket{J} = \sum_{I\in[d]^k} b_{I,J}\ket{I}.
$$
We have
\begin{eqnarray*}
\bsB\bP(\pi)\ket{J} = \bsB\ket{\pi\cdot J} = \sum_I b_{I,\pi\cdot
J}\ket{I}
\end{eqnarray*}
for $\pi\in S_k$, whereas
\begin{eqnarray*}
\bP(\pi)\bsB\ket{J} =\bP(\pi)\Pa{\sum_I\proj{I}}\bsB\ket{J} = \sum_I
b_{I,J}\ket{\pi\cdot I} = \sum_I b_{\pi^{-1}\cdot I, J}\ket{I}.
\end{eqnarray*}
Thus $\bsB\in\cA'$, i.e., $\bsB$ commutes with $\bP(\pi)$ for each
$\pi\in S_k$, if and only if $b_{I,\pi\cdot J} = b_{\pi^{-1}\cdot I,
J}$ for all multi-indices $I,J$ and all $\pi\in S_k$. Replacing $I$
by $\pi\cdot I$, we can write this condition as
\begin{eqnarray}\label{eq:orbit}
b_{\pi\cdot I,\pi\cdot J} = b_{I, J}\quad( \forall
I,J\in[d]^k;\pi\in S_k).
\end{eqnarray}
Consider the \blue{non-degenerate bilinear form}
$\Inner{\bsX}{\bsY}: =\Tr{\bsX\bsY}$ on $\End((\complex^d)^{\ot
k})$.
\begin{itemize}
\item We claim that the restriction of this form to $\cA'$ is
non-degenerate.
\end{itemize}
Indeed, we have a projection from $\End((\complex^d)^{\ot k})$ onto
$\cA'$, i.e.,
$$
\#:\End((\complex^d)^{\ot k})\longrightarrow \cA',
$$
defined by $\#:\bsX\mapsto \bsX^{\#}$ which is given by averaging
over $S_k$:
\begin{eqnarray*}
\bsX^{\#} = \frac1{k!}\sum_{\pi\in S_k} \bP(\pi)\bsX
\bP(\pi)^{-1}\in\cA'\quad(\forall \bsX\in \End((\complex^d)^{\ot
k})).
\end{eqnarray*}
If $\bsB\in\cA'$, i.e., $\bsB\bP(\pi)=\bP(\pi)\bsB$ for each $\pi\in
S_k$, then
$$
\Inner{\bsX^{\#}}{\bsB} =  \frac1{k!}\sum_{\pi\in
S_k}\Tr{\bP(\pi)\bsX \bP(\pi)^{-1}\bsB} = \Inner{\bsX}{\bsB}.
$$
Thus $\Inner{\cA'}{\bsB}=0$ means that
$$
\Inner{\bsX^{\#}}{\bsB}=0\quad (\forall \bsX\in
\End((\complex^d)^{\ot k})).
$$
Because $\Inner{\bsX}{\bsB}=\Inner{\bsX^{\#}}{\bsB}$ for all
$\bsX\in\End((\complex^d)^{\ot k})$ when $\bsB\in\cA'$, we see that
$\Inner{\bsX}{\bsB}=0$ for all $\bsX\in\End((\complex^d)^{\ot k})$,
and so $\bsB=0$. \blue{Hence the trace form on $\cA'$ is
non-degenerate.}

To show that $\cA'=\cB$, note firstly that it is trivially that
$\cB\subset\cA'$, we decompose $\cA'=\cB\oplus \cC$. It suffices to
show that $\cC=\set{0}$. That is, \blue{it thus suffices to show
that if $\bsB\in\cA'\bigcap\cC=\cC$, i.e., $\bsB$ is orthogonal to
$\cB$, then $\bsB=0$}. Now if $g=[g_{ij}]\in\GL(d,\complex)$, then
$\bQ(g)$ has matrix $g_{I,J}=g_{i_1j_1}\cdots g_{i_kj_k}$ relative
to the basis $\set{\ket{I}:I\in[d]^k}$. Thus from the assumption
$\Inner{\bsB}{\cB}=0$ (here $\Inner{\cdot}{\cdot}$ is the bilinear
form defined previously, not Hilbert-Schmidt inner product), we see
that
\begin{eqnarray*}
\Inner{\bsB}{\bQ(g)}=\sum_{I,J}b_{I,J}g_{I,J}=\sum_{I,J}b_{I,J}g_{j_1i_1}\cdots
g_{j_ki_k}=0
\end{eqnarray*}
for all $g\in\GL(d,\complex)$, where $[b_{I,J}]$ is the matrix of
$\bsB$. In what follows, we prove that $\bsB=0$. Define a linear
polynomial function $p_{\bsB}$ on $M_d(\complex)$ by
$$
p_{\bsB}(\bsX) =\Inner{\bsB}{\bQ(\bsX)}=
\sum_{I,J}b_{I,J}x_{j_1i_1}\cdots x_{j_ki_k}
$$
for $\bsX=[x_{ij}]\in M_d(\complex)$. Clearly $p_{\bsB}$ is vanished
over $\GL(d,\complex)$, a dense subset of $M_d(\complex)$; and
$p_{\bsB}$ is a continuous function on $M_d(\complex)$, therefore
$p_{\bsB}$ is vanished over the whole $M_d(\complex)$, i.e.,
$p_{\bsB}\equiv 0$, so for all $\bsX=[x_{ij}]\in M_d(\complex)$, we
have
\begin{eqnarray}\label{eq:global}
\Inner{\bsB}{\bQ(\bsX)}=\sum_{I,J}b_{I,J}x_{j_1i_1}\cdots
x_{j_ki_k}=0.
\end{eqnarray}
In what follows, we show that $b_{I,J}=0$ for all $I,J$. Namely,
$\bsB=0$. We begin by grouping the terms in the above equation
according distinct monomials in the matrix entries $\set{x_{ij}}$.
Introduce the notation $x_{I,J}=x_{i_1j_1}\cdots x_{i_kj_k}$, and
view these monomials as polynomial functions on $M_d(\complex)$. Let
$\Theta$ be the set of all ordered pairs $(I,J)$ of multi-indices
with $\abs{I}=\abs{J}=k$, i.e.,
$$
\Theta\defeq\Set{(I,J):\abs{I}=\abs{J}=k}.
$$
The group $S_k$ acts on $\Theta$ by
$$
\pi\cdot(I,J) = (\pi\cdot I,\pi\cdot J).
$$
From Eq.~\eqref{eq:orbit}, we see that $\bsB$ commutes with $S_k$ if
and only if the function $(I,J)\mapsto b_{I,J}$ is constant on the
orbits of $S_k$ in $\Theta$.

The action of $S_k$ on $\Theta$ defines an equivalence relation on
$\Theta$, where $(I,J)\sim(I',J')$ if $(I',J')=(\pi\cdot I,\pi\cdot
J)$ for some $\pi\in S_k$. This gives a decomposition of $\Theta$
into disjoint equivalence classes. \blue{Choose a set $\Gamma$ of
representatives for the equivalence classes}. Then every monomial
$x_{I,J}$ with $\abs{I}=\abs{J}=k$ can be written as $x_\gamma$ for
some $\gamma\in \Gamma$. Indeed, since the variables $x_{ij}$
mutually commute, we have
$$
x_\gamma = x_{\pi\cdot\gamma}\quad(\forall \pi\in S_k;
\gamma\in\Gamma).
$$
Suppose $x_{I,J}=x_{I',J'}$. Then there must be an integer $p$ such
that $x_{i'_1j'_1}=x_{i_p j_p}$. Call $p=1'$. Similarly, there must
be an integer $q\neq p$ such that $x_{i'_2j'_2}=x_{i_q j_q}$. Call
$q=2'$. Continuing this way, we obtain a permutation
$$
\pi: (1,2,\ldots,k)\to (1',2',\ldots,k')
$$
such that $I=\pi\cdot I'$ and $J=\pi\cdot J'$. This proves that
$\gamma$ is uniquely determined by $x_\gamma$. For
$\gamma\in\Gamma$, let $n_\gamma=\abs{S_k\cdot\gamma}$ be the
cardinality of the corresponding orbit.

Assume that the coefficients $b_{I,J}$ satisfy Eqs.~\eqref{eq:orbit}
and \eqref{eq:global}. Since $b_{I,J}=b_\gamma$ for all $(I,J)\in
S_k\cdot \gamma$, it follows from Eq.~\eqref{eq:global} that
$$
\sum_{\gamma \in\Gamma} n_\gamma b_\gamma x_\gamma = 0.
$$
Since the set of monomials $\set{x_\gamma: \gamma\in\Gamma}$ is
linearly independent, this implies that $b_{I,J}=0$ for all
$(I,J)\in\Theta$. This proves that $\bsB=0$. Hence $\cB=\cA'$.
\end{proof}

The following result concerns with a wonderful decomposition of the
representations on $k$-fold tensor space $(\complex^d)^{\ot k}$ of
$\U(d)$ and $S_k$, respectively, using their corresponding irreps
accordingly. The proof is taken from \cite{Christ2006}.
\begin{center}
\fcolorbox{purple}{lightgray}{
\parbox{16.3cm}{
\begin{thrm}[Schur-Weyl duality]\label{th:S-W-Duality}
There exist a basis, known as Schur basis, in which representation
$\Pa{\bQ\bP,(\complex^d)^{\ot k}}$ of $\mathsf{U}(d)\times S_k$
decomposes into irreducible representations $\bQ_\lambda$ and
$\bP_\lambda$ of $\mathsf{U}(d)$ and $S_k$, respectively:
\begin{enumerate}[(i)]
\item $(\complex^d)^{\ot k}\cong \bigoplus_{\lambda\vdash_dk} \bQ_\lambda\ot
\bP_\lambda$;
\item $\bP(\pi)\cong \bigoplus_{\lambda\vdash_dk} \I_{\bQ_\lambda}\ot
\bP_\lambda(\pi)$;
\item $\bQ(\bsU)\cong \bigoplus_{\lambda\vdash_dk} \bQ_\lambda(\bsU)\ot
\I_{\bP_\lambda}$,
\end{enumerate}
where the notation $\lambda\vdash_dk$ means the partition $\lambda$
of $k$ with no more than $d$ parts. Since $\bQ$ and $\bP$ commute,
we can define representation $\Pa{\bQ\bP,(\complex^d)^{\ot k}}$ of
$\mathsf{U}(d)\times S_k$ as
\begin{eqnarray}
\bQ\bP(\bsU,\pi) = \bQ(\bsU)\bP(\pi) = \bP(\pi)\bQ(\bsU)\quad
\forall \Pa{(\bsU,\pi)\in \mathsf{U}(d)\times S_k}.
\end{eqnarray}
Then:
\begin{eqnarray}\label{schur-duality}
\bQ\bP(\bsU,\pi) = \bsU^{\ot k}\bP(\pi) = \bP(\pi) \bsU^{\ot k}\cong
\bigoplus_{\lambda\vdash_dk} \bQ_\lambda(\bsU)\ot \bP_\lambda(\pi).
\end{eqnarray}
\end{thrm}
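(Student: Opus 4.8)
The plan is to read the decomposition off the two structural results already in hand: the dual theorem (Proposition~\ref{double-commutants}) applied to the \emph{finite} group $S_k$, together with Schur's theorem (Theorem~\ref{th-schur}) identifying the two commutants. First I would apply Proposition~\ref{double-commutants} to the representation $\bP$ of $S_k$ on $V=(\complex^d)^{\ot k}$. This yields an isotypic decomposition $V\cong\bigoplus_\lambda \bP_\lambda\ot M_\lambda$, where $\bP_\lambda$ ranges over the distinct irreps of $S_k$ occurring in $V$ and $M_\lambda\cong\complex^{m_\lambda}$ is the corresponding multiplicity space, together with the identifications $\cA\cong\bigoplus_\lambda\End(\bP_\lambda)\ot\I_{M_\lambda}$ for the algebra generated by $\bP(S_k)$ and $\cB=\cA'\cong\bigoplus_\lambda\I_{\bP_\lambda}\ot\End(M_\lambda)$ for its commutant. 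The unitary implementing this decomposition is the Schur basis. At this stage $\lambda$ is merely an abstract label for the $S_k$-irreps that appear.

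Next I would invoke Theorem~\ref{th-schur}, which asserts $\cA'=\cB=\spn\Set{\bQ(U):U\in\unitary{d}}$. Combined with the previous step, this says the algebra generated by the operators $U^{\ot k}$ is exactly $\bigoplus_\lambda\I_{\bP_\lambda}\ot\End(M_\lambda)$. Since $\bQ$ commutes with $\cA$, it preserves the isotypic decomposition and acts on the summand $\bP_\lambda\ot M_\lambda$ as $\I_{\bP_\lambda}\ot\bQ_\lambda(U)$ for some representation $\bQ_\lambda$ of $\unitary{d}$ on $M_\lambda$. Projecting the identity $\spn\Set{\bQ(U)}=\bigoplus_\lambda\I_{\bP_\lambda}\ot\End(M_\lambda)$ onto the $\lambda$-block gives $\spn\Set{\bQ_\lambda(U):U\in\unitary{d}}=\End(M_\lambda)$; by the Burnside/double-commutant criterion a representation whose operators span the full endomorphism algebra is irreducible, so each $\bQ_\lambda$ is an irreducible $\unitary{d}$-representation, which I rename $\bQ_\lambda:=M_\lambda$. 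To see that inequivalent blocks carry inequivalent $\unitary{d}$-irreps, I would use the other half of Theorem~\ref{th-schur}, namely $\cA=\cB'$: were $\bQ_\lambda\cong\bQ_{\lambda'}$ for some $\lambda\neq\lambda'$, Schur's lemma would produce intertwiners of the $\bQ$-action linking those two blocks, so the commutant $\cB'$ would have off-diagonal entries, contradicting the block-diagonal form $\cA\cong\bigoplus_\lambda\End(\bP_\lambda)\ot\I_{M_\lambda}$. Hence $\lambda\mapsto\bQ_\lambda$ is injective.

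With this in place, assertions (i)--(iii) follow at once: (i) is the space decomposition $V\cong\bigoplus_\lambda\bQ_\lambda\ot\bP_\lambda$ (after the harmless swap of tensor factors to match the stated order), (ii) records the $S_k$-action $\bP(\pi)\cong\bigoplus_\lambda\I_{\bQ_\lambda}\ot\bP_\lambda(\pi)$, and (iii) the $\unitary{d}$-action $\bQ(U)\cong\bigoplus_\lambda\bQ_\lambda(U)\ot\I_{\bP_\lambda}$. Because $\bQ(U)$ and $\bP(\pi)$ commute and act on complementary tensor factors, multiplying them yields the joint formula \eqref{schur-duality}.

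The one ingredient that the two cited results do \emph{not} supply, and which I regard as the genuine obstacle, is the precise identification of the index set: that the $S_k$-irreps occurring are exactly the Specht modules indexed by partitions $\lambda\vdash k$ with at most $d$ rows (the notation $\lambda\vdash(k,d)$), and dually that the $\bQ_\lambda$ are the corresponding irreducible polynomial $\unitary{d}$-modules. This combinatorial labelling is honest representation-theoretic input, established through character theory and the theory of Young tableaux; I would invoke it as a standard classification. The abstract existence, irreducibility, and uniqueness of the decomposition displayed above, however, is exactly what follows formally from Proposition~\ref{double-commutants} and Theorem~\ref{th-schur}.
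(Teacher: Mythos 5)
Your proposal is correct and follows essentially the same route as the paper: apply the dual theorem (Proposition~\ref{double-commutants}) to $S_k$ acting on $(\complex^d)^{\ot k}$, identify the commutant with $\spn\set{U^{\ot k}}$ via Theorem~\ref{th-schur}, and defer the identification of the index set $\lambda\vdash(k,d)$ to the standard highest-weight/Young-tableau classification, exactly as the paper does. The only immaterial difference is the irreducibility step for $\bQ_\lambda$: you argue directly that $\spn\set{\bQ_\lambda(U)}=\End(M_\lambda)$ and invoke Burnside, whereas the paper shows $\End_{\rG\rL(d,\complex)}(\bQ_\lambda)$ lies in the center of $\End(\bQ_\lambda)\cong\complex$ --- two faces of the same double-commutant fact.
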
}}
\end{center}

In order to prove the above theorem, we first observe that algebras
generated by $\bP$ and $\bQ$ centralize each other. Then we can
apply \emph{double commutant theorem} to get expression
Eq.~\eqref{schur-duality} only with unspecified range of $\lambda$.
In order to specify the range, we find a correspondence between
irreducible representations of $S_k$ and $\U(d)$ and partitions
$\lambda\vdash_dk$.

We call the unitary transformation performing the basis change from
standard basis to Schur basis, \emph{Schur transform} and denote by
$\bsU_{\mathrm{sch}}$. It has been shown that Schur transform can be
implemented efficiently on a quantum computer.

\begin{proof}
The application of the Duality Theorem~\ref{double-commutants} to
$G=S_k$ (and to its dual partner $\U(d)$, Theorem.~\ref{th-schur})
shows the above three equations, where $\bP_\lambda$ are irreducible
representations of $S_k$. The representation of $\U(d)$ that is
paired with $\bP_\lambda$ is denoted by $\bQ_\lambda$.

In the following, we show that $\bQ_\lambda$'s are irreducible. A
brief but elegant argument is follows: $\bQ_\lambda$ is irreducible
if and only if its extension to $\GL(d,\complex)$ is irreducible.
That is $\bQ_\lambda(\U(d))$ is irreducible if and only if
$\bQ_\lambda(\GL(d,\complex))$ is irreducible. So it suffices to
show that $\bQ_\lambda$ is indecomposable under $\GL(d,\complex)$.
By Schur's Lemma this is equivalent to showing that
$\End_{\GL(d,\complex)}(\bQ_\lambda)\cong \complex$. That is, the
maps in $\End(\bQ_\lambda)$ that commute with the action of
$\GL(d,\complex)$ are proportional to the identity.

In what follows, we show that
$\End_{\GL(d,\complex)}(\bQ_\lambda)\cong \complex$. From Schur's
Lemma, we have
$$
\End_{S_k}\Pa{(\complex^d)^{\ot k}} \cong\bigoplus_{\lambda}
\End(\bQ_\lambda)\ot\I_{\bP_\lambda}\cong \bigoplus_{\lambda}
\End(\bQ_\lambda).
$$
Thus
$$
\End_{\GL(d,\complex)\times S_k}\Pa{(\complex^d)^{\ot k}} \cong
\bigoplus_{\lambda} \End_{\GL(d,\complex)}(\bQ_\lambda).
$$
By the dual theorem, $\GL(d,\complex)$ and $S_k$ are double
commutants,
$$
\End_{S_k}\Pa{(\complex^d)^{\ot k}} = \spn_\complex\Set{\bsT^{\ot
k}: \bsT\in\GL(d,\complex)},
$$
and thus $\End_{\GL(d,\complex)\times S_k}\Pa{(\complex^d)^{\ot k}}$
is clearly contained in the center of
$\End_{S_k}\Pa{(\complex^d)^{\ot k}}$. Therefore
$\End_{\GL(d,\complex)}(\bQ_\lambda)$ is contained in the center of
$\End(\bQ_\lambda)\cong \complex$. Finally
$$
\End_{\GL(d,\complex)}(\bQ_\lambda)\cong \complex.
$$
For the proof of $\lambda\vdash_dk$, it is rather involved since we
need the notion of highest weight classification of a compact Lie
group. Recall that \blue{Highest Weight Classification} Theorem for
a connected compact Lie group $K$ with maximal torus $T$:
\begin{itemize}
\item There is a one-to-one correspondence between irreps and dominant
analytically integral weights \cite{Sepanski} given by mapping
$V_\lambda\leftrightarrow \lambda$, where $\lambda$ is dominant
analytically integral weights of $T$.
\end{itemize}
When $K=\U(d)$, all dominant analytically integral weights of its
maximal torus $T$ just corresponds to the set of partitions
$(\bQ_\lambda\leftrightarrow)\lambda\vdash_dk$. Besides, from the
side of symmetric group $S_k$, due to the fact that the number of
irreps of a finite group is equal to the number of conjugacy classes
of that group and the fact that the conjugacy classes of $S_k$
correspond to the partitions of $k$, we see that the partitions
$\lambda\vdash_dk$ parameterizes all irreps $\bP_\lambda$. In
summary, in all decompositions, $\lambda\vdash_dk$. We are done.
\end{proof}

\begin{remark}
By the Duality Theorem~\ref{double-commutants} and
Theorem~\ref{th:S-W-Duality}, it follows that $\bQ(\bsX)\in\cB$ for
$\bsX\in\End(\complex^d)$. Furthermore the decomposition of
$\bQ(\bsX)$ is of the form:
\begin{eqnarray}
\bQ(\bsX) \cong \bigoplus_{\lambda\vdash (k,d)}\bQ_\lambda(\bsX)\ot
\I_{\bP_\lambda}.
\end{eqnarray}
Therefore
\begin{eqnarray}
\bsX^{\ot k}\bP(\pi) = \bP(\pi)\bsX^{\ot k} \cong
\bigoplus_{\lambda\vdash (k,d)}\bQ_\lambda(\bsX)\ot
\bP_\lambda(\pi).
\end{eqnarray}
\end{remark}

The dimensions of pairing irreps for $\U(d)$ and $S_k$,
respectively, in Schur-Weyl duality can be computed by the so-called
\emph{hook length formulae}. The hook of box $(i, j)$ in a Young
diagram determined by a partition $\lambda$ is given by the box
itself, the boxes to its right and below. The hook length is the
number of boxes in a hook. Specifically, we have the following
result without its proof:
\begin{thrm}[Hook-length formula and hook-content formula]
The dimensions of pairing irreps for $\mathsf{U}(d)$ and $S_k$ in
Schur-Weyl duality, respectively, can be given as follows: For each
partition $\lambda\vdash_dk$, the dimensions of irreps $\bP_\lambda$
and $\bQ_\lambda$ are given by, respectively,
\begin{center}
\fcolorbox{purple}{lightgray}{
\parbox{16.3cm}{
\begin{eqnarray}
\dim(\bP_\lambda) &=& \frac{k!}{\prod_{(i,j)\in\lambda}h(i,j)},\\
\dim(\bQ_\lambda) &=& \prod_{(i,j)\in\lambda}\frac{d+j-i}{h(i,j)} =
\prod_{1\leqslant i<j\leqslant d}\frac{\lambda_i - \lambda_j +
j-i}{j-i}.
\end{eqnarray}}}
\end{center}
\end{thrm}
The formula for $\dim(\bP_\lambda)$ is called the \blue{hook-length
formula} and that for $\dim(\bQ_\lambda)$ is called the
\blue{hook-content formula}.

We will see the concrete example which is the most simple one:
\begin{exam}
Suppose that $k = 2$ and $d$ is greater than one. Then the
Schur-Weyl duality is the statement that the space of two-tensors
decomposes into symmetric and antisymmetric parts, each of which is
also an irreducible module for $\GL(d,\complex)$:
\begin{eqnarray*}
\complex^d\ot\complex^d = \vee^2\complex^d\oplus \wedge^2\complex^d,
\end{eqnarray*}
where
\begin{eqnarray*}
\vee^2\complex^d&=&\spn_\complex\Set{\bsu\ot\bsv+\bsv\ot\bsu:\bsu,\bsv\in
\complex^d},\\
\wedge^2\complex^d&=&\spn_\complex\Set{\bsu\ot\bsv-\bsv\ot\bsu:\bsu,\bsv\in
\complex^d}.
\end{eqnarray*}
Moreover $\dim(\vee^2\complex^d)=\frac{d(d+1)}2$ and
$\dim(\wedge^2\complex^d)=\frac{d(d-1)}2$. The symmetric group $S_2$
consists of two elements and has two irreducible representations,
the trivial representation and the sign representation. The trivial
representation of $S_2$ gives rise to the symmetric tensors, which
are invariant (i.e. do not change) under the permutation of the
factors, and the sign representation corresponds to the
skew-symmetric tensors, which flip the sign.
\end{exam}

\section{Matrix integrals over unitary groups}

In this section, we will give the proofs on some integrals over
unitary matrix group. We will use the uniform bi-invariant
Haar-measure $\mu$ over unitary matrix group $\U(d)$. We also use
the vec-operator correspondence.  The $\vec$ mapping is defined as
follows:
\begin{eqnarray*}
\vec(\out{i}{j}) = \ket{ij}.
\end{eqnarray*}
Thus $\vec(\I_d) = \sum^d_{j=1}\ket{jj}$. Clearly
\begin{eqnarray}
\vec(\bsA\bsX\bsB)=\bsA\ot \bsB^\t\vec(\bsX).
\end{eqnarray}
In what follows, we will employ Schur-Weyl duality to give the
computations about the integrals of the following forms:
\begin{eqnarray}
\int_{\U(d)} \bsU^{\ot k} \bsA (\bsU^{\ot k})^\dagger
\dif\mu(\bsU)~~\text{or}~~\int_{\U(d)} \bsU^{\ot k} \ot (\bsU^{\ot
k})^\dagger \dif\mu(\bsU).
\end{eqnarray}
We demonstrate the integral formulae for the special cases where
$k=1,2$ with detailed proofs since they have extremely important
applications in quantum information theory. Analogously, we also
obtain the explicit computations about the integrals of the
following forms:
\begin{eqnarray}
\int_{\U(d)} \bsU^k \bsA (\bsU^k)^\dagger
\dif\mu(\bsU)~~\text{or}~~\int_{\U(d)} \bsU^k \ot (\bsU^k)^\dagger
\dif\mu(\bsU).
\end{eqnarray}

\subsection{The case where $k=1$}

\begin{center}
\fcolorbox{purple}{lightgray}{
\parbox{16.3cm}{
\begin{prop}[Completely depolarizing channel]\label{prop:u-integral}
It holds that
\begin{eqnarray}
\int_{\mathsf{U}(d)}\bsU\bsA\bsU^\dagger \dif\mu(\bsU) =
\frac{\Tr{\bsA}}{d}\I_d,
\end{eqnarray}
where $\bsA\in M_d(\complex)$.
\end{prop}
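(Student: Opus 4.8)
The plan is to recognize the map $\Phi: A \mapsto \int_{\unitary{d}} UAU^\dagger \, d\mu(U)$ as a linear operator on $M_d(\complex)$ whose output is forced, by the invariance of the Haar measure together with Schur's lemma, to be a scalar multiple of the identity. This is exactly the $k=1$ instance of the representation-theoretic machinery developed above: the relevant representation $\bQ(U)=U$ on $\complex^d$ is the defining representation, which is irreducible (the only partition is $\lambda\vdash(1,d)$).

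First I would establish the crucial commutation property. For any fixed $V \in \unitary{d}$, left-invariance of $\mu$ gives
\begin{eqnarray}
V\Phi(A)V^\dagger = \int_{\unitary{d}} (VU)A(VU)^\dagger \, d\mu(U) = \int_{\unitary{d}} U'A(U')^\dagger \, d\mu(U') = \Phi(A),
\end{eqnarray}
after the substitution $U' = VU$, which ranges over all of $\unitary{d}$ with $d\mu(U')=d\mu(U)$. Hence $V\Phi(A) = \Phi(A)V$ for every $V \in \unitary{d}$, that is, $\Phi(A)$ lies in the commutant of $\spn\set{V: V\in\unitary{d}}$.

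Next I would invoke irreducibility. Since the defining representation of $\unitary{d}$ on $\complex^d$ is irreducible, Schur's lemma forces any operator commuting with all of $\unitary{d}$ to be scalar, so $\Phi(A) = c(A)\I_d$ for some $c(A)\in\complex$. To pin down the constant I would take the trace and use the normalization of the Haar measure together with cyclicity of the trace and $U^\dagger U = \I_d$:
\begin{eqnarray}
c(A)\,d = \Tr{\Phi(A)} = \int_{\unitary{d}} \Tr{UAU^\dagger}\,d\mu(U) = \Tr{A}\int_{\unitary{d}} d\mu(U) = \Tr{A}.
\end{eqnarray}
This yields $c(A)=\Tr{A}/d$ and hence the claim.

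The proof carries no serious technical obstacle; the one point deserving care is the justification of the change-of-variables step, which relies on the left-invariance and normalization of the Haar measure on the compact group $\unitary{d}$---precisely the features emphasized in the abstract. Everything else is an immediate application of Schur's lemma to the irreducible defining representation, and the same template (commutation via bi-invariance, followed by a decomposition of the commutant) is what will be reused for the higher $k$ cases.
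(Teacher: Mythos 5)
Your proposal is correct and follows essentially the same route as the paper's own proof: establish $V\Phi(A)V^\dagger=\Phi(A)$ via left-invariance of the Haar measure, conclude $\Phi(A)=\lambda_A\I_d$ (the paper leaves the appeal to irreducibility of the defining representation implicit, while you name Schur's lemma), and fix the constant by taking the trace. No gaps.
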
}}
\end{center}

\begin{proof}
For any $\bsV\in \U(d)$, we have
\begin{eqnarray*}
\bsV\Pa{\int_{\U(d)}\bsU\bsA\bsU^\dagger \dif\mu(\bsU)}\bsV^\dagger
&=&
\int_{\U(d)}(\bsV\bsU)\bsA(\bsV\bsU)^\dagger \dif\mu(\bsU)\\
&=& \int_{\U(d)}(\bsV\bsU)\bsA(\bsV\bsU)^\dagger \dif\mu(\bsV\bsU)\\
&=& \int_{\U(d)}\bsW\bsA\bsW^\dagger \dif\mu(\bsW) =
\int_{\U(d)}\bsU\bsA\bsU^\dagger \dif\mu(\bsU),
\end{eqnarray*}
implying that $\int_{\U(d)}\bsU\bsA\bsU^\dagger \dif\mu(\bsU)$
commutes with $\U(d)$. Thus $\int_{\U(d)}\bsU\bsA\bsU^\dagger
\dif\mu(\bsU) = \lambda_{\bsA}\I_d$. By taking trace over both
sides, we get $\lambda_{\bsA} = \frac{\Tr{\bsA}}d$. Therefore the
desired conclusion is obtained.
\end{proof}

The application of Proposition~\ref{prop:u-integral} can be found in
\cite{Frank}.

\begin{cor}\label{cor:randomized}
It holds that
\begin{eqnarray}
\int_{\mathsf{U}(d_A)}
(\bsU_A\ot\I_B)\bsX_{AB}(\bsU_A\ot\I_B)^\dagger \dif\mu(\bsU_A) =
\frac{\I_A}{d_A}\ot\Ptr{A}{\bsX_{AB}}.
\end{eqnarray}
\end{cor}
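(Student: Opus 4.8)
The plan is to reduce the bipartite statement of Corollary~\ref{cor:randomized} to the single-system result of Proposition~\ref{prop:u-integral} by treating the $\rB$ register as an inert spectator. First I would introduce a basis $\set{\ket{i}_B}$ for the space $\cH_B$ and expand $X_{AB}$ in the block form
$$
X_{AB} = \sum_{i,j} X^{(ij)}_A \ot \out{i}{j},
$$
where each $X^{(ij)}_A\in M_{d_A}(\complex)$ is an operator on the $\rA$ system alone. Because $U_A\ot\I_B$ acts as the identity on the $\out{i}{j}$ factor, the integral passes through the outer tensor structure:
$$
\int_{\unitary{d_A}} (U_A\ot\I_B)X_{AB}(U_A\ot\I_B)^\dagger d\mu(U_A)
= \sum_{i,j}\Pa{\int_{\unitary{d_A}} U_A X^{(ij)}_A U_A^\dagger\, d\mu(U_A)}\ot\out{i}{j}.
$$

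Next I would apply Proposition~\ref{prop:u-integral} to each inner block integral, which gives $\int_{\unitary{d_A}} U_A X^{(ij)}_A U_A^\dagger\, d\mu(U_A) = \frac{\Tr{X^{(ij)}_A}}{d_A}\I_A$. Substituting this back yields
$$
\sum_{i,j}\frac{\Tr{X^{(ij)}_A}}{d_A}\I_A\ot\out{i}{j}
= \frac{\I_A}{d_A}\ot\Pa{\sum_{i,j}\Tr{X^{(ij)}_A}\out{i}{j}}.
$$
The remaining task is to recognize the operator in the second factor as the partial trace $\Ptr{A}{X_{AB}}$. This follows directly from the definition of the partial trace applied to the block expansion: $\Ptr{A}{X^{(ij)}_A\ot\out{i}{j}} = \Tr{X^{(ij)}_A}\out{i}{j}$, so summing over $i,j$ reproduces exactly the bracketed sum. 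This completes the identification and hence the proof.

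I do not expect any genuine obstacle here, since the corollary is essentially Proposition~\ref{prop:u-integral} tensored with an untouched system; the only point requiring mild care is the bookkeeping that justifies pulling the Haar integral inside the sum and past the $\rB$-factor. This is legitimate because the integral is over the compact group $\unitary{d_A}$ (so everything is finite and the integrand is continuous and matrix-valued), and the map $U_A\mapsto (U_A\ot\I_B)X_{AB}(U_A\ot\I_B)^\dagger$ is linear in the finitely many matrix blocks $X^{(ij)}_A$, so linearity of the integral suffices. An alternative, perhaps cleaner, route that I would mention is the invariance argument used in Proposition~\ref{prop:u-integral} itself: the left-hand side commutes with $V_A\ot\I_B$ for every $V_A\in\unitary{d_A}$, forcing it to have the form $\frac{\I_A}{d_A}\ot Y_B$ for some $Y_B$ acting on $\cH_B$, and taking the partial trace over $\rA$ on both sides identifies $Y_B = \Ptr{A}{X_{AB}}$.
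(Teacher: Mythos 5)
Your proof is correct and follows essentially the same route as the paper's: expand $X_{AB}$ in blocks $\sum_{\mu,\nu}X^A_{\mu\nu}\ot\out{\mu}{\nu}$ with respect to a basis of $\cH_B$, apply Proposition~\ref{prop:u-integral} blockwise, and identify the resulting sum $\sum_{\mu,\nu}\Tr{X^A_{\mu\nu}}\out{\mu}{\nu}$ as $\Ptr{A}{X_{AB}}$. The alternative invariance argument you sketch at the end is also valid but is not the one the paper uses here.
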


\begin{proof}
We chose an orthonormal base $\set{\ket{\mu}: \mu=1,\ldots,d_B}$ for
the second Hilbert space $B$. Then $\bsX_{AB} =
\sum_{\mu,\nu=1}^{d_B}\bsX^A_{\mu\nu}\ot\out{\mu}{\nu}$ such that
\begin{eqnarray*}
&&\int_{\U(d_A)} (\bsU_A\ot\I_B)\bsX_{AB}(\bsU_A\ot\I_B)^\dagger
\dif\mu(\bsU_A) = \sum_{\mu,\nu=1}^{d_B} \Pa{\int_{\U(d_A)}
\bsU_A\bsX^A_{\mu\nu}\bsU^\dagger_A \dif\mu(\bsU_A)}\ot\out{\mu}{\nu}\\
&&=\sum_{\mu,\nu=1}^{d_B}
\Pa{\Tr{\bsX^A_{\mu\nu}}\frac{\I_A}{d_A}}\ot\out{\mu}{\nu} =
\frac{\I_A}{d_A}\ot
\Pa{\sum_{\mu,\nu=1}^{d_B}\Tr{\bsX^A_{\mu\nu}}\out{\mu}{\nu}} =
\frac{\I_A}{d_A}\ot\Ptr{A}{\bsX_{AB}}.
\end{eqnarray*}
This completes the proof.
\end{proof}

\begin{cor}\label{cor:randomized-systems}
It holds that
\begin{eqnarray}
\int_{\mathsf{U}(d_A)}\int_{\mathsf{U}(d_B)} (\bsU_A\ot
\bsU_B)\bsX_{AB}(\bsU_A\ot \bsU_B)^\dagger
\dif\mu(\bsU_A)\dif\mu(\bsU_B) =
\Ptr{AB}{\bsX_{AB}}\frac{\I_A}{d_A}\ot\frac{\I_B}{d_B}
\end{eqnarray}
\end{cor}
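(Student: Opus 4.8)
The plan is to iterate the single-system averaging results, integrating over $U_A$ and $U_B$ in succession. Since the two Haar measures are independent, I would invoke Fubini to split the double integral and carry out the $U_A$-integration first with $U_B$ held fixed. The key algebraic observation is the factorization $(U_A\ot U_B) = (\I_A\ot U_B)(U_A\ot\I_B)$, so that
$$(U_A\ot U_B)X_{AB}(U_A\ot U_B)^\dagger = (\I_A\ot U_B)\Br{(U_A\ot\I_B)X_{AB}(U_A\ot\I_B)^\dagger}(\I_A\ot U_B)^\dagger.$$
Because the factor $\I_A\ot U_B$ does not depend on $U_A$, it can be pulled outside the inner integral.

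First I would apply Corollary~\ref{cor:randomized} to the bracketed expression, obtaining
$$\int_{\unitary{d_A}}(U_A\ot\I_B)X_{AB}(U_A\ot\I_B)^\dagger d\mu(U_A) = \frac{\I_A}{d_A}\ot\Ptr{A}{X_{AB}}.$$
Conjugating by the surviving factor $\I_A\ot U_B$ leaves the first tensor slot untouched and acts only on the second, producing $\frac{\I_A}{d_A}\ot U_B\Ptr{A}{X_{AB}}U_B^\dagger$.

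Next I would integrate this over $U_B$. Applying Proposition~\ref{prop:u-integral} to the second tensor factor yields
$$\int_{\unitary{d_B}}U_B\Ptr{A}{X_{AB}}U_B^\dagger d\mu(U_B) = \frac{\Tr{\Ptr{A}{X_{AB}}}}{d_B}\I_B.$$
The scalar here simplifies via the tower property of the partial trace, $\Tr{\Ptr{A}{X_{AB}}} = \Ptr{AB}{X_{AB}}$, and reassembling the two tensor factors gives exactly $\Ptr{AB}{X_{AB}}\frac{\I_A}{d_A}\ot\frac{\I_B}{d_B}$, as claimed.

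I do not anticipate a genuine obstacle: the argument is a clean two-step iteration in which each step reduces to an already-established proposition. The only points requiring care are the justification for exchanging the order of integration (routine, since the integrand is continuous and the product measure is finite) and the bookkeeping that conjugation by $\I_A\ot U_B$ affects only the $B$-factor. As an alternative route one could mimic the bi-invariance proof of Proposition~\ref{prop:u-integral} directly: show the integral commutes with every $V_A\ot V_B$, deduce from the tensor-product rule for commutants $(\cM\ot\cN)' = \cM'\ot\cN'$ together with the fact that $\spn\,\unitary{d}$ generates the full matrix algebra that the result must be proportional to $\I_A\ot\I_B$, and then fix the constant by taking the total trace.
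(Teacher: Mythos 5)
Your proof is correct and is exactly the argument the paper intends: the corollary is stated without proof precisely because it follows by iterating Corollary~\ref{cor:randomized} and Proposition~\ref{prop:u-integral} in the way you describe. The factorization $(U_A\ot U_B)=(\I_A\ot U_B)(U_A\ot\I_B)$, the Fubini step, and the identity $\Tr{\Ptr{A}{X_{AB}}}=\Tr{X_{AB}}$ are all handled properly, so this fills in the omitted details faithfully.
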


\begin{cor}\label{cor:UU}
It holds that
\begin{eqnarray}
\int_{\mathsf{U}(d)}\bsU\ot\overline{\bsU} \dif\mu(\bsU) =
\frac1d\out{\vec(\I_d)}{\vec(\I_d)}.
\end{eqnarray}
\end{cor}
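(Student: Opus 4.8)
The plan is to reduce the claim to Proposition~\ref{prop:u-integral} via the $\vec$ correspondence, using the identity $\vec(UXU^\dagger) = (U\ot\overline{U})\vec(X)$. This comes directly from the general rule $\vec(AXB) = (A\ot B^\t)\vec(X)$ applied with $A=U$ and $B=U^\dagger$: since $(U^\dagger)^\t = \overline{U}$, the operator $U\ot\overline{U}$ is precisely the vectorization of the conjugation map $X\mapsto UXU^\dagger$. The whole corollary is then an instance of the $k=1$ depolarizing integral read through $\vec$.

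First I would apply the integral to an arbitrary vector of the form $\vec(X)$ with $X\in M_d(\complex)$. Pulling the integral inside and invoking the identity above gives
\begin{eqnarray*}
\Pa{\int_{\unitary{d}} U\ot\overline{U}\, d\mu(U)}\vec(X) = \int_{\unitary{d}} \vec(UXU^\dagger)\, d\mu(U) = \vec\Pa{\int_{\unitary{d}} UXU^\dagger\, d\mu(U)},
\end{eqnarray*}
and Proposition~\ref{prop:u-integral} evaluates the inner integral as $\frac{\Tr{X}}{d}\I_d$, so the right-hand side collapses to $\frac{\Tr{X}}{d}\vec(\I_d)$.

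The second ingredient is the identity $\Tr{X} = \iinner{\vec(\I_d)}{\vec(X)}$, immediate from $\vec(\I_d)=\sum_j\ket{jj}$ and the defining rule $\vec(\out{i}{j})=\ket{ij}$, since the pairing extracts exactly the diagonal sum $\sum_j x_{jj}$. Substituting this, the previous display reads
\begin{eqnarray*}
\Pa{\int_{\unitary{d}} U\ot\overline{U}\, d\mu(U)}\vec(X) = \frac1d\vec(\I_d)\iinner{\vec(\I_d)}{\vec(X)} = \frac1d\out{\vec(\I_d)}{\vec(\I_d)}\vec(X).
\end{eqnarray*}

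Finally, because $\vec$ is a linear bijection from $M_d(\complex)$ onto $\complex^d\ot\complex^d$, the vectors $\vec(X)$ sweep out the entire space as $X$ varies; the two operators therefore agree on a full basis and must coincide, which is the asserted formula. I do not expect a genuine obstacle here: the computation is short once the $\vec$ dictionary is in place, and the only point demanding care is the conjugate/transpose bookkeeping, namely checking that taking $B=U^\dagger$ produces the factor $\overline{U}$ in the second tensor leg rather than $U^\t$ or $U^\dagger$.
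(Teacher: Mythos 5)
Your proposal is correct and follows essentially the same route as the paper: vectorize the depolarizing-channel identity of Proposition~\ref{prop:u-integral} using $\vec(AXB)=(A\ot B^\t)\vec(X)$ with $B=U^\dagger$, identify $\Tr{X}$ with $\sip{\vec(\I_d)}{\vec(X)}$, and conclude by surjectivity of $\vec$. No gaps.
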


\begin{proof}
Since
\begin{eqnarray*}
\vec\Pa{\int_{\U(d)}\bsU\bsA\bsU^\dagger \dif\mu(\bsU)}&=&
\Pa{\int_{\U(d)}\bsU\ot\overline{\bsU} \dif\mu(\bsU)}\ket{\vec(\bsA)},\\
\vec\Pa{\frac{\Tr{\bsA}}{d}\I_d}&=&
\frac1d\ket{\vec(\I_d)}\inner{\vec(\I_d)}{\vec(\bsA)}.
\end{eqnarray*}
Using Proposition~\ref{prop:u-integral}, it follows that
\begin{eqnarray*}
\int_{\U(d)}\bsU\ot\overline{\bsU} \dif\mu(\bsU) =
\frac1d\out{\vec(\I_d)}{\vec(\I_d)},
\end{eqnarray*}
implying the result.
\end{proof}

\begin{center}
\fcolorbox{purple}{lightgray}{
\parbox{16.3cm}{
\begin{cor}\label{cor:uu-swap}
It holds that
\begin{eqnarray}
\int_{\mathsf{U}(d)} \bsU\ot \bsU^\dagger \dif\mu(\bsU) =
\frac{\bsF}{d},
\end{eqnarray}
where $\bsF$ is the swap operator defined as
$\bsF=\sum^d_{i,j=1}\out{ij}{ji}$.
\end{cor}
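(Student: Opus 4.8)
The plan is to obtain this identity directly from Corollary~\ref{cor:UU} by a single partial transpose, exploiting that $U\ot U^\dagger$ and $U\ot\overline U$ differ only by transposing the second tensor factor. Let $\Theta$ denote the linear map on $\End(\complex^d\ot\complex^d)$ that transposes the second factor, i.e. $\Theta(A\ot B)=A\ot B^\t$. Since $U^\dagger=\overline U^\t$, we have $\Theta(U\ot\overline U)=U\ot\overline U^\t=U\ot U^\dagger$. Because $\Theta$ is linear (hence continuous on this finite-dimensional space), it commutes with the Haar integral, so
\begin{eqnarray*}
\int_{\unitary{d}} U\ot U^\dagger\, d\mu(U) &=& \Theta\Pa{\int_{\unitary{d}} U\ot\overline U\, d\mu(U)}\\
&=& \frac1d\,\Theta\Pa{\out{\vec(\I_d)}{\vec(\I_d)}},
\end{eqnarray*}
where the second equality is exactly Corollary~\ref{cor:UU}.

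It then remains to identify $\Theta\Pa{\out{\vec(\I_d)}{\vec(\I_d)}}$ with the swap operator $F$. First I would expand $\vec(\I_d)=\sum_j\ket{jj}$, which gives $\out{\vec(\I_d)}{\vec(\I_d)}=\sum_{j,k}\out{j}{k}\ot\out{j}{k}$. Applying $\Theta$ and using $(\out{j}{k})^\t=\out{k}{j}$ yields $\sum_{j,k}\out{j}{k}\ot\out{k}{j}=\sum_{j,k}\out{jk}{kj}=F$. Combining this with the display above produces $\int_{\unitary{d}} U\ot U^\dagger\, d\mu(U)=F/d$, as claimed.

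The computation is entirely routine; the only point requiring care is the bookkeeping of the partial transpose (it must act on the second factor, not the first) and the index identity $\Theta(\out{\vec(\I_d)}{\vec(\I_d)})=F$, which is the step most prone to a slipped index. As a sanity check I would compare traces: $\Tr{F}=d$, while $\Tr{\int_{\unitary{d}} U\ot U^\dagger\, d\mu(U)}=\int_{\unitary{d}}\abs{\Tr U}^2 d\mu(U)$, so the identity predicts $\int_{\unitary{d}}\abs{\Tr U}^2 d\mu(U)=1$, consistent with the $k=1$ moment. A self-contained alternative that avoids Corollary~\ref{cor:UU} is also available: writing $M:=\int_{\unitary{d}} U\ot U^\dagger\, d\mu(U)$, bi-invariance of $\mu$ gives the two intertwining relations $(V\ot\I_d)M=M(\I_d\ot V)$ and $(\I_d\ot V)M=M(V\ot\I_d)$ for all $V\in\unitary{d}$, which together force $M$ to be a scalar multiple of $F$, the scalar being fixed by the trace. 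I would nonetheless present the partial-transpose argument as the main proof, since it is the shortest given what has already been established.
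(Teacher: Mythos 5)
Your argument is exactly the paper's first proof of this corollary: apply the partial transpose on the second tensor factor to both sides of Corollary~\ref{cor:UU}, using $U^\dagger=\overline{U}^{\,\t}$ and the index computation $\bigl(\out{\vec(\I_d)}{\vec(\I_d)}\bigr)^{\t_2}=F$. The paper merely states this in one line (and separately offers two alternative proofs, one via bi-invariance and Cauchy--Schwarz and one via the Schur orthogonality relations), so your write-up is a correct, fleshed-out version of the same route.
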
}}
\end{center}

\begin{proof}[The first proof]
By taking partial transposes relative to second subsystems over both
sides in Corollary~\ref{cor:UU}, we get the desired identity.
\end{proof}
\begin{proof}[The second proof]
Let $\bsM=\int_{\U(d)} \bsU\ot \bsU^\dagger \dif\mu(\bsU)$. Since
Haar-measure $\mu$ is uniform over the unitary group $\U(d)$, it
follows that $\mu(\bsU)=\mu(\bsV)$ for any $\bsU,\bsV\in\U(d)$. In
particular, $\mu(\bsU)=\mu(\bsU^\dagger)$. Thus $\bsM^\dagger=\bsM$.

From the elementary fact that $\Tr{(\bsA\ot \bsB) \bsF}
=\Tr{\bsA\bsB}$, we have $\Tr{\bsM\bsF}=d$. Since Haar-measure is
left-regular, it follows that
\begin{eqnarray*}
(\bsV\ot\I)\bsM(\I\ot \bsV^\dagger) &=& \int_{\U(d)}\bsV\bsU\ot
\bsU^\dagger
\bsV^\dagger \dif\mu(\bsU) \\
&=& \int_{\U(d)}\bsV\bsU\ot (\bsV\bsU)^\dagger \dif\mu(\bsV\bsU) =
\bsM.
\end{eqnarray*}
That is $(\bsV\ot\I)\bsM(\I\ot \bsV^\dagger) = \bsM$ for all
$\bsV\in\U(d)$. By taking traces over both sides, we have
\begin{eqnarray*}
\Tr{\bsM} = \Tr{(\bsV\ot\I)\bsM(\I\ot \bsV^\dagger)} =
\Tr{\bsM(\bsV\ot \bsV^\dagger)}.
\end{eqnarray*}
By taking integrals over both sides, we have
\begin{eqnarray*}
\int_{\U(d)}\Tr{\bsM}\dif\mu(\bsV) = \int_{\U(d)}\Tr{\bsM(\bsV\ot
\bsV^\dagger)}\dif\mu(\bsV),
\end{eqnarray*}
which means that $\Tr{\bsM}=\Tr{\bsM^2}$. By Cauchy-Schwartz
inequality, we get
$$
d^2 = \Br{\Tr{\bsM\bsF}}^2\leqslant \Tr{\bsM^2}\Tr{\bsF^2} =
d^2\Tr{\bsM},
$$
implies that $\Tr{\bsM}\geqslant1$. In what follows, we show that
$\Tr{\bsM}=1$. By the definition of $M$, we have
\begin{eqnarray*}
&&\Tr{\bsM} = \int_{\U(d)} \abs{\Tr{\bsU}}^2\dif\mu(\bsU) \\
&&= \int_{\U(d)}
\inner{\vec(\I_d)}{\vec(\bsU)}\inner{\vec(\bsU)}{\vec(\I_d)}\dif\mu(\bsU)\\
&&=
\Innerm{\vec(\I_d)}{\int_{\U(d)}\out{\vec(\bsU)}{\vec(\bsU)}\dif\mu(\bsU)}{\vec(\I_d)}.
\end{eqnarray*}
Define a unital quantum channel $\Gamma$ as follows:
$$
\Gamma=\int_{\U(d)}\mathrm{Ad}_{\bsU}\dif\mu(\bsU).
$$
Thus by Proposition~\ref{prop:u-integral}, we have
$\Gamma(\bsX)=\Tr{\bsX}\frac{\I_d}{d}$. By Choi-Jiamio{\l}kowksi
isomorphism, it follows that
$$
\jam{\Gamma} = (\Gamma\ot\I)(\out{\vec(\I_d)}{\vec(\I_d)}) =
\int_{\U(d)}\out{\vec(\bsU)}{\vec(\bsU)}\dif\mu(\bsU).
$$
For the completely depolarizing channel $\Gamma(\bsX) =
\Tr{\bsX}\frac{\I_d}{d}$, we already know that $\jam{\Gamma} =
\frac1d\I_d\ot\I_d$. Therefore
\begin{center}
\fcolorbox{purple}{lightgray}{
\parbox{16.3cm}{
\begin{eqnarray}
\int_{\U(d)}\out{\vec(\bsU)}{\vec(\bsU)}\dif\mu(\bsU) =
\frac1d\I_d\ot\I_d.
\end{eqnarray}}}
\end{center}
Finally $\Tr{\bsM}=\frac1d\inner{\vec(\I_d)}{\vec(\I_d)} = 1$. This
indicates that Cauchy-Schwartz inequality is saturated, and moreover
the saturation happens if and only if $\bsM\propto \bsF$. Let
$\bsM=\lambda \bsF$. By taking traces over both sides, we have
$\lambda=\frac1d$. The desired conclusion is obtained.
\end{proof}

\begin{proof}[The third proof]
We derive directly the integral formula from the Schur Orthogonality Relations of a compact Lie group. See the Section~\ref{sect:concluding-remarks}.
\end{proof}

\begin{cor}\label{cor:integral-of-visibility}
It holds that
\begin{eqnarray}
\int_{\mathsf{U}(d)} \abs{\Tr{\bsA\bsU}}^2 \dif\mu(\bsU) =
\frac1d\tr{\bsA^\dagger \bsA},
\end{eqnarray}
where $\bsA\in M_d(\complex)$.
\end{cor}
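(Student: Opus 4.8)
The plan is to reduce the scalar integral to the already-established operator identity of Corollary~\ref{cor:uu-swap}. First I would rewrite the integrand as a product of two traces by using $\abs{\Tr{AU}}^2 = \Tr{AU}\,\overline{\Tr{AU}} = \Tr{AU}\Tr{U^\dagger A^\dagger}$, and then collapse this product of traces into a single trace on the tensor-product space via $\Tr{X}\Tr{Y} = \Tr{X\ot Y}$. The crucial bookkeeping step is the factorization
\begin{eqnarray}
(AU)\ot(U^\dagger A^\dagger) = (A\ot\I)\,(U\ot U^\dagger)\,(\I\ot A^\dagger),
\end{eqnarray}
which one verifies by the multiplicativity rule $(X_1Y_1)\ot(X_2Y_2) = (X_1\ot X_2)(Y_1\ot Y_2)$. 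This isolates all the $U$-dependence inside the single factor $U\ot U^\dagger$.

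Next I would push the Haar integral through the $U$-independent outer factors, obtaining
\begin{eqnarray}
\int_{\unitary{d}}\abs{\Tr{AU}}^2 d\mu(U) = \Tr{(A\ot\I)\Pa{\int_{\unitary{d}} U\ot U^\dagger d\mu(U)}(\I\ot A^\dagger)},
\end{eqnarray}
and then invoke Corollary~\ref{cor:uu-swap} to replace the inner integral by $F/d$. It remains to evaluate $\frac1d\Tr{(A\ot\I)F(\I\ot A^\dagger)}$. Using cyclicity of the trace together with $(\I\ot A^\dagger)(A\ot\I) = A\ot A^\dagger$ turns this into $\frac1d\Tr{(A\ot A^\dagger)F}$, and the elementary swap identity $\Tr{(X\ot Y)F} = \Tr{XY}$ (already quoted in the second proof of Corollary~\ref{cor:uu-swap}) gives $\frac1d\Tr{AA^\dagger} = \frac1d\tr{A^\dagger A}$, as claimed.

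I do not expect a serious obstacle here; the only point requiring care is the tensor algebra, namely getting the factorization and the cyclic rearrangement right so that exactly the swap-operator trace identity can be applied. As an alternative route that avoids the swap operator altogether, one may write $\Tr{AU} = \inner{\vec(A^\dagger)}{\vec(U)}$, so that $\abs{\Tr{AU}}^2 = \Innerm{\vec(A^\dagger)}{\out{\vec(U)}{\vec(U)}}{\vec(A^\dagger)}$, and then use the identity $\int_{\unitary{d}}\out{\vec(U)}{\vec(U)}d\mu(U) = \frac1d\I_d\ot\I_d$ derived inside the second proof of Corollary~\ref{cor:uu-swap}; this immediately yields $\frac1d\inner{\vec(A^\dagger)}{\vec(A^\dagger)} = \frac1d\tr{A^\dagger A}$ and serves as a useful cross-check.
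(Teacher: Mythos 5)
Your main argument is correct and is essentially the paper's own proof: both reduce $\abs{\Tr{AU}}^2$ to $\Tr{(A\ot A^\dagger)(U\ot U^\dagger)}$, pull the Haar integral onto the factor $U\ot U^\dagger$, apply Corollary~\ref{cor:uu-swap}, and finish with $\Tr{(X\ot Y)F}=\Tr{XY}$; your extra factorization-and-cyclicity bookkeeping and the vec-based cross-check are fine but do not change the route.
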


\begin{proof}
In fact,
$$
\abs{\Tr{\bsA\bsU}}^2 = \Tr{\bsA\bsU}\overline{\Tr{\bsA\bsU}} =
\Tr{(\bsA\ot \bsA^\dagger)(\bsU\ot \bsU^\dagger)}.
$$
It follows that
\begin{eqnarray*}
\int_{\U(d)} \abs{\Tr{\bsA\bsU}}^2 \dif\mu(\bsU) &=& \Tr{(\bsA\ot
\bsA^\dagger)\int_{\U(d)} \bsU\ot \bsU^\dagger \dif\mu(\bsU)}\\
&=& \frac1d\Tr{(\bsA\ot \bsA^\dagger)\bsF} =
\frac1d\Tr{\bsA\bsA^\dagger},
\end{eqnarray*}
implying the result.
\end{proof}

\begin{cor}\label{cor:local-integral}
It holds that
\begin{eqnarray}
\int_{\mathsf{U}(d_1)}\int_{\mathsf{U}(d_2)}
\abs{\Tr{\bsA(\bsU\ot\bsV)}}^2\dif\mu(\bsU)\dif\mu(\bsV) =
\frac1{d_1d_2}\Tr{\bsA^\dagger \bsA},
\end{eqnarray}
where $\bsA\in M_{d_1d_2}(\complex)$.
\end{cor}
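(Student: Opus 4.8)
The plan is to reduce the bipartite statement to the single-unitary result in Corollary~\ref{cor:integral-of-visibility} by exploiting the tensor structure of the integrand. The key observation is that $A(U\ot V)$ is an operator on $\complex^{d_1}\ot\complex^{d_2}$, and the integral over the product group $\unitary{d_1}\times\unitary{d_2}$ should factor neatly once we expand $\abs{\Tr{A(U\ot V)}}^2$ as a trace over the doubled space, exactly as was done in the proof of Corollary~\ref{cor:integral-of-visibility}.

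First I would write
\begin{eqnarray*}
\abs{\Tr{A(U\ot V)}}^2 = \Tr{\Pa{A\ot A^\dagger}\Pa{(U\ot V)\ot(U^\dagger\ot V^\dagger)}},
\end{eqnarray*}
using the same identity $\abs{\Tr{AX}}^2 = \Tr{(A\ot A^\dagger)(X\ot X^\dagger)}$ that underlies the $k=1$ corollary. Then I would integrate over $U$ and $V$ separately, pulling the integral inside the trace by linearity. After reordering the tensor factors so that the two copies of the first system sit together and the two copies of the second system sit together, the integrand splits as a tensor product $(U\ot U^\dagger)\ot(V\ot V^\dagger)$, and the double integral becomes
\begin{eqnarray*}
\int_{\unitary{d_1}} U\ot U^\dagger d\mu(U)\ot\int_{\unitary{d_2}} V\ot V^\dagger d\mu(V) = \frac{F_1}{d_1}\ot\frac{F_2}{d_2}
\end{eqnarray*}
by two applications of Corollary~\ref{cor:uu-swap}, where $F_1$ and $F_2$ are the swap operators on the two respective doubled systems.

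Finally I would contract this against $A\ot A^\dagger$. The trace $\Tr{(A\ot A^\dagger)(F_1\ot F_2)}$ collapses via the elementary identity $\Tr{(X\ot Y)F}=\Tr{XY}$ applied on each factor, yielding $\Tr{A^\dagger A}$ up to the normalization $\frac1{d_1 d_2}$, which is the claimed result.

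The main obstacle will be entirely bookkeeping rather than conceptual: one must carefully track the permutation of the four tensor legs (two from the $U$-system, two from the $V$-system) so that the swap operators act on the correct pairs, and verify that the reordering permutation is consistent on both sides of the trace. Since the trace is invariant under a simultaneous relabeling of the legs, this reordering is harmless, but writing it down precisely with the index conventions is where care is needed. An even cleaner route, which I would mention as an alternative, is to regard $U\ot V$ as a single element of the subgroup $\unitary{d_1}\otimes\unitary{d_2}\subseteq\unitary{d_1 d_2}$ and apply Corollary~\ref{cor:integral-of-visibility} directly in the form $\int \abs{\Tr{AW}}^2 d\mu(W)$ with $W=U\ot V$, though this requires knowing how the product Haar measure restricts, so I would prefer the explicit factorization above.
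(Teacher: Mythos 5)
Your proof is correct, and it takes a genuinely more streamlined route than the paper's. The paper first performs a singular value decomposition $A=\sum_j s_j\out{\Phi_j}{\Psi_j}$, converts the vectors $\ket{\Phi_j},\ket{\Psi_j}$ into matrices via the $\vec$ correspondence, and only then applies Corollary~\ref{cor:uu-swap} twice, which forces a fairly heavy index computation with the operators $X_j, Y_j$ and the two swaps $F_{11},F_{22}$. You bypass the SVD and the $\vec$ machinery entirely by writing $\abs{\Tr{A(U\ot V)}}^2=\Tr{(A\ot A^\dagger)\,(U\ot V)\ot(U^\dagger\ot V^\dagger)}$ and regrouping the four tensor legs so the $U$-integral and $V$-integral factor; both proofs then rest on the same key lemma, Corollary~\ref{cor:uu-swap}. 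What your approach buys is brevity and a transparent generalization to $n$ factors (it essentially reproves the subsequent Proposition on multipartite products with no extra work). One small point to tighten: in the final contraction you say the identity $\Tr{(X\ot Y)F}=\Tr{XY}$ is ``applied on each factor,'' but $A$ is generally not a product operator across the two subsystems, so the identity cannot be applied factorwise. The correct observation is that the product of the two partial swaps $F_{13}F_{24}$ is precisely the \emph{global} swap of the composite system $(12)$ with $(34)$, whence $\Tr{(A\ot A^\dagger)F_{13}F_{24}}=\Tr{AA^\dagger}$ in a single application of the identity on $\complex^{d_1d_2}$. Your parenthetical alternative of viewing $U\ot V$ as ranging over a subgroup of $\unitary{d_1d_2}$ would indeed fail if pursued naively (the product Haar measure is not the restriction of the Haar measure of the big group), but you correctly flag and discard it.
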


\begin{proof}
By the SVD of a matrix, we have
\begin{eqnarray*}
\bsA = \sum_j s_j \out{\Phi_j}{\Psi_j},
\end{eqnarray*}
where $s_j := s_j(\bsA)$ is the singular values of the matrix $\bsA$
and $\ket{\Phi_j},\ket{\Psi_j}\in \complex^{d_1}\ot \complex^{d_2}$.
From the properties of the vec mapping for a matrix, we see that
there exist $d_2\times d_1$ matrices $\bsX_j$ and $\bsY_j$,
respectively, such that
\begin{eqnarray*}
\ket{\Phi_j} = \vec(\bsX_j),~~\ket{\Psi} = \vec(\bsY_j).
\end{eqnarray*}
This indicates that
\begin{eqnarray*}
&&\abs{\Tr{\bsA(\bsU\ot \bsV)}}^2 = \abs{\sum_j
s_j\Innerm{\Psi_j}{\bsU\ot
\bsV}{\Phi_j}}^2\\
&&= \sum_{i,j} s_is_j\Innerm{\Psi_i}{\bsU\ot
\bsV}{\Phi_i}\overline{\Innerm{\Psi_j}{\bsU\ot \bsV}{\Phi_j}}\\
&&= \sum_{i,j} s_is_j\Innerm{\Psi_i}{\bsU\ot \bsV}{\Phi_i}
\Innerm{\Phi_j}{\bsU^\dagger\ot \bsV^\dagger}{\Psi_j},
\end{eqnarray*}
which implies that
\begin{eqnarray*}
&&\abs{\Tr{\bsA(\bsU\ot \bsV)}}^2 = \sum_{i,j} s_is_j
\Inner{\bsY_i}{\bsU\bsX_i
\bsV^\t}\Inner{\bsX_j}{\bsU^\dagger \bsY_j (\bsV^\dagger)^\t}\\
&&= \sum_{i,j} s_is_j \Inner{\bsY_i\ot \bsX_j}{(\bsU\ot
\bsU^\dagger) (\bsX_i\ot \bsY_j)(\bsV^\t\ot (\bsV^\t)^\dagger)}.
\end{eqnarray*}
Thus
\begin{eqnarray*}
&&\int_{\U(d_1)}\int_{\U(d_2)} \abs{\Tr{\bsA(\bsU\ot
\bsV)}}^2\dif\mu(\bsU)\dif\mu(\bsV)\\
&&=\sum_{i,j} s_is_j \Inner{\bsY_i\ot
\bsX_j}{\Pa{\int_{\U(d_1)}\bsU\ot \bsU^\dagger \dif\mu(\bsU)}
(\bsX_i\ot \bsY_j)\Pa{\int_{\U(d_2)} \bsV^\t\ot
(\bsV^\t)^\dagger \dif\mu(\bsV)}}\\
&&= \frac1{d_1d_2} \sum_{i,j} s_is_j \Inner{\bsY_i\ot
\bsX_j}{\bsF_{11}
(\bsX_i\ot \bsY_j)\bsF_{22}}\\
&&=\frac1{d_1d_2} \sum_{i,j} s_is_j \Tr{(\bsY_i\ot \bsX_j)^\dagger
\bsF_{11} (\bsX_i\ot \bsY_j)\bsF_{22}},
\end{eqnarray*}
where $\bsF_{11}$ is the swap operator on
$\complex^{d_1}\ot\complex^{d_1}$, $\bsF_{22}$ is the swap operator
on $\complex^{d_2}\ot\complex^{d_2}$.

Taking orthonormal base $\ket{\mu}$ and $\ket{m}$ of
$\complex^{d_1}$ and $\complex^{d_2}$, respectively, gives rise to
$$
\bsF_{11} = \sum^{d_1}_{\mu,\nu=1} \out{\mu\nu}{\nu\mu},~~\bsF_{22}
= \sum^{d_2}_{m,n=1}\out{mn}{nm}.
$$
By substituting both operators into the above expression, it follows
that
\begin{eqnarray*}
\int_{\U(d_1)}\int_{\U(d_2)} \abs{\Tr{\bsA(\bsU\ot
\bsV)}}^2\dif\mu(\bsU)\dif\mu(\bsV) &=& \frac1{d_1d_2} \sum_{i,j}
s_is_j \Tr{\bsX_i
\bsX^\dagger_j}\Tr{\bsY^\dagger_i \bsY_j}\\
&=& \frac1{d_1d_2}\Tr{\bsA^\dagger \bsA}.
\end{eqnarray*}
The proof is complete.
\end{proof}
Note that Corollaries~\ref{cor:uu-swap},
\ref{cor:integral-of-visibility}, and \ref{cor:local-integral} are
used in the recent paper \cite{Lin} to establish an interesting
relationship between quantum correlation and interference
visibility. In what follows, we obtain a general result:
\begin{prop}
It holds that
\begin{eqnarray}
\int_{\mathsf{U}(d_1)\times\cdots\times\mathsf{U}(d_n)}
\abs{\Tr{\bsA(\bsU_1\ot \cdots\ot \bsU_n)}}^2\dif\mu(\bsU_1)\cdots
\dif\mu(\bsU_n) = \frac1d\Tr{\bsA^\dagger \bsA},
\end{eqnarray}
where $\bsA\in M_d(\complex)$ for $d=\prod^n_{j=1}d_j$.
\end{prop}
This result will be useful in the investigation of multipartite
quantum correlation. The detail of its proof is as follows.

\begin{proof}
Firstly we note from Corollary~\ref{cor:randomized} that
\begin{eqnarray*}
\int_{\U(d_1)} (\bsU_1\ot\I_{2\ldots n})\bsX_{12\ldots
n}(\bsU_1\ot\I_{2\ldots n})^\dagger \dif\mu(\bsU_1) =
\frac{\I_1}{d_1}\ot\Ptr{1}{\bsX_{12\ldots n}}.
\end{eqnarray*}
Furthermore, we have
\begin{eqnarray*}
\int_{\U(d_1)}\int_{\U(d_2)} (\bsU_1\ot \bsU_2\ot\I_{3\ldots
n})\bsX_{12\ldots n}(\bsU_1\ot \bsU_2\ot\I_{3\ldots n})^\dagger
\dif\mu(\bsU_1)\dif\mu(\bsU_2) =
\frac{\I_1}{d_1}\ot\frac{\I_2}{d_2}\ot\Ptr{12}{\bsX_{12\ldots n}}.
\end{eqnarray*}
By induction, we have
\begin{eqnarray*}
&&\int_{\U(d_1)}\int_{\U(d_2)}\cdots \int_{\U(d_n)}(\bsU_1\ot
\bsU_2\ot\cdots \ot \bsU_n)\bsX_{12\ldots n}(\bsU_1\ot
\bsU_2\ot\cdots \ot \bsU_n)^\dagger
\dif\mu(\bsU_1)\dif\mu(\bsU_2)\cdots
\dif\mu(\bsU_n) \nonumber\\
&&= \Ptr{12\ldots n}{\bsX_{12\ldots
n}}\frac{\I_1}{d_1}\ot\frac{\I_2}{d_2}\ot\cdots\ot\frac{\I_n}{d_n}.
\end{eqnarray*}
This implies that
\begin{eqnarray*}
&&\int_{\U(d_1)}\int_{\U(d_2)}\cdots \int_{\U(d_n)}\out{\bsU_1\ot
\bsU_2\ot\cdots \ot \bsU_n}{\bsU_1\ot
\bsU_2\ot\cdots \ot \bsU_n} \dif\mu(\bsU_1)\dif\mu(\bsU_2)\cdots \dif\mu(\bsU_n)\\
&&=\frac1d \I_{12\ldots n}\ot\I_{12\ldots n},
\end{eqnarray*}
where $d=\prod^n_{j=1}d_j$. Now
$$
\abs{\Tr{\bsA(\bsU_1\ot \bsU_2\ot\cdots\ot \bsU_n)}}^2 =
\Inner{\bsA^\dagger}{\bsU_1\ot \bsU_2\ot\cdots\ot
\bsU_n}\Inner{\bsU_1\ot \bsU_2\ot\cdots\ot \bsU_n}{\bsA^\dagger}
$$
implying
\begin{eqnarray*}
&&\int_{\U(d_1)}\cdots\int_{\U(d_n)}\abs{\Tr{\bsA(\bsU_1\ot
\bsU_2\ot\cdots\ot \bsU_n)}}^2 \\
&&= \Innerm{\bsA^\dagger}{\int_{\U(d_1)}\cdots
\int_{\U(d_n)}\out{\bsU_1\ot \bsU_2\ot\cdots \ot \bsU_n}{\bsU_1\ot
\bsU_2\ot\cdots \ot \bsU_n} \dif\mu(\bsU_1)\cdots \dif\mu(\bsU_n)}{\bsA^\dagger}\\
&&= \frac1d\inner{\bsA^\dagger}{\bsA^\dagger} =
\frac1d\Tr{\bsA^\dagger \bsA}.
\end{eqnarray*}
We are done.
\end{proof}

\subsection{The case where $k=2$}

\begin{center}
\fcolorbox{purple}{lightgray}{
\parbox{16.3cm}{
\begin{prop}\label{prop:uu-integral}
It holds that
\begin{eqnarray}
&&\int_{\mathsf{U}(d)}(\bsU\ot \bsU)\bsA(\bsU\ot \bsU)^\dagger
\dif\mu(\bsU) \notag\\
&&= \Pa{\frac{\Tr{\bsA}}{d^2-1} -
\frac{\Tr{\bsA\bsF}}{d(d^2-1)}}\I_{d^2} -
\Pa{\frac{\Tr{\bsA}}{d(d^2-1)}- \frac{\Tr{\bsA\bsF}}{d^2-1}}\bsF,
\end{eqnarray}
where $\bsA\in M_{d^2}(\complex)$ and the swap operator $\bsF$ is
defined by $\bsF\ket{ij}=\ket{ji}$ for all $i,j=1,\ldots,d$.
\end{prop}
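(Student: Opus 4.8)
The plan is to run exactly the averaging argument of Proposition~\ref{prop:u-integral}, but now for the tensor-square representation, and then to read off the structure of the answer from Schur's Theorem~\ref{th-schur} specialized to $k=2$. First I would set $M = \int_{\unitary{d}}(U\ot U)A(U\ot U)^\dagger d\mu(U)$ and show that $M$ commutes with $V\ot V$ for every $V\in\unitary{d}$: by left-invariance of the Haar measure,
\begin{eqnarray*}
(V\ot V)M(V\ot V)^\dagger &=& \int_{\unitary{d}}\Pa{(VU)\ot(VU)}A\Pa{(VU)\ot(VU)}^\dagger d\mu(U)\\
&=& \int_{\unitary{d}}(W\ot W)A(W\ot W)^\dagger d\mu(W) = M,
\end{eqnarray*}
upon substituting $W=VU$. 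Hence $M$ lies in the commutant $\cB'$ of $\cB=\spn\Set{\bQ(U)=U^{\ot 2}:U\in\unitary{d}}$. By Theorem~\ref{th-schur} with $k=2$ we have $\cB'=\cA=\spn\Set{\bP(\pi):\pi\in S_2}$, and since $S_2=\set{e,(12)}$ with $\bP((12))=F$, the commutant is the two-dimensional algebra $\spn\set{\I_{d^2},F}$ (these are independent precisely when $d\geqslant 2$, which is the only nondegenerate case). Therefore $M=\alpha\,\I_{d^2}+\beta F$ for scalars $\alpha,\beta$ depending linearly on $A$.

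Next I would pin down $\alpha$ and $\beta$ by pairing $M$ against the two basis operators under the trace form. Using $\Tr{\I_{d^2}}=d^2$, $\Tr{F}=d$, and $F^2=\I_{d^2}$, the ansatz gives $\Tr{M}=\alpha d^2+\beta d$ and $\Tr{MF}=\alpha d+\beta d^2$. On the other hand, cyclicity and unitary invariance of the trace give $\Tr{M}=\Tr{A}$; and because $F$ commutes with $U\ot U$ (the defining content of Schur-Weyl duality at $k=2$), we get $(U\ot U)^\dagger F=F(U\ot U)^\dagger$ and hence $\Tr{MF}=\int_{\unitary{d}}\Tr{(U\ot U)AF(U\ot U)^\dagger}d\mu(U)=\Tr{AF}$. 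This produces the linear system
\begin{eqnarray*}
\alpha d^2+\beta d &=& \Tr{A},\\
\alpha d+\beta d^2 &=& \Tr{AF},
\end{eqnarray*}
whose coefficient determinant is $d^2(d^2-1)\neq 0$. Solving yields $\alpha=\frac{\Tr{A}}{d^2-1}-\frac{\Tr{AF}}{d(d^2-1)}$ and $\beta=\frac{\Tr{AF}}{d^2-1}-\frac{\Tr{A}}{d(d^2-1)}$, which is exactly the claimed expression once the coefficient of $F$ is rewritten as $-\Pa{\frac{\Tr{A}}{d(d^2-1)}-\frac{\Tr{AF}}{d^2-1}}$.

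The conceptual core, and the only place genuine input is required, is the identification of the commutant of $\Set{U\ot U}$ with the two-dimensional span $\spn\set{\I_{d^2},F}$; this is Schur's Theorem~\ref{th-schur} at $k=2$ together with the trivial observation $\abs{S_2}=2$. Everything after that is routine: inverting a $2\times 2$ system and computing the two trace pairings. Consequently the main obstacle is bookkeeping rather than conceptual — one must verify carefully that the invariance computation places $M$ in $\cB'$ (and not merely in some larger algebra), and that $\I_{d^2}$ and $F$ are linearly independent, which fails only when $d=1$, where the stated formula is itself undefined.
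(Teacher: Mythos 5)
Your proposal is correct: the invariance argument does place $M=\int_{\unitary{d}}(U\ot U)A(U\ot U)^\dagger d\mu(U)$ in the commutant of $\cB=\spn\set{U^{\ot 2}}$, the two trace pairings $\Tr{M}=\Tr{A}$ and $\Tr{MF}=\Tr{AF}$ are justified exactly as you say (the latter because $F$ commutes with $U\ot U$), and your $2\times 2$ system and its solution reproduce the stated coefficients. The route differs from the paper's in one substantive respect: where you identify the commutant in a single stroke by citing Theorem~\ref{th-schur} at $k=2$ (so $\cB'=\cA=\spn\set{\I_{d^2},F}$), the paper does not invoke that theorem here at all. Instead it decomposes $\complex^d\ot\complex^d=\wedge^2\complex^d\oplus\vee^2\complex^d$, writes $M$ as a $2\times2$ block matrix relative to the projectors $P_\wedge=\tfrac12(\I_{d^2}-F)$ and $P_\vee=\tfrac12(\I_{d^2}+F)$, and argues blockwise: the diagonal blocks are scalars by Schur's lemma on each irreducible summand, and the off-diagonal blocks vanish as intertwiners between inequivalent irreps. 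The conclusion $M=\lambda(A)P_\wedge+\mu(A)P_\vee$ is of course the same two-dimensional ansatz as your $\alpha\I_{d^2}+\beta F$, just in the projector basis, and the paper then solves an equivalent linear system from the same two traces. Your version is shorter and leans on machinery already proved earlier in the paper; the paper's version is self-contained at $k=2$, makes the irreducibility of the symmetric and antisymmetric subspaces explicit (which is the input your appeal to Theorem~\ref{th-schur} hides), and delivers the answer directly in the $P_\wedge,P_\vee$ basis, where the coefficients $\frac{\Tr{A}\mp\Tr{AF}}{d(d\mp 1)}$ have a cleaner form. Both are complete proofs; your caveat about $d=1$ is also correctly placed.
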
}}
\end{center}

\begin{proof}
Analogously, we have $\int_{\U(d)}(\bsU\ot \bsU)\bsA(\bsU\ot
\bsU)^\dagger \dif\mu(\bsU)$ commutes with $\set{\bsV\ot \bsV:
\bsV\in\U(d)}$. Denote $\bsP_\wedge := \tfrac12(\I_{d^2}-\bsF)$ and
$\bsP_\vee := \tfrac12(\I_{d^2}+\bsF)$. It is easy to see that
$\Tr{\bsP_\wedge} = \tfrac12(d^2-d)$ and $\Tr{\bsP_\vee} =
\tfrac12(d^2+d)$. Since $\bsF=\sum_{i,j}\out{ij}{ji}$, it follows
that $\bsF^\dagger = \bsF$ and $\bsF^2=\I_{d^2}$. Thus both
$\bsP_\wedge$ and $\bsP_\vee$ are projectors and $\bsP_\wedge +
\bsP_\vee=\I_{d^2}$.

Because $\complex^d\ot\complex^d = \wedge^2\complex^d\oplus
\vee^2\complex^d$, we have $\bsP_\wedge(\complex^d\ot\complex^d)
\bsP_\wedge= \wedge^2\complex^d$ and
$\bsP_\vee(\complex^d\ot\complex^d) \bsP_\vee= \vee^2\complex^d$.
Besides, for any $\bsV\in \U(d)$,
$$
\bsV\ot \bsV \defeq \Br{\begin{array}{cc}
               \bsP_\wedge(\bsV\ot \bsV)\bsP_\wedge & 0 \\
               0 & \bsP_\vee(\bsV\ot \bsV)\bsP_\vee
             \end{array}
}
$$
Now write
$$
\int_{\U(d)}(\bsU\ot \bsU)\bsA(\bsU\ot \bsU)^\dagger \dif\mu(\bsU) =
\Br{\begin{array}{cc}
               \bsM_{11} & \bsM_{12} \\
               \bsM_{21} & \bsM_{22}
             \end{array}}
$$
is a block matrix, where $\bsM_{11}\in \End(\wedge^2
\complex^d),\bsM_{22}\in \End(\vee^2 \complex^d)$ and
$$
\bsM_{12}\in\Hom_{\U(d)}(\vee^2 \complex^d,\wedge^2
\complex^d),\bsM_{21}\in\Hom_{\unitary{d}}(\wedge^2
\complex^d,\vee^2 \complex^d).
$$
Thus
\begin{eqnarray*}
&&\Br{\begin{array}{cc}
               \bsM_{11} & \bsM_{12} \\
               \bsM_{21} & \bsM_{22}
             \end{array}}\Br{\begin{array}{cc}
               \bsP_\wedge(\bsV\ot \bsV)\bsP_\wedge & 0 \\
               0 & \bsP_\vee(\bsV\ot \bsV)\bsP_\vee
             \end{array}} \\
             &&=\Br{\begin{array}{cc}
               \bsP_\wedge(\bsV\ot \bsV)\bsP_\wedge & 0 \\
               0 & \bsP_\vee(\bsV\ot \bsV)\bsP_\vee
             \end{array}} \Br{\begin{array}{cc}
               \bsM_{11} & \bsM_{12} \\
               \bsM_{21} & \bsM_{22}
             \end{array}}.
\end{eqnarray*}
We get that, for all $\bsV\in\U(d)$,
\begin{eqnarray*}
\begin{cases}
\bsM_{11}(\wedge^2\bsV) &= (\wedge^2\bsV) \bsM_{11},\\
\bsM_{22}(\vee^2\bsV) &= (\vee^2\bsV) \bsM_{22},\\
\bsM_{12} (\vee^2\bsV) &= (\wedge^2\bsV) \bsM_{12},\\
\bsM_{21} (\wedge^2\bsV) &= (\vee^2\bsV) \bsM_{21}.
\end{cases}
\end{eqnarray*}
Therefore we obtained that
$$
\bsM_{11} = \lambda(\bsA) \bsP_\wedge,~~\bsM_{22} = \mu(\bsA)
\bsP_\vee,~~ \bsM_{12} = 0,~~ \bsM_{21} = 0.
$$
That is
\begin{eqnarray}\label{eq-u-design}
\int_{\U(d)}(\bsU\ot \bsU)\bsA(\bsU\ot \bsU)^\dagger \dif\mu(\bsU) =
\Br{\begin{array}{cc}
                                                            \lambda(\bsA) \bsP_\wedge & 0 \\
                                                            0 &
                                                            \mu(\bsA)
                                                            \bsP_\vee
                                                          \end{array}
} = \lambda(\bsA) \bsP_\wedge + \mu(\bsA) \bsP_\vee.
\end{eqnarray}
If $\bsA=\I_{d^2}$ in Eq.~\eqref{eq-u-design}, then $\I_{d^2} =
\lambda(\I_{d^2}) \bsP_\wedge + \mu(\I_{d^2}) \bsP_\vee$. Thus
$\lambda(\I_{d^2})= \mu(\I_{d^2}) = 1$ since $\I_{d^2} = \bsP_\wedge
+ \bsP_\vee$ and $\bsP_\wedge \bot \bsP_\vee$.

If $\bsA = \bsP_\wedge$ in Eq.~\eqref{eq-u-design}, then
$\bsP_\wedge = \lambda(\bsP_\wedge) \bsP_\wedge + \mu(\bsP_\wedge)
\bsP_\vee$ since $\bsU\ot \bsU$ commutes with $\bsP_\wedge$. Thus
$\lambda(\bsP_\wedge) = 1$ and $\mu(\bsP_\wedge)= 0$. Note that
$\lambda(\bsA),\mu(\bsA)$ are two linear functional. Thus we have:
$\lambda(\bsF) = -1$ and $\mu(\bsF) = 1$. This indicates that
$$
\int_{\U(d)}(\bsU\ot \bsU)\bsF(\bsU\ot \bsU)^\dagger \dif\mu(\bsU) =
\lambda(\bsF) \bsP_\wedge + \mu(\bsF) \bsP_\vee = \bsP_\vee -
\bsP_\wedge = \bsF.
$$
More simpler approach to this identity can be described as follows:
Since $\bsF(\bsM\ot \bsN)\bsF = \bsN\ot \bsM$, it follows that
$\bsF(\bsM\ot \bsN) = (\bsN\ot \bsM)\bsF$. Thus
\begin{eqnarray*}
\int_{\U(d)}(\bsU\ot \bsU)\bsF(\bsU\ot \bsU)^\dagger \dif\mu(\bsU)
&=&
\int_{\U(d)}\bsF(\bsU\ot \bsU)(\bsU\ot \bsU)^\dagger \dif\mu(\bsU)\\
&=& \bsF\int_{\U(d)}\dif\mu(\bsU) = \bsF = \bsP_\vee - \bsP_\wedge.
\end{eqnarray*}
Apparently
\begin{eqnarray*}
\int_{\U(d)}(\bsU\ot \bsU)^\dagger \bsF(\bsU\ot \bsU) \dif\mu(\bsU)
= \bsF = \bsP_\vee - \bsP_\wedge.
\end{eqnarray*}
By taking trace over both sides above, we get
$$
\Tr{\bsA} = \lambda(\bsA)\Tr{\bsP_\wedge} + \mu(\bsA)\Tr{\bsP_\vee}.
$$
Now by multiplying $\bsF$ on both sides in Eq.~\eqref{eq-u-design}
and then taking trace again, we get
\begin{eqnarray*}
&&\int_{\U(d)}\Tr{(\bsU\ot \bsU)^\dagger \bsF (\bsU\ot \bsU)\bsA}\dif\mu(\bsU) \\
&&=
\lambda(\bsA) \Tr{\bsP_\wedge \bsF} + \mu(\bsA) \Tr{\bsP_\vee \bsF}\\
&&= \mu(\bsA) \Tr{\bsP_\vee} -  \lambda(\bsA) \Tr{\bsP_\wedge},
\end{eqnarray*}
where we used the fact that $\bsP_\wedge \bsF = -\bsP_\wedge$ and
$\bsP_\vee \bsF = \bsP_\vee$. Thus we have
\begin{eqnarray*}
\begin{cases}
\frac{d(d-1)}2\lambda(\bsA) + \frac{d(d+1)}2\mu(\bsA) &= \Tr{\bsA},\\
\frac{d(d+1)}2\mu(\bsA) - \frac{d(d-1)}2\lambda(\bsA) &=
\Tr{\bsA\bsF}.
\end{cases}
\end{eqnarray*}
Solving this group of two binary equations gives rise to
\begin{eqnarray*}
\begin{cases}
\lambda(\bsA) =& \frac{\Tr{\bsA} - \Tr{\bsA\bsF}}{d(d-1)},\\
\mu(\bsA) =& \frac{\Tr{\bsA} + \Tr{\bsA\bsF}}{d(d+1)}.
\end{cases}
\end{eqnarray*}
Finally we obtained the desired conclusion as follows:
\begin{eqnarray*}
\int_{\U(d)}(\bsU\ot \bsU)\bsA(\bsU\ot \bsU)^\dagger \dif\mu(\bsU) =
\frac{\Tr{\bsA} - \Tr{\bsA\bsF}}{d(d-1)} \bsP_\wedge +
\frac{\Tr{\bsA} + \Tr{\bsA\bsF}}{d(d+1)} \bsP_\vee.
\end{eqnarray*}
We are done.
\end{proof}

The applications of Proposition~\ref{prop:uu-integral} in quantum
information theory can be found in \cite{Roga,Dupuis}.
\begin{center}
\fcolorbox{purple}{lightgray}{
\parbox{16.3cm}{
\begin{cor}\label{cor:super-operator}
It holds that
\begin{eqnarray}
&&\int_{\mathsf{U}(d)} \bsU^\dagger \bsA\bsU\bsX\bsU^\dagger
\bsB\bsU \dif\mu(\bsU)\notag\\
&&= \frac{d\Tr{\bsA\bsB}-\Tr{\bsA}\Tr{\bsB}}{d(d^2-1)}\Tr{\bsX}\I_d
+ \frac{d\Tr{\bsA}\Tr{\bsB}-\Tr{\bsA\bsB}}{d(d^2-1)}\bsX.
\end{eqnarray}
\end{cor}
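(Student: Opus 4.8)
The plan is to reduce the statement to Proposition~\ref{prop:uu-integral} by rewriting the product $U^\dagger A U\,X\,U^\dagger B U$ as a partial trace of a genuine $U\ot U$ twirl. The crucial first step is the elementary identity
\[
PXQ = \Ptr{2}{(P\ot Q)(X\ot\I)F},\qquad P,Q,X\in M_d(\complex),
\]
where $F=\sum_{ij}\out{ij}{ji}$, which one checks in matrix entries using the elementary fact $\Ptr{2}{(M\ot N)F}=MN$. I deliberately use this swap-based identity rather than $\vec(PXQ)=(P\ot Q^\t)\vec(X)$, because the latter would produce a $U\ot\overline U$ twirl of $A\ot B^\t$ (governed by the commutant of $\set{U\ot\overline U}$) instead of the $U\ot U$ twirl that Proposition~\ref{prop:uu-integral} already evaluates. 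Setting $P=U^\dagger A U$ and $Q=U^\dagger B U$ and using $(U^\dagger A U)\ot(U^\dagger B U)=(U\ot U)^\dagger(A\ot B)(U\ot U)$, linearity of the partial trace lets me pull the Haar integral inside, giving
\[
\int_{\unitary{d}} U^\dagger AUXU^\dagger BU\,d\mu(U)=\Ptr{2}{\Pa{\int_{\unitary{d}}(U\ot U)^\dagger(A\ot B)(U\ot U)\,d\mu(U)}(X\ot\I)F}.
\]

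Next I would evaluate the remaining twirl. Since the Haar measure is invariant under $U\mapsto U^\dagger$, Proposition~\ref{prop:uu-integral} applies with $A\ot B$ in the role of $A$, and with $\Tr{A\ot B}=\Tr{A}\Tr{B}$ and $\Tr{(A\ot B)F}=\Tr{AB}$ it returns $\lambda P_\wedge+\mu P_\vee$, where $P_\wedge=\tfrac12(\I_{d^2}-F)$, $P_\vee=\tfrac12(\I_{d^2}+F)$, and
\[
\lambda=\frac{\Tr{A}\Tr{B}-\Tr{AB}}{d(d-1)},\qquad \mu=\frac{\Tr{A}\Tr{B}+\Tr{AB}}{d(d+1)}.
\]
Writing $\lambda P_\wedge+\mu P_\vee=\tfrac{\lambda+\mu}{2}\I_{d^2}+\tfrac{\mu-\lambda}{2}F$ and substituting, the computation collapses to the two partial-trace identities $\Ptr{2}{(X\ot\I)F}=X$ and $\Ptr{2}{F(X\ot\I)F}=\Tr{X}\I_d$ (the second from $F(X\ot\I)F=\I\ot X$), yielding
\[
\int_{\unitary{d}} U^\dagger AUXU^\dagger BU\,d\mu(U)=\frac{\lambda+\mu}{2}\,X+\frac{\mu-\lambda}{2}\,\Tr{X}\I_d.
\]

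It then remains to simplify the two scalars by a partial-fraction combination of $\tfrac1{d(d-1)}$ and $\tfrac1{d(d+1)}$, which gives $\tfrac{\lambda+\mu}{2}=\frac{d\Tr{A}\Tr{B}-\Tr{AB}}{d(d^2-1)}$ and $\tfrac{\mu-\lambda}{2}=\frac{d\Tr{AB}-\Tr{A}\Tr{B}}{d(d^2-1)}$, precisely the asserted coefficients. I expect the only real obstacle to be bookkeeping: fixing the exact placement of $F$ in the reduction identity and tracking which factor is conjugated, so that no stray transpose converts the $U\ot U$ twirl into a $U\ot\overline U$ one. As an independent check on the \emph{form} of the answer, the change of variables $U\mapsto UV$ in the Haar integral shows that $\Phi(X):=\int_{\unitary{d}}U^\dagger AUXU^\dagger BU\,d\mu(U)$ is covariant, $\Phi(VXV^\dagger)=V\Phi(X)V^\dagger$ for all $V\in\unitary{d}$; hence $\Phi(X)=c_1\Tr{X}\I_d+c_2 X$, and one constant is pinned down at once by cyclicity of the trace together with Proposition~\ref{prop:u-integral}, since $\Tr{\Phi(X)}=\Tr{A\Pa{\int_{\unitary{d}}UXU^\dagger d\mu(U)}B}=\tfrac{\Tr{AB}}{d}\Tr{X}$.
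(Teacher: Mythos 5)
Your proof is correct, and every identity you invoke checks out: $PXQ=\Ptr{2}{(P\ot Q)(X\ot\I)F}$, the two partial-trace evaluations $\Ptr{2}{(X\ot\I)F}=X$ and $\Ptr{2}{F(X\ot\I)F}=\Tr{X}\,\I_d$, and the final coefficients $\tfrac{\lambda+\mu}{2}$ and $\tfrac{\mu-\lambda}{2}$ all agree with the stated formula. The route differs from the paper's only in the device used to convert the operator product into a two-fold tensor twirl: the paper applies the $\vec$ correspondence, $\vec(PXQ)=(P\ot Q^{\t})\vec(X)$, computes $\int(U^\dagger AU)\ot(U^\dagger BU)\,d\mu(U)$ from Proposition~\ref{prop:uu-integral}, and then takes a partial transpose on the second factor (which turns $F$ into $\out{\vec(\I_d)}{\vec(\I_d)}$) before reading off the superoperator. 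Your swap-based reduction stays entirely inside the $U\ot U$ twirl and never introduces a transpose, which is arguably cleaner: it avoids the step where one must remember that partial transposition exchanges $F$ and the maximally entangled projector, and it makes the appearance of the two terms $\Tr{X}\I_d$ and $X$ transparent as the two partial traces against $\I_{d^2}$ and $F$. Both arguments rest on the same key input, Proposition~\ref{prop:uu-integral} applied to $A\ot B$ (your appeal to the invariance of Haar measure under $U\mapsto U^\dagger$ to flip the direction of the conjugation is legitimate and is implicitly used by the paper as well), so the difference is one of bookkeeping rather than substance; your closing covariance check, which pins down the trace of the answer via Proposition~\ref{prop:u-integral}, is a worthwhile independent sanity test that the paper does not include.
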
}}
\end{center}

\begin{proof}
It suffices to compute the integral $\int_{\U(d)} (\bsU^\dagger
\bsA\bsU) \ot (\bsU^\dagger \bsB\bsU) \dif\mu(\bsU)$ since
\begin{eqnarray}\label{eq:map}
\vec\Pa{\int_{\U(d)} (\bsU^\dagger \bsA\bsU)\bsX(\bsU^\dagger
\bsB\bsU) \dif\mu(\bsU)} = \int_{\U(d)} (\bsU^\dagger \bsA\bsU) \ot
(\bsU^\dagger \bsB\bsU)^\t \dif\mu(\bsU) \vec(\bsX).
\end{eqnarray}
Once we get the formula for $\int_{\U(d)} (\bsU^\dagger \bsA\bsU)
\ot (\bsU^\dagger \bsB\bsU) \dif\mu(\bsU)$, taking partial transpose
relative to the second factor in the tensor product, we get the
formula for $\int_{\U(d)} (\bsU^\dagger \bsA\bsU) \ot (\bsU^\dagger
\bsB\bsU)^\t \dif\mu(\bsU)$.

Now by Proposition~\ref{prop:uu-integral}, we have
\begin{eqnarray*}
&&\int_{\U(d)} (\bsU^\dagger \bsA\bsU) \ot (\bsU^\dagger \bsB\bsU)
\dif\mu(\bsU)=
\int_{\U(d)} (\bsU \ot \bsU)^\dagger(\bsA\ot \bsB)(\bsU\ot \bsU) \dif\mu(\bsU)\\
&&=\Pa{\frac{\Tr{\bsA}\Tr{\bsB}}{d^2-1} -
\frac{\Tr{\bsA\bsB}}{d(d^2-1)}}\I_{d^2} -
\Pa{\frac{\Tr{\bsA}\Tr{\bsB}}{d(d^2-1)}- \frac{\Tr{\bsA\bsB}}{d^2-1}}\bsF\\
&&= \frac{d\Tr{\bsA}\Tr{\bsB} -\Tr{\bsA\bsB}}{d(d^2-1)}\I_{d^2} +
\frac{d\Tr{\bsA\bsB}-\Tr{\bsA}\Tr{\bsB}}{d(d^2-1)}\bsF,
\end{eqnarray*}
implying that
\begin{eqnarray*}
&&\int_{\U(d)} (\bsU^\dagger \bsA\bsU) \ot (\bsU^\dagger \bsB\bsU)^\t \dif\mu(\bsU)\\
&&= \frac{d\Tr{\bsA}\Tr{\bsB} -\Tr{\bsA\bsB}}{d(d^2-1)}\I_{d^2} +
\frac{d\Tr{\bsA\bsB}-\Tr{\bsA}\Tr{\bsB}}{d(d^2-1)}\out{\vec(\I_d)}{\vec(\I_d)}.
\end{eqnarray*}
Substituting this identity into \eqref{eq:map} gives the desired
result.
\end{proof}
Recall that a super-operator $\Phi$ is \emph{unitarily invariant} if
$\Ad_{\bsU^\dagger}\circ\Phi\circ\Ad_{\bsU}= \Phi$ for all
$\bsU\in\U(d)$. We also note that an super-operator $\Phi$ on
$\End(\cH_d)$ can be represented as
\begin{eqnarray}
\Phi(\bsX) = \sum_j \bsA_j \bsX \bsB^\dagger_j.
\end{eqnarray}
Now we may give the specific form of any unitarily invariant
super-operator in the following corollary.
\begin{cor}\label{cor:u-invariant-super-operator}
Let $\Phi$ be a unitarily invariant super-operator on $\End(\cH_d)$.
Then
\begin{eqnarray}
\Phi(\bsX) =
\frac{d\Tr{\Phi(\I_d)}-\Tr{\Phi}}{d(d^2-1)}\Tr{\bsX}\I_d +
\frac{d\Tr{\Phi}-\Tr{\Phi(\I_d)}}{d(d^2-1)}\bsX,
\end{eqnarray}
where $\Tr{\Phi}$ is the trace of super-operator $\Phi$, defined by
$\Tr{\Phi} :=
\sum_{\mu,\nu}\Innerm{\mu}{\Phi(\out{\mu}{\nu})}{\nu}$.
\end{cor}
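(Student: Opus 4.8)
The plan is to turn $\Phi$ into an average of the conjugated maps treated in Corollary~\ref{cor:super-operator}, and then recognize the resulting scalar coefficients as the two invariants $\Tr{\Phi(\I_d)}$ and $\Tr{\Phi}$. Since $\Phi$ is unitarily invariant, the relation $\Ad_{U^\dagger}\circ\Phi\circ\Ad_U=\Phi$ means that $U^\dagger\Phi(UXU^\dagger)U=\Phi(X)$ for every $U\in\unitary{d}$ and every $X\in\End(\cH_d)$. Integrating this identity against the normalized Haar measure (whose total mass is $1$) leaves the constant left-hand side untouched, so
\begin{eqnarray*}
\Phi(X) = \int_{\unitary{d}} U^\dagger\Phi(UXU^\dagger)U\, d\mu(U).
\end{eqnarray*}

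Next I would substitute the operator-sum representation $\Phi(X)=\sum_j A_j X B_j^\dagger$. This gives $U^\dagger\Phi(UXU^\dagger)U=\sum_j (U^\dagger A_j U)X(U^\dagger B_j^\dagger U)$, and interchanging the finite sum with the integral reduces each summand to exactly the integral evaluated in Corollary~\ref{cor:super-operator}, with $B$ there replaced by $B_j^\dagger$. Applying that corollary termwise and summing over $j$ yields
\begin{eqnarray*}
\Phi(X) = \frac{d\sum_j\Tr{A_jB_j^\dagger}-\sum_j\Tr{A_j}\Tr{B_j^\dagger}}{d(d^2-1)}\Tr{X}\I_d + \frac{d\sum_j\Tr{A_j}\Tr{B_j^\dagger}-\sum_j\Tr{A_jB_j^\dagger}}{d(d^2-1)}X.
\end{eqnarray*}

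It then remains to match the two scalar sums with $\Tr{\Phi(\I_d)}$ and $\Tr{\Phi}$. For the first, $\Phi(\I_d)=\sum_j A_j B_j^\dagger$ gives immediately $\Tr{\Phi(\I_d)}=\sum_j\Tr{A_jB_j^\dagger}$. For the second, using $\Phi(\out{\mu}{\nu})=\sum_j A_j\out{\mu}{\nu}B_j^\dagger$ one computes $\Innerm{\mu}{\Phi(\out{\mu}{\nu})}{\nu}=\sum_j\bra{\mu}A_j\ket{\mu}\bra{\nu}B_j^\dagger\ket{\nu}$, and summing over $\mu$ and $\nu$ factorizes the double sum into $\sum_j\Tr{A_j}\Tr{B_j^\dagger}$, which is precisely $\Tr{\Phi}$. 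Substituting these two identities produces the stated formula.

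The only point requiring care is this last identification: one must check that the definition $\Tr{\Phi}=\sum_{\mu,\nu}\Innerm{\mu}{\Phi(\out{\mu}{\nu})}{\nu}$ of the super-operator trace genuinely collapses to $\sum_j\Tr{A_j}\Tr{B_j^\dagger}$, which is a short index manipulation exploiting the factorization of $\bra{\mu}A_j\ket{\mu}\bra{\nu}B_j^\dagger\ket{\nu}$ over the independent indices $\mu$ and $\nu$. Everything else is a direct invocation of Corollary~\ref{cor:super-operator} together with the normalization and bi-invariance of the Haar measure, so I expect no real obstacle beyond this bookkeeping step.
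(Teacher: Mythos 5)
Your proposal is correct and follows essentially the same route as the paper: both average the identity $\Phi=\Ad_{U^\dagger}\circ\Phi\circ\Ad_U$ over the normalized Haar measure, expand $\Phi$ in the operator-sum form $\Phi(X)=\sum_j A_jXB_j^\dagger$, apply Corollary~\ref{cor:super-operator} termwise, and then identify $\sum_j\Tr{A_jB_j^\dagger}$ with $\Tr{\Phi(\I_d)}$ and $\sum_j\Tr{A_j}\Tr{B_j^\dagger}$ with $\Tr{\Phi}$ by the same index factorization. No gaps; the bookkeeping step you flag is exactly the computation the paper carries out.
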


\begin{proof}
Apparently $\Ad_{\bsU^\dagger}\circ\Phi\circ\Ad_{\bsU}= \Phi$ for
all $\bsU\in\U(d)$. This implies that, for the uniform Haar measure
$\dif\mu(\bsU)$ over the unitary group,
\begin{eqnarray*}
\Phi(\bsX) &=& \int_{\U(d)}\Phi(\bsX)\dif\mu(\bsU) =
\int_{\U(d)}\bsU^\dagger\Phi(\bsU\bsX\bsU^\dagger)\bsU\dif\mu(\bsU)\\
&=& \sum_j \int_{\U(d)}\bsU^\dagger \bsA_j\bsU\bsX\bsU^\dagger
\bsB^\dagger_j\bsU\dif\mu(\bsU).
\end{eqnarray*}
By Corollary~\ref{cor:super-operator},
\begin{eqnarray*}
&&\int_{\U(d)}\bsU^\dagger \bsA_j\bsU\bsX\bsU^\dagger \bsB^\dagger_j\bsU\dif\mu(\bsU) \\
&&=
\frac{d\Tr{\bsA_j\bsB^\dagger_j}-\Tr{\bsA_j}\Tr{\bsB^\dagger_j}}{d(d^2-1)}\Tr{\bsX}\I_d
+
\frac{d\Tr{\bsA_j}\Tr{\bsB^\dagger_j}-\Tr{\bsA_j\bsB^\dagger_j}}{d(d^2-1)}\bsX.
\end{eqnarray*}
Thus
\begin{eqnarray*}
\Phi(\bsX)&=&
\frac{d\sum_j\Tr{\bsA_j\bsB^\dagger_j}-\sum_j\Tr{\bsA_j}\Tr{\bsB^\dagger_j}}{d(d^2-1)}\Tr{\bsX}\I_d
\\
&&+
\frac{d\sum_j\Tr{\bsA_j}\Tr{\bsB^\dagger_j}-\sum_j\Tr{\bsA_j\bsB^\dagger_j}}{d(d^2-1)}\bsX\\
&=&\frac{d\Tr{\Phi(\I_d)}-\Tr{\Phi}}{d(d^2-1)}\Tr{\bsX}\I_d +
\frac{d\Tr{\Phi}-\Tr{\Phi(\I_d)}}{d(d^2-1)}\bsX,
\end{eqnarray*}
where we have used the fact that
\begin{eqnarray*}
\Tr{\Phi} &=& \sum_{\mu,\nu}
\Inner{\out{\mu}{\nu}}{\Phi(\out{\mu}{\nu})} = \sum_{\mu,\nu}
\Innerm{\mu}{\Phi(\out{\mu}{\nu})}{\nu}\\
&=& \sum_j\sum_{\mu,\nu}
\Innerm{\mu}{\bsA_j\out{\mu}{\nu}\bsB^\dagger_j}{\nu} = \sum_j
\Pa{\sum_\mu \Innerm{\mu}{\bsA_j}{\mu}}\Pa{\sum_\nu
\Innerm{\nu}{\bsB^\dagger_j}{\nu}}\\
&=& \sum_j\Tr{\bsA_j}\Tr{\bsB^\dagger_j}.
\end{eqnarray*}
There is a caution that the trace of super-operator $\Phi$ is
different from the trace of operator $\Phi(\I_d)$.
\end{proof}
We can simplify this expression if we assume more structure on the
super-operator. A trace-preserving unitarily invariant quantum
operation $\Lambda$ is a \blue{depolarizing channel}: for
$\rho\in\density{\cH_d}$,
\begin{eqnarray}
\Lambda(\rho) = p\rho + (1-p)\frac{\I_d}{d},~~~\Pa{p=\frac{\Tr{\Phi}
- 1}{d^2-1}}.
\end{eqnarray}
Indeed, this easily follows from the facts that $\Tr{\Phi(\I_d)}=d$
and $\Tr{\rho}=1$.

Let $\Phi$ be a super-operator on $\End(\cH_d)$. Define the
\emph{twirled} super-operator
\begin{eqnarray}
\Phi_\T = \int_{\U(d)} \Ad_{\bsU^\dagger}\circ\Phi\circ\Ad_{\bsU}
\dif\mu(\bsU).
\end{eqnarray}
Clearly twirled super-operator $\Phi_\T$ is unitarily invariant.

\begin{remark}
From the proof of Corollary~\ref{cor:u-invariant-super-operator}, we
see that for any super-operator $\Phi\in\End(\cH_d)$,
\begin{eqnarray}
\int_{\U(d)}\bsU^\dagger\Phi(\bsU\bsX\bsU^\dagger)\bsU\dif\mu(\bsU)
= \frac{d\Tr{\Phi(\I_d)}-\Tr{\Phi}}{d(d^2-1)}\Tr{\bsX}\I_d +
\frac{d\Tr{\Phi}-\Tr{\Phi(\I_d)}}{d(d^2-1)}\bsX.
\end{eqnarray}
Now let $d=d_Ad_B$ and $\cH_d=\cH_A\ot\cH_B$ with $\dim(\cH_A)=d_A$
and $\dim(\cH_B)=d_B$. Assume that $\bsX = \rho_{AB}$, a density
matrix on $\cH_A\ot\cH_B$. Fixing an orthonormal basis
$\set{\ket{\psi_{B,j}}:j=1,\ldots,d_B}$ for $\cH_B$. Suppose that
$\Phi(\bsX) = \ptr{B}{\bsX}\ot\I_B$. Then it can be rewritten as:
$$
\Phi(\bsX) =
\sum^{d_B}_{i,j=1}(\I_A\ot\out{\psi_{B,i}}{\psi_{B,j}})\bsX(\I_A\ot\out{\psi_{B,j}}{\psi_{B,i}}).
$$
Clearly $\Phi(\I_A\ot\I_B) = d_B\I_A\ot\I_B$, implying that
$$
\Tr{\Phi(\I_A\ot\I_B)} = d_Ad^2_B~~\text{and}~~\Tr{\Phi} =
\sum^{d_B}_{i,j=1}(d_A\delta_{ij})^2= d^2_Ad_B.
$$
From the above discussion, we see that
\begin{eqnarray}
\int_{\U(d)}\bsU^\dagger\Phi(\bsU\rho_{AB}\bsU^\dagger)\bsU\dif\mu(\bsU)
= \frac{dd_B-d_A}{d^2-1}\I_A\ot\I_B +
\frac{dd_A-d_B}{d^2-1}\rho_{AB}.
\end{eqnarray}
Denote $\rho'_{AB} = \bsU\rho_{AB}\bsU^\dagger$ and
$\rho'_{A}=\ptr{B}{\rho'_{AB}}$. Then
\begin{eqnarray*}
\Tr{(\rho'_A)^2} &=& \Tr{(\rho'_A\ot\I_B)\rho'_{AB}} =
\Tr{\Phi(\rho'_{AB})\rho'_{AB}}\\
&=&\Tr{\bsU^\dagger\Phi(\bsU\rho_{AB}\bsU^\dagger)\bsU\rho_{AB}}.
\end{eqnarray*}
Therefore
\begin{eqnarray*}
\left\langle\Tr{(\rho'_A)^2}\right\rangle :=
\int\Tr{(\rho'_A)^2}\dif\mu(\bsU) = \Tr{\int
\bsU^\dagger\Phi(\bsU\rho_{AB}\bsU^\dagger)\bsU\dif\mu(\bsU)\rho_{AB}}.
\end{eqnarray*}
That is,
\begin{center}
\fcolorbox{purple}{lightgray}{
\parbox{16.3cm}{
\begin{eqnarray}
\left\langle\Tr{(\rho'_A)^2}\right\rangle = \frac{dd_B-d_A}{d^2-1}+
\frac{dd_A-d_B}{d^2-1}\Tr{\rho^2_{AB}}.
\end{eqnarray}}}
\end{center}
In particular, if $\rho_{AB}$ is a bipartite pure state, then
$\Tr{\rho^2_{AB}}=1$, and
\begin{eqnarray}
\left\langle\Tr{(\rho'_A)^2}\right\rangle=\frac{d_A+d_B}{d_Ad_B+1}.
\end{eqnarray}
\end{remark}

\begin{cor}
Let $\bsX,\bsY\in\End(\complex^d)$. Then the uniform average of
$\Innerm{\psi}{\bsX}{\psi}\Innerm{\psi}{\bsY}{\psi}$ over pure state
vectors $\ket{\psi}\in\complex^d$ is given by
\begin{eqnarray}
\int_{\complex^d} \Innerm{\psi}{\bsX}{\psi}\Innerm{\psi}{\bsY}{\psi}
\dif\mu(\psi) = \frac{\Tr{\bsX\bsY} + \Tr{\bsX}\Tr{\bsY}}{d(d+1)},
\end{eqnarray}
where
\begin{eqnarray}
\dif\mu(\psi) \defeq
\frac{\Gamma(d)}{2\pi^d}\delta(1-\norm{\psi})[\dif\psi]
\end{eqnarray}
for Lebesgue volume element $[\dif\psi]=\prod^n_{k=1}\dif x_k\dif
y_k$ via $\psi_k=x_k+\mathrm{i}y_k$ with $x_k,y_k\in\real$. Note
that here $\delta$ is the Dirac delta function.
\end{cor}

\begin{proof}
The original integral can be reduced to the computing of the
following integral:
\begin{eqnarray*}
&&\int_{\U(d)} (\bsU\ot \bsU) (\bsX\ot \bsY)(\bsU\ot \bsU)^\dagger \dif\mu(\bsU)\\
&&=\frac{\Tr{\bsX\ot \bsY} - \Tr{(\bsX\ot \bsY)\bsF}}{d(d-1)}
P_\wedge + \frac{\Tr{\bsX\ot \bsY} + \Tr{(\bsX\ot
\bsY)\bsF}}{d(d+1)} P_\vee.
\end{eqnarray*}
Since $P_\vee\ket{\psi_0\psi_0} =\ket{\psi_0\psi_0}$ and
$P_\wedge\ket{\psi_0\psi_0} =0$, it follows that
\begin{eqnarray*}
&&\int
\Innerm{\psi}{\bsX}{\psi}\Innerm{\psi}{\bsY}{\psi} \dif\mu(\psi) \\
&&= \Innerm{\psi_0\psi_0}{\frac{\Tr{\bsX}\Tr{\bsY} -
\Tr{\bsX\bsY}}{d(d-1)} \bsP_\wedge + \frac{\Tr{\bsX}\Tr{\bsY} +
\Tr{\bsX\bsY}}{d(d+1)} \bsP_\vee}{\psi_0\psi_0} \\
&&= \frac{\Tr{\bsX\bsY} + \Tr{\bsX}\Tr{\bsY}}{d(d+1)}.
\end{eqnarray*}
We are done.
\end{proof}
As a direct consequence of the above Corollary, it follows that for
any super-operator $\Phi$ on $\End(\cH_d)$,
\begin{eqnarray}
\int_{\complex^d} \Innerm{\psi}{\Phi(\out{\psi}{\psi})}{\psi}
\dif\mu(\psi) = \frac{\Tr{\Phi(\I_d)} + \Tr{\Phi}}{d(d+1)}.
\end{eqnarray}

\begin{cor}\label{cor:UUUU}
It holds that
\begin{eqnarray}
&&\int_{\mathsf{U}(d)}\bsU\ot\overline{\bsU}\ot \bsU\ot\overline{\bsU} \dif\mu(\bsU)\nonumber \\
&&= \frac1{d^2-1} \Pa{\out{\vec(\I_{d^2})}{\vec(\I_{d^2})} +
\out{\vec(\bsF)}{\vec(\bsF)}}\nonumber \\
&&~~~~- \frac1{d(d^2-1)}\Pa{\out{\vec(\I_{d^2})}{\vec(\bsF)} +
\out{\vec(\bsF)}{\vec(\I_{d^2})}}.
\end{eqnarray}
\end{cor}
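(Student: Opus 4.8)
The plan is to derive this identity by applying the $\vec$ mapping to both sides of Proposition~\ref{prop:uu-integral}, thereby converting the ``conjugation'' integral $\int (U\ot U)A(U\ot U)^\dagger\,d\mu(U)$ into the ``tensor'' integral $\int U\ot U\ot\overline U\ot\overline U\,d\mu(U)$ acting on the vector $\vec(A)$. First I would treat the left-hand side. Regarding $A\in M_{d^2}(\complex)$ as an operator on the single space $\complex^{d^2}=\complex^d\ot\complex^d$, the vec identity $\vec(MXN)=(M\ot N^\t)\vec(X)$ with $M=U\ot U$ and $N=(U\ot U)^\dagger$ gives $\vec\Pa{(U\ot U)A(U\ot U)^\dagger}=\Pa{(U\ot U)\ot\overline{U\ot U}}\vec(A)=(U\ot U\ot\overline U\ot\overline U)\vec(A)$, using that $\Pa{(U\ot U)^\dagger}^\t=\overline{U\ot U}=\overline U\ot\overline U$. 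Integrating and pulling the constant operator outside the integral, the left-hand side becomes $\Pa{\int_{\unitary{d}}U\ot U\ot\overline U\ot\overline U\,d\mu(U)}\vec(A)$.

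Next I would vectorize the right-hand side of Proposition~\ref{prop:uu-integral}. The key is to rewrite the two scalar coefficients $\Tr{A}$ and $\Tr{AF}$ as inner products against $\vec(A)$. Using $\inner{\vec(X)}{\vec(Y)}=\Tr{X^\dagger Y}$ (the same convention already exploited in Corollary~\ref{cor:UU}) together with $\I_{d^2}^\dagger=\I_{d^2}$, $F^\dagger=F$, and cyclicity of the trace, I obtain $\Tr{A}=\inner{\vec(\I_{d^2})}{\vec(A)}$ and $\Tr{AF}=\Tr{FA}=\inner{\vec(F)}{\vec(A)}$. Each term of the form $c\,\I_{d^2}$ or $c\,F$ then vectorizes via $\vec\Pa{c\,\I_{d^2}}=c\,\ket{\vec(\I_{d^2})}$ and $\vec\Pa{c\,F}=c\,\ket{\vec(F)}$ into an outer product acting on $\vec(A)$; for instance $\tfrac{\Tr{A}}{d^2-1}\I_{d^2}$ becomes $\tfrac1{d^2-1}\out{\vec(\I_{d^2})}{\vec(\I_{d^2})}\vec(A)$, while $-\tfrac{\Tr{AF}}{d(d^2-1)}\I_{d^2}$ becomes $-\tfrac1{d(d^2-1)}\out{\vec(\I_{d^2})}{\vec(F)}\vec(A)$, and similarly for the two terms carrying $F$. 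Collecting the four resulting outer products reproduces precisely the operator displayed in the statement.

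Finally, since the identity $\Pa{\int_{\unitary{d}}U\ot U\ot\overline U\ot\overline U\,d\mu(U)}\vec(A)=M\,\vec(A)$ holds for every $A\in M_{d^2}(\complex)$ and $\vec$ is a linear bijection onto $\complex^{d^2}\ot\complex^{d^2}$, the vectors $\vec(A)$ exhaust the whole space, so the two operators must coincide. The only delicate point, and the step I would double-check most carefully, is the bookkeeping of conjugation and transposition in the vec identity---verifying $\Pa{(U\ot U)^\dagger}^\t=\overline U\ot\overline U$ and fixing which slot of $\inner{\cdot}{\cdot}$ is conjugate-linear---together with confirming that $F$ is the swap on $\complex^d\ot\complex^d$, so that $\vec(F)$ and $\vec(\I_{d^2})$ both lie in $\complex^{d^2}\ot\complex^{d^2}=\complex^{d^4}$, matching the space on which $U\ot U\ot\overline U\ot\overline U$ acts.
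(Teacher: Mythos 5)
Your proof is correct and is essentially the paper's own argument: the paper disposes of this corollary with the single remark that it ``can be derived from Proposition~\ref{prop:uu-integral},'' and your vectorization of both sides --- using $\vec(MXN)=(M\ot N^{\t})\vec(X)$ on the left and $\Tr{A}=\inner{\vec(\I_{d^2})}{\vec(A)}$, $\Tr{AF}=\inner{\vec(F)}{\vec(A)}$ on the right --- is exactly the omitted computation, mirroring the worked-out $k=1$ case in Corollary~\ref{cor:UU}. The delicate points you flag, namely $\bigl((U\ot U)^\dagger\bigr)^{\t}=\overline{U}\ot\overline{U}$ and the conjugate-linearity convention in $\inner{\cdot}{\cdot}$, are handled correctly.
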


\begin{proof}
Apparently, this result can be derived from
Proposition~\ref{prop:uu-integral}.
\end{proof}

\begin{center}
\fcolorbox{purple}{lightgray}{
\parbox{16.3cm}{
\begin{cor}\label{cor:uuuu}
It holds that
\begin{eqnarray}
\int_{\mathsf{U}(d)} \bsU\ot \bsU^\dagger\ot \bsU\ot \bsU^\dagger
\dif\mu(\bsU) = \frac{\bP_{(12)(34)} + \bP_{(14)(23)}}{d^2-1} -
\frac{\bP_{(1432)} + \bP_{(1234)}}{d(d^2-1)}.
\end{eqnarray}
Moreover
\begin{eqnarray}
\int_{\mathsf{U}(d)} \bsU\ot \bsU\ot \bsU^\dagger\ot \bsU^\dagger
\dif\mu(\bsU) = \frac{\bP_{(13)(24)} + \bP_{(14)(23)}}{d^2-1} -
\frac{\bP_{(1423)} + \bP_{(1324)}}{d(d^2-1)}.
\end{eqnarray}
\end{cor}}}
\end{center}

\begin{proof}
By taking partial transposes relative to the 2nd and 4th
subsystems at the same time over both sides in
Corollary~\ref{cor:UUUU}, it suffices to show that
\begin{eqnarray*}
(\out{\vec(\I_{d^2})}{\vec(\I_{d^2})})^{\t_{2,4}} &=&
\bP_{(12)(34)},\\
(\out{\vec(\bsF)}{\vec(\bsF)})^{\t_{2,4}} &=&
\bP_{(14)(23)},\\
(\out{\vec(\I_{d^2})}{\vec(\bsF)})^{\t_{2,4}} &=&
\bP_{(1432)},\\
(\out{\vec(\bsF)}{\vec(\I_{d^2})})^{\t_{2,4}} &=& \bP_{(1234)}.
\end{eqnarray*}
Note that
$$
\vec(\I_{d^2}) = \sum^{d}_{i,j=1}\ket{iijj},~~\vec(\bsF) =
\sum^{d}_{i,j=1}\ket{ijji}.
$$
It follows that
\begin{eqnarray*}
\out{\vec(\I_{d^2})}{\vec(\I_{d^2})} &=&
\sum^{d}_{i,j,k,l=1}\out{iijj}{kkll},\\
\out{\vec(\bsF)}{\vec(\bsF)} &=& \sum^{d}_{i,j,k,l=1}\out{ijji}{kllk},\\
\out{\vec(\I_{d^2})}{\vec(\bsF)} &=&
\sum^{d}_{i,j,k,l=1}\out{iijj}{kllk},\\
\out{\vec(\bsF)}{\vec(\I_{d^2})} &=&
\sum^{d}_{i,j,k,l=1}\out{ijji}{kkll}.
\end{eqnarray*}
Therefore we have
\begin{eqnarray*}
(\out{\vec(\I_{d^2})}{\vec(\I_{d^2})})^{\t_{2,4}} &=&
\sum^{d}_{i,j,k,l=1}\out{ikjl}{kilj} = \bP_{(12)(34)},\\
(\out{\vec(\bsF)}{\vec(\bsF)})^{\t_{2,4}} &=& \sum^{d}_{i,j,k,l=1}\out{iljk}{kjli} = \bP_{(14)(23)},\\
(\out{\vec(\I_{d^2})}{\vec(\bsF)})^{\t_{2,4}} &=&
\sum^{d}_{i,j,k,l=1}\out{iljk}{kilj} = \bP_{(1432)},\\
(\out{\vec(\bsF)}{\vec(\I_{d^2})})^{\t_{2,4}} &=&
\sum^{d}_{i,j,k,l=1}\out{ikjl}{kjli} = \bP_{(1234)}.
\end{eqnarray*}
The proof is complete.
\end{proof}

\begin{cor}\label{cor:reverseorder}
It holds that
\begin{eqnarray}
\int_{\mathsf{U}(d)} \bsU\ot \Big(\bsU^\dagger\Big)^\t\ot \bsU^\dagger\ot\bsU^\t
\dif\mu(\bsU) = \bP_{(24)}\Br{\frac{\bP^{\t_{2,4}}_{(13)(24)} + \bP^{\t_{2,4}}_{(14)(23)}}{d^2-1} -
\frac{\bP^{\t_{2,4}}_{(1423)} + \bP^{\t_{2,4}}_{(1324)}}{d(d^2-1)}}\bP_{(24)}.
\end{eqnarray}
\end{cor}

\begin{proof}
Note that
\begin{eqnarray}\label{eq:uu*uu*}
\int_{\mathsf{U}(d)} \bsU\ot \bsU^\t\ot \bsU^\dagger\ot
(\bsU^\dagger)^\t \dif\mu(\bsU) = \frac{\bP^{\t_{2,4}}_{(13)(24)} +
\bP^{\t_{2,4}}_{(14)(23)}}{d^2-1} - \frac{\bP^{\t_{2,4}}_{(1423)} +
\bP^{\t_{2,4}}_{(1324)}}{d(d^2-1)}.
\end{eqnarray}
By swaping the second and fourth factors using $\bP_{(24)}$, we get the desired result.
\end{proof}

\begin{center}
\fcolorbox{purple}{lightgray}{
\parbox{16.3cm}{
\begin{cor}\label{cor:uu*u*u}
It holds that
\begin{eqnarray}
\int_{\mathsf{U}(d)} (\bsU\ot\bsU^\dagger)\bsM (\bsU^\dagger\ot\bsU)\dif\mu(\bsU)= \frac{\bsF\bsM\bsF + \Tr{\bsM}\I_{d^2}}{d^2-1} - \frac{\Ptr{1}{\bsM}\ot\I_d+\I_d\ot\Ptr{2}{\bsM}}{d(d^2-1)},
\end{eqnarray}
where $\bsF=\sum^d_{i,j=1}\out{ij}{ji}$ is the swap operator.
\end{cor}}}
\end{center}

\begin{proof}
In fact,
\begin{eqnarray*}
&&\vec\Pa{\int_{\mathsf{U}(d)} (\bsU\ot\bsU^\dagger)\bsM (\bsU^\dagger\ot\bsU)\dif\mu(\bsU)}\\
&&=\Pa{\int_{\mathsf{U}(d)} \bsU\ot \Big(\bsU^\dagger\Big)^\t\ot \bsU^\dagger\ot\bsU^\t
\dif\mu(\bsU)\dif\mu(\bsU)}\vec(\bsM).
\end{eqnarray*}
Since
\begin{eqnarray*}
\bP^{\t_{2,4}}_{(13)(24)} &=& \sum^{d}_{i,j,k,l=1}\out{ilkj}{kjil} = \bP_{(13)(24)},\\
\bP^{\t_{2,4}}_{(14)(23)} &=& \sum^{d}_{i,j,k,l=1}\out{illi}{kjjk}=\proj{\vec(\bsF)},\\
\bP^{\t_{2,4}}_{(1423)} &=&  \sum^{d}_{i,j,k,l=1}\out{illj}{kjik},\\
\bP^{\t_{2,4}}_{(1324)} &=& \sum^{d}_{i,j,k,l=1}\out{ilki}{kjjl},
\end{eqnarray*}
it follows that
\begin{eqnarray*}
\bP_{(24)}\bP^{\t_{2,4}}_{(13)(24)}\bP_{(24)} &=& \sum^{d}_{i,j,k,l=1}\out{ijkl}{klij} = \bP_{(13)(24)},\\
\bP_{(24)}\bP^{\t_{2,4}}_{(14)(23)}\bP_{(24)} &=& \sum^{d}_{i,j,k,l=1}\out{iill}{kkjj}=\proj{\vec(\I_d)}^{\ot2},\\
\bP_{(24)}\bP^{\t_{2,4}}_{(1423)}\bP_{(24)} &=&  \sum^{d}_{i,j,k,l=1}\out{ijll}{kkij},\\
\bP_{(24)}\bP^{\t_{2,4}}_{(1324)}\bP_{(24)} &=& \sum^{d}_{i,j,k,l=1}\out{iikl}{kljj}.
\end{eqnarray*}
Note that, via $\vec(\out{m\mu}{n\nu})=\vec(\out{m}{n}\ot\out{\mu}{\nu})=\vec(\out{m}{n})\ot\vec(\out{\mu}{\nu})=\ket{mn\mu\nu}$,
\begin{eqnarray*}
\bP_{(24)}\bP^{\t_{2,4}}_{(13)(24)}\bP_{(24)}\vec(\bsM) &=& \bP_{(13)(24)}\vec(\bsM)= \sum_{m,n,\mu,\nu}\Innerm{m\mu}{\bsM}{n\nu}\bP_{(13)(24)}\vec(\out{m\mu}{n\nu}) \\
&=& \sum_{m,n,\mu,\nu}\Innerm{m\mu}{\bsM}{n\nu}\bP_{(13)(24)}\ket{m n\mu\nu}\\
&=&  \sum_{m,n,\mu,\nu}\Innerm{m\mu}{\bsM}{n\nu} \ket{\mu\nu mn}\\
&=& \sum_{m,n,\mu,\nu}\Innerm{m\mu}{\bsM}{n\nu} \vec(\out{\mu m}{\nu n})\\
&=& \vec(\bsF\bsM\bsF),
\end{eqnarray*}
and
\begin{eqnarray*}
&&\bP_{(24)}\bP^{\t_{2,4}}_{(14)(23)}\bP_{(24)}\vec(\bsM) = \vec(\I_{d^2})\vec(\I_{d^2})^\dagger\vec(\bsM)\\
&&=\Inner{\vec(\I_{d^2})}{\vec(\bsM)}\vec(\I_{d^2})=\Inner{\I_{d^2}}{\bsM}\vec(\I_{d^2})=\vec(\Tr{\bsM}\I_{d^2}),
\end{eqnarray*}
and
\begin{eqnarray*}
&&\bP_{(24)}\bP^{\t_{2,4}}_{(1423)}\bP_{(24)}\vec(\bsM) = \sum_{i,j,k,l}\out{ijll}{kkij}\vec(\bsM)=\sum_{i,j,k,l}\Innerm{ki}{\bsM}{kj}\ket{ijll}\\
&&=\sum_{i,j,k,l}\Innerm{ki}{\bsM}{kj}\vec(\out{il}{jl}) = \sum_{i,j}\Innerm{i}{\Ptr{1}{\bsM}}{j}\vec(\out{i}{j}\ot\I_d)\\
&&=\vec\Pa{\Pa{\sum_{i,j}\Innerm{i}{\Ptr{1}{\bsM}}{j}\out{i}{j}}\ot\I_d} = \vec\Pa{\Ptr{1}{\bsM}\ot\I_d},
\end{eqnarray*}
and similarly, we have
\begin{eqnarray*}
\bP_{(24)}\bP^{\t_{2,4}}_{(1423)}\bP_{(24)}\vec(\bsM) = \vec\Pa{\I_d\ot\Ptr{2}{\bsM}}.
\end{eqnarray*}
The desired conclusion is obtained immediately.
\end{proof}

If a bipartite state $\rho\in\density{\complex^d\ot\complex^d}$ is such that
$\rho = (\bsU\ot\bsU^\dagger)\rho(\bsU^\dagger\ot\bsU)$ for all $\bsU\in\U(d)$,
then $\rho_1=\bsU\rho_1\bsU^\dagger, \rho_2=\bsU^\dagger\rho_2\bsU$, and
\begin{eqnarray}
\rho = \frac{\bsF\rho\bsF + \I_{d^2}}{d^2-1} - \frac{\rho_2\ot\I_d+\I_d\ot\rho_1}{d(d^2-1)},
\end{eqnarray}
impling that $\rho_1=\rho_2=\frac{\I_d}d$ and
\begin{eqnarray}
\bsF\rho\bsF = \frac{\rho + \I_{d^2}}{d^2-1} - \frac{\I_d\ot\rho_2+\rho_1\ot\I_d}{d(d^2-1)}.
\end{eqnarray}
Note that $\bsF\rho\bsF=\rho$. We obtain that $\rho$ must be completely mixed state.

As a quantum channel, the above integral defining a unital bipartite channel
\begin{eqnarray}
\Phi(\rho_{12}) &=&  \int_{\mathsf{U}(d)} (\bsU\ot\bsU^\dagger)\rho_{12} (\bsU^\dagger\ot\bsU)\dif\mu(\bsU)\notag\\
&=&  \frac{\bsF\rho_{12}\bsF + \I_{d^2}}{d^2-1} - \frac{\rho_2\ot\I_d+\I_d\ot\rho_1}{d(d^2-1)}\notag\\
&=&\bsF\Pa{\frac{\rho_{12} + \I_{d^2}}{d^2-1} - \frac{\I_d\ot\rho_2+\rho_1\ot\I_d}{d(d^2-1)}}\bsF\notag\\
&=&\frac1{d^2-1}\bsF\Pa{\rho_{12} + \I_{d^2} - \frac{\I_d}d\ot\rho_2-\rho_1\ot\frac{\I_d}d}\bsF.
\end{eqnarray}

\begin{center}
\fcolorbox{purple}{lightgray}{
\parbox{16.3cm}{
\begin{cor}\label{cor:uututu}
It holds that
\begin{eqnarray}
&&\int_{\mathsf{U}(d)} (\bsU\ot\bsU^\t)\bsM (\bsU\ot\bsU^\t)^\dagger\dif\mu(\bsU)\notag\\
&&= \frac{\bsF\bsM^\t\bsF + \Tr{\bsM}\I_{d^2}}{d^2-1} - \frac{(\Ptr{1}{\bsM})^\t\ot\I_d+\I_d\ot(\Ptr{2}{\bsM})^\t}{d(d^2-1)}.
\end{eqnarray}
\end{cor}}}
\end{center}

\begin{proof}
By swaping the 3rd and 4th factors in the integral formula in Corollary~\ref{cor:reverseorder}, and we get that:
\begin{eqnarray}
\int_{\mathsf{U}(d)} \bsU\ot \Big(\bsU^\dagger\Big)^\t\ot\bsU^\t\ot \bsU^\dagger
\dif\mu(\bsU) = \bP_{(234)}\Br{\frac{\bP^{\t_{2,4}}_{(13)(24)} + \bP^{\t_{2,4}}_{(14)(23)}}{d^2-1} -
\frac{\bP^{\t_{2,4}}_{(1423)} + \bP^{\t_{2,4}}_{(1324)}}{d(d^2-1)}}\bP_{(243)}.
\end{eqnarray}
The proof is obtained immediately.
\end{proof}

\begin{cor}
It holds that
\begin{eqnarray}
\int_{\mathsf{U}(d)} \abs{\Tr{\bsA\bsU}}^4\dif\mu(\bsU) =
\frac2{d^2-1}\Br{\Tr{\bsA^\dagger \bsA}}^2 -
\frac2{d(d^2-1)}\Tr{(\bsA^\dagger \bsA)^2},
\end{eqnarray}
where $\bsA\in M_d(\complex)$.
\end{cor}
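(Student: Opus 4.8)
The plan is to reduce this fourth-moment integral to the fourfold twirl already evaluated in Corollary~\ref{cor:uuuu}. First I would rewrite the integrand as a single trace on $(\complex^d)^{\ot 4}$. Since $\overline{\Tr{AU}} = \Tr{U^\dagger A^\dagger}$, we have $\abs{\Tr{AU}}^4 = \Pa{\Tr{AU}}^2\Pa{\Tr{U^\dagger A^\dagger}}^2$. Using $\Pa{\Tr{M}}^2 = \Tr{M\ot M}$ together with the multiplicativity of the trace across tensor factors, this collapses to
$$
\abs{\Tr{AU}}^4 = \Tr{\Pa{A\ot A\ot A^\dagger\ot A^\dagger}\Pa{U\ot U\ot U^\dagger\ot U^\dagger}}.
$$
Integrating and moving the fixed operator outside the integral gives
$$
\int_{\unitary{d}}\abs{\Tr{AU}}^4 d\mu(U) = \Tr{\Pa{A\ot A\ot A^\dagger\ot A^\dagger}\int_{\unitary{d}} U\ot U\ot U^\dagger\ot U^\dagger d\mu(U)},
$$
at which point I would substitute the closed form from Corollary~\ref{cor:uuuu}.

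What remains is to evaluate the four traces $\Tr{\Pa{A\ot A\ot A^\dagger\ot A^\dagger}\bP_\sigma}$ for $\sigma\in\set{(13)(24),(14)(23),(1423),(1324)}$. I would compute these directly from the explicit monomial expressions for each $\bP_\sigma$ recorded in the proof of Corollary~\ref{cor:uuuu}: sandwiching $A\ot A\ot A^\dagger\ot A^\dagger$ between the bra and the ket of every $\out{\cdot}{\cdot}$ term and summing over all indices produces a product of ordinary traces dictated by the cycle structure of $\sigma$. The two double transpositions split into two independent length-two contractions, each evaluating to $\Tr{A^\dagger A}$, so they each contribute $\Br{\Tr{A^\dagger A}}^2$; the two four-cycles each contract to a single trace $\Tr{A A^\dagger A A^\dagger} = \Tr{(A^\dagger A)^2}$. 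Assembling these with the coefficients $\tfrac1{d^2-1}$ and $-\tfrac1{d(d^2-1)}$ from Corollary~\ref{cor:uuuu} then yields the stated identity.

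The only delicate point is the noncommutativity of $A$ and $A^\dagger$: when forming the cycle-trace for each permutation one must respect the order in which the factors are visited around the cycle, so that the four-cycles genuinely produce $\Tr{(A^\dagger A)^2}$ rather than some other quartic invariant such as $\Tr{A^2 (A^\dagger)^2}$. Everything beyond this is routine bookkeeping, since cyclicity of the trace shows all four contributions reduce to just the two invariants $\Br{\Tr{A^\dagger A}}^2$ and $\Tr{(A^\dagger A)^2}$, matched in pairs exactly as the coefficient structure of Corollary~\ref{cor:uuuu} requires.
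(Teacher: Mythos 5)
Your proposal is correct and follows essentially the same route as the paper: rewrite $\abs{\Tr{AU}}^4$ as $\Tr{\br{A^{\ot 2}\ot (A^{\ot 2})^\dagger}\br{U^{\ot 2}\ot (U^{\ot 2})^\dagger}}$ and substitute Corollary~\ref{cor:uuuu}. The paper states only this reduction and omits the trace evaluations against the four permutation operators, which you carry out correctly (the two double transpositions giving $\Br{\Tr{A^\dagger A}}^2$ and the two $4$-cycles giving $\Tr{(A^\dagger A)^2}$), so your write-up is simply a more detailed version of the same argument.
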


\begin{proof}
Note that
$$
\abs{\Tr{\bsA\bsU}}^4 = \Tr{\Br{\bsA^{\ot 2}\ot (\bsA^{\ot
2})^\dagger} \Br{\bsU^{\ot 2}\ot(\bsU^{\ot 2})^\dagger}}.
$$
By Corollary~\ref{cor:uuuu}, we obtain the final result.
\end{proof}

\begin{cor}
It holds that
\begin{eqnarray}
\int_{\mathsf{U}(d)} \abs{\Tr{\bsA\bsU\bsB\bsU}}^2
\dif\mu(\bsU)={\scriptsize
\frac{\abs{\Tr{\bsA^\dagger\bsB}}^2+\Tr{\bsA\bsA^\dagger}\Tr{\bsB\bsB^\dagger}}{d^2-1}-\frac{\Tr{\bsA^\dagger\bsA\bsB^\dagger\bsB}+\Tr{\bsA\bsA^\dagger\bsB\bsB^\dagger}}{d(d^2-1)}.}
\end{eqnarray}
where $\bsA,\bsB\in M_d(\complex)$.
\end{cor}

\begin{proof}
In fact,
\begin{eqnarray}
\int_{\mathsf{U}(d)} \abs{\Tr{\bsA\bsU\bsB\bsU}}^2 \dif\mu(\bsU) =
\int_{\mathsf{U}(d)}\dif\mu(\bsU)\Tr{(\bsA\ot\bsA^\dagger)(\bsU\ot\bsU^\dagger)(\bsB\ot\bsB^\dagger)(\bsU\ot\bsU^\dagger)}.
\end{eqnarray}
It suffices to calculate the following integral:
\begin{eqnarray}
\int_{\mathsf{U}(d)}\dif\mu(\bsU)(\bsU\ot\bsU^\dagger)\bsX(\bsU\ot\bsU^\dagger).
\end{eqnarray}
Note that
\begin{eqnarray}
\vec\Pa{\int_{\mathsf{U}(d)}\dif\mu(\bsU)(\bsU\ot\bsU^\dagger)\bsX(\bsU\ot\bsU^\dagger)}=\int_{\mathsf{U}(d)}\dif\mu(\bsU)\bsU\ot\bsU^\t\ot\bsU^\dagger\ot(\bsU^\dagger)^\t\vec(\bsX).
\end{eqnarray}
Using Eq.~\eqref{eq:uu*uu*}, we get that
\begin{eqnarray*}
\bP_{(13)(24)}\vec(\bsX) &=& \vec(\bsF\bsX\bsF),\\
\proj{\vec(\bsF)}\vec(\bsX) &=&  \Tr{\bsF\bsX}\vec(\bsF),\\
\sum_{i,j,k,l}\out{illj}{kjik}\vec(\bsX) &=&
\vec((\Ptr{1}{\bsX\bsF}\ot\I_d)\bsF),\\
\sum_{i,j,k,l}\out{ilki}{kjjl}\vec(\bsX) &=&
\vec((\I_d\ot\Ptr{2}{\bsX\bsF})\bsF).
\end{eqnarray*}
Based on this observation, we obtain that
\begin{eqnarray*}
&&\vec\Pa{\int_{\mathsf{U}(d)}\dif\mu(\bsU)(\bsU\ot\bsU^\dagger)\bsX(\bsU\ot\bsU^\dagger)}\\
&&=\frac{\vec(\bsF\bsX\bsF)+\Tr{\bsF\bsX}\vec(\bsF)}{d^2-1} -
\frac{\vec((\Ptr{1}{\bsX\bsF}\ot\I_d)\bsF)+\vec((\I_d\ot\Ptr{2}{\bsX\bsF})\bsF)}{d(d^2-1)},
\end{eqnarray*}
implying that
\begin{eqnarray*}
&&\int_{\mathsf{U}(d)}\dif\mu(\bsU)(\bsU\ot\bsU^\dagger)\bsX(\bsU\ot\bsU^\dagger)\\
&&=\frac{\bsF\bsX\bsF+\Tr{\bsF\bsX}\bsF}{d^2-1} -
\frac{(\Ptr{1}{\bsX\bsF}\ot\I_d)\bsF+(\I_d\ot\Ptr{2}{\bsX\bsF})\bsF}{d(d^2-1)}.
\end{eqnarray*}
Let $\bsX=\bsB\ot\bsB^\dagger$ be in the above. We get that
\begin{eqnarray*}
&&\int_{\mathsf{U}(d)}\dif\mu(\bsU)(\bsU\ot\bsU^\dagger)(\bsB\ot\bsB^\dagger)(\bsU\ot\bsU^\dagger)\\
&&=\frac{\bsB^\dagger\ot\bsB+\Tr{\bsB\bsB^\dagger}\bsF}{d^2-1} -
\frac{(\bsB^\dagger\bsB\ot\I_d)\bsF+(\I_d\ot\bsB\bsB^\dagger)\bsF}{d(d^2-1)}.
\end{eqnarray*}
Finally, we obtain that
\begin{eqnarray*}
\int_{\mathsf{U}(d)} \abs{\Tr{\bsA\bsU\bsB\bsU}}^2
\dif\mu(\bsU)={\scriptsize
\frac{\abs{\Tr{\bsA^\dagger\bsB}}^2+\Tr{\bsA\bsA^\dagger}\Tr{\bsB\bsB^\dagger}}{d^2-1}-\frac{\Tr{\bsA^\dagger\bsA\bsB^\dagger\bsB}+\Tr{\bsA\bsA^\dagger\bsB\bsB^\dagger}}{d(d^2-1)}.}
\end{eqnarray*}
This completes the proof.
\end{proof}

\subsection{The general case}

The partial materials in this subsection are written based on the
results in \cite{Benoit,Collins}. We recall that for an algebra
inclusion $\cM\subset\cN$, a \emph{conditional expectation} is a
$\cM$-bimodule map $\sE: \cN\to\cM$ such that $\sE(\I_\cN) =
\I_\cM$. For $\bsA\in\End((\complex^d)^{\ot k})$, we define
\begin{center}
\fcolorbox{purple}{lightgray}{
\parbox{16.3cm}{
\begin{eqnarray}
\sE^{(k)}_{d}(\bsA) = \int_{\U(d)} \bsU^{\ot k}\bsA \Pa{\bsU^{\ot
k}}^\dagger \dif\mu(\bsU).
\end{eqnarray}}}
\end{center}
Clearly $\sE^{(k)}_{d}: \End((\complex^d)^{\ot k})\to
\bP(\complex[S_k])$ is a conditional expectation. Moreover
$\sE^{(k)}_{d}$ is an orthogonal projection onto
$\bP(\complex[S_k])$. It is compatible with the trace in the sense
that $\trace\circ\sE^{(k)}_{d} = \trace$.

For $\bsA\in\End((\complex^d)^{\ot k})$, we set
\begin{eqnarray}
\Delta_{d,k}(\bsA) &\defeq& \sum_{\pi\in S_k}
\Inner{\bP(\pi)}{\bsA}\bP(\pi)=
\sum_{\pi\in S_k} \Tr{\bsA\bP(\pi^{-1})}\bP(\pi) \notag\\
&=& \sum_{\pi\in S_k} \Tr{\bsA\bP(\pi)}\bP(\pi^{-1}) \in
\bP(\complex[S_k]).
\end{eqnarray}
In particular, $\Delta_{d,k}(\I)=\sum_{\pi\in S_k}
\Tr{\bP(\pi^{-1})}\bP(\pi)=\sum_{\pi\in S_k}
\chi_{\bP}(\pi^{-1})\bP(\pi)$, thus
\begin{center}
\fcolorbox{purple}{lightgray}{
\parbox{16.3cm}{
\begin{eqnarray}
\Delta_{d,k}(\I) = \sum_{\pi\in S_k} \chi_{\bP}(\pi)\bP(\pi),
\end{eqnarray}}}
\end{center}
where $\chi_{\bP}(\pi)$ can be analytically calculated in the
following proposition.

\begin{prop}
For the action of $S_k$ on the tensor space $(\complex^d)^{\ot k}$,
defined in Eq.~\eqref{eq:P}, its character is given by
\begin{center}
\fcolorbox{purple}{lightgray}{
\parbox{16.3cm}{
\begin{eqnarray}
\chi_{\bP}(\pi) = \Tr{\bP(\pi)} = d^{c(\pi)} \quad(\forall\pi\in
S_k),
\end{eqnarray}}}
\end{center}
where $c(\pi)$ is the number of cycles into which it decomposes.
\end{prop}
If we denote by $\lambda(\pi)$ the partition of $k$ given by the
lengths of these cycles in $\pi$, then we see immediately that
$c(\pi)$ is just the height (i.e., the number of boxes in the first
column) of partition $\lambda(\pi)\vdash k$.

\begin{proof}
(i) If $\Set{\ket{j}:j=1,\ldots,d}$ is a given basis of
$\complex^d$, then the induced basis of $(\complex^d)^{\ot k}$ can
be described as
$$
\ket{i_1i_2\cdots i_k}\equiv\ket{i_1}\ot\cdots\ket{i_k},
$$
where each index $i_1,\ldots,i_k$ is in $[d]:=\set{1,2,\ldots,d}$.
For the permutation representation $\bP(\pi)$, the trace
$\Tr{\bP(\pi)}$ of an element $\pi\in S_k$ must be equal the number
of the elements of the basis fixed by $\pi$. If $\pi=(1,2,...,k)$ is
one cycle (that is, $c(\pi)=1$), then one basis vector
$\ket{i_1i_2\cdots i_k}$ is fixed by $\pi$, i.e.,
$\bP(\pi)\ket{i_1i_2\cdots i_k}=\ket{i_1i_2\cdots i_k}$, or
$\ket{i_ki_1\cdots i_{k-1}}=\ket{i_1i_2\cdots i_k}$ if and only if
\begin{eqnarray*}
i_1=i_2=\cdots=i_k\in[d].
\end{eqnarray*}
This means that the set $\set{\bsv\in (\complex^d)^{\ot
k}:\bP(\pi)\bsv=\bsv}$ is of $d$ dimension. Thus
$\chi_{\bP}(\pi)=d=d^{c(\pi)}$.

(ii) If $\pi$ is a product of $\ell$ cycles of lengths
$m_1,m_2,\ldots,m_\ell$ (that is, $c(\pi)=\ell$) which can be of the
following form (up to conjugacy):
$$
\pi=(1,2,\ldots,m_1)(m_1+1,m_1+2,\ldots,m_1+m_2)\cdots(k-m_\ell+1,\ldots,k).
$$
We see that $\ket{i_1i_2\cdots i_k}$ is fixed by $\pi$ if and only
if it is of the form
$$
\ket{i_1}^{\ot m_1}\ket{i_2}^{\ot m_2}\cdots\ket{i_\ell}^{\ot
m_\ell},
$$
where each index $i_1,\ldots,i_\ell$ is in $[d]$. Apparently the
number
$\Abs{\Set{(i_1,\ldots,i_\ell)\in[d]^\ell}}=d^\ell=d^{c(\pi)}$.

In summary, $\chi_{\bP}(\pi) = \Tr{\bP(\pi)} = d^{c(\pi)}$ for
$\pi\in S_k$.
\end{proof}

\begin{prop}\label{prop:Deltadk}
The linear mapping $\Delta_{d,k}$ embraces the following properties:
\begin{enumerate}[(i)]
\item $\Delta_{d,k}$ is a $\bP(\complex[S_k])$-$\bP(\complex[S_k])$ bimodule
morphism in the sense that
$$
\Delta_{d,k}(\bsA\bP(\sigma)) =
\Delta_{d,k}(\bsA)\bP(\sigma),~~\Delta_{d,k}(\bP(\sigma)\bsA) =
\bP(\sigma)\Delta_{d,k}(\bsA).
$$
\item $\Delta_{d,k}(\I)$ coincides with the character of $\bP$, hence it is
equal to
\begin{center}
\fcolorbox{purple}{lightgray}{
\parbox{15.3cm}{
\begin{eqnarray}
\Delta_{d,k}(\I) = k! \sum_{\lambda\vdash_d k}
\frac{s_{\lambda}(1^{\times d})}{f^\lambda}\bsC_\lambda.
\end{eqnarray}}}
\end{center}
and is an invertible element of $\complex[S_k]$; its inverse will be
called \blue{Weingarten function} and is equal to
\begin{center}
\fcolorbox{purple}{lightgray}{
\parbox{15.3cm}{
\begin{eqnarray}
\mathrm{Wg}_{d,k} = \frac1{(k!)^2}\sum_{\lambda\vdash_d k}
\frac{(f^\lambda)^2}{s_{\lambda}(1^{\times d})}\chi_\lambda.
\end{eqnarray}}}
\end{center}
\item The relation between $\Delta_{d,k}(\bsA)$ and $\sE^{(k)}_d(\bsA)$ is explicitly
given by
\begin{center}
\fcolorbox{purple}{lightgray}{
\parbox{15.3cm}{
\begin{eqnarray}
\Delta_{d,k}(\bsA) = \sE^{(k)}_d(\bsA)\Delta_{d,k}(\I).
\end{eqnarray}}}
\end{center}
This leads to the formula:
\begin{center}
\fcolorbox{purple}{lightgray}{
\parbox{15.3cm}{
\begin{eqnarray}\label{eq:general-integral}
&&\sE^{(k)}_d(\bsA) = \Delta_{d,k}(\bsA)\Delta_{d,k}(\I)^{-1}\notag\\
&&= \Pa{\sum_{\pi\in
S_k}\Tr{\bsA\bP(\pi)}\bP(\pi^{-1})}\Pa{\sum_{\pi\in
S_k}\mathrm{Wg}_{d,k}(\pi)\bP(\pi^{-1})},
\end{eqnarray}}}
\end{center}
where
\begin{eqnarray}
\Delta_{d,k}(\I)^{-1}=\frac1{k!}\sum_{\lambda\vdash_d k}
\frac{f^\lambda}{s_{\lambda}(1^{\times
d})}\bsC_\lambda=\Pa{\sum_{\pi\in
S_k}\mathrm{Wg}_{d,k}(\pi)\bP(\pi^{-1})}.
\end{eqnarray}
\item The range of $\Delta_{d,k}$ is equal to $\bP(\complex[S_k])$.
\item The following holds true in $\bP(\complex[S_k])$:
$$
\Delta_{d,k}(\bsA\sE^{(k)}_d(\bsB)) =
\Delta_{d,k}(\bsA)\Delta_{d,k}(\bsB)\Delta_{d,k}(\I)^{-1}.
$$
\end{enumerate}
\end{prop}

\begin{proof}
(i). Clearly we have:
\begin{eqnarray*}
\Delta_{d,k}(\bsA\bP(\sigma)) &=& \sum_{\pi\in S_k}
\Tr{[\bsA\bP(\sigma)]\bP(\pi^{-1})}\bP(\pi)\\
&=& \sum_{\pi\in S_k}
\Tr{\bsA\bP(\sigma\pi^{-1})}\bP((\sigma\pi^{-1})^{-1})\bP(\sigma)\\
&=&\Delta_{d,k}(\bsA)\bP(\sigma).
\end{eqnarray*}
Similarly, we also have: $\Delta_{d,k}(\bP(\sigma)\bsA) =
\bP(\sigma)\Delta_{d,k}(\bsA)$. Furthermore we get
\begin{center}
\fcolorbox{purple}{lightgray}{
\parbox{16.3cm}{
\begin{eqnarray}
\Delta_{d,k}(\bP(\sigma_l) \bsA\bP(\sigma_r)) =
\bP(\sigma_l)\Delta_{d,k}(\bsA)\bP(\sigma_r),
\end{eqnarray}}}
\end{center}
where $\sigma_l,\sigma_r\in S_k$. Therefore $\Delta_{d,k}$ is
bimodule
morphism.\\
(ii). Let $\bsA=\I$ in the definition of $\Delta_{d,k}$. We get that
\begin{eqnarray}
\Delta_{d,k}(\I) = \sum_{\pi\in S_k} \Tr{\bP(\pi^{-1})}\bP(\pi) =
\sum_{\pi\in S_k} \chi(\pi^{-1})\bP(\pi).
\end{eqnarray}
By Schur-Weyl duality, we have
$$
(\complex^d)^{\ot k} \cong \bigoplus_{\lambda\vdash_d k}
\bQ_\lambda\ot\bP_\lambda
$$
and
\begin{eqnarray}
\chi_{\bP} = \sum_{\lambda\vdash_d k} d_\lambda \chi_\lambda,
\end{eqnarray}
where $d_\lambda$ is the multiplicities of $\bP_\lambda$, i.e.
$d_\lambda = \dim(\bQ_\lambda)=s_{\lambda}(1^{\times d})$. Hence
$$
\chi_{\bP}(\pi^{-1}) = \sum_{\lambda\vdash_d k} d_\lambda
\chi_\lambda(\pi^{-1})= \sum_{\lambda\vdash_d k}
s_{\lambda}(1^{\times d}) \chi_\lambda(\pi^{-1}),
$$
which is substituted into the rhs of expression of
$\Delta_{d,k}(\I)$ above, gives rise to
\begin{eqnarray*}
\Delta_{d,k}(\I) &=& \sum_{\pi\in S_k}\Pa{\sum_{\lambda\vdash_dk}
s_{\lambda}(1^{\times d}) \chi_\lambda(\pi^{-1})}\bP(\pi)\\
&=& \sum_{\lambda\vdash_d k} s_{\lambda}(1^{\times
d})\Pa{\sum_{\pi\in S_k} \chi_\lambda(\pi^{-1})\bP(\pi)}.
\end{eqnarray*}
Since the minimal central projection $\bsC_\lambda$ in
$\bP(\complex[S_k])$ must be of the following form:
\begin{center}
\fcolorbox{purple}{lightgray}{
\parbox{16.3cm}{
\begin{eqnarray}\label{eq:cproj}
\bsC_\lambda := \frac{f^\lambda}{k!}\sum_{\pi\in
S_k}\chi_\lambda(\pi^{-1})\bP(\pi)= \frac{f^\lambda}{k!}\sum_{\pi\in
S_k}\chi_\lambda(\pi)\bP(\pi),
\end{eqnarray}}}
\end{center}
where $f^\lambda:=\dim(\bP_\lambda)$, it follows that
\begin{eqnarray*}
\sum_{\pi\in S_k}\chi_\lambda(\pi^{-1})\bP(\pi) =
\frac{k!}{f^\lambda}\bsC_\lambda
\end{eqnarray*}
Thus
\begin{eqnarray*}
\Delta_{d,k}(\I) = k!\sum_{\lambda\vdash_d
k}\frac{s_{\lambda}(1^{\times d})}{f^\lambda}\bsC_\lambda.
\end{eqnarray*}
Moreover $\Delta_{d,k}(\I)$ is invertible and
\begin{center}
\fcolorbox{purple}{lightgray}{
\parbox{7cm}{
$$
\Delta_{d,k}(\I)^{-1} = \frac1{k!}\sum_{\lambda\vdash_d
k}\frac{f^\lambda}{s_{\lambda}(1^{\times d})}\bsC_\lambda.
$$}}
\end{center}
We denote by $\mathrm{Wg}_{d,k}$ the function corresponding to
$\Delta_{d,k}(\I)^{-1}$, i.e.
$$
\mathrm{Wg}_{d,k} = \frac1{(k!)^2}\sum_{\lambda\vdash_d k}
\frac{(f^\lambda)^2}{s_{\lambda}(1^{\times d})}\chi_\lambda.
$$
(iii). Since $\bQ(\bsU)$ commutes with $\bP(\pi)$, it follows that
\begin{eqnarray*}
\Delta_{d,k}(\sE^{(k)}_d(\bsA)) &=& \sum_{\pi\in S_k}
\Tr{\sE^{(k)}_d(\bsA)\bP(\pi^{-1})}\bP(\pi)\\
&=& \sum_{\pi\in S_k}
\Tr{\int_{\U(d)}\bQ(\bsU)\bsA\bQ(\bsU)^\dagger \dif\mu(\bsU)\bP(\pi^{-1})}\bP(\pi)\\
&=& \sum_{\pi\in S_k}
\Tr{\bsA\int_{\U(d)}\bQ(\bsU)^\dagger\bP(\pi^{-1})\bQ(\bsU)\dif\mu(\bsU)}\bP(\pi)\\
&=&\sum_{\pi\in S_k} \Tr{\bsA\bP(\pi^{-1})}\bP(\pi) =
\Delta_{d,k}(\bsA),
\end{eqnarray*}
implying
\begin{eqnarray*}
\Delta_{d,k}(\bsA) = \Delta_{d,k}(\sE^{(k)}_d(\bsA)\I) =
\sE^{(k)}_d(\bsA)\Delta_{d,k}(\I)
\end{eqnarray*}
by the fact that $\Delta_{d,k}$ is bimodule morphism and
$\sE^{(k)}_d(\bsA)\in\bP(\complex[S_k])$. We can get more that
\begin{eqnarray*}
\Delta_{d,k}(\bsA) = \Delta_{d,k}(\sE^{(k)}_d(\bsA)) =
\sE^{(k)}_d(\bsA)\Delta_{d,k}(\I) =
\Delta_{d,k}(\I)\sE^{(k)}_d(\bsA).
\end{eqnarray*}
This indicates that
\begin{eqnarray*}
\sE^{(k)}_d(\bsA) &=& \Delta_{d,k}(\bsA)\Delta_{d,k}(\I)^{-1} = \Delta_{d,k}(\I)^{-1}\Delta_{d,k}(\bsA)\\
&=& \frac1{k!}\Pa{\sum_{\pi\in S_k}
\Tr{\bsA\bP(\pi^{-1})}\bP(\pi)}\Pa{\sum_{\lambda\vdash
k}\frac{f^\lambda}{s_{\lambda}(1^{\times d})}\bsC_\lambda}\\
&=&\Pa{\sum_{\pi\in
S_k}\Tr{\bsA\bP(\pi)}\bP(\pi^{-1})}\Pa{\sum_{\pi\in
S_k}\text{Wg}_{d,k}(\pi)\bP(\pi^{-1})}.
\end{eqnarray*}
(iv). It is trivially from (ii) and (iii).\\
(v). It is easily seen that
\begin{eqnarray*}
\Delta_{d,k}(\bsA\sE^{(k)}_d(\bsB)) =
\Delta_{d,k}(\bsA)\sE^{(k)}_d(\bsB) =
\Delta_{d,k}(\bsA)\Delta^{(k)}_d(\bsB)\Delta_{d,k}(\I)^{-1}.
\end{eqnarray*}
We are done.
\end{proof}

\begin{remark}
From Eq.~\eqref{eq:general-integral} in
Proposition~\ref{prop:Deltadk}, we see that
\begin{center}
\fcolorbox{purple}{lightgray}{
\parbox{16.3cm}{
\begin{eqnarray}
\sE^{(k)}_d(\bsA) =  \Pa{\sum_{\pi\in
S_k}\Tr{\bsA\bP(\pi)}\bP(\pi^{-1})}\Pa{\sum_{\pi\in
S_k}\mathrm{Wg}_{d,k}(\pi)\bP(\pi^{-1})},
\end{eqnarray}}}
\end{center}
where the evaluations of Weingarten function
$\mathrm{Wg}_{d,k}(\pi)$ for each $\pi\in S_k$ is the key point.
Since the Weingarten function $\mathrm{Wg}_{d,k}$ is a class
function whose value is kept invariant on the conjugacy classes of
$S_k$. We see from the definition of Weingarten function:
$$
\mathrm{Wg}_{d,k} = \frac1{(k!)^2}\sum_{\lambda\vdash_d k}
\frac{(f^\lambda)^2}{s_{\lambda}(1^{\times d})}\chi_\lambda,
$$
that the calculation of $\sE^{(k)}_d(\bsA)$ is reduced to the
calculations of irreducible characters
$\chi_\lambda(\lambda\vdash_dk)$ of the permutation group $S_k$. For
convenience, we list partial tables of irreducible characters of the
permutation groups for some lower orders in
Appendix~\ref{app:irrech}.
\end{remark}

\begin{center}
\fcolorbox{purple}{lightgray}{
\parbox{16.3cm}{
\begin{cor}
Let $k$ be a positive integer and $\mathbf{i}=(i_1,\ldots,i_k)$,
$\mathbf{i}'=(i'_1,\ldots,i'_k)$, $\mathbf{j}=(j_1,\ldots,j_k)$,
$\mathbf{j}'=(j'_1,\ldots,j'_k)$ be $k$-tuples of positive integers.
Then
\begin{eqnarray}
&&\int_{\mathsf{U}(d)} U_{i_1 j_1}\cdots U_{i_k j_k}
\overline{U_{i'_1
j'_1}}\cdots\overline{U_{i'_k j'_k}}\dif\mu(\bsU) \notag\\
&&= \sum_{\sigma,\tau\in S_k}\mathrm{Wg}_{d,k}(\sigma\tau^{-1})
\iinner{i_1}{i'_{\sigma(1)}}\cdots
\iinner{i_k}{i'_{\sigma(k)}}\iinner{j_1}{j'_{\tau(1)}}\cdots\iinner{j_k}{j'_{\tau(k)}}\\
&&=\sum_{\sigma,\tau\in S_k}\mathrm{Wg}_{d,k}(\sigma\tau^{-1})
\iinner{\mathbf{i}}{\mathbf{i}'_\sigma}\iinner{\mathbf{j}}{\mathbf{j}'_\tau},
\end{eqnarray}
where $\ket{\mathbf{i}}=\ket{i_1,\ldots, i_k}$ and
$\ket{\mathbf{i}'_\sigma}=\ket{i'_{\sigma(1)},\ldots,
i'_{\sigma(k)}}$.
\end{cor}}}
\end{center}

\begin{proof}
Note that
$$
\Delta_{d,k}(\bsA\sE^{(k)}_d(\bsB)) =
\Delta_{d,k}(\bsA)\Delta_{d,k}(\bsB)\Delta_{d,k}(\I)^{-1}.
$$
In order to show our result, it is enough to take appropriate
$\bsA=\out{\mathbf{i}'}{\mathbf{i}}$ and
$\bsB=\out{\mathbf{j}}{\mathbf{j}'}$, where
$\ket{\mathbf{i}}=\ket{i_1\cdots i_k}$, etc. Now that
\begin{eqnarray}
\int_{\U(d)} U_{i_1 j_1}\cdots U_{i_k j_k} \overline{U_{i'_1
j'_1}}\cdots\overline{U_{i'_k j'_k}}\dif\mu(\bsU) =
\Tr{\bsA\sE_k(\bsB)}.
\end{eqnarray}
By the definition of $\Delta_{d,k}$, we get
\begin{eqnarray*}
\Delta_{d,k}(\bsA\sE^{(k)}_d(\bsB)) &=& \sum_{\pi\in S_k}
\Tr{\bsA\sE^{(k)}_d(\bsB)\bP(\pi^{-1})}\bP(\pi) \\
&=& \Tr{\bsA\sE^{(k)}_d(\bsB)}\I + \sum_{\pi\in
S_k\backslash\set{\mathrm{id}}}
\Tr{\bsA\sE^{(k)}_d(\bsB)\bP(\pi^{-1})}\bP(\pi)
\end{eqnarray*}
and
\begin{eqnarray*}
\Delta_{d,k}(\bsA) &=& \sum_{\sigma\in S_k}
\Tr{\bsA\bP(\sigma^{-1})}\bP(\sigma)\\ &=& \sum_{\sigma\in
S_k}\Innerm{\mathbf{i}}{\bP(\sigma^{-1})}{\mathbf{i}'}\bP(\sigma)\\
&=& \sum_{\sigma\in S_k}\iinner{i_1}{i'_{\sigma(1)}}\cdots
\iinner{i_k}{i'_{\sigma(k)}}\bP(\sigma),
\end{eqnarray*}
where $\bP(\sigma)\ket{i_1\cdots i_k} =
\ket{i_{\sigma^{-1}(1)}\cdots i_{\sigma^{-1}(k)}}$ or
$\bP(\sigma)\ket{i_{\sigma(1)}\cdots i_{\sigma(k)}} = \ket{i_1\cdots
i_k}$. That is,
$$
\bP(\sigma) = \sum_{i_1,\ldots, i_k\in[d]}\out{i_1\cdots
i_k}{i_{\sigma(1)}\cdots i_{\sigma(k)}} =
\sum_{\mathbf{i}}\out{\mathbf{i}}{\mathbf{i}_\sigma}.
$$
Note also that $\bP(\sigma)^\dagger  = \bP(\sigma^{-1})$. Therefore
$$
\bP(\sigma^{-1}) = \bP(\sigma)^\dagger = \sum_{i_1,\ldots,
i_k\in[d]}\out{i_{\sigma(1)}\cdots i_{\sigma(k)}}{i_1\cdots
i_k}=\sum_{\mathbf{i}}\out{\mathbf{i}_\sigma}{\mathbf{i}}.
$$
Similarly
\begin{eqnarray*}
\Delta_{d,k}(\bsB) &=& \sum_{\tau\in S_k}
\Innerm{\mathbf{j}'}{\bP(\tau^{-1})}{\mathbf{j}}\bP(\tau) =
\sum_{\sigma\in S_k}\iinner{j'_1}{j_{\tau(1)}}\cdots
\iinner{j'_k}{j_{\tau(k)}}\bP(\tau)\\
&=&\sum_{\tau\in S_k}\iinner{j_{\tau(1)}}{j'_1}\cdots
\iinner{j_{\tau(k)}}{j'_k}\bP(\tau) = \sum_{\tau\in
S_k}\iinner{j_1}{j'_{\tau^{-1}(1)}}\cdots
\iinner{j_k}{j'_{\tau^{-1}(1)}}\bP(\tau)\\
&=&\sum_{\tau\in S_k}\iinner{j_1}{j'_{\tau(1)}}\cdots
\iinner{j_k}{j'_{\tau(1)}}\bP(\tau^{-1}).
\end{eqnarray*}
Note that
\begin{center}
\fcolorbox{purple}{lightgray}{
\parbox{12cm}{
\begin{eqnarray*}
\Delta_{d,k}(\I)^{-1} &=& \Pa{k!\sum_{\lambda\vdash_d
k}\frac{s_{\lambda}(1^{\times d})}{f^\lambda}\bsC_\lambda}^{-1} =
\frac1{k!}\sum_{\lambda\vdash_d k}
\frac{f^\lambda}{s_{\lambda}(1^{\times d})}\bsC_\lambda\\
&=& \sum_{\pi\in S_k}\Pa{\frac1{(k!)^2}\sum_{\lambda\vdash_d k}
\frac{(f^\lambda)^2}{s_{\lambda}(1^{\times d})}\chi_\lambda(\pi^{-1})}\bP(\pi)\\
&=& \sum_{\pi\in S_k}\mathrm{Wg}_{d,k}(\pi^{-1})\bP(\pi),
\end{eqnarray*}}}
\end{center}
where
$$
\bsC_\lambda := \sum_{\pi\in
S_k}\frac{f^\lambda}{k!}\chi_\lambda(\pi^{-1})\bP(\pi)
$$
is the minimal central projection and
$$
\mathrm{Wg}_{d,k} := \frac1{(k!)^2}\sum_{\lambda\vdash k}
\frac{(f^\lambda)^2}{s_{\lambda}(1^{\times d})}\chi_\lambda
$$
is the \emph{Weingarten function}.

Up to now, we can get
\begin{eqnarray*}
&&\Delta_{d,k}(\bsA)\Delta_{d,k}(\bsB)\Delta_{d,k}(\I)^{-1} \\
&&= \sum_{\sigma,\tau,\pi\in S_k} \iinner{i_1}{i'_{\sigma(1)}}\cdots
\iinner{i_k}{i'_{\sigma(k)}} \iinner{j_1}{j'_{\tau(1)}}\cdots
\iinner{j_k}{j'_{\tau(1)}}\mathrm{Wg}_{d,k}(\pi^{-1})
\bP(\sigma\tau^{-1}\pi)\\
&&= \sum_{\sigma,\tau\in S_k}\iinner{i_1}{i'_{\sigma(1)}}\cdots
\iinner{i_k}{i'_{\sigma(k)}} \iinner{j_1}{j'_{\tau(1)}}\cdots
\iinner{j_k}{j'_{\tau(1)}}\mathrm{Wg}_{d,k}(\sigma\tau^{-1})\I \\
&&~~~+\sum_{\sigma,\tau,\pi\in S_k: \sigma\tau^{-1}\pi\neq e}
\iinner{i_1}{i'_{\sigma(1)}}\cdots \iinner{i_k}{i'_{\sigma(k)}}
\iinner{j_1}{j'_{\tau(1)}}\cdots
\iinner{j_k}{j'_{\tau(1)}}\mathrm{Wg}_{d,k}(\pi^{-1})
\bP(\sigma\tau^{-1}\pi).
\end{eqnarray*}
Comparing both sides, we get
\begin{eqnarray*}
&&\int_{\U(d)} U_{i_1 j_1}\cdots U_{i_k j_k} \overline{U_{i'_1
j'_1}}\cdots\overline{U_{i'_k j'_k}}\dif\mu(\bsU) = \Tr{\bsA\sE_k(\bsB)} \\
&&= \sum_{\sigma,\tau\in S_k}\mathrm{Wg}_{d,k}(\sigma\tau^{-1})
\iinner{i_1}{i'_{\sigma(1)}}\cdots
\iinner{i_k}{i'_{\sigma(k)}}\iinner{j_1}{j'_{\tau(1)}}\cdots\iinner{j_k}{j'_{\tau(k)}}.
\end{eqnarray*}
This completes the proof.
\end{proof}

\begin{cor}
If $k\neq l$, then
$$
\int_{\mathsf{U}(d)}U_{i_1 j_1}\cdots U_{i_k j_k} \overline{U_{i'_1
j'_1}}\cdots\overline{U_{i'_l j'_l}}\dif\mu(\bsU) = 0.
$$
\end{cor}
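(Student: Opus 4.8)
The plan is to exploit the invariance of the Haar measure under multiplication by a global phase, which converts the mismatch $k \neq l$ into a nontrivial character that can only be reconciled if the integral vanishes. This is in keeping with the style of the earlier propositions, where bi-invariance of $\mu$ does all the work.

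Concretely, I would fix $\theta \in \real$ and observe that $e^{\sqrt{-1}\theta}\I_d \in \unitary{d}$. Since $\mu$ is left-invariant, the map $U \mapsto e^{\sqrt{-1}\theta}U$ is a measure-preserving bijection of $\unitary{d}$, so the integral
$$
I := \int_{\unitary{d}}U_{i_1 j_1}\cdots U_{i_k j_k}\,\overline{U_{i'_1 j'_1}}\cdots\overline{U_{i'_l j'_l}}\,d\mu(U)
$$
is unchanged if every occurrence of $U$ in the integrand is replaced by $e^{\sqrt{-1}\theta}U$. I would then read off the effect on each factor: each of the $k$ unconjugated entries scales by $e^{\sqrt{-1}\theta}$, while each of the $l$ conjugated entries scales by $e^{-\sqrt{-1}\theta}$, so the integrand is multiplied by the single overall factor $e^{\sqrt{-1}(k-l)\theta}$. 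This yields the identity $I = e^{\sqrt{-1}(k-l)\theta}\,I$ for every $\theta \in \real$.

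Finally, since $k \neq l$, I would choose $\theta$ with $e^{\sqrt{-1}(k-l)\theta}\neq 1$ (for instance $\theta = \pi/(k-l)$, which gives the factor $-1$), and the relation $I = e^{\sqrt{-1}(k-l)\theta}\,I$ then forces $I = 0$. There is essentially no obstacle here; the only point deserving care is the bookkeeping of phases, namely that conjugation sends $e^{\sqrt{-1}\theta}$ to its inverse so that the two groups of factors contribute with opposite signs and produce the net exponent $k-l$. One could equivalently phrase the argument representation-theoretically: $\int_{\unitary{d}}U^{\ot k}\ot\overline{U}^{\ot l}\,d\mu(U)$ is the orthogonal projection onto the $\unitary{d}$-invariants of $U^{\ot k}\ot\overline{U}^{\ot l}$, and the central element $e^{\sqrt{-1}\theta}\I_d$ already acts on this representation by the scalar $e^{\sqrt{-1}(k-l)\theta}$, leaving no nonzero invariants when $k\neq l$.
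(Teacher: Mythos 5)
Your proposal is correct and is essentially identical to the paper's own proof: both exploit the invariance of $\mu$ under $U\mapsto zU$ for $z=e^{\sqrt{-1}\theta}\in\unitary{1}$ to deduce $I=z^{k-l}I$ and then choose $z$ with $z^{k-l}\neq 1$. The concluding representation-theoretic remark is a nice reformulation but adds nothing beyond the paper's argument.
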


\begin{proof}
For every $z\in \U(1)$, the map $\U(d)\ni \bsU\mapsto z\bsU\in\U(d)$
is measure-preserving, therefore
\begin{eqnarray}
&&\int_{\U(d)}U_{i_1 j_1}\cdots U_{i_k j_k} \overline{U_{i'_1
j'_1}}\cdots\overline{U_{i'_l j'_l}}\dif\mu(\bsU) \\
&&= \int_{\U(d)}zU_{i_1 j_1}\cdots zU_{i_k j_k} \overline{zU_{i'_1
j'_1}}\cdots\overline{zU_{i'_l j'_l}}\dif\mu(\bsU)\\
&&= z^{k-l}\int_{\U(d)}U_{i_1 j_1}\cdots U_{i_k j_k}
\overline{U_{i'_1 j'_1}}\cdots\overline{U_{i'_l j'_l}}\dif\mu(\bsU),
\end{eqnarray}
implying that
$$
(1-z^{k-l})\int_{\U(d)}U_{i_1 j_1}\cdots U_{i_k j_k}
\overline{U_{i'_1 j'_1}}\cdots\overline{U_{i'_l j'_l}}\dif\mu(\bsU)
= 0.
$$
By the arbitrariness of $z\in\U(1)$, there exists a $z_0\in\U(1)$
such that $z^{k-l}_0\neq1$ since $k\neq l$.
\end{proof}

\begin{remark}
What is $\int_{\U(d)} \bsU\dif\mu(\bsU)$? One approach to see this
is to form a $d\times d$ matrix $\bsM$ whose $(i,j)$-th entry is the
$\int_{\U(d)} U_{ij}\dif\mu(\bsU)$, for $1\leqslant i,j\leqslant d$.
Writing this in terms of matrix form, we have for any fixed $\bsV\in
\U(d)$,
$$
\bsM = \int_{\U(d)} \bsU\dif\mu(\bsU) = \int_{\U(d)}
\bsV\bsU\dif\mu(\bsU) = \bsV\int_{\U(d)} \bsU\dif\mu(\bsU) =
\bsV\bsM,
$$
where we used the fact that $\dif\mu(\bsU)$ is regular. But
$\bsV\bsM=\bsM$ for all unitary $\bsV$ can only hold if
$\bsM=\mathbf{0}$. That is,
\begin{center}
\fcolorbox{purple}{lightgray}{
\parbox{5cm}{
$$
\int_{\U(d)} \bsU\dif\mu(\bsU) = \mathbf{0}.
$$
}}
\end{center}
\end{remark}

\begin{center}
\fcolorbox{purple}{lightgray}{
\parbox{16.3cm}{
\begin{cor}
For two distinct nonnegative integers $k\geqslant1$ or
$l\geqslant1$, it holds that
\begin{eqnarray}
\int_{\mathsf{U}(d)} \bsU^{\ot k}\ot (\bsU^{\ot
l})^\dagger\dif\mu(\bsU) =\mathbf{0}.
\end{eqnarray}
In particular, for $l=0$,
\begin{eqnarray}
\int_{\mathsf{U}(d)} \Tr{\bsU}^k \dif\mu(\bsU)
=0=\int_{\mathsf{U}(d)} \Tr{\bsU^k} \dif\mu(\bsU).
\end{eqnarray}
\end{cor}
}}
\end{center}

\begin{cor}
For the integer $k\geqslant 1$, it holds that
\begin{eqnarray}
\int_{\mathsf{U}(d)} \det(\bsU)^k \dif\mu(\bsU) =0.
\end{eqnarray}
\end{cor}

\begin{cor}
It holds that
\begin{eqnarray}
\int_{\mathsf{U}(d)} \bsU^{\ot k}\ot \Pa{\bsU^{\ot k}}^\dagger
\dif\mu(\bsU) = \sum_{\sigma,\tau\in S_k}
\mathrm{Wg}_{d,k}(\sigma\tau^{-1})\bP_{\tau+k,\sigma^{-1}},
\end{eqnarray}
where, for any $\pi_1,\pi_2\in S_k$,
\begin{eqnarray}
\bP_{\pi_1+k,\pi_2}\ket{j_1\cdots j_k i'_1\cdots i'_k} :=
\Ket{i'_{\pi_2^{-1}(1)}\cdots
i'_{\pi_2^{-1}(k)}j_{\pi_1^{-1}(1)}\cdots j_{\pi_1^{-1}(k)}}.
\end{eqnarray}
\end{cor}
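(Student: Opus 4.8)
The plan is to verify the claimed operator identity entrywise in the standard product basis $\set{\ket{\mathbf{j}}\ot\ket{\mathbf{j}'}}$ of $(\complex^d)^{\ot k}\ot(\complex^d)^{\ot k}$, since two operators agreeing on all matrix entries coincide. First I would read off the entries of the left-hand side. Writing $\ket{\mathbf{j}}=\ket{j_1\cdots j_k}$, the factor $U^{\ot k}$ contributes $\Innerm{\mathbf{i}}{U^{\ot k}}{\mathbf{j}} = U_{i_1j_1}\cdots U_{i_kj_k}$, while the crucial point is that the dagger swaps the roles of the two indices: $\Innerm{\mathbf{i}'}{(U^{\ot k})^\dagger}{\mathbf{j}'} = \overline{U_{j'_1i'_1}}\cdots\overline{U_{j'_ki'_k}}$. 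Hence the $\Pa{(\mathbf{i},\mathbf{i}'),(\mathbf{j},\mathbf{j}')}$ entry of the left-hand side is exactly $\int_{\unitary{d}} U_{i_1j_1}\cdots U_{i_kj_k}\,\overline{U_{j'_1i'_1}}\cdots\overline{U_{j'_ki'_k}}\,d\mu(U)$.

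Next I would invoke the preceding Weingarten corollary, applied with the primed tuples interchanged so as to absorb the index-swap produced by the dagger: identifying the corollary's $\mathbf{i}'$ and $\mathbf{j}'$ with our $\mathbf{j}'$ and $\mathbf{i}'$ respectively (and keeping $\mathbf{i},\mathbf{j}$ as they are), one obtains
\[
\int_{\unitary{d}} U_{i_1j_1}\cdots U_{i_kj_k}\,\overline{U_{j'_1i'_1}}\cdots\overline{U_{j'_ki'_k}}\,d\mu(U) = \sum_{\sigma,\tau\in S_k}\mathrm{Wg}(\sigma\tau^{-1})\,\prod_{m=1}^k\iinner{i_m}{j'_{\sigma(m)}}\prod_{m=1}^k\iinner{j_m}{i'_{\tau(m)}}.
\]

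On the other side I would compute the entries of $\bP_{\pi_1+k,\pi_2}$ directly from its defining formula, reading that definition on the input $\ket{\mathbf{j}}\ot\ket{\mathbf{j}'}$ (so that the second block carries the labels $j'$). This yields $\Innerm{\mathbf{i},\mathbf{i}'}{\bP_{\pi_1+k,\pi_2}}{\mathbf{j},\mathbf{j}'} = \prod_{m}\iinner{i_m}{j'_{\pi_2^{-1}(m)}}\prod_m\iinner{i'_m}{j_{\pi_1^{-1}(m)}}$, a product of Kronecker deltas. Comparing the first block forces $\pi_2^{-1}=\sigma$, i.e. $\pi_2=\sigma^{-1}$; comparing the second block, after the reindexing $n=\pi_1^{-1}(m)$ that rewrites $\prod_m\iinner{i'_m}{j_{\pi_1^{-1}(m)}}=\prod_n\iinner{j_n}{i'_{\pi_1(n)}}$, forces $\pi_1=\tau$. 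Thus the $\bP_{\tau+k,\sigma^{-1}}$ entries match the integral's entries term by term, and summing over $\sigma,\tau\in S_k$ yields the claim.

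The routine-looking but genuinely error-prone part — the main obstacle — is precisely this index bookkeeping: one must correctly account for the conjugate-transpose interchanging $i'$ and $j'$ inside the monomial \emph{before} applying the corollary, and then carry out the reindexing of the second delta-block so that the arguments $\sigma^{-1}$ and $\tau$ (rather than $\sigma$ and $\tau^{-1}$) emerge as the labels of $\bP$. Keeping these inverses, together with the $+k$ shift, consistent with the definition of $\bP_{\pi_1+k,\pi_2}$ is where all the care is required; everything else is a direct comparison of coefficients.
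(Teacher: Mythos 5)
Your proof is correct and follows essentially the same route as the paper: both verify the identity entrywise by reducing the matrix elements of $\int U^{\ot k}\ot(U^{\ot k})^\dagger d\mu(U)$ to the monomial integral computed in the preceding Weingarten corollary and then matching the resulting Kronecker-delta products against the entries of $\bP_{\tau+k,\sigma^{-1}}$. The only difference is bookkeeping: the paper sandwiches between $\bra{\mathbf{i}\mathbf{j}'}$ and $\ket{\mathbf{j}\mathbf{i}'}$ so that the corollary applies verbatim, whereas you use the uncrossed basis labels and absorb the conjugate-transpose by interchanging the primed tuples when invoking the corollary; your index analysis (including the reindexing that produces $\pi_1=\tau$ and $\pi_2=\sigma^{-1}$) is carried out correctly.
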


\begin{proof}
Clearly
\begin{eqnarray*}
&&\Innerm{\mathbf{ij}'}{\int_{\U(d)} \bsU^{\ot k}\ot \Pa{\bsU^{\ot
k}}^\dagger \dif\mu(\bsU)}{\mathbf{ji}'} = \int_{\U(d)} U_{i_1
j_1}\cdots U_{i_k j_k} \overline{U_{i'_1
j'_1}}\cdots\overline{U_{i'_k
j'_k}}\dif\mu(\bsU) \\
&&= \sum_{\sigma,\tau\in S_k}\mathrm{Wg}_{d,k}(\sigma\tau^{-1})
\iinner{i_1}{i'_{\sigma(1)}}\cdots
\iinner{i_k}{i'_{\sigma(k)}}\iinner{j'_1}{j_{\tau^{-1}(1)}}\cdots\iinner{j'_k}{j_{\tau^{-1}(k)}}.
\end{eqnarray*}
Next, by the definition of $\bP_{\pi_1+k,\pi_2}$, hence we get
\begin{eqnarray}
\bP_{\tau+k,\sigma^{-1}}\ket{j_1\cdots j_k i'_1\cdots i'_k} =
\Ket{i'_{\sigma(1)}\cdots i'_{\sigma(k)}j_{\tau^{-1}(1)}\cdots
j_{\tau^{-1}(k)}},
\end{eqnarray}
where $\pi+k$ means
$$
\pi+k\equiv\Pa{\begin{array}{ccc}
                 1 & \cdots & k \\
                 \pi(1)+k & \cdots & \pi(k)+k
               \end{array}
},
$$
implying that
\begin{eqnarray*}
\Innerm{\mathbf{ij}'}{\int_{\U(d)} \bsU^{\ot k}\ot \Pa{\bsU^{\ot
k}}^\dagger \dif\mu(\bsU)}{\mathbf{ji}'} =
\Innerm{\mathbf{ij}'}{\sum_{\sigma,\tau\in
S_k}\mathrm{Wg}_{d,k}(\sigma\tau^{-1})\bP_{\tau+k,\sigma^{-1}}}{\mathbf{ji}'}.
\end{eqnarray*}
The proof is complete.
\end{proof}

\begin{remark}
In recent papers \cite{Horodecki}, the authors modified the
Schur-Weyl duality in the sense that the commutant of $\bsU^{\ot
k-1}\ot \overline{\bsU}$ can be specifically computed. They make an
attempt in \cite{mhs,schm} to use the obtained new commutant theorem
investigate some questions in quantum information theory.
\end{remark}

\begin{center}
\fcolorbox{purple}{lightgray}{
\parbox{16.3cm}{
\begin{cor}
The uniform average of $\out{\psi}{\psi}^{\ot k}$ over unit vectors
$\ket{\psi}$ in $\complex^d$ is given by
\begin{eqnarray}
\int_{\complex^d} \out{\psi}{\psi}^{\ot k} \dif\mu(\psi) =
\frac1{s_{(k)}(1^{\times d})}\bsC_{(k)} =
\frac1{\binom{k+d-1}{k}}\bsC_{(k)},
\end{eqnarray}
where the meaning of $\bsC_\lambda$ can be referred to
Eq.~\eqref{eq:cproj} or Eq.~\eqref{eq:central-proj}, here
$\lambda=(k)$.
\end{cor}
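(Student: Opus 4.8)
The plan is to reduce the sphere average to a unitary integral and then invoke the Weingarten machinery from the preceding proposition. First I would fix a reference unit vector $\ket{\psi_0}\in\complex^d$ and write an arbitrary $\ket{\psi}$ on $\mathbb{S}^{2d-1}$ as $\ket{\psi}=U\ket{\psi_0}$ for some $U\in\unitary{d}$. Since $\unitary{d}$ acts transitively on the sphere and the normalized measure $d\ket{\psi}$ is the unique $\unitary{d}$-invariant probability measure on $\mathbb{S}^{2d-1}$, it is the pushforward of the Haar measure $\mu$ under $U\mapsto U\ket{\psi_0}$ (this uses the bi-invariance of $\mu$). Hence the average becomes a unitary integral, i.e. an evaluation of the conditional expectation $\sE_k$ on the rank-one operator $A:=\out{\psi_0}{\psi_0}^{\ot k}$:
$$
\int_{\mathbb{S}^{2d-1}}\out{\psi}{\psi}^{\ot k}\,d\ket{\psi}
=\int_{\unitary{d}} U^{\ot k}\out{\psi_0}{\psi_0}^{\ot k}\Pa{U^{\ot k}}^\dagger\,d\mu(U)
=\sE_k\Pa{\out{\psi_0}{\psi_0}^{\ot k}}.
$$

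Next I would apply the identity $\sE_k(A)=\Delta(A)\Delta(\I)^{-1}$ established in the previous proposition, together with the explicit form of $\Delta$. The key simplification is that $\ket{\psi_0}^{\ot k}$ lies in the symmetric subspace, so it is fixed by every $\bP(\pi)$; therefore $\Tr{A\bP(\pi^{-1})}=\iinner{\psi_0}{\psi_0}^k=1$ for all $\pi\in S_k$, and consequently
$$
\Delta(A)=\sum_{\pi\in S_k}\Tr{A\bP(\pi^{-1})}\bP(\pi)=\sum_{\pi\in S_k}\bP(\pi)=k!\,C_{(k)} ,
$$
where the last equality follows from the defining formula $C_\lambda=\frac{f^\lambda}{k!}\sum_{\pi\in S_k}\chi_\lambda(\pi^{-1})\bP(\pi)$ applied to the trivial partition $\lambda=(k)$, for which $f^{(k)}=1$ and $\chi_{(k)}\equiv1$.

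Finally I would combine $\Delta(A)=k!\,C_{(k)}$ with $\Delta(\I)^{-1}=\frac1{k!}\sum_{\lambda\vdash k}\frac{f^\lambda}{s_\lambda(1^{\times d})}C_\lambda$ and use the orthogonality of the minimal central projections, $C_\lambda C_{\lambda'}=\delta_{\lambda\lambda'}C_\lambda$. Only the $\lambda=(k)$ term survives, giving
$$
\sE_k(A)=k!\,C_{(k)}\cdot\frac1{k!}\,\frac{f^{(k)}}{s_{(k)}(1^{\times d})}C_{(k)}=\frac{1}{s_{(k)}(1^{\times d})}C_{(k)} ,
$$
again using $f^{(k)}=1$. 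Identifying $s_{(k)}(1^{\times d})=\dim(\vee^k\complex^d)=\binom{k+d-1}{k}$ then yields the claimed formula. I expect the only genuinely delicate point to be the first step, namely carefully justifying that the uniform sphere average coincides with the Haar average of $U^{\ot k}A\Pa{U^{\ot k}}^\dagger$; once that reduction is in place, the remainder is the already-proved Weingarten identity together with routine bookkeeping of the central projections $C_\lambda$.
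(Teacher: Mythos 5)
Your proposal is correct and follows essentially the same route as the paper: reduce the sphere average to the Haar integral $\sE_k\bigl(\out{\psi_0}{\psi_0}^{\ot k}\bigr)$ via transitivity of the $\unitary{d}$-action, observe that $\Tr{A\bP(\pi^{-1})}=1$ because $\ket{\psi_0}^{\ot k}$ is fixed by every $\bP(\pi)$ so that $\Delta(A)=k!\,C_{(k)}$, and then kill all but the $\lambda=(k)$ term of $\Delta(\I)^{-1}$ by orthogonality of the central projections. The bookkeeping with $f^{(k)}=1$ and $s_{(k)}(1^{\times d})=\binom{k+d-1}{k}$ matches the paper's computation exactly.
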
}}
\end{center}

\begin{proof}
In fact, this result is a direct consequence of
\eqref{eq:general-integral}. More explicitly, let us fix a vector
$\ket{\psi_0}$. Then every $\ket{\psi}$ can be generated by a
uniform unitary $\bsU$ such that $\ket{\psi}=\bsU\ket{\psi_0}$. Thus
\begin{eqnarray*}
\int_{\complex^d} \out{\psi}{\psi}^{\ot k} \dif\mu(\psi) &=&
\int_{\U(d)} \bsU^{\ot k}\out{\psi_0}{\psi_0}^{\ot k}\bsU^{\ot
k,\dagger}\dif\mu(\bsU).
\end{eqnarray*}
Taking $\bsA=\out{\psi_0}{\psi_0}^{\ot k}$ in
\eqref{eq:general-integral} gives rise to
\begin{eqnarray*}
&&\int_{\U(d)} \bsU^{\ot k}\out{\psi_0}{\psi_0}^{\ot k}\bsU^{\ot
k,\dagger}\dif\mu(\bsU) = \Pa{\frac1{k!}\sum_{\pi\in S_k}
\bP(\pi)}\Pa{\sum_{\lambda\vdash_d
k}\frac{f^\lambda}{s_{\lambda}(1^{\times d})}\bsC_\lambda}\\
&&= \bsC_{(k)}\Pa{\sum_{\lambda\vdash_d
k}\frac{f^\lambda}{s_{\lambda}(1^{\times d})}\bsC_\lambda} =
\frac{f^{(k)}}{s_{(k)}(1^{\times d})}\bsC_{(k)},
\end{eqnarray*}
implying the desired result.
\end{proof}

The following compact version of $\sE^{(k)}_d(\bsA)$ is given by
Audenaert in \cite{kmra}. The detailed presentation is shifted to
Appendix (see below).

\begin{prop}
Let $\cH_{\mathrm{in}}$ and $\cH_{\mathrm{out}}$ be two copies of
the Hilbert space $\cH = (\complex^d)^{\ot k}$. Let
$\bsC_{(k)}^\vee$ be the projector on the totally symmetric subspace
of $\cH_{\mathrm{out}}\ot \cH_{\mathrm{in}}$. Then it holds that
\begin{eqnarray}
\int_{\mathsf{U}(d)} \out{\bsU^{\ot k}}{\bsU^{\ot k}} \dif\mu(\bsU)=
\bsC_{(k)}^\vee\Pa{\Br{\Ptr{\mathrm{in}}{\bsC_{(k)}^\vee}}^{-1}\ot\I_{\mathrm{in}}}.
\end{eqnarray}
\end{prop}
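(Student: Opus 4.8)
The plan is to recognize the left-hand side as the Choi--Jamio{\l}kowski matrix of the conditional expectation $\sE_k$ and then to pin it down by its covariance together with a single normalization. Writing $M := \int_{\unitary d}\out{\vec(U^{\ot k})}{\vec(U^{\ot k})}d\mu(U)$ and using $\vec(U^{\ot k}) = (U^{\ot k}\ot\I_{\mathrm{in}})\vec(\I)$, one sees $M = (\sE_k\ot\mathrm{id})(\out{\vec(\I)}{\vec(\I)}) = \jam{\sE_k}$. The decisive simplification is the reindexing that pairs the $\ell$-th ``out'' factor with the $\ell$-th ``in'' factor: under the fixed unitary regrouping $\cH_{\mathrm{out}}\ot\cH_{\mathrm{in}}\cong(\complex^d\ot\complex^d)^{\ot k}$ one has $\vec(U^{\ot k}) = \vec(U)^{\ot k}$, so the integrand is the symmetric power $(\out{\vec(U)}{\vec(U)})^{\ot k}$ and therefore lies in $\vee^k(\complex^d\ot\complex^d)$. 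This immediately yields $C_{(k)}^\vee M = M = M C_{(k)}^\vee$. First I would record the two covariance relations coming from the invariance of Haar measure: from $U\mapsto VU$ and $\vec(VU) = (V\ot\I)\vec(U)$ one gets $[M,V^{\ot k}\ot\I_{\mathrm{in}}] = 0$, and from $U\mapsto UW$ one gets $[M,\I_{\mathrm{out}}\ot W^{\ot k}] = 0$, for all $V,W\in\unitary d$.

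Next I would compute the normalization. Using the identity $\Ptr{\mathrm{in}}{\out{\vec(X)}{\vec(Y)}} = XY^\dagger$ with $X = Y = U^{\ot k}$ gives $\Ptr{\mathrm{in}}{M} = \int_{\unitary d}\I\,d\mu(U) = \I_{\mathrm{out}}$. On the right-hand side, from $C_{(k)}^\vee = \frac1{k!}\sum_{\pi\in S_k}\bP(\pi)\ot\bP(\pi)$ (the projector onto the symmetric power in the paired picture) I would evaluate $R := \Ptr{\mathrm{in}}{C_{(k)}^\vee} = \frac1{k!}\sum_{\pi}\Tr{\bP(\pi)}\bP(\pi) = \frac1{k!}\Delta(\I)$, identifying it with the central element $\frac1{k!}\Delta(\I) = \sum_{\lambda\vdash(k,d)}\frac{s_{\lambda}(1^{\times d})}{f^\lambda}C_\lambda$ computed earlier. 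In particular $R$ is positive, central in $\bP(\complex[S_k])$, and invertible on $(\complex^d)^{\ot k}$, so $N := C_{(k)}^\vee(R^{-1}\ot\I_{\mathrm{in}})$ is well defined. Centrality of $R$ makes $R^{-1}\ot\I_{\mathrm{in}}$ commute with $C_{(k)}^\vee$, whence $N$ is Hermitian and positive, satisfies $C_{(k)}^\vee N = N = N C_{(k)}^\vee$ and $[N,V^{\ot k}\ot\I] = [N,\I\ot W^{\ot k}] = 0$, and obeys $\Ptr{\mathrm{in}}{N} = \Ptr{\mathrm{in}}{C_{(k)}^\vee}R^{-1} = \I_{\mathrm{out}}$.

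It then remains to show these shared properties force $M = N$. Here I would invoke Cauchy's decomposition $\vee^k(\complex^d\ot\complex^d)\cong\bigoplus_{\lambda\vdash(k,d)}\bQ_\lambda\ot\bQ_\lambda$, which is multiplicity-free with the two $\unitary d$-factors carrying the \emph{same} partition $\lambda$. Consequently the commutant of $\set{V^{\ot k}\ot W^{\ot k}}$ inside $\End(\vee^k(\complex^d\ot\complex^d))$ is abelian, equal to $\bigoplus_\lambda\complex\,\Pi_\lambda$ with $\Pi_\lambda$ the projector onto the $\lambda$-isotypic block; by the previous paragraph both $M$ and $N$ lie in this algebra. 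Finally I would argue that $\Ptr{\mathrm{in}}$ is injective on $\bigoplus_\lambda\complex\,\Pi_\lambda$: since the range of $\Pi_\lambda$ sits inside the $\bQ_\lambda$-isotypic component of $\cH_{\mathrm{out}}$ for the $V^{\ot k}$-action, each $\Ptr{\mathrm{in}}{\Pi_\lambda}$ is nonzero and supported on a distinct $\unitary d$-isotypic component, so these operators are linearly independent. Because $\Ptr{\mathrm{in}}{M} = \I_{\mathrm{out}} = \Ptr{\mathrm{in}}{N}$, injectivity gives $M = N$, the desired identity.

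I expect the main obstacle to be this last uniqueness step: both verifying that the symmetric subspace decomposes multiplicity-freely with matched partitions (so that the relevant commutant is abelian) and confirming that a \emph{single} matrix normalization $\Ptr{\mathrm{in}}{(\cdot)} = \I_{\mathrm{out}}$ already separates the coefficients indexed by $\lambda$. The support and covariance properties, and the identification $R = \frac1{k!}\Delta(\I)$, are comparatively routine once the reindexing $\vec(U^{\ot k}) = \vec(U)^{\ot k}$ is in hand.
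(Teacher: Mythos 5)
Your proof is correct. Note first that the paper does not actually prove this proposition: it cites Audenaert's notes and defers ``the detailed presentation'' to an appendix that supplies background but no derivation, so your argument fills a genuine gap rather than paralleling an existing one. Your route --- support on $\vee^k(\complex^d\ot\complex^d)$ via the reindexing $\vec(U^{\ot k})=\vec(U)^{\ot k}$, bilateral covariance, the single normalization $\Ptr{\mathrm{in}}{M}=\I$, and uniqueness from the multiplicity-free Cauchy decomposition $\vee^k(\complex^d\ot\complex^d)\cong\bigoplus_{\lambda\vdash(k,d)}\bQ_\lambda\ot\bQ_\lambda$ --- is sound, including the two delicate points you flagged: the commutant of $\set{V^{\ot k}\ot W^{\ot k}}$ on the symmetric subspace is indeed $\bigoplus_\lambda\complex\,\Pi_\lambda$ because the summands are irreducible and pairwise inequivalent for the product group, and $\Ptr{\mathrm{in}}$ is injective on that span because $\Ptr{\mathrm{in}}{\Pi_\lambda}=C_\lambda\,\Ptr{\mathrm{in}}{\Pi_\lambda}\,C_\lambda$ is nonzero (its trace is $(\dim\bQ_\lambda)^2$) and the $C_\lambda$ have mutually orthogonal ranges. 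For comparison, the identity also drops out of the paper's own general-case machinery in two lines: $M=(\sE_k\ot\mathrm{id})(\out{\vec(\I)}{\vec(\I)})=\sum_{\pi\in S_k}\bP(\pi)\Delta(\I)^{-1}\ot\bP(\pi)=C_{(k)}^\vee\Pa{k!\,\Delta(\I)^{-1}\ot\I_{\mathrm{in}}}$, using $\sE_k(A)=\Delta(A)\Delta(\I)^{-1}$ and the centrality of $\Delta(\I)$, while $\Ptr{\mathrm{in}}{C_{(k)}^\vee}=\frac1{k!}\sum_{\pi\in S_k}\Tr{\bP(\pi)}\bP(\pi)=\frac1{k!}\Delta(\I)$ is exactly your $R$. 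That computation is shorter but presupposes the Weingarten-type analysis of $\Delta(\I)^{-1}$; your argument avoids that machinery entirely at the price of the uniqueness step, which you have in fact justified adequately.
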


\begin{cor}
Let $\bsA\in\End((\complex^d)^{\ot k})$. Then it holds that
\begin{eqnarray}
\int_{\mathsf{U}(d)} \Pa{\bsU^{\ot k}}\bsA\Pa{\bsU^{\ot k}}^\dagger
\dif\mu(\bsU)= \frac1{k!}\Br{\sum_{\pi\in
S_k}\Tr{\bsA\bP(\pi^{-1})}\bP(\pi)}\Br{\Ptr{\mathrm{in}}{\bsC_{(k)}^\vee}}^{-1}.
\end{eqnarray}
\end{cor}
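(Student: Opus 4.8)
The plan is to recognize the left-hand side as the conditional expectation $\sE_k(A)=\int_{\unitary d}U^{\ot k}A\Pa{U^{\ot k}}^\dagger d\mu(U)$ and to read it off from the immediately preceding Proposition (the compact form of $\sE_k$ due to Audenaert) by ``opening up'' the operator $A$ through the vec-correspondence. The bridge I would use is the identity
\[
U^{\ot k}A\Pa{U^{\ot k}}^\dagger=\Ptr{\mathrm{in}}{\out{\vec(U^{\ot k})}{\vec(U^{\ot k})}\Pa{\I_{\mathrm{out}}\ot A^\t}},
\]
which is checked directly on matrix entries (it is the partial-trace form of $\vec(MXN)=(M\ot N^\t)\vec(X)$). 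Integrating over $\unitary d$ and pulling the integral inside the linear partial trace then gives
\[
\sE_k(A)=\Ptr{\mathrm{in}}{\Pa{\int_{\unitary d}\out{\vec(U^{\ot k})}{\vec(U^{\ot k})}d\mu(U)}\Pa{\I_{\mathrm{out}}\ot A^\t}}.
\]

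Next I would substitute the preceding Proposition, $\int\out{\vec(U^{\ot k})}{\vec(U^{\ot k})}d\mu(U)=C_{(k)}^\vee\Pa{N^{-1}\ot\I_{\mathrm{in}}}$ with $N:=\Ptr{\mathrm{in}}{C_{(k)}^\vee}$, so that $\sE_k(A)=\Ptr{\mathrm{in}}{C_{(k)}^\vee(N^{-1}\ot A^\t)}$. Here I use the explicit form of the symmetric projector on the $k$ super-sites $\cH_{\mathrm{out}}\ot\cH_{\mathrm{in}}\cong(\complex^d\ot\complex^d)^{\ot k}$, namely $C_{(k)}^\vee=\frac1{k!}\sum_{\pi\in S_k}\bP(\pi)\ot\bP(\pi)$, since permuting super-sites acts simultaneously on the out- and in-factors. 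Then $C_{(k)}^\vee(N^{-1}\ot A^\t)=\frac1{k!}\sum_\pi\bP(\pi)N^{-1}\ot\bP(\pi)A^\t$, and tracing out the in-space yields
\[
\sE_k(A)=\frac1{k!}\sum_{\pi\in S_k}\Tr{\bP(\pi)A^\t}\,\bP(\pi)N^{-1}.
\]

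The calculation closes using $\Tr{\bP(\pi)A^\t}=\Tr{A^\t\bP(\pi)}=\Tr{\bP(\pi)^\t A}=\Tr{A\bP(\pi^{-1})}$, where I invoke $\Tr{X}=\Tr{X^\t}$ and $\bP(\pi)^\t=\bP(\pi^{-1})$ (permutation matrices are real orthogonal). Recognizing $\Delta(A)=\sum_\pi\Tr{A\bP(\pi^{-1})}\bP(\pi)$ then gives exactly $\sE_k(A)=\frac1{k!}\Delta(A)N^{-1}=\frac1{k!}\Br{\sum_\pi\Tr{A\bP(\pi^{-1})}\bP(\pi)}\Br{\Ptr{\mathrm{in}}{C_{(k)}^\vee}}^{-1}$, which is the claim. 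As an independent check one can bypass the Audenaert form and use \eqref{eq:general-integral}, where $\sE_k(A)=\frac1{k!}\Delta(A)\sum_\lambda\frac{f^\lambda}{s_\lambda(1^{\times d})}C_\lambda$; the two expressions agree precisely when $\Ptr{\mathrm{in}}{C_{(k)}^\vee}=\sum_\lambda\frac{s_\lambda(1^{\times d})}{f^\lambda}C_\lambda$, and indeed $\Ptr{\mathrm{in}}{C_{(k)}^\vee}=\frac1{k!}\sum_\pi\Tr{\bP(\pi)}\bP(\pi)=\frac1{k!}\sum_\pi\chi(\pi^{-1})\bP(\pi)=\frac1{k!}\Delta(\I)$, which equals that central-projection sum by the computation of $\Delta(\I)$ carried out above.

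The main obstacle is the correct identification and handling of $C_{(k)}^\vee$: one must see that ``totally symmetric'' here refers to the diagonal $S_k$-action $\pi\mapsto\bP(\pi)\ot\bP(\pi)$ on the $k$ super-sites (not a swap of the two copies, nor the full $S_{2k}$ symmetrization), and then keep careful track of the transposes and of the side on which $N^{-1}$ lands after the partial trace. The case $k=1$, where $C_{(1)}^\vee=\I_{d^2}$, $N=d\,\I_d$, and the formula reduces correctly to $\sE_1(A)=\frac{\Tr A}{d}\I_d$, is a reassuring sanity check of both the interpretation and the bookkeeping.
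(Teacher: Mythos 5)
Your proof is correct and is essentially the argument the paper intends (the paper states this corollary without proof, immediately after Audenaert's compact form): you correctly identify $C_{(k)}^\vee$ as the symmetrizer $\frac1{k!}\sum_{\pi}\bP(\pi)\ot\bP(\pi)$ over the $k$ super-sites, and your vec/partial-trace bookkeeping, including the placement of $\bigl[\Ptr{\mathrm{in}}{C_{(k)}^\vee}\bigr]^{-1}$ on the right, checks out. Your ``independent check'' is in fact the shortest route already available in the paper: combine $\sE_k(A)=\Delta(A)\Delta(\I)^{-1}$ from the earlier proposition on $\Delta$ with the observation that $\Ptr{\mathrm{in}}{C_{(k)}^\vee}=\frac1{k!}\sum_{\pi}\Tr{\bP(\pi)}\bP(\pi)=\frac1{k!}\Delta(\I)$.
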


\begin{center}
\fcolorbox{purple}{lightgray}{
\parbox{16.3cm}{
\begin{cor}
Assume $\bsX\in\End(\complex^d)$ with spectrum
$\set{x_j:1,\ldots,d}$. It holds that
\begin{eqnarray}
\int_{\mathsf{U}(d)} \Pa{\bsU\bsX\bsU^\dagger}^{\ot k}
\dif\mu(\bsU)=\sum_{\lambda\vdash_dk}\frac{s_\lambda(x_1,\ldots,x_d)}{s_\lambda(1^{\times
d})}\bsC_\lambda=\sum_{\lambda\vdash_dk}\frac{\Tr{\bsC_\lambda
\bsX^{\ot k}}}{\Tr{\bsC_\lambda}}\bsC_\lambda.
\end{eqnarray}
\end{cor}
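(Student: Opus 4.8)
The plan is to exploit the fact, recorded just after \eqref{eq:general-integral}, that $\sE_k$ is the orthogonal projection onto the algebra $\cA=\bP(\complex[S_k])$. First I would observe that $X^{\ot k}$ commutes with every $\bP(\pi)$, since the \emph{same} operator $X$ occupies each tensor slot and a permutation only relabels slots; equivalently $X^{\ot k}=\bQ(X)\in\cB=\cA'$ by Theorem~\ref{th-schur} and the remark following Theorem~\ref{th:S-W-Duality}. Consequently $\sE_k(X^{\ot k})$ lies in $\cA$ and, being an average $\int_{\unitary{d}}U^{\ot k}X^{\ot k}(U^{\ot k})^\dagger d\mu(U)$ of products of elements of $\cA'$, also lies in $\cA'$; hence it belongs to the centre $Z(\cA)=\cA\cap\cA'$, which is spanned by the minimal central projections $C_\lambda$, $\lambda\vdash(k,d)$. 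I may therefore write $\sE_k(X^{\ot k})=\sum_{\lambda}c_\lambda C_\lambda$ and reduce the problem to pinning down the scalars $c_\lambda$.

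To extract $c_\lambda$ I would pair with a fixed $C_\mu$ and take the trace, using the orthogonality $C_\lambda C_\mu=\delta_{\lambda\mu}C_\lambda$, which gives $\Tr{C_\mu\sE_k(X^{\ot k})}=c_\mu\Tr{C_\mu}$. On the other hand, unfolding the definition of $\sE_k$ and using cyclicity of the trace together with $(U^{\ot k})^\dagger C_\mu U^{\ot k}=C_\mu$ (as $C_\mu$ is central in $\cA$ while $U^{\ot k}\in\cA'$), I obtain
\begin{eqnarray*}
\Tr{C_\mu\sE_k(X^{\ot k})}=\int_{\unitary{d}}\Tr{(U^{\ot k})^\dagger C_\mu U^{\ot k}X^{\ot k}}d\mu(U)=\Tr{C_\mu X^{\ot k}}.
\end{eqnarray*}
Comparing the two evaluations yields $c_\lambda=\Tr{C_\lambda X^{\ot k}}/\Tr{C_\lambda}$, which is exactly the second displayed equality.

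It then remains to compute the two traces in the Schur basis and to recover the first equality. By Schur--Weyl duality and the remark following Theorem~\ref{th:S-W-Duality}, $X^{\ot k}\cong\bigoplus_{\lambda\vdash(k,d)}\bQ_\lambda(X)\ot\I_{\bP_\lambda}$, while $C_\lambda$ is the projector onto the $\lambda$-isotypic block $\bQ_\lambda\ot\bP_\lambda$. Taking traces block by block gives $\Tr{C_\lambda X^{\ot k}}=\Tr{\bQ_\lambda(X)}\dim(\bP_\lambda)$ and $\Tr{C_\lambda}=\dim(\bQ_\lambda)\dim(\bP_\lambda)$, so the factor $\dim(\bP_\lambda)=f^\lambda$ cancels in the ratio and $c_\lambda=\Tr{\bQ_\lambda(X)}/\dim(\bQ_\lambda)$. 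Finally I would invoke $\dim(\bQ_\lambda)=s_\lambda(1^{\times d})$ from the Hook length formulae and identify $\Tr{\bQ_\lambda(X)}$ with the Schur polynomial $s_\lambda(x_1,\ldots,x_d)$ in the eigenvalues of $X$.

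The step I expect to be the crux is this last identification: that the character $X\mapsto\Tr{\bQ_\lambda(X)}$ of the irreducible polynomial representation $\bQ_\lambda$ of $\rG\rL(d,\complex)$ equals $s_\lambda(x_1,\ldots,x_d)$. This rests on the highest-weight description of $\bQ_\lambda$ (the Weyl character formula), the very ingredient deferred earlier when identifying the range $\lambda\vdash(k,d)$. For diagonalisable $X$ it is the classical statement that the $\rG\rL(d)$-character is the Schur function; the general case then follows because both sides are polynomial, hence continuous, in the entries of $X$ and agree on the dense set of diagonalisable matrices. I would also note that $s_\lambda(x_1,\ldots,x_d)=0$ when $\lambda$ has more than $d$ parts, consistent with the summation being restricted to $\lambda\vdash(k,d)$.
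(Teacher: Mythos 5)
Your argument is correct, but it reaches the result by a genuinely different route than the paper. The paper block-diagonalizes the \emph{integrand} via Schur--Weyl duality, writing $\bQ(U)\bQ(X)\bQ(U)^{\dagger}\cong\bigoplus_{\lambda}\bQ_{\lambda}(U)\bQ_{\lambda}(X)\bQ_{\lambda}(U)^{\dagger}\ot\I_{\bP_{\lambda}}$, and then integrates inside each block: since $\bQ_{\lambda}$ is an irreducible representation of $\unitary{d}$, the $k=1$ depolarizing-channel computation (Proposition~\ref{prop:u-integral}, i.e.\ Schur's lemma applied to the irrep $\bQ_{\lambda}$) turns each block into $\frac{1}{\dim(\bQ_{\lambda})}\Tr{\bQ_{\lambda}(X)}\,\I_{\bQ_{\lambda}}$, and the formula drops out with the Schur-function form appearing first. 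You never decompose the integrand: you locate $\sE_k(X^{\ot k})$ in the centre $\cA\cap\cA'$ (using that $(UXU^{\dagger})^{\ot k}\in\cA'$ and that the range of $\sE_k$ is $\bP(\complex[S_k])$), expand over the minimal central projections $C_{\lambda}$, and extract the coefficients by tracing against $C_{\mu}$ together with bi-invariance and $[C_{\mu},U^{\ot k}]=0$. This delivers the second displayed equality $c_{\lambda}=\Tr{C_{\lambda}X^{\ot k}}/\Tr{C_{\lambda}}$ with strictly less representation-theoretic input — only the Wedderburn structure of $Z(\cA)$, not the irreducibility of the $\bQ_{\lambda}$ as $\unitary{d}$-modules — and defers all character theory ($\Tr{\bQ_{\lambda}(X)}=s_{\lambda}(x_1,\ldots,x_d)$, $\dim(\bQ_{\lambda})=s_{\lambda}(1^{\times d})$, which is where irreducibility and the Weyl character formula finally enter) to the passage from the second equality to the first. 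The paper's route is shorter once Schur--Weyl duality is fully in hand and makes the role of irreducibility transparent; yours isolates the purely algebraic content of the trace-ratio formula and is closer in spirit to the $\Delta$/Weingarten formalism developed earlier in the same subsection (indeed it is essentially Eq.~\eqref{eq:general-integral} specialized to a central target). Both are valid proofs of the stated identities.
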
}}
\end{center}

\begin{proof}
We give a very simple derivation of this identity via Schur-Weyl duality , i.e. Theorem~\ref{th:S-W-Duality}. Indeed, the mentioned integral can be rewritten as
\begin{eqnarray}
\int_{\U(d)} \Pa{\bsU^{\ot k}}\bsX^{\ot k}\Pa{\bsU^{\ot k}}^\dagger
\dif\mu(\bsU)=\int_{\U(d)} \bQ(\bsU)\bQ(\bsX) \bQ^\dagger(\bsU)
\dif\mu(\bsU).
\end{eqnarray}
Now by Schur-Weyl duality, we have
\begin{eqnarray*}
\bQ(\bsU) \cong \bigoplus_{\lambda\vdash_dk}
\bQ_\lambda(\bsU)\ot\I_{\bP_\lambda},~~\bQ(\bsX) \cong
\bigoplus_{\lambda\vdash_dk}
\bQ_\lambda(\bsX)\ot\I_{\bP_\lambda},~~\bQ^\dagger(\bsU) \cong
\bigoplus_{\lambda\vdash_dk}
\bQ^\dagger_\lambda(\bsU)\ot\I_{\bP_\lambda}.
\end{eqnarray*}
Thus
\begin{eqnarray*}
\int_{\U(d)} \bQ(\bsU)\bQ(\bsX) \bQ^\dagger(\bsU) \dif\mu(\bsU)
&\cong& \sum_{\lambda\vdash_dk} \Pa{\int_{\U(d)}
\bQ_\lambda(\bsU)\bQ_\lambda(\bsX)
\bQ^\dagger_\lambda(\bsU)\dif\mu(\bsU)} \ot \I_{\bP_\lambda} \\
&=& \sum_{\lambda\vdash_dk} \Pa{\frac1{\dim(\bQ_\lambda)}\Tr{\bQ_\lambda(\bsX)}\I_{\bQ_\lambda}} \ot \I_{\bP_\lambda}\\
&=& \sum_{\lambda\vdash_dk}
\frac1{\dim(\bQ_\lambda)}\Tr{\bQ_\lambda(\bsX)} \I_{\bQ_\lambda}\ot
\I_{\bP_\lambda},
\end{eqnarray*}
which implies the desired result, where we have used the facts that
\begin{eqnarray*}
\Tr{\bQ_\lambda(\bsX)} &=& s_\lambda(x_1,\ldots,x_d), \\
\dim(\bQ_\lambda)&=& s_\lambda(1^{\times d}),~ \bsC_\lambda \cong
\I_{\bQ_\lambda} \ot \I_{\bP_\lambda}.
\end{eqnarray*}
This completes the proof.
\end{proof}
\begin{exam}
Note that we get the following decomposition via Schur-Weyl duality
\begin{eqnarray}
(\complex^d)^{\ot3}\cong
\bQ_{(3)}\ot\bP_{(3)}\bigoplus\bQ_{(2,1)}\ot\bP_{(2,1)}\bigoplus\bQ_{(1,1,1)}\ot\bP_{(1,1,1)}
\end{eqnarray}
where
\begin{eqnarray}
\dim(\bQ_\lambda)=
\begin{cases}
\frac{d(d+1)(d+2)}6,&\text{if}~
\lambda=(3),\\
\frac{(d-1)d(d+1)}3, &\text{if}~ \lambda=(2,1),\\
\frac{(d-2)(d-1)d}6,&\text{if}~ \lambda=(1,1,1),
\end{cases}~\text{and}~\dim(\bP_\lambda)=
\begin{cases}
1,&\text{if}~
\lambda=(3),\\
2, &\text{if}~ \lambda=(2,1),\\
1,&\text{if}~ \lambda=(1,1,1).
\end{cases}
\end{eqnarray}
Hence
\begin{eqnarray}
\bsC_\lambda =
\begin{cases}
\frac16\Pa{\bP_{(1)}+\bP_{(12)}+\bP_{(13)}+\bP_{(23)}+\bP_{(123)}+\bP_{(132)}},&\text{if}~
\lambda=(3),\\
\frac13\Pa{2\bP_{(1)}-\bP_{(123)}-\bP_{(132)}},&\text{if}~\lambda=(2,1),\\
\frac16\Pa{\bP_{(1)}-\bP_{(12)}-\bP_{(13)}-\bP_{(23)}+\bP_{(123)}+\bP_{(132)}},&\text{if}~\lambda=(1,1,1).
\end{cases}
\end{eqnarray}
It follows that
\begin{eqnarray}
\Tr{\bsC_\lambda} =
\begin{cases}
\frac{d(d+1)(d+2)}6,&\text{if}~
\lambda=(3),\\
\frac{2(d-1)d(d+1)}3,&\text{if}~\lambda=(2,1),\\
\frac{(d-2)(d-1)d}6,&\text{if}~\lambda=(1,1,1)
\end{cases}
\end{eqnarray}
and
\begin{eqnarray}
\Tr{\bsX^{\ot 3}\bsC_\lambda} =
\begin{cases}
\frac16\Br{\Tr{\bsX}^3+3\Tr{\bsX^2}\Tr{\bsX}+2\Tr{\bsX^3}},&\text{if}~
\lambda=(3),\\
\frac23\Br{\Tr{\bsX}^3-\Tr{\bsX^3}},&\text{if}~\lambda=(2,1),\\
\frac16\Br{\Tr{\bsX}^3-3\Tr{\bsX^2}\Tr{\bsX}+2\Tr{\bsX^3}},&\text{if}~\lambda=(1,1,1).
\end{cases}
\end{eqnarray}
Therefore
\begin{eqnarray}
\int (\bsU\bsX\bsU^\dagger)^{\ot3}\dif\mu(\bsU) =
\delta^{(3)}_3\bsC_{(3)} + \delta^{(2,1)}_3\bsC_{(2,1)}
+\delta^{(1,1,1)}_3\bsC_{(1,1,1)},
\end{eqnarray}
where
\begin{eqnarray*}
\delta^{(3)}_3 &:=& \frac{\Tr{\bsX}^3+3\Tr{\bsX^2}\Tr{\bsX}+2\Tr{\bsX^3}}{d(d+1)(d+2)},\\
\delta^{(2,1)}_3 &:=&\frac{\Tr{\bsX}^3-\Tr{\bsX^3}}{(d-1)d(d+1)},\\
\delta^{(1,1,1)}_3
&:=&\frac{\Tr{\bsX}^3-3\Tr{\bsX^2}\Tr{\bsX}+2\Tr{\bsX^3}}{(d-2)(d-1)d}.
\end{eqnarray*}
\end{exam}

\begin{exam}
Similar we get the following decomposition:
\begin{eqnarray}
(\complex^d)^{\ot 4}&\cong& \bQ_{(4)}\ot\bP_{(4)}\bigoplus
\bQ_{(3,1)}\ot\bP_{(3,1)}\bigoplus
\bQ_{(2,2)}\ot\bP_{(2,2)}\nonumber\\
&&\bigoplus \bQ_{(2,1,1)}\ot\bP_{(2,1,1)}\bigoplus
\bQ_{(1,1,1,1)}\ot\bP_{(1,1,1,1)},
\end{eqnarray}
where
\begin{eqnarray}
\dim(\bQ_\lambda)=
\begin{cases}
\frac{d(d+1)(d+2)(d+3)}{24},&\text{if}~
\lambda=(4),\\
\frac{(d-1)d(d+1)(d+2)}8, &\text{if}~ \lambda=(3,1),\\
\frac{(d-1)d^2(d+1)}{12},&\text{if}~
\lambda=(2,2),\\
\frac{(d-2)(d-1)d(d+1)}8,&\text{if}~ \lambda=(2,1,1),\\
\frac{(d-3)(d-2)(d-1)d}{24},&\text{if}~ \lambda=(1,1,1,1),
\end{cases}~\text{and}~\dim(\bP_\lambda)=
\begin{cases}
1,&\text{if}~
\lambda=(4),\\
3, &\text{if}~ \lambda=(3,1),\\
2,&\text{if}~
\lambda=(2,2),\\
3,&\text{if}~ \lambda=(2,1,1),\\
1,&\text{if}~ \lambda=(1,1,1,1).
\end{cases}
\end{eqnarray}
Hence
\begin{eqnarray*}
\bsC_{(4)} &=&
\frac1{24}\bP_{(1)}+\frac1{24}\Pa{\bP_{(12)}+\bP_{(13)}+\bP_{(14)}+\bP_{(23)}+\bP_{(24)}+\bP_{(34)}}\\
&&+\frac1{24}\Pa{\bP_{(12)(34)}+\bP_{(13)(24)}+\bP_{(14)(23)}}\\
&&+\frac1{24}\Pa{\bP_{(123)}+\bP_{(132)}+\bP_{(124)}+\bP_{(142)}+\bP_{(134)}+\bP_{(143)}+\bP_{(234)}+\bP_{(243)}}\\
&&+\frac1{24}\Pa{\bP_{(1234)}+\bP_{(1243)}+\bP_{(1324)}+\bP_{(1342)}+\bP_{(1423)}+\bP_{(1432)}}
\end{eqnarray*}

\begin{eqnarray*}
\bsC_{(3,1)} &=&
\frac38\bP_{(1)}+\frac18\Pa{\bP_{(12)}+\bP_{(13)}+\bP_{(14)}+\bP_{(23)}+\bP_{(24)}+\bP_{(34)}}\\
&&-\frac18\Pa{\bP_{(12)(34)}+\bP_{(13)(24)}+\bP_{(14)(23)}}\\
&&-\frac18\Pa{\bP_{(1234)}+\bP_{(1243)}+\bP_{(1324)}+\bP_{(1342)}+\bP_{(1423)}+\bP_{(1432)}}
\end{eqnarray*}

\begin{eqnarray*}
\bsC_{(2,2)} &=& \frac16\bP_{(1)}+\frac16\Pa{\bP_{(12)(34)}+\bP_{(13)(24)}+\bP_{(14)(23)}}\\
&&-\frac1{12}\Pa{\bP_{(123)}+\bP_{(132)}+\bP_{(124)}+\bP_{(142)}+\bP_{(134)}+\bP_{(143)}+\bP_{(234)}+\bP_{(243)}}
\end{eqnarray*}

\begin{eqnarray*}
\bsC_{(2,1,1)} &=&
\frac38\bP_{(1)}-\frac18\Pa{\bP_{(12)}+\bP_{(13)}+\bP_{(14)}+\bP_{(23)}+\bP_{(24)}+\bP_{(34)}}\\
&&-\frac18\Pa{\bP_{(12)(34)}+\bP_{(13)(24)}+\bP_{(14)(23)}}\\
&&+\frac18\Pa{\bP_{(1234)}+\bP_{(1243)}+\bP_{(1324)}+\bP_{(1342)}+\bP_{(1423)}+\bP_{(1432)}}
\end{eqnarray*}

\begin{eqnarray*}
\bsC_{(1,1,1,1)} &=&
\frac1{24}\bP_{(1)}-\frac1{24}\Pa{\bP_{(12)}+\bP_{(13)}+\bP_{(14)}+\bP_{(23)}+\bP_{(24)}+\bP_{(34)}}\\
&&+\frac1{24}\Pa{\bP_{(12)(34)}+\bP_{(13)(24)}+\bP_{(14)(23)}}\\
&&+\frac1{24}\Pa{\bP_{(123)}+\bP_{(132)}+\bP_{(124)}+\bP_{(142)}+\bP_{(134)}+\bP_{(143)}+\bP_{(234)}+\bP_{(243)}}\\
&&-\frac1{24}\Pa{\bP_{(1234)}+\bP_{(1243)}+\bP_{(1324)}+\bP_{(1342)}+\bP_{(1423)}+\bP_{(1432)}}
\end{eqnarray*}

\begin{eqnarray*}
\int (\bsU\bsX\bsU^\dagger)^{\ot4}\dif\mu(\bsU) =
\delta^{(4)}_4\bsC_{(4)}+\delta^{(3,1)}_4\bsC_{(3,1)}
+\delta^{(2,2)}_4\bsC_{(2,2)}+\Delta^{(2,1,1)}_4\bsC_{(2,1,1)}
+\delta^{(1,1,1,1)}_4\bsC_{(1,1,1,1)},
\end{eqnarray*}
where
\begin{eqnarray*}
\delta^{(4)}_4&:=&\frac{\Tr{\bsX}^4+6\Tr{\bsX^2}\Tr{\bsX}^2+3\Tr{\bsX^2}^2
+8\Tr{\bsX^3}\Tr{\bsX}+6\Tr{\bsX^4}}{d(d+1)(d+2)(d+3)},\\
\delta^{(3,1)}_4&:=&\frac{\Tr{\bsX}^4+2\Tr{\bsX^2}\Tr{\bsX}^2-\Tr{\bsX^2}^2-2\Tr{\bsX^4}}{(d-1)d(d+1)(d+2)},\\
\delta^{(2,2)}_4&:=&\frac{\Tr{\bsX}^4+3\Tr{\bsX^2}^2-4\Tr{\bsX^3}\Tr{\bsX}}{(d-1)d^2(d+1)},\\
\delta^{(2,1,1)}_4&:=&\frac{\Tr{\bsX}^4-2\Tr{\bsX^2}\Tr{\bsX}^2-\Tr{\bsX^2}^2+2\Tr{\bsX^4}}{(d-2)(d-1)d(d+1)},\\
\delta^{(1,1,1,1)}_4&:=&\frac{\Tr{\bsX}^4-6\Tr{\bsX^2}\Tr{\bsX}^2+3\Tr{\bsX^2}^2+8\Tr{\bsX^3}\Tr{\bsX}-6\Tr{\bsX^4}}{(d-3)(d-2)(d-1)d}.
\end{eqnarray*}
\end{exam}

\subsection{The case where $k=3$}

In general, the explicit formula of $\cE^{(3)}_d$ is relatively
seldom used, compared with the formulae $\cE^{(1)}_d$ and $\cE^{(2)}_d$. But it
still deserves to write it down explicitly.
\begin{center}
\fcolorbox{purple}{lightgray}{
\parbox{16.3cm}{
\begin{prop}\label{prop:uuu-integral}
Let
$S_3=\set{\pi_1=(1),\pi_2=(12),\pi_3=(13),\pi_4=(23),\pi_5=(123),\pi_6=(132)}$.
Denote by $a_j=\Tr{\bsA\bP(\pi^{-1}_j)}$. It holds that
\begin{eqnarray}
\int_{\mathsf{U}(d)}\bsU^{\ot 3}\bsA(\bsU^{\ot 3})^\dagger
\dif\mu(\bsU) = \sum_{\pi\in S_3} c_\pi(\bsA)\bP(\pi) =
\sum^6_{j=1}c_j(\bsA)\bP(\pi_j),
\end{eqnarray}
where $\bsA\in M_{d^3}(\complex)$, and
\begin{eqnarray*}
\begin{cases}
c_1(\bsA) = a_1\mathrm{Wg}_{d,3}(K_1)+(a_2+a_3+a_4)\mathrm{Wg}_{d,3}(K_2)+(a_5+a_6)\mathrm{Wg}_{d,3}(K_3),\\
c_2(\bsA) = a_2\mathrm{Wg}_{d,3}(K_1)+(a_1+a_5+a_6)\mathrm{Wg}_{d,3}(K_2)+(a_3+a_4)\mathrm{Wg}_{d,3}(K_3),\\
c_3(\bsA) = a_3\mathrm{Wg}_{d,3}(K_1)+(a_1+a_5+a_6)\mathrm{Wg}_{d,3}(K_2)+(a_2+a_4)\mathrm{Wg}_{d,3}(K_3),\\
c_4(\bsA) = a_4\mathrm{Wg}_{d,3}(K_1)+(a_1+a_5+a_6)\mathrm{Wg}_{d,3}(K_2)+(a_2+a_3)\mathrm{Wg}_{d,3}(K_3),\\
c_5(\bsA) = a_5\mathrm{Wg}_{d,3}(K_1)+(a_2+a_3+a_4)\mathrm{Wg}_{d,3}(K_2)+(a_1+a_6)\mathrm{Wg}_{d,3}(K_3).\\
c_6(\bsA) =
a_6\mathrm{Wg}_{d,3}(K_1)+(a_2+a_3+a_4)\mathrm{Wg}_{d,3}(K_2)+(a_1+a_5)\mathrm{Wg}_{d,3}(K_3).
\end{cases}
\end{eqnarray*}
Here
\begin{eqnarray*}
\mathrm{Wg}_{d,3}(K_1)&=&\frac{d^2-2}{(d-2) (d-1) d (d+1) (d+2)},\\
\mathrm{Wg}_{d,3}(K_2)&=&-\frac1{(d-2)(d-1)(d+1) (d+2)},\\
\mathrm{Wg}_{d,3}(K_3)&=&\frac2{(d-2)(d-1)d(d+1) (d+2)}.
\end{eqnarray*}
\end{prop}}}
\end{center}

\begin{proof}
From Proposition~\ref{prop:Deltadk}, we see that
\begin{eqnarray*}
\int_{\mathsf{U}(d)}\bsU^{\ot 3}\bsA(\bsU^{\ot 3})^\dagger
\dif\mu(\bsU) = \Pa{\sum_{\pi\in S_3}
\Tr{\bsA\bP(\pi^{-1})}\bP(\pi)}\Pa{\sum_{\pi\in
S_3}\mathrm{Wg}_{d,3}(\pi)\bP(\pi)},
\end{eqnarray*}
where Weingarten function $\mathrm{Wg}_{d,3}$ is given by
\begin{eqnarray*}
\mathrm{Wg}_{d,3}(K_j) &=& \frac1{(3!)^2}\sum_{\lambda\vdash
3}\frac{\dim(\bP_\lambda)^2}{\dim(\bQ_\lambda)}\chi_\lambda(K_j),\quad
(j=1,2,3).
\end{eqnarray*}
By hook-length formula and hook-content formula, we get that
\begin{eqnarray*}
\frac{\dim(\bP_{\lambda_1})^2}{\dim(\bQ_{\lambda_1})} &=&
\frac{6}{d(d+1)(d+2)},\\
\frac{\dim(\bP_{\lambda_2})^2}{\dim(\bQ_{\lambda_2})} &=&
\frac{12}{(d-1)d(d+1)},\\
\frac{\dim(\bP_{\lambda_3})^2}{\dim(\bQ_{\lambda_3})} &=&
\frac{6}{(d-2)(d-1)d}.
\end{eqnarray*}
From Table~\ref{tab:S3}, we substitute these into Weingarten
function, we get that
\begin{eqnarray*}
\mathrm{Wg}_{d,3}(K_1) &=&
\frac1{36}\Pa{1\frac{6}{d(d+1)(d+2)}+2\frac{12}{(d-1)d(d+1)}+1\frac{6}{(d-2)(d-1)d}}\\
&=&\frac{d^2-2}{(d-2) (d-1) d (d+1) (d+2)},\\
\mathrm{Wg}_{d,3}(K_2) &=&
\frac1{36}\Pa{1\frac{6}{d(d+1)(d+2)}+0\frac{12}{(d-1)d(d+1)}+(-1)\frac{6}{(d-2)(d-1)d}}\\
&=&-\frac1{(d-2)(d-1)(d+1) (d+2)},\\
\mathrm{Wg}_{d,3}(K_3) &=&
\frac1{36}\Pa{1\frac{6}{d(d+1)(d+2)}+(-1)\frac{12}{(d-1)d(d+1)}+1\frac{6}{(d-2)(d-1)d}}\\
&=&\frac2{(d-2)(d-1)d(d+1) (d+2)}.
\end{eqnarray*}
Let $a_j=\Tr{\bsA\bP(\pi^{-1}_j)}$ and
$w_j=\mathrm{Wg}_{d,3}(\pi_j)$, where
$w_1=\mathrm{Wg}_{d,3}(K_1),w_2=w_3=w_4=\mathrm{Wg}_{d,3}(K_2)$ and
$w_5=w_6=\mathrm{Wg}_{d,3}(K_3)$. Now from the following
multiplication table of $S_3$, we see that
\begin{table}[h]
\centering
\begin{tabular}{|c|c|c|c|c|c|c|}
  \hline & $\pi_1$ & $\pi_2$ & $\pi_3$ & $\pi_4$ & $\pi_5$ & $\pi_6$\\
 \hline $\pi_1$ & $\pi_1$ & $\pi_2$ & $\pi_3$ & $\pi_4$ & $\pi_5$ & $\pi_6$\\
  \hline $\pi_2$ & $\pi_2$ & $\pi_1$ & $\pi_6$ & $\pi_5$ & $\pi_4$ & $\pi_3$\\
  \hline $\pi_3$ & $\pi_3$ & $\pi_5$ & $\pi_1$ & $\pi_6$ & $\pi_2$ & $\pi_4$\\
  \hline $\pi_4$ & $\pi_4$ & $\pi_6$ & $\pi_5$ & $\pi_1$ & $\pi_3$ & $\pi_2$\\
  \hline $\pi_5$ & $\pi_5$ & $\pi_3$ & $\pi_4$ & $\pi_2$ & $\pi_6$ & $\pi_1$\\
  \hline $\pi_6$ & $\pi_6$ & $\pi_4$ & $\pi_2$ & $\pi_3$ & $\pi_1$ & $\pi_5$\\
  \hline
\end{tabular}\caption{The multiplication table of $S_3$.}
\end{table}

\begin{eqnarray*}
c_1(\bsA) &=&a_1w_1+a_2w_2+a_3w_3+a_4w_4+a_5w_6+a_6w_5\\
&=&a_1\mathrm{Wg}_{d,3}(K_1)+(a_2+a_3+a_4)\mathrm{Wg}_{d,3}(K_2)+(a_5+a_6)\mathrm{Wg}_{d,3}(K_3),\\
c_2(\bsA) &=& a_1w_2+a_2w_1+a_3w_5+a_4w_6+a_5w_4+a_6w_3\\
&=&a_2\mathrm{Wg}_{d,3}(K_1)+(a_1+a_5+a_6)\mathrm{Wg}_{d,3}(K_2)+(a_3+a_4)\mathrm{Wg}_{d,3}(K_3),\\
c_3(\bsA) &=& a_1 w_3+a_2 w_6+a_3 w_1+a_4 w_5+a_5 w_2+a_6 w_4\\
&=&a_3\mathrm{Wg}_{d,3}(K_1)+(a_1+a_5+a_6)\mathrm{Wg}_{d,3}(K_2)+(a_2+a_4)\mathrm{Wg}_{d,3}(K_3),\\
c_4(\bsA) &=& a_1 w_4 +a_2 w_5+a_3 w_6+a_4 w_1+a_5 w_3+a_6 w_2\\
&=& a_4\mathrm{Wg}_{d,3}(K_1)+(a_1+a_5+a_6)\mathrm{Wg}_{d,3}(K_2)+(a_2+a_3)\mathrm{Wg}_{d,3}(K_3),\\
c_5(\bsA) &=& a_1 w_5 +a_2 w_4+a_3 w_2+a_4 w_3+a_5 w_1+a_6 w_6 \\
&=& a_5\mathrm{Wg}_{d,3}(K_1)+(a_2+a_3+a_4)\mathrm{Wg}_{d,3}(K_2)+(a_1+a_6)\mathrm{Wg}_{d,3}(K_3),\\
c_6(\bsA) &=& a_1 w_6 +a_2 w_3+a_3 w_4+a_4 w_2+a_5 w_5+a_6 w_1 \\
&=&
a_6\mathrm{Wg}_{d,3}(K_1)+(a_2+a_3+a_4)\mathrm{Wg}_{d,3}(K_2)+(a_1+a_5)\mathrm{Wg}_{d,3}(K_3).
\end{eqnarray*}
By simplifying them, we get the desired result.
\end{proof}

\begin{center}
\fcolorbox{purple}{lightgray}{
\parbox{16.3cm}{
\begin{cor}
Let
$S_3=\set{\pi_1=(1),\pi_2=(12),\pi_3=(13),\pi_4=(23),\pi_5=(123),\pi_6=(132)}$.
Denote by $a_j=\Tr{(\bsX\ot\bsY\ot\bsZ)\bP(\pi^{-1}_j)}$. It holds
that
\begin{eqnarray}
\int_{\mathsf{U}(d)}\bsU\bsX\bsU^\dagger\ot
\bsU\bsY\bsU^\dagger\ot\bsU\bsZ\bsU^\dagger \dif\mu(\bsU) =
\sum^6_{j=1}c_j(\bsX,\bsY,\bsZ)\bP(\pi_j),
\end{eqnarray}
where $\bsX,\bsY,\bsZ\in M_d(\complex)$, and
\begin{eqnarray*}
\begin{cases}
c_1(\bsX,\bsY,\bsZ) = a_1\mathrm{Wg}_{d,3}(K_1)+(a_2+a_3+a_4)\mathrm{Wg}_{d,3}(K_2)+(a_5+a_6)\mathrm{Wg}_{d,3}(K_3),\\
c_2(\bsX,\bsY,\bsZ) = a_2\mathrm{Wg}_{d,3}(K_1)+(a_1+a_5+a_6)\mathrm{Wg}_{d,3}(K_2)+(a_3+a_4)\mathrm{Wg}_{d,3}(K_3),\\
c_3(\bsX,\bsY,\bsZ) = a_3\mathrm{Wg}_{d,3}(K_1)+(a_1+a_5+a_6)\mathrm{Wg}_{d,3}(K_2)+(a_2+a_4)\mathrm{Wg}_{d,3}(K_3),\\
c_4(\bsX,\bsY,\bsZ) = a_4\mathrm{Wg}_{d,3}(K_1)+(a_1+a_5+a_6)\mathrm{Wg}_{d,3}(K_2)+(a_2+a_3)\mathrm{Wg}_{d,3}(K_3),\\
c_5(\bsX,\bsY,\bsZ) = a_5\mathrm{Wg}_{d,3}(K_1)+(a_2+a_3+a_4)\mathrm{Wg}_{d,3}(K_2)+(a_1+a_6)\mathrm{Wg}_{d,3}(K_3).\\
c_6(\bsX,\bsY,\bsZ) =
a_6\mathrm{Wg}_{d,3}(K_1)+(a_2+a_3+a_4)\mathrm{Wg}_{d,3}(K_2)+(a_1+a_5)\mathrm{Wg}_{d,3}(K_3).
\end{cases}
\end{eqnarray*}
Here
\begin{eqnarray*}
\mathrm{Wg}_{d,3}(K_1)&=&\frac{d^2-2}{(d-2) (d-1) d (d+1) (d+2)},\\
\mathrm{Wg}_{d,3}(K_2)&=&-\frac1{(d-2)(d-1)(d+1) (d+2)},\\
\mathrm{Wg}_{d,3}(K_3)&=&\frac2{(d-2)(d-1)d(d+1) (d+2)}.
\end{eqnarray*}
\end{cor}}}
\end{center}
It is easily seen that
\begin{eqnarray*}
a_1=\Tr{\bsX}\Tr{\bsY}\Tr{\bsZ},a_2=\Tr{\bsX\bsY}\Tr{\bsZ},a_3=\Tr{\bsX\bsZ}\Tr{\bsY},\\
a_4=\Tr{\bsY\bsZ}\Tr{\bsX},a_5=\Tr{\bsX\bsZ\bsY},a_6=\Tr{\bsX\bsY\bsZ}.
\end{eqnarray*}

\section{The generalized Schur-Weyl duality}

Consider the local unitary group
$\U(\bsd)=\U(d_1)\ot\cdots\ot\U(d_n)$, where $\bsd=(d_1,\ldots,d_n)$
are positive integer dimensions, which is a subgroup of
$\GL(\bsd)=\GL(d_1,\complex)\ot\cdots\ot\GL(d_n,\complex)$. Let
$V_i$ be a $d_i$-dimensional complex Hilbert space and
$V=V_1\ot\cdots\ot V_n$. Then $\U(\bsd)$ acts on the vector space
$\End(V)=\ot^n_{i=1}\End(V_i)$ by
\begin{eqnarray}
\bsM\longmapsto(\bsU_1\ot\cdots\ot \bsU_n)\bsM(\bsU_1\ot\cdots\ot
\bsU_n)^\dagger
\end{eqnarray}
which is obtained by linear extension of the action:
$\ot^n_{i=1}\bsX_i\mapsto \ot^n_{i=1}\bsU_i\bsX_i\bsU^\dagger_i$,
where $\bsX_i\in\End(V_i)$ and $\bsU_i\in\U(d_i)$. This in turn can
be extended to an action on $\End(V)^{\oplus m}$ by simultaneous
conjugation. That is, for a $m$-tuple $(\bsM_1,\ldots,\bsM_m)\in
\End(V)^{\oplus m}$, where $\bsM_i\in\End(V)$, we get that
\begin{eqnarray*}
(\bsM_1,\ldots,\bsM_m)\mapsto ((\bsU_1\ot\cdots\ot
\bsU_n)\bsM_1(\bsU_1\ot\cdots\ot
\bsU_n)^\dagger,\ldots,(\bsU_1\ot\cdots\ot
\bsU_n)\bsM_m(\bsU_1\ot\cdots\ot \bsU_n)^\dagger).
\end{eqnarray*}

\begin{prop}[\cite{Turner2017}]
Let $G$ be a group acts on a vector space $V$ rationally. If $H$ is
a Zariski dense subgroup of $G$, then it holds that
$$
\complex[V]^G=\complex[V]^H.
$$
Here all elements in $\complex[V]$ are interpreted as the polynomial
ring $\complex[\bsv_1,\ldots,\bsv_N]$, where $\bsv_1,\ldots,\bsv_N$
form a basis for $N$-dimensional complex vector space $V$;
$\complex[V]^G$ means the set of all $G$-invariant polynomials in
$\complex[V]$, similarly, $\complex[V]^H$ means the set of all
$H$-invariant polynomials in $\complex[V]$.
\end{prop}
It is well-known that $\U(d_i)$ is a Zariski dense subgroup of
$\GL(d_i,\complex)$. From this, we get that $\U(\bsd)$ is Zariski
dense in $\GL(\bsd,\complex)$, so
\begin{eqnarray*}
\complex[\End(V)^{\oplus m}]^{\U(\bsd)}=\complex[\End(V)^{\oplus
m}]^{\GL(\bsd,\complex)}.
\end{eqnarray*}
Moreover we have
\begin{eqnarray*}
\complex[\End(V)^{\oplus
m}]^{\GL(\bsd,\complex)}=\complex[\End(V)^{\oplus
m}]^{\SL(\bsd,\complex)}=\complex[\End(V)^{\oplus m}]^{\SU(\bsd)}.
\end{eqnarray*}
\begin{prop}[\cite{Turner2017}]
Two Hermitian matrices are in the same $\GL(\bsd,\complex)$-orbit
\blue{if and only if} they are in the same $\mathsf{U}(\bsd)$-orbit.
\end{prop}
Consider the representation of $\GL(\bsd,\complex)$ on $\End(V^{\ot
m})$, defined by
\begin{eqnarray}
\widetilde\bQ(g_1,\ldots,g_n)\defeq \bQ(g_1)\ot\cdots\bQ(g_n),
\end{eqnarray}
where $\bQ(g_i)=g^{\ot m}_i$ for $g_i\in \GL(d_i,\complex)$. Clearly
$\Pi(g_1,\ldots,g_n)\cong (g_1\ot\cdots\ot g_n)^{\ot m}$. Denote the
$n$-fold Cartesian product $S^n_m:=S_m\times\cdots\times S_m$ of the
symmetric group $S_m$ of order $m$. The action of $S^n_m$ on
$\End(V^{\ot m})$ is defined by
\begin{eqnarray}
\widetilde\bP(\pi_1,\ldots,\pi_n)\defeq
\bP(\pi_1)\ot\cdots\ot\bP(\pi_n),
\end{eqnarray}
where $\bP(\pi_i)\in \End(V^{\ot m}_i)$ for $\pi_i\in  S_m$ with its
definition taken from Eq.~\eqref{eq:P}.
\begin{thrm}[The generalized Schur-Weyl duality, \cite{Grassl1998,Turner2017}]
Let
\begin{eqnarray}
\widetilde\cA&=&\spn_{\complex}\Set{\widetilde\bP(\pi_1,\ldots,\pi_n):
(\pi_1,\ldots,\pi_m)\in S^n_m},\\
\widetilde\cB&=&\spn_{\complex}\Set{\widetilde\bQ(g_1,\ldots,g_n):
(g_1,\ldots,g_m)\in \GL(\bsd,\complex)}.
\end{eqnarray}
Then it holds that
\begin{eqnarray}
\widetilde\cA'=\widetilde\cB\quad\text{and}\quad
\widetilde\cB'=\widetilde\cA.
\end{eqnarray}
\end{thrm}

\begin{proof}
Note that $\End(V^{\ot m})\cong \End(V)^{\ot m}$. Now for
$g_i\in\GL(d_i,\complex)$, $\widetilde\bQ(g_1,\ldots,g_n)=g^{\ot
m}_1\ot\cdots \ot g^{\ot m}_n$ commutes with the action of
$\widetilde\bP(\pi_1,\ldots,\pi_n)$ for $\pi_i\in S_m$, that is,
$\widetilde\bP\widetilde\bQ = \widetilde\bQ\widetilde\bP$. This
implies that
$$
\widetilde\cA\subseteq\widetilde\cB'\quad\text{and}\quad\widetilde\cB\subseteq\widetilde\cA'.
$$
Now if $\widetilde\cB=\widetilde\cA'$, then
$\widetilde\cA=\widetilde\cB'$ by \emph{dual theorem} (i.e.,
Proposition~\ref{double-commutants}). It suffices to show
$\widetilde\cA'=\widetilde\cB$ only. Because
$\widetilde\cB\subseteq\widetilde\cA'$, next we show that
$\widetilde\cA'\subseteq\widetilde\cB$. Take any element in
$\widetilde\cA'$, say a simple tensor
$\ot^n_{i=1}\ot^m_{j=1}\bsM_{ij}$, so let $(\pi_1,\ldots,\pi_n)\in
S^n_m$ and consider
\begin{eqnarray*}
\widetilde\bP(\pi_1,\ldots,\pi_n)\Pa{\ot^n_{i=1}\ot^m_{j=1}\bsM_{ij}}\Pa{\widetilde\bP^{-1}(\pi_1,\ldots,\pi_n)\Pa{\ot^n_{i=1}\ot^m_{j=1}\bsv_{ij}}},
\end{eqnarray*}
where $\bsv_{ij}\in V_i$ for $j=1,\ldots,d_i$. We find that
\begin{eqnarray*}
&&\widetilde\bP(\pi_1,\ldots,\pi_n)\Pa{\ot^n_{i=1}\ot^m_{j=1}\bsM_{ij}}\Pa{\widetilde\bP^{-1}(\pi_1,\ldots,\pi_n)\Pa{\ot^n_{i=1}\ot^m_{j=1}\bsv_{ij}}}\\
&&=\widetilde\bP(\pi_1,\ldots,\pi_n)\Pa{\ot^n_{i=1}\ot^m_{j=1}\bsM_{ij}\bsv_{i\pi_i(j)}}=\ot^n_{i=1}\ot^m_{j=1}\bsM_{i\pi^{-1}_i(j)}\bsv_{i\pi_i(\pi^{-1}_i(j))}\\
&&=\ot^n_{i=1}\ot^m_{j=1}\bsM_{i\pi^{-1}_i(j)}\bsv_{ij} =
\Pa{\ot^n_{i=1}\ot^m_{j=1}\bsM_{i\pi^{-1}_i(j)}}\Pa{\ot^n_{i=1}\ot^m_{j=1}\bsv_{ij}},
\end{eqnarray*}
implying that
\begin{eqnarray*}
\widetilde\bP(\pi_1,\ldots,\pi_n)\Pa{\ot^n_{i=1}\ot^m_{j=1}\bsM_{ij}}\widetilde\bP^{-1}(\pi_1,\ldots,\pi_n)
=\ot^n_{i=1}\ot^m_{j=1}\bsM_{i\pi^{-1}_i(j)}.
\end{eqnarray*}
Thus the isomorphism $\End(V^{\ot m})\cong \End(V)^{\ot m}$ induces
the isomorphism $\widetilde\cA'(\subset\End(V^{\ot m}))$ to the
subalgebra $\Theta_{\bsd}\subset \End(V)^{\ot m}$ which is
$S^n_m$-invariant under the induced action. Now $\Theta_{\bsd}$ as a
$S^n_m$-module, we consider its decomposition. Because $S^n_m$ acts
trivially on $\Theta_{\bsd}$, it follows that every non-zero
irreducible submodule will be one dimensional. Recall a well-known
fact in Representation Theory that \blue{every irreducible
representation of $G_1\times\cdots\times G_n$, where $G_i$'s are
given groups, is the tensor product of $n$ irreducible
$G_i$-modules}. Hence we see that an irreducible $S^n_m$-submodule
of $\Theta_{\bsd}$ is the tensor product of $n$ irreducible
$S_m$-submodule of $\End(V_i)^{\ot m}$, which\footnote{The $S^n_m$
submodule spanned in this way is just one dimensional.} is spanned
by operators of the form $\bsM_1\ot\cdots\ot \bsM_n$, where each
$\bsM_i$ is a symmetric operator in $\End(V_i)^{\ot m}$.

So we get that
$$
\Theta_{\bsd}\cong\cB_1\ot\cdots\ot\cB_n,
$$
where $\cB_i$ are the symmetric operators in $\End(V_i)^{\ot m}$,
that is, $\cB_i=\spn_{\complex}\set{\bQ_m(g_i):g_i\in\GL(V_i)}$ by
Schur-Weyl duality. However, this algebra is clearly generated by
the form
$\bQ(g_1)\ot\cdots\ot\bQ(g_n)=\widetilde\bQ(g_1,\ldots,g_n)$, where
$g_i\in\GL(V_i)$. Therefore, we obtain that
$$
\widetilde\cA'=\widetilde\cB.
$$
We are done.
\end{proof}

\section{Some matrix integrals related to random matrix theory}

This section is written based on Taylor's Lectures on Lie groups
\cite{Tay}. In this section, we give a direct derivation of a
formula for
\begin{eqnarray*}
\int_{\U(d)} \abs{\Tr{\bsU^k}}^2\dif\mu(\bsU),
\end{eqnarray*}
of usage in random matrix theory. We also calculate a more refined
object,
\begin{eqnarray*}
\int_{\U(d)} \bsU^k\ot (\bsU^k)^\dagger \dif\mu(\bsU)=\int_{\U(d)}
\bsU^k\ot \bsU^{-k} \dif\mu(\bsU),
\end{eqnarray*}
which in turn yields a formula for
\begin{eqnarray*}
\int_{\U(d)} f(\bsU)\ot g(\bsU) \dif\mu(\bsU).
\end{eqnarray*}

Let $f:\mathbb{S}^1\to\complex$ be a bounded Borel function, where
$\mathbb{S}^1=\Set{e^{\sqrt{-1}\theta}:
\theta\in(0,2\pi]}=\set{z\in\complex: \abs{z}=1}$. Given $\bsU\in
\U(d)$, we define $f(\bsU)\in \End(\complex^d)$ by the spectral
decomposition: If
$\bsU=\sum^d_{j=1}e^{\sqrt{-1}\theta_j}\out{\bsu_j}{\bsu_j}$ with
$\set{\ket{\bsu_j}:j=1,\ldots,d}$ being an orthonormal basis for
$\complex^d$, then $f(\bsU)$ is defined as
\begin{eqnarray*}
f(\bsU) :=
\sum^d_{j=1}f\Pa{e^{\sqrt{-1}\theta_j}}\out{\bsu_j}{\bsu_j}.
\end{eqnarray*}
For instance, $\bsU^k=
\sum^d_{j=1}e^{\sqrt{-1}k\theta_j}\out{\bsu_j}{\bsu_j}$.

We are interested in formulae for
\begin{eqnarray*}
\int_{\U(d)} \Tr{f(\bsU)}\Tr{g(\bsU)} \dif\mu(\bsU).
\end{eqnarray*}
Note that the above is equal to the trace of
\begin{eqnarray}\label{eq:comput-of-integral}
\int_{\U(d)} f(\bsU)\ot g(\bsU) \dif\mu(\bsU).
\end{eqnarray}
The notion of Fourier series will be used here. On $\mathbb{S}^1$,
let $\dif\mu(z)$ is the uniform and normalized Haar measure. Thus
for $z=e^{\sqrt{-1}\theta}\in\mathbb{S}^1$,
$$
\dif\mu(z) = \frac{d\theta}{2\pi}.
$$
The functions $\chi_m(z)=z^m$ or $\theta\mapsto
e^{\sqrt{-1}m\theta}$ for $m\in\integer$ are just irreducible
characters of $\unitary{1}=\mathbb{S}^1$, and thus form an
orthonormal basis of complex $L^2(\mathbb{S}^1,\mu)$. For $f\in
L^1(\mathbb{S}^1,\mu)$, defining the coefficients
\begin{eqnarray*}
\widehat f(k):=\int_{\mathbb{S}^1}f(z)z^{-k}\dif\mu(z)
\end{eqnarray*}
yields the formal series
\begin{eqnarray}\label{eq:formal-series}
f \sim \sum_{k\in\integer} \widehat f(k)z^k.
\end{eqnarray}
If $f\in L^2(\mathbb{S}^1,\mu)$, the series converges
unconditionally to $f$ in $L^2(\mathbb{S}^1,\mu)$. For general $f\in
L^1(\mathbb{S}^1,\mu)$ and $z\in\mathbb{S}^1$, let
\begin{eqnarray*}
S_mf(z):=\sum_{k=-m}^m \widehat f(k)z^k,~~~m=0,1,2,\ldots.
\end{eqnarray*}
There is a one-to-one correspondence between functions
$F:\real\to\complex$, periodic of period $2\pi$, and functions
$f:\mathbb{S}^1\to\complex$, given by
$F(\theta)=f(e^{\sqrt{-1}\theta}),\theta\in\real$. We will set
$\widehat F(k)=\widehat f(k)$. For $F\in L^1([0,2\pi];\complex)$
(with respect to Lebesgue measure) the formal series
\eqref{eq:formal-series} corresponds to
\begin{eqnarray*}
F(\theta) \sim \sum_{k\in\integer} \widehat
F(k)e^{\sqrt{-1}k\theta},
\end{eqnarray*}
which has been called the \emph{exponential Fourier series} of $F$.
In terms of trigonometric functions we get another series
\begin{eqnarray*}
F(\theta)\sim c_0 +
\sum^\infty_{k=1}a_k\cos(k\theta)+b_k\sin(k\theta),
\end{eqnarray*}
called the \emph{Fourier series} of $F$ (or of $f$). Here
$a_k:=\widehat f(k)+\widehat f(-k)$ and $b_k=\sqrt{-1}(\widehat
f(k)-\widehat f(-k))$.

Specifically, the Fourier series of $F$ converges to $F$ at a given
$\theta$ if and only if
$\lim_{m\to\infty}S_mf(e^{\sqrt{-1}\theta})=f(e^{\sqrt{-1}\theta})$.

Now we find that
\begin{eqnarray*}
f(\bsU) = \sum^{+\infty}_{k=-\infty}\widehat f(k)\bsU^k,
\end{eqnarray*}
where
$$
\widehat f(k):= \frac1{2\pi}\int^{2\pi}_0
F(\theta)e^{-\sqrt{-1}k\theta}d\theta = \frac1{2\pi}\int^{2\pi}_0
f(e^{\sqrt{-1}\theta})e^{-\sqrt{-1}k\theta}d\theta.
$$
and
$$
F(\theta) = \sum_{k\in\integer}\widehat
F(k)e^{\sqrt{-1}k\theta}\Longleftrightarrow
f(e^{\sqrt{-1}\theta})=\sum_{k\in\integer}\widehat
f(k)e^{\sqrt{-1}k\theta}.
$$
Thus we have
\begin{eqnarray*}
\int_{\U(d)} f(\bsU)\ot g(\bsU) \dif\mu(\bsU) &=& \int_{\U(d)}
\Pa{\sum^{+\infty}_{i=-\infty}\widehat
f(i)\bsU^i}\ot\Pa{\sum^{+\infty}_{j=-\infty}\widehat g(j)\bsU^j}\dif\mu(\bsU)\\
&=&\sum_{i,j\in\integer} \widehat f(i)\widehat
g(j)\int_{\U(d)}\bsU^i\ot \bsU^j \dif\mu(\bsU)\\
&=&\sum_{i,j\in\integer} \widehat f(i)\widehat g(j) \bsM_{ij},
\end{eqnarray*}
where $\bsM_{ij} = \int_{\U(d)}\bsU^i\ot \bsU^j \dif\mu(\bsU)$.
Performing the measure-preserving transformation $\bsU\mapsto
e^{\sqrt{-1}\psi}\bsU$ on $\U(d)$, we see that
\begin{eqnarray*}
\bsM_{ij}= e^{\sqrt{-1}(i+j)\psi}\bsM_{ij}~~~\text{for
all}~\psi\in\real.
\end{eqnarray*}
Thus $\bsM_{ij}=0$ for $i\neq -j$. Hence
\begin{eqnarray*}
\int_{\U(d)} f(\bsU)\ot g(\bsU) \dif\mu(\bsU) = \sum_{k\in\integer}
\widehat f(k)\widehat g(-k) \bsM_k,
\end{eqnarray*}
where $\bsM_k:=\int_{\U(d)}\bsU^k\ot \bsU^{-k}\dif\mu(\bsU)$, which
implies that
\begin{eqnarray*}
\int_{\U(d)} \Tr{f(\bsU)}\Tr{g(\bsU)} \dif\mu(\bsU) =
\sum_{k\in\integer} \widehat f(k)\widehat g(-k) \Tr{\bsM_k}.
\end{eqnarray*}
It remains to compute the following integral
\begin{eqnarray*}
\Tr{\bsM_k}=\int_{\U(d)} \abs{\Tr{\bsU^k}}^2\dif\mu(\bsU).
\end{eqnarray*}
Here we establish the following identity:
\begin{prop}\label{th:u^k-int}
It holds that
\begin{eqnarray}
\int_{\mathsf{U}(d)} \abs{\Tr{\bsU^k}}^2\dif\mu(\bsU) = \min(k,d).
\end{eqnarray}
\end{prop}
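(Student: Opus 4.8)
The plan is to read $\tr{U^k}$ as a symmetric function of the eigenvalues of $U$ and then exploit the orthonormality of the irreducible characters of $\unitary{d}$ that Schur--Weyl duality provides. Writing the eigenvalues of $U$ as $z_1,\ldots,z_d$ with $z_j=e^{\sqrt{-1}\theta_j}$, we have $\tr{U^k}=\sum_{j=1}^d z_j^k=p_k$, the $k$-th power-sum symmetric polynomial. The Frobenius formula expands this in the Schur basis as $p_k=\sum_{\lambda\vdash k}\chi_\lambda(\gamma)\,s_\lambda$, where $\gamma=(1\,2\,\cdots\,k)\in S_k$ is a $k$-cycle and $\chi_\lambda$ is the irreducible character of $S_k$ indexed by $\lambda$. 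Since $s_\lambda(z_1,\ldots,z_d)$ vanishes as soon as $\lambda$ has more than $d$ rows, only the terms with $\lambda\vdash(k,d)$ survive, and for those $s_\lambda(z_1,\ldots,z_d)=\tr{\bQ_\lambda(U)}$ is precisely the character of the irreducible $\unitary{d}$-representation $\bQ_\lambda$ produced by Theorem~\ref{th:S-W-Duality} (this is the identity $\Tr{\bQ_\lambda(X)}=s_\lambda(x_1,\ldots,x_d)$ recorded earlier).

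Next I would invoke the Schur orthogonality relations for the compact group $\unitary{d}$: the distinct irreducible characters $\set{\tr{\bQ_\lambda(\cdot)}:\lambda\vdash(k,d)}$ are orthonormal in $L^2(\unitary{d},\mu)$, so $\int_{\unitary{d}}\tr{\bQ_\lambda(U)}\overline{\tr{\bQ_\mu(U)}}\,d\mu(U)=\delta_{\lambda\mu}$. Substituting the expansion of $p_k$ into $\abs{\tr{U^k}}^2=p_k\overline{p_k}$ and integrating term by term then annihilates every cross term and yields $\int_{\unitary{d}}\abs{\tr{U^k}}^2 d\mu(U)=\sum_{\lambda\vdash(k,d)}\abs{\chi_\lambda(\gamma)}^2$.

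The crux, and the step I expect to be the genuine obstacle, is the evaluation of $\chi_\lambda(\gamma)$. By the Murnaghan--Nakayama rule the value of an $S_k$-character on a single $k$-cycle is $0$ unless $\lambda$ is a hook $(k-r,1^r)$, in which case $\chi_\lambda(\gamma)=(-1)^r$; hence $\abs{\chi_\lambda(\gamma)}^2$ equals $1$ on hook shapes and $0$ otherwise. It then remains only to count the hooks $(k-r,1^r)$ of size $k$ having at most $d$ rows: such a hook has $r+1$ rows, so the admissible range is $0\le r\le\min(k-1,d-1)$, giving exactly $\min(k,d)$ partitions, each contributing $1$. This produces $\min(k,d)$, as claimed. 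A more self-contained alternative, closer to the Weyl-integration spirit of this section, would instead expand $\abs{\tr{U^k}}^2=\sum_{a,b}z_a^k\overline{z_b}^k$ against $\abs{\Delta(z)}^2$ in the Weyl integration formula and count the surviving lattice-point matchings; I expect the same answer $\min(k,d)$ to drop out, but through a noticeably messier determinantal bookkeeping than the character argument above.
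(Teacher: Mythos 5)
Your argument is correct, but it is not the route the paper takes. The paper proves this proposition by brute force from the Weyl integration formula: it writes the integral as the constant term of the Laurent polynomial $(\sign\tau)(\zeta_1\cdots\zeta_d)^{-(d-1)}\bigl(\sum_{p,q}\zeta_p^k\zeta_q^{-k}\bigr)\prod_{i<j}(\zeta_i-\zeta_j)^2$, expands the squared Vandermonde as a double sum over $S_d\times S_d$, and counts the pairs $(\sigma,\pi)$ that contribute — the diagonal terms $p=q$ give $d\cdot d!$, while the off-diagonal terms force $\pi=\omega_{pq}\tau\sigma$ for a transposition shifting indices by $k$, of which there are exactly $d-k$ when $1\leqslant k\leqslant d-1$ and none when $k\geqslant d$, yielding $d-(d-k)=k$ or $d$. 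This is precisely the ``messier determinantal bookkeeping'' you flagged as your fallback. Your primary argument instead goes through the Frobenius expansion $p_k=\sum_{\lambda\vdash k}\chi_{\lambda,(k)}s_\lambda$, Schur orthogonality of the irreducible characters $\tr{\bQ_\lambda(\cdot)}$ of $\unitary{d}$, the Murnaghan--Nakayama evaluation $\chi_{\lambda,(k)}=(-1)^r$ on hooks $(k-r,1^r)$ and $0$ otherwise, and a count of hooks with at most $d$ rows; every ingredient you invoke is in fact recorded elsewhere in this paper (the Frobenius formula and the hook character values in the Appendix, character orthogonality in the concluding section), so the proof is sound and self-contained relative to the text. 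The trade-off: your character-theoretic route is shorter and conceptually transparent, and explains \emph{why} the answer saturates at $d$ (hooks run out of rows); the paper's torus computation is more elementary — it needs nothing beyond Fourier analysis on $\mathbb{T}^d$ — and, more importantly, it is the engine the paper then reuses verbatim to compute $\int_{\unitary{d}}\abs{\tr{U^k}}^4\,dU$ in the following proposition, where the character method would require decomposing $p_k^2$ rather than $p_k$.
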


\begin{proof}
Here we give a natural proof, based on Weyl's integration formula,
which implies that whenever $\varphi:\U(d)\to\complex$ invariant
under conjugation, then
\begin{eqnarray*}
\int_{\U(d)}\varphi(\bsU)\dif\mu(\bsU) =C_d
\int_{\mathbb{T}^d}\varphi(\bsD(\theta))J(\theta)\dif\bsD(\theta) =
\frac{C_d}{(2\pi)^d}\overbrace{\int^{2\pi}_0\cdots\int^{2\pi}_0}^{d}\varphi(\bsD(\theta))J(\theta)\dif\theta
\end{eqnarray*}
where
$\bsD(\theta)=\diag(e^{\sqrt{-1}\theta_1},\ldots,e^{\sqrt{-1}\theta_d})$,
and $J(\theta)=\prod_{i<j}\abs{e^{\sqrt{-1}\theta_i} -
e^{\sqrt{-1}\theta_j}}^2$. We will verify in calculations below that
$C_d=1/d!$.

Now
\begin{eqnarray*}
\int_{\U(d)} \abs{\Tr{\bsU^k}}^2\dif\mu(\bsU) = \frac{C_d}{(2\pi)^d}
\overbrace{\int^{2\pi}_0\cdots\int^{2\pi}_0}^{d}\abs{e^{\sqrt{-1}k\theta_1}+\cdots+e^{\sqrt{-1}k\theta_d}}^2
J(\theta)d\theta.
\end{eqnarray*}
We restate this as follows. Set $\zeta_j=e^{\sqrt{-1}\theta_j}$, so
\begin{eqnarray*}
\abs{e^{\sqrt{-1}k\theta_1}+\cdots+e^{\sqrt{-1}k\theta_d}}^2 =
\abs{\zeta^k_1+\cdots+\zeta^k_d}^2 = \sum_{p,q=1}^d
\zeta^k_p\zeta^{-k}_q
\end{eqnarray*}
and
\begin{eqnarray*}
J(\theta) &=& \prod_{i<j}\abs{\zeta_i-\zeta_j}^2 = \prod_{i<j}
(\zeta_i-\zeta_j)(\zeta^{-1}_i-\zeta^{-1}_j)\\
&=&
(\sign\tau)(\zeta_1\cdots\zeta_d)^{-(d-1)}\prod_{i<j}(\zeta_i-\zeta_j)^2,
\end{eqnarray*}
where $\tau=(d\cdots 21)$, i.e. $\tau(j)=d+1-j$ or $\tau$ is written
as
$$
\tau:=\left(
  \begin{array}{cccc}
    1 & 2 & \cdots & d \\
    d & d-1 & \cdots & 1 \\
  \end{array}
\right).
$$
Note that $\sign\tau=(-1)^{\frac{d(d-1)}2}$. We see that
$\int_{\U(d)} \abs{\Tr{\bsU^k}}^2\dif\mu(\bsU)$ is the constant term
in
\begin{eqnarray*}
C_d(\sign\tau)(\zeta_1\cdots\zeta_d)^{-(d-1)}\Pa{\sum_{p,q=1}^d
\zeta^k_p\zeta^{-k}_q}\prod_{i<j}(\zeta_i-\zeta_j)^2.
\end{eqnarray*}
Thus our task is to identify the constant term in this Laurent
polynomial. To work on the last factor, we recognize
\begin{eqnarray*}
V(\zeta) = \prod_{i<j} (\zeta_i-\zeta_j)
\end{eqnarray*}
as a Vandermonde determinant; hence
\begin{eqnarray*}
V(\zeta) = \sum_{\sigma\in S_d}
(\sign\sigma)\zeta^{\sigma(1)-1}_1\cdots\zeta^{\sigma(d)-1}_d.
\end{eqnarray*}
Hence
\begin{eqnarray*}
\prod_{i<j}(\zeta_i-\zeta_j)^2 = V(\zeta)^2 = \sum_{\sigma,\pi\in
S_d}
(\sign\sigma)(\sign\pi)\zeta^{\sigma(1)+\pi(1)-2}_1\cdots\zeta^{\sigma(d)+\pi(d)-2}_d.
\end{eqnarray*}
Let us first identify the constant term in
\begin{eqnarray*}
J(\theta) = (\sign\tau)(\zeta_1\cdots\zeta_d)^{-(d-1)}V(\zeta)^2.
\end{eqnarray*}
We see this constant term is equal to
\begin{eqnarray*}
&&\frac1{(2\pi)^d}
\overbrace{\int^{2\pi}_0\cdots\int^{2\pi}_0}^{d}J(\theta)\dif\theta \\
&&= (\sign\tau)\frac1{(2\pi)^d}
\overbrace{\int^{2\pi}_0\cdots\int^{2\pi}_0}^{d}\Pa{\sum_{\sigma,\pi\in
S_d}(\sign\sigma)(\sign\pi)\zeta^{\sigma(1)+\pi(1)-d-1}_1\cdots\zeta^{\sigma(d)+\pi(d)-d-1}_d}\dif\theta\\
&&=(\sign\tau) \sum_{\sigma,\pi\in
S_d}(\sign\sigma)(\sign\pi)\Pa{\frac1{2\pi}\int^{2\pi}_0
\zeta^{\sigma(1)+\pi(1)-d-1}_1\dif\theta_1}\times\cdots\times
\Pa{\frac1{2\pi}\int^{2\pi}_0
\zeta^{\sigma(d)+\pi(d)-d-1}_1\dif\theta_d}\\
&&=(\sign\tau) \sum_{(\sigma,\pi)\in S_d\times S_d:\forall
j,\sigma(j)+\pi(j)=d+1}(\sign\sigma)(\sign\pi)=(\sign\tau)
\sum_{(\sigma,\pi)\in S_d\times
S_d:\pi=\tau\sigma}(\sign\sigma)(\sign\pi).
\end{eqnarray*}
Note that the sum is over all $(\sigma,\pi)\in S_d\times S_d$ such
that $\sigma(j)+\pi(j)=d+1$ for each $j\in\set{1,\ldots,d}$. In
other words, we get $\pi(j)=d+1-\sigma(j)=\tau(\sigma(j))$ for all
$j\in\set{1,\ldots,d}$, i.e. $\pi=\tau\sigma$. Thus the sum is equal
to
$$
(\sign\tau)\sum_{\sigma\in S_d}(\sign\sigma)(\sign\tau\sigma) = d!,
$$
which gives rise to $C_d=1/d!$.

Clearly
\begin{eqnarray*}
&&C_d(\sign\tau)(\zeta_1\cdots\zeta_d)^{-(d-1)}\Pa{\sum_{p,q=1}^d
\zeta^k_p\zeta^{-k}_q}\prod_{i<j}(\zeta_i-\zeta_j)^2\\
&&= C_d(\sign\tau)(\zeta_1\cdots\zeta_d)^{-(d-1)}
\sum^d_{p,q=1}\sum_{(\sigma,\pi)\in S_d\times S_d}(\sign\sigma)(\sign\pi)\\
&&~~~\times \zeta^k_{p}\zeta^{-k}_{q}\zeta^{\sigma(1)+\pi(1)-2}_1\cdots\zeta^{\sigma(d)+\pi(d)-2}_d,\\
&&=C_d(\sign\tau)(V_1(\zeta)+V_2(\zeta)),
\end{eqnarray*}
where
\begin{eqnarray*}
V_1(\zeta) &:=& (\zeta_1\cdots\zeta_d)^{-(d-1)}
\sum^d_{p=1}\sum_{(\sigma,\pi)\in S_d\times S_d}(\sign\sigma)(\sign\pi)\zeta^{\sigma(1)+\pi(1)-2}_1\cdots\zeta^{\sigma(d)+\pi(d)-2}_d,\\
V_2(\zeta) &:=& (\zeta_1\cdots\zeta_d)^{-(d-1)} \sum_{p\neq
q}\sum_{(\sigma,\pi)\in S_d\times
S_d}(\sign\sigma)(\sign\pi)\zeta^k_{p}\zeta^{-k}_{q}\zeta^{\sigma(1)+\pi(1)-2}_1\cdots\zeta^{\sigma(d)+\pi(d)-2}_d.
\end{eqnarray*}
Now
\begin{eqnarray*}
V_1(\zeta) &:=& d\times \sum_{(\sigma,\pi)\in S_d\times
S_d}(\sign\sigma)(\sign\pi)\zeta^{\sigma(1)+\pi(1)-d-1}_1\cdots\zeta^{\sigma(d)+\pi(d)-d-1}_d,
\end{eqnarray*}
implying
\begin{eqnarray*}
\frac1{(2\pi)^d} \overbrace{\int^{2\pi}_0\cdots\int^{2\pi}_0}^{d}
V_1(\zeta)\dif\theta :=d \cdot d!(\sign\tau).
\end{eqnarray*}
It remains to consider the integral involved in $V_2(\zeta)$. That
is,
\begin{eqnarray*}
\frac1{(2\pi)^d} \overbrace{\int^{2\pi}_0\cdots\int^{2\pi}_0}^{d}
V_2(\zeta)\dif\theta.
\end{eqnarray*}
We see that for a given $p\neq q$, a pair $(\sigma,\pi)\in S_d\times
S_d$ contributes to the constant term in $V_2(\zeta)$ if and only if
\begin{eqnarray*}
\sigma(j)+\pi(j) =
\begin{cases}
d+1,&\text{if}~j\in\set{1,\ldots,d}\backslash\set{p,q}\\
d+1-k,&\text{if}~j=p,\\
d+1+k,&\text{if}~j=q.
\end{cases}
\end{eqnarray*}
That is,
\begin{eqnarray*}
\pi(j) = \begin{cases}
d+1-\sigma(j),&\text{if}~j\in\set{1,\ldots,d}\backslash\set{p,q}\\
d+1-\sigma(j)-k,&\text{if}~j=p,\\
d+1-\sigma(j)+k,&\text{if}~j=q.
\end{cases}
\end{eqnarray*}
By the definition of $\tau$, $d+1-\sigma(j) = \tau(\sigma(j))$ for
all $j$. Thus
\begin{eqnarray*}
\pi(j) = \begin{cases}
\tau(\sigma(j)),&\text{if}~j\in\set{1,\ldots,d}\backslash\set{p,q}\\
\tau(\sigma(j))-k,&\text{if}~j=p,\\
\tau(\sigma(j))+k,&\text{if}~j=q.
\end{cases}
\end{eqnarray*}
Define
\begin{eqnarray*}
\omega_{pq}(j) = \begin{cases}
j,&\text{if}~j\in\set{1,\ldots,d}\backslash\set{j_p,j_q}\\
j-k,&\text{if}~j=j_p,\\
j+k,&\text{if}~j=j_q,
\end{cases}
\end{eqnarray*}
where $j_p=\tau(\sigma(p))$ and $j_q=\tau(\sigma(q))$. Therefore
$\pi=\omega_{pq}\tau\sigma$, where $\omega_{pq}=(j_pj_q)$ with
$j_p-j_q=k$. Note that all the possible choices of $\omega_{pq}$
depends on all the possible values of positive integer $j_p$, i.e.
totally $d-k$ since $k+1\leqslant j_p\leqslant d$.

Now
\begin{eqnarray*}
V_2(\zeta) := \sum_{p\neq q}\sum_{(\sigma,\pi)\in S_d\times
S_d}(\sign\sigma)(\sign\pi)\zeta^k_{p}\zeta^{-k}_{q}\zeta^{\sigma(1)+\pi(1)-d-1}_1\cdots\zeta^{\sigma(d)+\pi(d)-d-1}_d,
\end{eqnarray*}
implying that if $1\leqslant k\leqslant d-1$,
\begin{eqnarray*}
\frac1{(2\pi)^d} \overbrace{\int^{2\pi}_0\cdots\int^{2\pi}_0}^{d}
V_2(\zeta)\dif\theta &=& \sum_{p\neq q}\sum_{(\sigma,\pi)\in
S_d\times
S_d:\pi=\omega_{pq}\tau\sigma}(\sign\sigma)(\sign\pi)\\
&=& \sum_{p\neq q}\sum_{\sigma\in
S_d}(\sign\sigma)(\sign\omega_{pq}\tau\sigma)\\
&=& d!(\sign\tau)\sum_{p\neq q}\sign\omega_{pq} = -(d-k)\cdot
d!(\sign\tau),
\end{eqnarray*}
where we used the fact that $\sum_{p\neq q}\sign\omega_{pq} =
-(d-k)$. The reason is for some pairs $(p,q)$ with $p\neq q$,
$\omega_{pq}$ does not exist, but just exist for $d-k$ pairs $(p,q)$
with $p\neq q$. We also note that if $k\geqslant d$, the choice of
$\omega_{pq}$ is empty. Therefore the integral involved in
$V_2(\zeta)$ is zero.
\end{proof}

\begin{cor}
For $k\geqslant 1,d\geqslant 2$, we have
\begin{eqnarray}
\int_{\mathsf{U}(d)}\bsU^k\ot \bsU^{-k} \dif\mu(\bsU) =
\frac{\min(k,d)-1}{d^2-1}\I_{d^2} +
\frac{d^2-\min(k,d)}{d(d^2-1)}\bsF,
\end{eqnarray}
where $\bsF=\sum^d_{i,j=1}\out{ij}{ji}$ is the swap operator on
$\complex^d\ot\complex^d$.
\end{cor}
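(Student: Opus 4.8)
The plan is to use the conjugation-invariance of the Haar measure to determine the symmetry type of $M_k := \int_{\unitary{d}} U^k \ot U^{-k}\,dU$, reducing it to a two-parameter family, and then to fix those two parameters by a pair of trace evaluations, one of which is precisely Proposition~\ref{th:u^k-int}. First I would show that $M_k$ lies in the commutant of $\set{V \ot V : V \in \unitary{d}}$. For a fixed $V \in \unitary{d}$, using $V U^k V^\dagger = (VUV^\dagger)^k$ we have
\[
(V \ot V)M_k(V \ot V)^\dagger = \int_{\unitary{d}} (VUV^\dagger)^k \ot (VUV^\dagger)^{-k}\,dU .
\]
Since the inner automorphism $U \mapsto VUV^\dagger$ preserves the Haar measure, the substitution $W = VUV^\dagger$ restores the integral to $M_k$, so $(V \ot V)M_k = M_k(V \ot V)$ for every $V$. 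By the $k=2$ instance of Schur--Weyl duality (Theorem~\ref{th-schur}), the commutant of $\spn\set{V^{\ot 2}}$ is $\spn\set{\I_{d^2},F}$, whence $M_k = a\,\I_{d^2} + b\,F$ for some scalars $a,b$ still to be determined.

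Next I would extract $a$ and $b$ from two trace identities. Taking the trace and using $\Tr{\I_{d^2}} = d^2$ and $\Tr{F} = d$ gives $\Tr{M_k} = a\,d^2 + b\,d$, while by Proposition~\ref{th:u^k-int} the left-hand side equals $\int_{\unitary{d}} \abs{\tr{U^k}}^2\,dU = \min(k,d)$. For the second relation I would invoke the elementary fact $\Tr{(A \ot B)F} = \Tr{AB}$ already used in the second proof of Corollary~\ref{cor:uu-swap}: since $\Tr{(U^k \ot U^{-k})F} = \Tr{U^k U^{-k}} = \Tr{\I_d} = d$ is independent of $U$, integrating yields $\Tr{M_k F} = d$; on the other hand, because $F^2 = \I_{d^2}$, directly $\Tr{M_k F} = a\,d + b\,d^2$. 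This produces the linear system $a d^2 + b d = \min(k,d)$ and $a d + b d^2 = d$.

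Finally I would solve this $2 \times 2$ system. Eliminating $a$ via $a = 1 - b d$ (from the second equation) gives $b = \frac{d^2 - \min(k,d)}{d(d^2-1)}$ and then $a = \frac{\min(k,d)-1}{d^2-1}$, which is exactly the asserted formula. I do not expect any genuine obstacle here: the one delicate ingredient, namely $\Tr{M_k} = \min(k,d)$, has already been carried out in Proposition~\ref{th:u^k-int} through Weyl's integration formula, and the remainder is the standard \emph{``commutes with $V \ot V$, hence equals $a\,\I + b\,F$''} reduction combined with two one-line trace computations.
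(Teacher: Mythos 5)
Your proposal is correct and follows essentially the same route as the paper: both rest on conjugation-invariance of the Haar measure to get $[M_k, V\ot V]=0$, and both pin down the answer from the two traces $\Tr{M_kF}=d$ and $\Tr{M_k}=\min(k,d)$, the latter via Proposition~\ref{th:u^k-int}. The only cosmetic difference is that the paper substitutes these traces directly into the already-established twirling formula of Proposition~\ref{prop:uu-integral}, whereas you re-derive the coefficients by writing $M_k=a\,\I_{d^2}+b\,F$ from the commutant theorem and solving the resulting $2\times 2$ system; the results agree.
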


\begin{proof}
Apparently $[\bsM_k,\bsV\ot \bsV]=0$ for all $\bsV\in \U(d)$. It
follows from Proposition~\ref{prop:uu-integral} that
\begin{eqnarray*}
\bsM_k &=& \int_{\U(d)} (\bsV\ot \bsV)\bsM_k(\bsV\ot \bsV)^{-1}\dif\mu(\bsV)\\
&=& \Pa{\frac{\Tr{\bsM_k}}{d^2-1} -
\frac{\Tr{\bsM_k\bsF}}{d(d^2-1)}}\I_{d^2} +
\Pa{\frac{\Tr{\bsM_k\bsF}}{d^2-1} -
\frac{\Tr{\bsM_k}}{d(d^2-1)}}\bsF.
\end{eqnarray*}
It suffices to compute $\tr{\bsM_k}$ and $\tr{\bsM_k\bsF}$. Clearly
$\tr{\bsM_k\bsF}=d$. By Proposition~\ref{th:u^k-int}, we have
$\tr{\bsM_k}=\min(k,d)$. Therefore the desired conclusion is
obtained.
\end{proof}

\begin{cor}
For $k\geqslant 1,d\geqslant 2$, we have
\begin{eqnarray}
\int_{\mathsf{U}(d)}\bsU^k \bsA(\bsU^k)^\dagger \dif\mu(\bsU) =
\frac{\min(k,d)-1}{d^2-1}\bsA +
\frac{d^2-\min(k,d)}{d(d^2-1)}\Tr{\bsA}\I_d,
\end{eqnarray}
where $\bsF=\sum^d_{i,j=1}\out{ij}{ji}$ is the swap operator on
$\complex^d\ot\complex^d$.
\end{cor}

\begin{proof}
Firstly we have
\begin{eqnarray*}
\int_{\U(d)}\bsU^k \ot\overline{\bsU}^k \dif\mu(\bsU) =
\frac{\min(k,d)-1}{d^2-1}\I_{d^2} +
\frac{d^2-\min(k,d)}{d(d^2-1)}\out{\vec(\I_d)}{\vec(\I_d)},
\end{eqnarray*}
which indicates that
\begin{eqnarray*}
&&\Pa{\int_{\U(d)}\bsU^k \ot\overline{\bsU}^k \dif\mu(\bsU)}\ket{\vec(\bsA)} \\
&&= \frac{\min(k,d)-1}{d^2-1}\I_{d^2}\ket{\vec(\bsA)} +
\frac{d^2-\min(k,d)}{d(d^2-1)}\ket{\vec(\I_d)}\Inner{\vec(\I_d)}{\vec(\bsA)}.
\end{eqnarray*}
Therefore
\begin{eqnarray*}
\int_{\U(d)}\bsU^k \bsA(\bsU^k)^\dagger \dif\mu(\bsU) =
\frac{\min(k,d)-1}{d^2-1}\bsA +
\frac{d^2-\min(k,d)}{d(d^2-1)}\Tr{\bsA}\I_d,
\end{eqnarray*}
implying the desired result. When $k=1$, the result of the present
proposition is reduced to Proposition~\ref{prop:u-integral}.
\end{proof}

Up to now, we can finish the computation of
\eqref{eq:comput-of-integral}. We obtain
\begin{prop}
It holds that
\begin{eqnarray}
\int_{\mathsf{U}(d)} f(\bsU)\ot g(\bsU) \dif\mu(\bsU) =
\frac{h(0)-\cF_dh(0)}{d^2-1}\Pa{\I_{d^2}-\frac1d\bsF} -
\frac{h(0)-\widehat h(0)}{d^2-1}(\I_{d^2}-d\bsF) + \widehat
h(0)\I_{d^2},
\end{eqnarray}
where
\begin{eqnarray}
h(\theta) = \frac1{2\pi}\int^{2\pi}_0f(t)g(t-\theta)\dif t
\end{eqnarray}
and $\cF_dh$ denotes the $d$-th Fej\'{e}r mean of the Fourier series
of $h$:
\begin{eqnarray}
\cF_dh(\theta) = \sum^d_{j=-d}\Pa{1-\frac{\dim(\abs{j},d)}d}\widehat
h(j)e^{\sqrt{-1}j\theta}.
\end{eqnarray}
\end{prop}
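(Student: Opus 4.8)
The plan is to build on the series representation established immediately before the statement, namely $\int_{\unitary{d}} f(U)\ot g(U)\,dU = \sum_{k\in\integer}\widehat f(k)\widehat g(-k)M_k$ with $M_k = \int_{\unitary{d}}U^k\ot U^{-k}\,dU$. The first step is to identify the scalar weights $\widehat f(k)\widehat g(-k)$ with the Fourier coefficients $\widehat h(k)$ of the correlation $h$. This is a direct computation: writing $\widehat h(k) = \frac1{2\pi}\int_0^{2\pi}h(\theta)e^{-\sqrt{-1}k\theta}\,d\theta$ and substituting $u=t-\theta$ in the defining double integral for $h$ separates the variables and factors it as $\widehat h(k)=\widehat f(k)\widehat g(-k)$, using periodicity of $g$ to close up the $u$-integral over a full period. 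After this reduction the object to evaluate is the operator-valued series $\sum_{k\in\integer}\widehat h(k)M_k$.

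Next I would insert the closed form of $M_k$ from the preceding corollary, $M_k = \frac{\min(\abs{k},d)-1}{d^2-1}\I_{d^2} + \frac{d^2-\min(\abs{k},d)}{d(d^2-1)}F$, valid for $k\neq0$; the term $k=0$ must be treated on its own, since $M_0=\I_{d^2}$ does \emph{not} obey the generic formula (equivalently, one uses the value $\tr{M_0}=d^2$ in place of $\min(0,d)=0$). Because both the coefficient of $\I_{d^2}$ and that of $F$ are affine in $\min(\abs{k},d)$, collecting terms reduces everything to three elementary scalar sums: the full Fourier sum $\sum_{k}\widehat h(k)=h(0)$, the single coefficient $\widehat h(0)$, and the truncated weighted sum $T:=\sum_{k\neq0}\min(\abs{k},d)\widehat h(k)$. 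The answer then has the shape $\alpha\I_{d^2}+\beta F$ with $\alpha,\beta$ explicit combinations of $h(0)$, $\widehat h(0)$, and $T$, while the tail $\sum_{k\neq0}\widehat h(k)$ is simply $h(0)-\widehat h(0)$.

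The heart of the argument is to recognize $T$ through the Fejér mean $\cF_d h(0)$. Splitting the sum at $\abs{k}=d$, one writes $T=\sum_{0<\abs{k}\leqslant d}\abs{k}\widehat h(k)+d\sum_{\abs{k}>d}\widehat h(k)$ and compares this with $\cF_d h(0)=\sum_{\abs{j}\leqslant d}\Pa{1-\frac{\min(\abs{j},d)}d}\widehat h(j)$; the weight $\min(\abs{k},d)$ is exactly what converts the raw partial sums into the $d$-th Cesàro mean, so that one obtains the clean identity $T=d\Pa{h(0)-\cF_d h(0)}$. Substituting this for $T$ and $h(0)-\widehat h(0)$ for the tail, then regrouping the $\I_{d^2}$ and $F$ contributions into the combinations $\I_{d^2}-\frac1d F$ and $\I_{d^2}-dF$, produces the stated closed form.

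I expect the main obstacle to be twofold. First is convergence: the identity $\sum_k\widehat h(k)=h(0)$ needs pointwise convergence of the Fourier series of $h$ at $\theta=0$, so one must impose enough regularity on $f$ and $g$ (a Dini-type condition on $h$) or interpret the sums as suitable limits — and the Fejér mean enters precisely because Cesàro summation is the robust summability device here. Second, and more delicate in practice, is the bookkeeping in the Fejér identification: keeping straight the exceptional $k=0$ term and, above all, the factor of $d$ in $T=d\Pa{h(0)-\cF_d h(0)}$ together with the asymmetric $d$'s attached to $\I_{d^2}$ versus $F$, is where arithmetic slips are most likely. I would therefore pin down the final constants by testing against $f=\chi_k,\ g=\chi_{-k}$ (for which the formula must reproduce $M_k$ exactly, isolating each weight) and the trivial case $f=g\equiv1$ (which must give $\I_{d^2}$); the former test is sharp enough to fix every factor of $d$ in the coefficients.
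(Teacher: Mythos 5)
The paper states this proposition without any proof, so there is nothing to compare against line by line; your outline is exactly the argument the paper intends (it introduces $h$, $M_k=\int_{\unitary{d}}U^k\ot U^{-k}\,dU$ and the series $\sum_k\widehat f(k)\widehat g(-k)M_k$ immediately beforehand and then simply asserts the result). Your steps all check out: $\widehat h(k)=\widehat f(k)\widehat g(-k)$ by the substitution $u=t-\theta$; the $k=0$ term must indeed be split off since $M_0=\I_{d^2}$ does not obey the generic formula; and $\sum_{k\neq 0}\min(\abs{k},d)\,\widehat h(k)=d\left(h(0)-\cF_dh(0)\right)$ with the standard Fej\'er weights (reading $\dim(\abs{j},d)$ as $\min(\abs{j},d)$). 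One small simplification: you do not need a Dini condition, since $\sum_k\abs{\widehat h(k)}=\sum_k\abs{\widehat f(k)\widehat g(-k)}\leqslant\norm{f}_2\norm{g}_2<\infty$ by Cauchy--Schwarz, so the Fourier series of $h$ converges absolutely and $h(0)=\sum_k\widehat h(k)$ is automatic for bounded $f,g$.

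The one point you must not gloss over is the final regrouping, and your own consistency check is decisive there. Writing $T=\sum_{k\neq0}\min(\abs{k},d)\widehat h(k)$ and $S=h(0)-\widehat h(0)$, the computation gives
\begin{eqnarray*}
\int_{\unitary{d}} f(U)\ot g(U)\,dU = \widehat h(0)\I_{d^2}+\frac{T}{d^2-1}\Pa{\I_{d^2}-\frac1dF}-\frac{S}{d^2-1}\Pa{\I_{d^2}-dF},
\end{eqnarray*}
and since $T=d\left(h(0)-\cF_dh(0)\right)$ the first term is $\frac{h(0)-\cF_dh(0)}{d^2-1}\left(d\,\I_{d^2}-F\right)$, \emph{not} $\frac{h(0)-\cF_dh(0)}{d^2-1}\left(\I_{d^2}-\frac1dF\right)$ as printed (the statement also has a stray $\I_{n^2}$ for $\I_{d^2}$). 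Your proposed test $f=\chi_k$, $g=\chi_{-k}$ with $1\leqslant k\leqslant d$ confirms this: then $\widehat h(j)=\delta_{jk}$, $h(0)=1$, $\widehat h(0)=0$, $h(0)-\cF_dh(0)=k/d$, and only the corrected grouping reproduces $M_k=\frac{k-1}{d^2-1}\I_{d^2}+\frac{d^2-k}{d(d^2-1)}F$; the printed formula returns $\frac{k/d-1}{d^2-1}$ as the coefficient of $\I_{d^2}$. So carry the test out explicitly and state the conclusion: either the proposition is off by a factor of $d$ in its first term or its $\cF_d$ carries a nonstandard normalization; your derivation, completed as outlined, proves the corrected identity.
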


\begin{prop}
It holds that
\begin{eqnarray}
\int_{\mathsf{U}(d)} \abs{\tr{\bsU^k}}^4 \dif\mu(\bsU) =
\begin{cases}
2k^2, &\text{if}~1\leqslant 2k\leqslant d-1; \\
2k^2+2k-d,&\text{if}~d\leqslant 2k\leqslant 2(d-1),\\
d(2d-1), &\text{if}~k\geqslant d.
\end{cases}
\end{eqnarray}
\end{prop}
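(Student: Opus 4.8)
The plan is to reduce the integral to a finite sum of squared symmetric-group characters and then control the contribution of partitions with too many rows. Writing $\tr{U^k}=p_k(U)$ for the $k$-th power sum in the eigenvalues of $U$, we have $\abs{\tr{U^k}}^4 = p_k^2\,\overline{p_k^2}$. Expanding the power sum in the Schur basis, $p_k^2 = p_{(k,k)} = \sum_{\lambda\vdash 2k}\chi^\lambda_{(k,k)}\,s_\lambda$, where $\chi^\lambda_{(k,k)}$ is the value of the irreducible $S_{2k}$-character indexed by $\lambda$ on the class of cycle type $(k,k)$. The irreducible polynomial characters of $\unitary{d}$ are exactly the $s_\lambda$ with $\ell(\lambda)\le d$ (the $\unitary{d}$-side of Theorem~\ref{th:S-W-Duality}), and on $\unitary{d}$ one has $s_\lambda\equiv 0$ whenever $\ell(\lambda)>d$. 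Hence, by the Schur orthogonality relations $\int_{\unitary{d}} s_\lambda\,\overline{s_\mu}\,d\mu(U)=\delta_{\lambda\mu}$ for $\ell(\lambda),\ell(\mu)\le d$, the integral collapses to
\begin{eqnarray}
\int_{\unitary{d}}\abs{\tr{U^k}}^4\,d\mu(U) = \sum_{\substack{\lambda\vdash 2k\\ \ell(\lambda)\le d}}\Pa{\chi^\lambda_{(k,k)}}^2 .
\end{eqnarray}

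Next I would evaluate the unrestricted sum and then peel off the tall partitions. By column orthogonality in $S_{2k}$, $\sum_{\lambda\vdash 2k}(\chi^\lambda_{(k,k)})^2 = z_{(k,k)} = 2k^2$, the order of the centralizer of a $(k,k)$-element. When every $\lambda\vdash 2k$ has at most $d$ rows, i.e.\ when $2k\le d$, no term is lost and the value is exactly $2k^2$, giving the first case. For the remaining regimes I would invoke the Murnaghan--Nakayama rule: $\chi^\lambda_{(k,k)}\ne 0$ forces $\lambda$ to have empty $k$-core and a $k$-quotient of total size $2$. Reading these off the abacus, they are precisely $2k$ partitions with $\abs{\chi^\lambda_{(k,k)}}=1$ (quotient $(2)$ or $(1,1)$ on a single runner) and $\binom{k}{2}$ partitions with $\abs{\chi^\lambda_{(k,k)}}=2$ (a single box on two distinct runners); as a check, $2k\cdot 1 + \binom{k}{2}\cdot 4 = 2k^2$.

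The decisive step is to compute the number of rows $\ell(\lambda)$ of each contributing partition and to sum the squared characters of those with $\ell(\lambda)>d$, which are then subtracted from $2k^2$. When $k\ge d$ one checks that the surviving squared characters sum to the third value $d(2d-1)$ (this is the saturation as $k\to\infty$ with $d$ fixed), while in the intermediate range $d\le 2k\le 2(d-1)$ only the tallest quotient-configurations drop out, producing the middle value. I expect this row-counting and signed bookkeeping to be the main obstacle, since it is where the thresholds $2k=d$ and $k=d$ actually originate. A reliable guard against slips is to demand that the three pieces agree along these thresholds: the middle expression must reduce to $2k^2$ at $2k=d$ and to $d(2d-1)$ at $k=d$, which pins the intermediate contribution down uniquely (and is worth verifying carefully against the displayed form). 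As an alternative staying entirely within the previous section, one may instead run Weyl's integration formula as in the proof of Proposition~\ref{th:u^k-int}, expanding $\abs{\tr{U^k}}^4=\sum_{p,q,r,s}\zeta_p^k\zeta_r^k\zeta_q^{-k}\zeta_s^{-k}$ and extracting the constant term of this times $(\zeta_1\cdots\zeta_d)^{-(d-1)}V(\zeta)^2$; there the same three regimes surface according to whether a difference $k$ (needing $k\le d-1$) or a difference $2k$, equivalently a step-$k$ arithmetic progression of length $3$ (needing $2k\le d-1$), can be fit inside $\set{1,\ldots,d}$.
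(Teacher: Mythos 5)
Your reduction is sound and genuinely different from the paper's proof (which extracts the constant term of a Laurent polynomial via the Weyl integration formula): writing $\abs{\tr{U^k}}^4=p_{(k,k)}\overline{p_{(k,k)}}$, expanding $p_{(k,k)}=\sum_{\lambda\vdash 2k}\chi^\lambda_{(k,k)}s_\lambda$, and using orthonormality of the $s_\lambda$ with $\ell(\lambda)\le d$ does give $\int_{\unitary{d}}\abs{\tr{U^k}}^4\,dU=\sum_{\lambda\vdash 2k,\ \ell(\lambda)\le d}\bigl(\chi^\lambda_{(k,k)}\bigr)^2$, and the centralizer value $z_{(k,k)}=2k^2$ settles the case $2k\le d$. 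But the proposal stops exactly where the work is: the row-counting of the contributing partitions, which you yourself flag as "the decisive step," is never carried out, so for the second and third cases this is a plan rather than a proof.

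When you do carry it out, you will find that the middle case of the statement is \emph{false}. Your identity $\sum_{\lambda\vdash 2k}(\chi^\lambda_{(k,k)})^2=2k^2$ forces the integral to be $\le 2k^2$, whereas $2k^2+2k-d>2k^2$ whenever $2k>d$; and your own consistency check at $k=d$ fails, since $2k^2+2k-d=2d^2+d\ne d(2d-1)$ there. Concretely, for $k=2$, $d=3$ one has $p_2^2=s_{(4)}-s_{(3,1)}+2s_{(2,2)}-s_{(2,1,1)}+s_{(1^4)}$; only $s_{(1^4)}$ vanishes on $\unitary{3}$, so the integral equals $1+1+4+1=7$, while the displayed formula gives $9$. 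The correct middle value is $2k^2-(2k-d)=2k^2-2k+d$: for $k\le d<2k$ the only contributing partitions with more than $d$ rows are the $2k-d$ tall ones coming from the quotient $(1,1)$ on a single runner (exactly one with $m$ rows for each $m=d+1,\ldots,2k$), each with character $\pm1$, while the $\abs{\chi}=2$ partitions all have at most $k\le d$ rows. This corrected value matches $2k^2$ at $2k=d$ and $d(2d-1)$ at $k=d$, as your check demands. (The paper's own derivation has counting slips in the terms $V^{(31)}_{22},V^{(32)}_{22},V^{(33)}_{22},V^{(34)}_{22},V^{(35)}_{22}$ that happen to cancel in the first and third regimes but not in the middle one.) So: complete the row-counting, and correct the statement to $2k^2-2k+d$ on the range $d\le 2k\le 2(d-1)$.
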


\begin{proof}
With notation in Proposition~\ref{th:u^k-int}, we have
\begin{eqnarray*}
\int_{\U(d)} \abs{\tr{\bsU^k}}^4 \dif\mu(\bsU) = \frac1{(2\pi)^dd!}
\overbrace{\int^{2\pi}_0\cdots\int^{2\pi}_0}^{d}\abs{e^{\sqrt{-1}k\theta_1}+\cdots+e^{\sqrt{-1}k\theta_d}}^4
J(\theta)\dif\theta.
\end{eqnarray*}
Thus
\begin{eqnarray*}
\abs{e^{\sqrt{-1}k\theta_1}+\cdots+e^{\sqrt{-1}k\theta_d}}^4 =
\abs{\zeta^k_1+\cdots+\zeta^k_d}^4 = \sum_{p,q,r,s=1}^d
\zeta^k_p\zeta^{-k}_q\zeta^k_r\zeta^{-k}_s
\end{eqnarray*}
and
\begin{eqnarray*}
J(\theta) =(\sign\tau)\sum_{\sigma,\pi\in S_d}
(\sign\sigma)(\sign\pi)\zeta^{\sigma(1)+\pi(1)-d-1}_1\cdots\zeta^{\sigma(d)+\pi(d)-d-1}_d.
\end{eqnarray*}
We see that $\int_{\U(d)} \abs{\tr{\bsU^k}}^4\dif\mu(\bsU)$ is the
constant term in
\begin{eqnarray*}
&&\frac1{d!}(\sign\tau)\sum^d_{p,q,r,s=1}\sum_{(\sigma,\pi)\in
S_d\times S_d}(\sign\sigma)(\sign\pi)
\zeta^k_{p}\zeta^{-k}_{q}\zeta^k_{r}\zeta^{-k}_{s}\cdot\zeta^{\sigma(1)+\pi(1)-d-1}_1\cdots\zeta^{\sigma(d)+\pi(d)-d-1}_d\\
&&=\frac1{d!}(\sign\tau)(V_{11}(\zeta)+V_{12}(\zeta)+V_{21}(\zeta)+V_{22}(\zeta)),
\end{eqnarray*}
where
\begin{eqnarray*}
V_{11}(\zeta) &:=& \sum^d_{p=q=1}\sum^d_{r=s=1}
\sum_{(\sigma,\pi)\in S_d\times S_d}(\sign\sigma)(\sign\pi)
\zeta^k_{p}\zeta^{-k}_{q}\zeta^k_{r}\zeta^{-k}_{s}\cdot\zeta^{\sigma(1)+\pi(1)-d-1}_1\cdots\zeta^{\sigma(d)+\pi(d)-d-1}_d,\\
V_{12}(\zeta) &:=& \sum^d_{p=q=1}\sum_{r\neq s}
\sum_{(\sigma,\pi)\in S_d\times S_d}(\sign\sigma)(\sign\pi)
\zeta^k_{p}\zeta^{-k}_{q}\zeta^k_{r}\zeta^{-k}_{s}\cdot\zeta^{\sigma(1)+\pi(1)-d-1}_1\cdots\zeta^{\sigma(d)+\pi(d)-d-1}_d,\\
V_{21}(\zeta) &:=& \sum_{p\neq q}\sum^d_{r=s=1}\sum_{(\sigma,\pi)\in
S_d\times S_d}(\sign\sigma)(\sign\pi)
\zeta^k_{p}\zeta^{-k}_{q}\zeta^k_{r}\zeta^{-k}_{s}\cdot\zeta^{\sigma(1)+\pi(1)-d-1}_1\cdots\zeta^{\sigma(d)+\pi(d)-d-1}_d,\\
V_{22}(\zeta) &:=& \sum_{p\neq q}\sum_{r\neq s}\sum_{(\sigma,\pi)\in
S_d\times S_d}(\sign\sigma)(\sign\pi)
\zeta^k_{p}\zeta^{-k}_{q}\zeta^k_{r}\zeta^{-k}_{s}\cdot\zeta^{\sigma(1)+\pi(1)-d-1}_1\cdots\zeta^{\sigma(d)+\pi(d)-d-1}_d.
\end{eqnarray*}
Next our task is to identify the constant term in this Laurent
polynomial. Now
\begin{eqnarray*}
V_{11}(\zeta) &:=& d^2\times \sum_{(\sigma,\pi)\in S_d\times
S_d}(\sign\sigma)(\sign\pi)\zeta^{\sigma(1)+\pi(1)-d-1}_1\cdots\zeta^{\sigma(d)+\pi(d)-d-1}_d,
\end{eqnarray*}
implying
\begin{eqnarray*}
\frac1{(2\pi)^d} \overbrace{\int^{2\pi}_0\cdots\int^{2\pi}_0}^{d}
V_{11}(\zeta)\dif\theta :=d^2 \cdot d!(\sign\tau).
\end{eqnarray*}
Then we consider the integral involved in $V_{12}(\zeta)$.
Apparently,
\begin{eqnarray*}
V_{12}(\zeta) &:=& d\times \sum_{r\neq s}\sum_{(\sigma,\pi)\in
S_d\times
S_d}(\sign\sigma)(\sign\pi)\zeta^k_r\zeta^{-k}_s\zeta^{\sigma(1)+\pi(1)-d-1}_1\cdots\zeta^{\sigma(d)+\pi(d)-d-1}_d,
\end{eqnarray*}
implying that
\begin{eqnarray*}
\frac1{(2\pi)^d} \overbrace{\int^{2\pi}_0\cdots\int^{2\pi}_0}^{d}
V_{12}(\zeta)\dif\theta =
\begin{cases}
- d(d-k)d!(\sign\tau), &\text{if}~1\leqslant k\leqslant d-1; \\
0,&\text{if}~k\geqslant d.
\end{cases}
\end{eqnarray*}
Similarly,
\begin{eqnarray*}
V_{21}(\zeta) &:=& d\times \sum_{p\neq q}\sum_{(\sigma,\pi)\in
S_d\times
S_d}(\sign\sigma)(\sign\pi)\zeta^k_p\zeta^{-k}_q\zeta^{\sigma(1)+\pi(1)-d-1}_1\cdots\zeta^{\sigma(d)+\pi(d)-d-1}_d,
\end{eqnarray*}
it follows that
\begin{eqnarray*}
\frac1{(2\pi)^d} \overbrace{\int^{2\pi}_0\cdots\int^{2\pi}_0}^{d}
V_{21}(\zeta)\dif\theta =
\begin{cases}
- d(d-k)d!(\sign\tau), &\text{if}~1\leqslant k\leqslant d-1; \\
0,&\text{if}~k\geqslant d.
\end{cases}
\end{eqnarray*}
It remains to consider the integral involved in $V_{22}(\zeta)$. We
have that
\begin{eqnarray*}
V_{22}(\zeta) &:=& \sum_{p\neq q}\sum_{r\neq s}
\sum_{(\sigma,\pi)\in S_d\times
S_d}(\sign\sigma)(\sign\pi)\zeta^k_p\zeta^{-k}_q\zeta^k_r\zeta^{-k}_s\zeta^{\sigma(1)+\pi(1)-d-1}_1\cdots\zeta^{\sigma(d)+\pi(d)-d-1}_d.
\end{eqnarray*}
We still need to split $V_{22}(\zeta)$ into some parts. In order to
be convenience, we introduce the following notation: $\cI:=
\Set{(\mu,\nu): \mu,\nu\in[d]~\text{and}~\mu\neq \nu}$, where
$[d]:=\set{1,2,\ldots,d}$. We also denote
\begin{eqnarray*}
\Lambda_1&:=&\Set{((p,q),(r,s)):(p,q),(r,s)\in\cI~\text{and}~(p,q)=(r,s)},\\
\Lambda_2&:=&\Set{((p,q),(r,s)):(p,q),(r,s)\in\cI~\text{and}~(p,q)=(s,r)},\\
\Lambda_3&:=&\Set{((p,q),(r,s)):(p,q),(r,s)\in\cI~\text{and}~(p,q)\neq(r,s)~\text{and}~(p,q)\neq(s,r)}.
\end{eqnarray*}
Thus we can get a partition of
$\cI\times\cI=\Lambda_1\cup\Lambda_2\cup\Lambda_3$
\begin{eqnarray*}
V_{22}(\zeta) = V^{(1)}_{22}(\zeta) + V^{(2)}_{22}(\zeta) +
V^{(3)}_{22}(\zeta),
\end{eqnarray*}
where
\begin{eqnarray*}
V^{(1)}_{22}(\zeta) &:=&
\sum_{((p,q),(r,s))\in\Lambda_1}\sum_{(\sigma,\pi)\in S_d\times
S_d}(\sign\sigma)(\sign\pi)\zeta^k_p\zeta^{-k}_q\zeta^k_r\zeta^{-k}_s\zeta^{\sigma(1)+\pi(1)-d-1}_1\cdots\zeta^{\sigma(d)+\pi(d)-d-1}_d\\
&=&\sum_{(p,q)\in\cI}\sum_{(\sigma,\pi)\in S_d\times
S_d}(\sign\sigma)(\sign\pi)\zeta^{2k}_p\zeta^{-2k}_q\zeta^{\sigma(1)+\pi(1)-d-1}_1\cdots\zeta^{\sigma(d)+\pi(d)-d-1}_d,\\
V^{(2)}_{22}(\zeta) &:=&
\sum_{((p,q),(r,s))\in\Lambda_2}\sum_{(\sigma,\pi)\in S_d\times
S_d}(\sign\sigma)(\sign\pi)\zeta^k_p\zeta^{-k}_q\zeta^k_r\zeta^{-k}_s\zeta^{\sigma(1)+\pi(1)-d-1}_1\cdots\zeta^{\sigma(d)+\pi(d)-d-1}_d\\
&=&\sum_{(p,q)\in\cI}\sum_{(\sigma,\pi)\in S_d\times
S_d}(\sign\sigma)(\sign\pi)\zeta^{\sigma(1)+\pi(1)-d-1}_1\cdots\zeta^{\sigma(d)+\pi(d)-d-1}_d\\
&=&\binom{d}{2}2!\sum_{(\sigma,\pi)\in S_d\times
S_d}(\sign\sigma)(\sign\pi)\zeta^{\sigma(1)+\pi(1)-d-1}_1\cdots\zeta^{\sigma(d)+\pi(d)-d-1}_d,\\
V^{(3)}_{22}(\zeta) &:=&
\sum_{((p,q),(r,s))\in\Lambda_3}\sum_{(\sigma,\pi)\in S_d\times
S_d}(\sign\sigma)(\sign\pi)\zeta^k_p\zeta^{-k}_q\zeta^k_r\zeta^{-k}_s\zeta^{\sigma(1)+\pi(1)-d-1}_1\cdots\zeta^{\sigma(d)+\pi(d)-d-1}_d.
\end{eqnarray*}
This indicates that
\begin{eqnarray*}
\frac1{(2\pi)^d} \overbrace{\int^{2\pi}_0\cdots\int^{2\pi}_0}^{d}
V^{(1)}_{22}(\zeta)\dif\theta &=&
\begin{cases}
- (d-2k)d!(\sign\tau), &\text{if}~1\leqslant 2k\leqslant d-1; \\
0,&\text{if}~2k\geqslant d
\end{cases}\\
\frac1{(2\pi)^d} \overbrace{\int^{2\pi}_0\cdots\int^{2\pi}_0}^{d}
V^{(2)}_{22}(\zeta)\dif\theta &=&d(d-1)d!(\sign\tau).
\end{eqnarray*}
Moreover we separate the index set $\Lambda_3$ into some disjoint
unions: $\Lambda_3=
\Lambda^{(1)}_3\cup\Lambda^{(2)}_3\cup\Lambda^{(3)}_3\cup\Lambda^{(4)}_3\cup\Lambda^{(5)}_3$,
where
\begin{eqnarray*}
\Lambda^{(1)}_3 &:=& \Set{((p,q),(r,s))\in\Lambda_3: p=r},\\
\Lambda^{(2)}_3 &:=& \Set{((p,q),(r,s))\in\Lambda_3: p=s},\\
\Lambda^{(3)}_3 &:=& \Set{((p,q),(r,s))\in\Lambda_3: q= r},\\
\Lambda^{(4)}_3 &:=& \Set{((p,q),(r,s))\in\Lambda_3: q= s},\\
\Lambda^{(5)}_3 &:=& \Set{((p,q),(r,s))\in\Lambda_3: p\neq r,p\neq
s,q\neq r,q\neq s}.
\end{eqnarray*}
Thus $V^{(3)}_{22}(\zeta)$ is partitioned as five subparts:
\begin{eqnarray*}
V^{(3)}_{22}(\zeta)=
V^{(31)}_{22}(\zeta)+V^{(32)}_{22}(\zeta)+V^{(33)}_{22}(\zeta)+V^{(34)}_{22}(\zeta)+V^{(35)}_{22}(\zeta).
\end{eqnarray*}
We see that for a given $p\neq q$ and $r\neq s$, if $p=r$, then a
pair $(\sigma,\pi)\in S_d\times S_d$ contributes to the constant
term in $V^{(31)}_{22}(\zeta)$ if and only if
\begin{eqnarray*}
\sigma(j)+\pi(j) =
\begin{cases}
d+1,&\text{if}~j\in\set{1,\ldots,d}\backslash\set{p,q,s}\\
d+1-2k,&\text{if}~j=p,\\
d+1+k,&\text{if}~j=q,s.
\end{cases}
\end{eqnarray*}
That is,
\begin{eqnarray*}
\pi(j) = \begin{cases}
d+1-\sigma(j),&\text{if}~j\in\set{1,\ldots,d}\backslash\set{p,q,s}\\
d+1-\sigma(j)-2k,&\text{if}~j=p,\\
d+1-\sigma(j)+k,&\text{if}~j=q,s.
\end{cases}
\end{eqnarray*}
By the definition of $\tau$, $d+1-\sigma(j) = \tau(\sigma(j))$ for
all $j$. Thus
\begin{eqnarray*}
\pi(j) = \begin{cases}
\tau(\sigma(j)),&\text{if}~j\in\set{1,\ldots,d}\backslash\set{p,q,s}\\
\tau(\sigma(j))-2k,&\text{if}~j=p,\\
\tau(\sigma(j))+k,&\text{if}~j=q,s.
\end{cases}
\end{eqnarray*}
Define
\begin{eqnarray*}
\omega_{pqs}(j) = \begin{cases}
j,&\text{if}~j\in\set{1,\ldots,d}\backslash\set{j_p,j_q,j_s}\\
j-2k,&\text{if}~j=j_p,\\
j+k,&\text{if}~j=j_q,j_s,
\end{cases}
\end{eqnarray*}
where $j_p=\tau(\sigma(p)),j_q=\tau(\sigma(q))$ and
$j_s=\tau(\sigma(s))$. Therefore $\pi=\omega_{pqs}\tau\sigma$, where
$\omega_{pqs}=(j_pj_qj_s)$ or $(j_pj_sj_q)$. Note that all the
possible choices of $\omega_{pqs}$ depends on all the possible
values of positive integer $j_p$. If $\omega_{pqs}=(j_pj_qj_s)$,
then $j_p=j_q+2k$ and $j_s=j_q+k$, thus $1\leqslant j_q\leqslant
d-2k$. If $\omega_{pqs}=(j_pj_sj_q)$, then $j_p=j_s+2k$ and
$j_q=j_s+k$, thus $1\leqslant j_s\leqslant d-2k$. This implies that
\begin{eqnarray*}
\frac1{(2\pi)^d} \overbrace{\int^{2\pi}_0\cdots\int^{2\pi}_0}^{d}
V^{(31)}_{22}(\zeta)\dif\theta &=&\begin{cases}
(d-2k)d!(\sign\tau), &\text{if}~1\leqslant 2k\leqslant d-1; \\
0,&\text{if}~2k\geqslant d.
\end{cases}
\end{eqnarray*}
Similarly the above analysis goes for $V^{(34)}_{22}(\zeta)$. We see
that for a given $p\neq q$ and $r\neq s$, if $q=s$, then a pair
$(\sigma,\pi)\in S_d\times S_d$ contributes to the constant term in
$V^{(34)}_{22}(\zeta)$ if and only if
\begin{eqnarray*}
\sigma(j)+\pi(j) =
\begin{cases}
d+1,&\text{if}~j\in\set{1,\ldots,d}\backslash\set{p,q,r}\\
d+1+2k,&\text{if}~j=q,\\
d+1-k,&\text{if}~j=p,r.
\end{cases}
\end{eqnarray*}
That is,
\begin{eqnarray*}
\pi(j) = \begin{cases}
d+1-\sigma(j),&\text{if}~j\in\set{1,\ldots,d}\backslash\set{p,q,r}\\
d+1-\sigma(j)+2k,&\text{if}~j=q,\\
d+1-\sigma(j)-k,&\text{if}~j=p,r.
\end{cases}
\end{eqnarray*}
By the definition of $\tau$, $d+1-\sigma(j) = \tau(\sigma(j))$ for
all $j$. Thus
\begin{eqnarray*}
\pi(j) = \begin{cases}
\tau(\sigma(j)),&\text{if}~j\in\set{1,\ldots,d}\backslash\set{p,q,r}\\
\tau(\sigma(j))+2k,&\text{if}~j=q,\\
\tau(\sigma(j))-k,&\text{if}~j=p,r.
\end{cases}
\end{eqnarray*}
Define
\begin{eqnarray*}
\omega_{pqs}(j) = \begin{cases}
j,&\text{if}~j\in\set{1,\ldots,d}\backslash\set{j_p,j_q,j_r}\\
j+2k,&\text{if}~j=j_q,\\
j-k,&\text{if}~j=j_p,j_r,
\end{cases}
\end{eqnarray*}
where $j_p=\tau(\sigma(p)),j_q=\tau(\sigma(q))$ and
$j_r=\tau(\sigma(r))$. Therefore $\pi=\omega_{pqr}\tau\sigma$, where
$\omega_{pqr}=(j_pj_qj_r)$ or $(j_pj_rj_q)$. Note that all the
possible choices of $\omega_{pqr}$ depends on all the possible
values of positive integer $j_p$. If $\omega_{pqr}=(j_pj_qj_r)$,
then $j_p=j_q+2k$ and $j_r=j_q+k$, thus $1\leqslant j_q\leqslant
d-2k$. If $\omega_{pqr}=(j_pj_rj_q)$, then $j_p=j_r+2k$ and
$j_q=j_r+k$, thus $1\leqslant j_r\leqslant d-2k$. This implies that
\begin{eqnarray*}
\frac1{(2\pi)^d} \overbrace{\int^{2\pi}_0\cdots\int^{2\pi}_0}^{d}
V^{(34)}_{22}(\zeta)\dif\theta &=&\begin{cases}
(d-2k)d!(\sign\tau), &\text{if}~1\leqslant 2k\leqslant d-1; \\
0,&\text{if}~2k\geqslant d.
\end{cases}
\end{eqnarray*}
It is easily obtained that the formulae for $V^{(32)}_{22}(\zeta)$
and $V^{(33)}_{22}(\zeta)$.
\begin{eqnarray*}
\frac1{(2\pi)^d} \overbrace{\int^{2\pi}_0\cdots\int^{2\pi}_0}^{d}
V^{(32)}_{22}(\zeta)\dif\theta &=&\begin{cases}
-d(d-1-k)d!(\sign\tau), &\text{if}~1\leqslant k\leqslant d-1; \\
0,&\text{if}~k\geqslant d
\end{cases}
\end{eqnarray*}
and
\begin{eqnarray*}
\frac1{(2\pi)^d} \overbrace{\int^{2\pi}_0\cdots\int^{2\pi}_0}^{d}
V^{(33)}_{22}(\zeta)\dif\theta &=&\begin{cases}
-d(d-1-k)d!(\sign\tau), &\text{if}~1\leqslant k\leqslant d-1; \\
0,&\text{if}~k\geqslant d.
\end{cases}
\end{eqnarray*}
It remains to compute the integral involved in
$V^{(35)}_{22}(\zeta)$. We see that for a given $p\neq q$ and $r\neq
s$, if $p\neq r,p\neq s,q\neq r,q\neq s$, then a pair
$(\sigma,\pi)\in S_d\times S_d$ contributes to the constant term in
$V^{(35)}_{22}(\zeta)$ if and only if
\begin{eqnarray*}
\sigma(j)+\pi(j) =
\begin{cases}
d+1,&\text{if}~j\in\set{1,\ldots,d}\backslash\set{p,q,r,s}\\
d+1-k,&\text{if}~j=p,r\\
d+1+k,&\text{if}~j=q,s.
\end{cases}
\end{eqnarray*}
That is,
\begin{eqnarray*}
\pi(j) = \begin{cases}
d+1-\sigma(j),&\text{if}~j\in\set{1,\ldots,d}\backslash\set{p,q,r,s}\\
d+1-\sigma(j)-k,&\text{if}~j=p,r,\\
d+1-\sigma(j)+k,&\text{if}~j=q,s.
\end{cases}
\end{eqnarray*}
By the definition of $\tau$, $d+1-\sigma(j) = \tau(\sigma(j))$ for
all $j$. Thus
\begin{eqnarray*}
\pi(j) = \begin{cases}
\tau(\sigma(j)),&\text{if}~j\in\set{1,\ldots,d}\backslash\set{p,q,r,s}\\
\tau(\sigma(j))-k,&\text{if}~j=p,r,\\
\tau(\sigma(j))+k,&\text{if}~j=q,s.
\end{cases}
\end{eqnarray*}
Define
\begin{eqnarray*}
\omega_{pqrs}(j) = \begin{cases}
j,&\text{if}~j\in\set{1,\ldots,d}\backslash\set{j_p,j_q,j_r,j_s}\\
j-k,&\text{if}~j=j_p,j_r,\\
j+k,&\text{if}~j=j_q,j_s,
\end{cases}
\end{eqnarray*}
where $j_p=\tau(\sigma(p)),j_q=\tau(\sigma(q))$ and
$j_r=\tau(\sigma(r)),j_s=\tau(\sigma(s))$. Therefore
$\pi=\omega_{pqrs}\tau\sigma$, where
$\omega_{pqrs}=(j_pj_q)(j_rj_s)$ or $(j_pj_s)(j_qj_r)$. Therefore
\begin{eqnarray*}
\frac1{(2\pi)^d} \overbrace{\int^{2\pi}_0\cdots\int^{2\pi}_0}^{d}
V^{(35)}_{22}(\zeta)\dif\theta &=&\begin{cases}
2(d-k)(d-k-1)d!(\sign\tau), &\text{if}~1\leqslant k\leqslant d-1; \\
0,&\text{if}~k\geqslant d.
\end{cases}
\end{eqnarray*}
Finally we get that
\begin{eqnarray*}
\int_{\U(d)} \abs{\tr{\bsU^k}}^4 \dif\mu(\bsU) =
\begin{cases}
2k^2, &\text{if}~1\leqslant 2k\leqslant d-1; \\
2k^2+2k-d,&\text{if}~d\leqslant 2k\leqslant 2(d-1),\\
d(2d-1), &\text{if}~k\geqslant d.
\end{cases}
\end{eqnarray*}
We are done.
\end{proof}
In fact, when $k>d$, $\int_{\U(d)} \abs{\tr{\bsU^k}}^4 \dif\mu(\bsU)
= d(2d-1)$ can be seen again in \cite{Pastur2004}. Apparently, we
get more in this proposition.

\begin{remark}
Based on the above discussion, we can consider the following
computations:
\begin{enumerate}[(i)]
\item $\int_{\U(d)} (\bsU^k)^{\ot n}\ot (\bsU^{-k})^{\ot n}\dif\mu(\bsU)$;
\item $\int_{\U(d)} (\bsU^k)^{\ot n}\bsA(\bsU^{-k})^{\ot n}\dif\mu(\bsU)$;
\item $\int_{\U(d)} \abs{\Tr{\bsU^k}}^{2n}\dif\mu(\bsU)$.
\end{enumerate}
Indeed, for (iii), we see from the results in \cite{Diaconis2001}
that if the integer $k$ satisfies the condition $1\leqslant kn
\leqslant d$, then
\begin{eqnarray}
\int_{\U(d)} \abs{\Tr{\bsU^k}}^{2n}\dif\mu(\bsU) = k^n\cdot n!.
\end{eqnarray}
What happened for $kn>d$? We leave them open for future research.
\end{remark}

\section{Discussion and concluding remarks}\label{sect:concluding-remarks}

We see that the integrals considered in this paper, where all the
underlying domain of integrals are just $\U(d)$. As a matter of
fact, analogous problems can be considered when the unitary group
$\U(d)$ can be replaced by a compact Lie group $G$ of some
particular property, for instance, we may assume that $G$ is a
\emph{gauge group} (see \cite{Mashhad,Spekkens}), a some kind of
subgroup of $\U(d)$.

In addition, we can derive some similar results from Schur Orthogonality Relations. Recall that for a compact Lie group $G$, let $\set{g\to V^{(\mu)}(g)}$ be the set of all inequivalent unitary irreps on the underlying vector space $\cV$. Consider the matrix entries of all these unitary matrices as a set of functions from $G$ to $\complex$, denoted by $\set{V^{(\mu)}_{i,j}}$. Then, they satisfy the following Schur-Orthogonality Relations:
\begin{eqnarray}\label{eq:Schur-Ortho-Relation}
\int_G V^{(\mu)}_{i,j}(g) \overline{V}^{(\nu)}_{k,l}(g) \dif\mu(g) =
\frac1{d_\mu}\delta_{\mu\nu}\delta_{ik}\delta_{jl},
\end{eqnarray}
where $\dif\mu(g)$ is the uniform probability Haar measure on $G$,
bar means the complex conjugate and $d_\mu$ is the dimension of
irrep $\mu$. We can make analysis about
\eqref{eq:Schur-Ortho-Relation} as follows:  For the orthonormal
base $\set{\ket{i}:i=1,\ldots,d_\mu}$ and $\set{\ket{k}:
k=1,\ldots,d_\nu}$, we have
\begin{eqnarray}
V^{(\mu)}_{i,j}(g) = \Innerm{i}{V^{(\mu)}(g)}{j},~~~\overline{V}^{(\nu)}_{k,l}(g) = \Innerm{k}{\overline{V}^{(\nu)}(g)}{l}.
\end{eqnarray}
Then
\begin{eqnarray*}
\int_G V^{(\mu)}(g)\ot \overline{V}^{(\nu)}(g)\dif\mu(g) =
\frac1{d_\mu}\delta_{\mu\nu}\sum_{i,j=1}^{d_\mu}
\sum_{k,l=1}^{d_\nu}\delta_{ik}\delta_{jl}\out{ik}{jl}.
\end{eqnarray*}
That is
\begin{eqnarray}
\int_G V^{(\mu)}(g)\ot \overline{V}^{(\nu)}(g)\dif\mu(g) =
\begin{cases}
0,&~\text{if}~\mu\neq \nu, \\
\frac1{d_\mu}\out{\vec(\I_\mu)}{\vec(\I_\mu)}, &~\text{if}~\mu= \nu.
\end{cases}
\end{eqnarray}
Here $\vec(\I_\mu):=\sum_{i,j=1}^{d_\mu}\ket{ii}$. This indicates that
\begin{eqnarray}
\int_G V^{(\mu)}(g)\bsX V^{(\mu),\dagger}(g)\dif\mu(g) =
\frac1{d_\mu}\Tr{\bsX}\I_\mu
\end{eqnarray}
is a completely depolarizing channel. Therefore for $\mu\neq\nu$,
$\int_G V^{(\mu)}(g)\ot V^{(\nu),\dagger}(g)\dif\mu(g) = 0$, and
\begin{eqnarray}
\int_G V^{(\mu)}(g)\ot V^{(\mu),\dagger}(g)\dif\mu(g) =
\frac1{d_\mu} \bsF^{(\mu)},
\end{eqnarray}
where $\bsF^{(\mu)}$ is the swap operator on the 2-fold tensor space
of irrep $\mu$. In view of this point, we naturally want to know if
the integral
\begin{eqnarray}\label{eq:g-g*}
\int_G V(g) \ot V^\dagger(g)\dif\mu(g)
\end{eqnarray}
can be computed explicitly, where $\set{g\to V(g)}$ is any unitary
representation of $G$. In particular, when $G = \U(d)$ and $V(g) =
\bQ(g)$, the integral \eqref{eq:g-g*} is reduced to the form:
\begin{eqnarray}
\int_{\U(d)} \bQ(g) \ot \bQ^\dagger(g)\dif\mu(g),
\end{eqnarray}
for which we have derived explicit formula in the present paper. We
leave these topics for future research.


\subsubsection*{Acknowledgement}
The author would also like to thank Haijiang Yu for his useful
conversations, and thank Nan Li for bringing
Corollary~\ref{cor:uu*u*u} to my attention.


\newpage
\appendix
\appendixpage
\addappheadtotoc

\section{Appendix}

To better understand Schur-Weyl duality, i.e. irreps of unitary
group and permutation group, we collect some relevant materials. The
details presented in the Appendix are written based on Notes of
Audenaert \cite{kmra}.

\subsection{Partitions}
A partition is a sequence
$\lambda=(\lambda_1,\lambda_2,\ldots,\lambda_r,\ldots)$ of
non-negative integers in non-increasing order
$$
\lambda_1\geqslant\lambda_2\geqslant\cdots\geqslant
\lambda_r\geqslant\cdots
$$
and containing finitely many non-zero terms. The non-vanishing terms
$\lambda_j$ are called the \emph{parts} of $\lambda$. The
\emph{length} (also called \blue{height}) of $\lambda$, denoted
$\ell(\lambda)$, is the number of parts of $\lambda$. The
\emph{weight} of $\lambda$, denoted $\abs{\lambda}$, is the sum of
the parts: $\abs{\lambda} := \sum_j \lambda_j$. A partition
$\lambda$ with weight $\abs{\lambda}=k$ is also called a
\blue{partition} of $k$, and this is denoted $\lambda\vdash k$. We
will also use the notation $\lambda\vdash_d k$ to indicate that
$\lambda\vdash k$ and $\ell(\lambda)\leqslant d$ in one statement.

For $\lambda\vdash k$, we use the shorthand $\bar \lambda:=
\frac\lambda k$. For $j\geqslant1$, the $j$-th element of $\lambda$
is denoted by $\lambda_j$. This element is a part if $j\leqslant
\ell(\lambda)$, otherwise it is 0. It is frequently convenient to
use a different notation that indicates the number of times each
integer $j=1,2,\ldots,\abs{\lambda}$ occurs as a part, the so-called
\emph{multiplicity} $m_j$ of $j$:
$$
\lambda = (1^{m_1}2^{m_2}\ldots r^{m_r}\ldots).
$$
As a shorthand we will use a superscripted index: $\lambda_j =
m_j(\lambda)$.

Now one has the relations
\begin{eqnarray*}
\begin{cases}
\sum^k_{j=1}\lambda^j &= \ell(\lambda),\\
\sum^k_{j=1}j\lambda^j &= \abs{\lambda} = k.
\end{cases}
\end{eqnarray*}

When dealing with numerical calculations it is necessary to impose
an ordering on the set of partitions. We will adhere here to the
lexicographic ordering, in which $\lambda$ precedes $\mu$, denoted
$\lambda>\mu$, if and only if the first non-zero difference
$\lambda_j-\mu_j$ is positive.

\begin{exam}
With the above convention, the partitions of $5$ are ordered as
follows:
$$
(5),~(41),~(32),~(31^2),~(2^21),~(21^3),~(1^5).
$$
It is seen easily that lexicographic ordering is a total order.
\end{exam}

\subsection{Young frames and Young tableaux}

\blue{Partitions can be graphically represented by Young diagrams},
which are Young tableaux with empty boxes. The $j$-th part
$\lambda_j$ corresponds to the $j$-th row of the diagram, consisting
of $\lambda_j$ boxes. Conversely, the Young diagrams of $k$ boxes
can be uniquely labeled by a partition $\lambda\vdash k$. \blue{We
will therefore identify a Young diagram with the partition labeling
it}.

A \emph{Young tableau} (YT) of $d$ objects and of shape
$\lambda\vdash k$ is a Young diagram $\lambda$ in which the boxes
are labeled by numbers $\set{1,\ldots,d}$.

A \emph{standard Young tableau} (SYT) of shape $\lambda\vdash k$ is
a Young tableau of $d=k$ objects such that the labels appear
\emph{increasing} in every row from left to right, and
\emph{increasing} in every column downwards; hence every number
occurs exactly once.

A \emph{Semi-standard Young tableau} (SSYT) of shape $\lambda\vdash
k$ is a Young tableau such that the labels appear
\emph{non-decreasing} in every row from left to right, and
\emph{increasing} in every column downwards.

The number of SSYTs of $d$ objects and of shape $\lambda\vdash k$
(imposing the condition $\ell(\lambda)\leqslant d$) is given by
$s_\lambda(1^{\times d}) \equiv s_{\lambda,d}(1)$; see below for an
explanation.

The number $f^\lambda$ of SYTs of shape $\lambda \vdash_d k$ is
\begin{center}
\fcolorbox{purple}{lightgray}{
\parbox{16cm}{
\begin{eqnarray}
f^\lambda = k!\frac{V(\mu_1,\ldots,\mu_d)}{\mu_1!\cdots
\mu_d!},~d=\ell(\lambda),
\end{eqnarray}}}
\end{center}
where $V(\mu_1,\ldots,\mu_d)$ denotes the \emph{difference product}
of a non-increasing sequence
$$
V(\mu_1,\ldots,\mu_d) := \prod_{1\leqslant i<j\leqslant d} (\mu_i -
\mu_j),
$$
and the numbers $\mu_j = \mu_j(\lambda)$ are defined by
$$
\mu_j(\lambda) := \lambda_j + \ell(\lambda) -
j,\text{~for~}j=1,2,\ldots, \ell(\lambda).
$$

\subsection{Permutations}

We can display a permutation $\pi$ using \emph{cycle notation}.
Given $j\in\set{1,\ldots,k}:=[k]$, the elements of the sequence
$j,\pi(j),\ldots$ cannot be distinct. Taking the first power $n$
such that $\pi^n(j)=j$, we have the cycle
$$
(j,\pi(j),\ldots,\pi^{n-1}(j)).
$$
Equivalently, the cycle $(i,j,\ldots, l)$ means that $\pi$ sends $i$
to $j,\ldots $, and $l$ back to $i$. Now pick an element not in the
cycle containing $i$ and iterate this process until all members of
$[k]$ have been used. For example $\pi\in S_5$, $\pi=(1,2,3)(4)(5)$
in cycle notation. Note that cyclically permuting the elements
within a cycle or reordering the cycles themselves does not change
the permutation. Thus
$$
(1,2,3)(4)(5) = (2,3,1)(4)(5) = (4)(2,3,1)(5) = (4)(5)(3,1,2).
$$
A $k$-cycle, or \emph{cycle of length} $k$, is a cycle containing
$k$ elements. The \emph{cycle type}, or simply the \emph{type}, of
$\pi$ is an expression of the form
$$
(1^{m_1},2^{m_2},\ldots,k^{m_k}),
$$
where $m_k$ is the number of cycles of length $k$ in $\pi$. A
1-cycle of $\pi$ is called a \emph{fixed-point}. Fixed-points are
usually dropped from the cycle notation if no confusion will result.
\blue{It is easy to see that a permutation $\pi$ such that
$\pi^2=\mathrm{id}$ if and only if all of $\pi$'s cycles have length
1 or 2}.

Another way to give the cycle type is as a partition. A
\emph{partition} of $k$ is a sequence
$$
\lambda = (\lambda_1,\lambda_2,\ldots,\lambda_d),
$$
where the $\lambda_i$ are weakly decreasing and
$\sum^d_{i=1}\lambda_i=k$. Thus $\pi=(1,2,3)(4)(5)$ corresponds to a
partition $(3,1,1)$, and a cycle type $(1^2,2^0,3^1,4^0,5^0)$.

In $S_k$, it is not hard to see that if
\begin{center}
\fcolorbox{purple}{lightgray}{
\parbox{12cm}{
$$
\pi = \blue{(i_{11},i_{12},\ldots,i_{1j_1})}\cdots
\red{(i_{m1},i_{m2},\ldots,i_{mj_m})}
$$}}
\end{center}
in cycle notation, then for any $\sigma\in S_k$
\begin{center}
\fcolorbox{purple}{lightgray}{
\parbox{12cm}{
$$
\sigma\pi \sigma^{-1} =
\blue{(\sigma(i_{11}),\sigma(i_{12}),\ldots,\sigma(i_{1j_1}))}\cdots
\red{(\sigma(i_{m1}),\sigma(i_{m2}),\ldots,\sigma(i_{mj_m}))}.
$$}}
\end{center}
It follows that two permutations are in the same conjugate class if
and only if they have the \emph{same cycle type}. Thus \blue{there
is a natural one-to-one correspondence between partitions of $k$ and
conjugate classes of $S_k$}.

We can compute the size of a conjugate class in the following
manner. Let $G$ be any group and consider the \emph{centralizer} of
$g\in G$ defined by
\begin{center}
\fcolorbox{purple}{lightgray}{
\parbox{6cm}{
$$
Z_g := \Set{h\in G: hgh^{-1}=g},
$$}}
\end{center}
i.e., the set of all elements that commute with $g$. Now, there is a
bijection between the cosets of $Z_g$ and the elements of $K_g$,
where $K_g$ is the conjugate class of $g$---the set of all elements
conjugate to a given $g$ , so that
\begin{center}
\fcolorbox{purple}{lightgray}{
\parbox{6cm}{
$$
\abs{K_g} = \frac{\abs{G}}{\abs{Z_g}}.
$$}}
\end{center}
Now let $G=S_k$ and use $K_\gamma$ for $K_g$ when $g$ has type
$\gamma$. Thus if $\gamma=(1^{m_1},2^{m_2},\ldots,k^{m_k})$ and
$g\in S_k$ has type $\gamma$, then $\abs{Z_g}$ depends only on
$\gamma$ and
\begin{center}
\fcolorbox{purple}{lightgray}{
\parbox{7cm}{
$$
z_\gamma\defeq \abs{Z_g}= 1^{m_1}m_1!2^{m_2}m_2!\cdots k^{m_k}m_k!.
$$}}
\end{center}
The number $\abs{K_g}$ of elements in a conjugacy class $\gamma$ of
$S_k$, denoted $h_\gamma$, is given by
$$
h_\gamma = \frac{k!}{z_\gamma}.
$$
We know that every permutation $\pi\in S_k$ decomposes uniquely as a
product of disjoint cycles. The orders of the cycles, sorted in
non-increasing order, determine the cycle type of the permutation.
Evidently, the cycle type of a permutation $\pi\in S_k$ is a
partition of $k$. We will denote the cycle type of a permutation
$\pi\in S_k$ by $\gamma = \gamma(\pi)\vdash k$. We shall identify
the conjugacy classes with their cycle type, and even write
$\pi\in\gamma$ for a permutation $\pi$ with cycle type $\gamma$.

For instance, $h_{(k)} = (k-1)!$ and $h_{(1^k)}=1$. Obviously, we
need to have $\sum_{\gamma\vdash k} \frac1{z_\gamma} =1$.

\subsection{Products of power sums}

For an integer $r\geqslant1$, the $r$-th \emph{power sum} in the
variables $x_j$ is $p_r = \sum_j x^r_j$. For a partition
$\gamma\vdash_rk$, the \emph{power sum products} $p_\gamma$ are
defined by
\begin{eqnarray}
p_\gamma &:=& p_{\gamma_1}p_{\gamma_2}\cdots p_{\gamma_r}=\Pa{\sum_j
x^{\gamma_1}_j}\Pa{\sum_j x^{\gamma_2}_j}\cdots\Pa{\sum_j
x^{\gamma_r}_j}.
\end{eqnarray}
As a special case, $p_\gamma(1^{\times d}) = d^r$, where
$r=\ell(\gamma)$ is nothing but the number of cycles in $\gamma$.

\subsection{Schur functions}

To define the Schur symmetric functions, or S-functions, it is best
to start with the polynomial case, i.e. with a finite number $d$ of
variables $x_1,\ldots, x_d$. The complete set of S-functions is
obtained by letting $d$ tend to infinity. The S-functions
$s_\lambda$ of $d$ variables and of homogeneity order $k$ are
labeled by partitions $\lambda\vdash_dk$, and are defined by
\begin{center}
\fcolorbox{purple}{lightgray}{
\parbox{16.3cm}{
\begin{eqnarray}
s_\lambda(x_1,\ldots,x_d):=
\frac{\det\Pa{x^{\lambda_j+d-j}_i}^d_{i,j=1}}{\det\Pa{x^{d-j}_i}^d_{i,j=1}}
\end{eqnarray}}}
\end{center}
(recall again that for $j>\ell(\lambda),\lambda_j=0$). For
$\ell(\lambda)>d$, one again has $s_\lambda(x_1,\ldots,x_d)=0$. If
some variables assume equal values, a limit has to be taken, since
both numerator and denominator vanish in that case.

The denominator in the definition of the S-function is a Vandermonde
determinant and is thus equal to $V(x_1,\ldots,x_d)$. The numerator
is divisible (in the ring of polynomials) by each of the differences
$x_i-x_j$, and therefore also by the denominator; hence the
S-functions in a finite number of variables really are polynomials.

For the important case where all $d$ variables assume the value 1
(i.e. giving the number of semi-standard Young tableaux of $d$
objects and of shape $\lambda$), we get, for $\ell(\lambda)\leqslant
d$:
\begin{center}
\fcolorbox{purple}{lightgray}{
\parbox{16.3cm}{
\begin{eqnarray}
s_\lambda(1^{\times d}) =
\frac{V(\lambda_1+d-1,\lambda_2+d-2,\ldots,\lambda_d)}{V(d-1,d-2,\ldots,0)},
\end{eqnarray}}}
\end{center}
and, again, $s_\lambda(1^{\times d})=0$ for $\ell(\lambda)>d$.
Note that $V(d-1,d-2,\ldots,0)=1!2!\cdots (d-1)!$. In particular, if
$\lambda=(k)$, one finds that $s_{(k)}(1^{\times d}) =
\binom{k+d-1}{k}$.

\subsection{Characters of the symmetric group and unitary group}

In the case of the symmetric group, the irreps are labeled by Young
diagrams $\lambda$. The character of a permutation $\pi\in S_k$ in
irrep $\lambda$ is denoted $\chi_\lambda(\pi)$. Since characters are
class functions, one only needs to find the characters of any
representative of a conjugacy class, so that one can use the symbol
$\chi_{\lambda,\gamma}$, with
$$
\chi_{\lambda,\gamma} = \chi_\lambda(\pi),~~~\forall \pi\in\gamma.
$$
The character table is the matrix with elements
$\chi_{\lambda,\gamma}$, where $\lambda$ is the row index and
$\gamma$ the column index (assuming lexicographic ordering for
both). As the conjugacy classes of $S_k$ are labeled by partitions
of $k$, there are as many rows as columns, hence the character table
is a square matrix.

The character of the identity permutation $e$ equals the degree of
the representation in the given irrep. One can show that this degree
is equal to the number of standard Young tableaux of shape $\lambda$
\begin{center}
\fcolorbox{purple}{lightgray}{
\parbox{6cm}{
$$
\chi_\lambda(e) = f^\lambda.
$$}}
\end{center}
The characters in irrep $\lambda=(k)$ are all 1:
$$
\chi_{(k),\gamma} = 1,~~~\forall \gamma\vdash k.
$$
Thus $f^{(k)}=1$. For $\gamma$ consisting of one cycle,
$\gamma=(k)$, the characters are
$$
\chi_{\lambda,(k)} =
\begin{cases}
(-1)^d,&\lambda = (k-d,1^d),0\leqslant d\leqslant k\\
0,&\text{otherwise}.
\end{cases}
$$
In what follows, We now briefly consider the irreducible polynomial
representations of the full linear group $\rG\rL(d,\complex)$ (note
that both the full linear group $\rG\rL(d,\complex)$ and the unitary
group $\U(d)$ embrace the same irreps). There representations get
their name from the fact that their matrix elements are polynomials
in the elements of the represented matrix. Just like the irreps of
the symmetric group, the polynomial irreps of $\rG\rL(d,\complex)$
are labeled by Young diagrams. The conjugacy classes of
$\rG\rL(d,\complex)$ consist of all matrices
$\bsA\in\rG\rL(d,\complex)$ have the same eigenvalues
$(a_1,\ldots,a_d)$ and thus can be labeled by these eigenvalues. The
simple characters (known, in this context, as characteristics) are
denoted $\phi_\lambda(\bsA) = \phi_\lambda(a_1,\ldots,a_d)$.
According to a famous result by Schur, these characters are the
Schur functions (polynomials) of the eigenvalues
$$
\phi_\lambda(a_1,\ldots,a_d) = s_\lambda(a_1,\ldots,a_d).
$$

\subsection{Representations of $S_k$ and $\rG\rL(d,\complex)$ on the tensor product space $(\complex^d)^{\ot k}$}

Here we have denoted the dimension of the subspace $\bQ_\lambda$ by
$t^\lambda(d)$, and the dimension of $\bP_\lambda$ by $f^\lambda$.
The matrix $\bQ_\lambda(\bsA)$ is an irrep of
$A\in\rG\rL(d,\complex)$ of degree $t^\lambda(d)$, operating on
$\bQ_\lambda$. The matrix $\bP_\lambda(\pi)$ is an irrep of $\pi\in
S_k$ of degree $f^\lambda$, operating on $\bP_\lambda$.

Taking traces yields the corresponding simple characters
\begin{eqnarray}
\begin{cases}
\Tr{\bQ_\lambda(\bsA)} &= s_\lambda(a_1,\ldots,a_d),\\
\Tr{\bP_\lambda(\pi)} &= \chi_\lambda(\pi) =
\chi_{\lambda,\gamma(\pi)},
\end{cases}
\end{eqnarray}
where $a_1,\ldots,a_d$ are the eigenvalues of $\bsA$. For the
dimensions one finds
\begin{eqnarray}
\begin{cases}
t^\lambda(d) &= \Tr{\bQ_\lambda(\I_d)} = s_\lambda(1^{\times d}),\\
f^\lambda &= \Tr{\bP_\lambda(e)} = \chi_\lambda(e),
\end{cases}
\end{eqnarray}
i.e. $t^\lambda(d)$ is the number of semi-standard Young tableaux
$\lambda$ of $d$ objects, and $f^\lambda$ is the number of standard
Young tableaux $\lambda$.

In accordance with these decompositions, the tensor space
$(\complex^d)^{\ot k}$ splits up into invariant subspaces. The
subspaces $\bQ_\lambda\ot\bP_\lambda$ are invariant under all
$\bsA^{\ot k}$ and all $\bP(\pi)$. They are further reducible into
direct sums of $f^\lambda$ subspaces of dimension $t^\lambda(d)$,
invariant under the transformations $\bsA^{\ot k}$ but no longer
invariant under permutations $\bP(\pi)$. These irreducible invariant
subspaces are called the symmetry classes of the tensor space. They
are labeled by standard Young tableaux of shape $\lambda$.

We now consider the invariant subspaces $\bQ_\lambda\ot
\bP_\lambda$, corresponding to the Young diagrams $\lambda$. Their
dimension is $f^\lambda s_\lambda(1^{\times d})$. \blue{We will
denote the projectors on these subspaces by $\bsC_\lambda$ (which is
equivalently $\I_{\bQ_\lambda}\ot\I_{\bP_\lambda}$ under the
Schur-transform)}. They are the sum of the Young projectors
corresponding to the standard Young tableaux $\lambda$. We will
consider the Young projectors themselves in the next subsection. The
projectors $\bsC_\lambda$ form an orthogonal set and add up to the
identity on the full tensor space:
\begin{center}
\fcolorbox{purple}{lightgray}{
\parbox{16.3cm}{
\begin{eqnarray}
\bsC_\lambda \bsC_{\lambda'} =
\delta_{\lambda\lambda'}\bsC_\lambda,~~~\sum_{\lambda\vdash_d
k}\bsC_\lambda = \I_{(\complex^d)^{\ot k}},~~~\Tr{\bsC_\lambda} =
f^\lambda s_\lambda(1^{\times d}).
\end{eqnarray}}}
\end{center}
Consider the conjugacy classes $\gamma$ of $S_k$ with cycle type
$\gamma\vdash k$. We define the "class average" of all permutation
matrices with cycle type $\gamma$ as
\begin{eqnarray*}
\bsC^\gamma:= \frac1{h_\gamma} \sum_{\pi\in\gamma}\bP(\pi).
\end{eqnarray*}
Note the distinction between the notations $\bsC_\lambda$, where the
subscript $\lambda$ labels an irrep, and $\bsC^\gamma$, where the
superscript $\gamma$ labels a conjugacy class. Alternatively, we can
write
\begin{eqnarray*}
\bsC^\gamma = \frac1{k!} \sum_{\sigma\in
S_k}\bP(\sigma\pi\sigma^{-1}).
\end{eqnarray*}
The projectors $\bsC_\lambda$ can be expressed in terms of the
permutations $\bP(\pi)$, according to a general relation, as:
\begin{center}
\fcolorbox{purple}{lightgray}{
\parbox{16.3cm}{
\begin{eqnarray}\label{eq:central-proj}
\bsC_\lambda = \frac{f^\lambda}{k!} \sum_{\pi\in
S_k}\chi_\lambda(\pi)\bP(\pi),
\end{eqnarray}}}
\end{center}
and in terms of $p^\gamma$ as:
\begin{eqnarray}\label{eq:central-proj}
\bsC_\lambda = f^\lambda \sum_{\gamma\vdash_d k}
\frac1{z_\gamma}\chi_{\lambda,\gamma}\bsC^\gamma.
\end{eqnarray}
Let $\bsA$ be a matrix with eigenvalues $(a_1,\ldots,a_d)$. Taking
the trace of one $\lambda$-term in the following expression:
$$
\bsA^{\ot k} \cong \bigoplus_{\lambda\vdash_d k}
\bQ_\lambda(\bsA)\ot \I_{\bP_\lambda}
$$
immediately yields $\bsC_\lambda \bsA^{\ot k}\bsC_\lambda \cong
\bQ_\lambda(\bsA)\ot \I_{\bP_\lambda}$, and
\begin{eqnarray*}
\Tr{\bsC_\lambda \bsA^{\ot k}} = f^\lambda
s_\lambda(a_1,\ldots,a_d).
\end{eqnarray*}
For $\pi\in \gamma\vdash k$, it is easy to see that
\begin{eqnarray*}
\Tr{\bP(\pi) \bsA^{\ot k}} = \Tr{\bsC^\gamma \bsA^{\ot k}} =
p_\gamma(a_1,\ldots,a_d).
\end{eqnarray*}
Combining this with \eqref{eq:central-proj} gives the famous
Frobenius formula, relating the characteristics of the full linear
group to the characters of the symmetric group
\begin{eqnarray*}
s_\lambda(a_1,\ldots,a_d) = \sum_{\gamma\vdash_d k}
\frac1{z_\gamma}\chi_{\lambda,\gamma}p_\gamma(a_1,\ldots,a_d).
\end{eqnarray*}
As this holds for any $A$, and thus for any set of values $a_j$ of
whatever dimension, it yields the transition matrix from the
$p_\gamma$ symmetric functions to the S-functions
\begin{eqnarray*}
s_\lambda = \sum_{\gamma\vdash_d k} \frac1{z_\gamma}
\chi_{\lambda,\gamma}p_\gamma.
\end{eqnarray*}
Using the orthogonality relations of the characters, we find
\begin{eqnarray*}
\bsC^\gamma = \sum_{\lambda\vdash_d k}
\frac1{f^\lambda}\chi_{\lambda,\gamma}\bsC_\lambda,~~~ p_\gamma =
\sum_{\lambda\vdash_d k} \frac1{f^\lambda}\chi_{\lambda,\gamma}
s_\lambda.
\end{eqnarray*}

\subsection{Symmetric functions and representations of tensor products}

A property of index permutation matrices that is both simple and
powerful is that index permutation matrices over tensor products of
Hilbert spaces are tensor products themselves. With a minor abuse of
notation we identify $(\cH_A\ot\cH_B)^{\ot k}$ with $\cH_A^{\ot
k}\ot\cH_B^{\ot k}$ and write
\begin{eqnarray*}
\widehat\bP(\pi)(\cH_A\ot\cH_B) = \bP(\pi)(\cH_A)\ot
\bP(\pi)(\cH_B).
\end{eqnarray*}
Here $\bP(\pi)(\cH_A)$ acts on $\cH^{\ot k}_A$, and
$\bP(\pi)(\cH_B)$ acts on $\cH^{\ot k}_B$. Clearly
$\widehat\bP(\pi)(\cH_A\ot\cH_B)$ acts on $(\cH_A\ot\cH_B)^{\ot k}$.
As a short hand, the above equation can be written as
$$
\bP^{AB}(\pi) = \bP^A(\pi)\ot\bP^B(\pi).
$$
This corresponds to considering symmetric functions of tensor
products of variables. If $x=(x_1,x_2,\ldots)$ and
$y=(y_1,y_2,\ldots)$, then their tensor product, which is
customarily denote $xy$ rather than $x\ot y$, consists of all
possible products $x_iy_j$. For power product sums one immediately
sees
\begin{eqnarray*}
p_\gamma(xy) =p_\gamma(x)p_\gamma(y).
\end{eqnarray*}
This yields for Schur functions
\begin{eqnarray}\label{eq:KC}
s_\lambda(xy) = \sum_{\mu,\nu\vdash k}
g_{\lambda\mu\nu}s_\mu(x)s_\nu(y),
\end{eqnarray}
where $g_{\lambda\mu\nu}$ are the so-called \emph{Kronecker
coefficients}
\begin{eqnarray*}
g_{\lambda\mu\nu} := \frac1{k!} \sum_{\pi\in S_k}
\chi_\lambda(\pi)\chi_\mu(\pi)\chi_\nu(\pi) = \sum_{\gamma\vdash_d
k}\frac1{z_\gamma}\chi_{\lambda,\gamma}\chi_{\mu,\gamma}\chi_{\nu,\gamma}.
\end{eqnarray*}
One of the rare cases in which a closed formula can be given for the
Kronecker coefficients, is $\lambda = (k)$. One finds
$$
g_{(k)\mu\nu} =\delta_{\mu\nu}~~~\text{and}~~~ s_{(k)}(xy) =
\sum_{\lambda\vdash k} s_\lambda(x)s_\lambda(y).
$$
A consequence of \eqref{eq:KC} is that for $\bsX$ and $\bsY$, acting
on $\cH_A$ and $\cH_B$, respectively,
\begin{eqnarray*}
\frac1{f^\lambda}\Tr{\bsC_\lambda(\bsX\ot \bsY)^{\ot k}} =
\sum_{\mu,\nu\vdash k}g_{\lambda\mu\nu}
\Pa{\frac1{f^\mu}\Tr{\bsC_\mu \bsX^{\ot
k}}}\Pa{\frac1{f^\nu}\Tr{\bsC_\nu \bsY^{\ot k}}},
\end{eqnarray*}
where $\bsC_\lambda$ acts on $(\cH_A\ot\cH_B)^{\ot k}$, $\bsC_\mu$
on $\cH^{\ot k}_A$, and $C_\nu$ on $\cH^{\ot k}_B$. In terms of the
irrpes of $\rG\rL(d,\complex)$ we have
\begin{eqnarray*}
\bQ_\lambda(\bsX\ot \bsY) \cong \bigoplus_{\mu,\nu\vdash k}
g_{\lambda\mu\nu} \bQ_\mu(\bsX) \ot \bQ_\nu(\bsY),
\end{eqnarray*}
where $g_{\lambda\mu\nu}$ counts the number of copies of
$\bQ_\mu(\bsX) \ot \bQ_\nu(\bsY)$ in the direct sum.

Consider the computation about the partial trace of
$$
\Ptr{B}{\bsC^{AB}_\lambda(\I_{\cH^{\ot k}_A}\ot \bsC^B_\nu)},
$$
where $\bsC^{AB}_\lambda$ acts on $(\cH_A\ot\cH_B)^{\ot k}$ and
$C^B_\nu$ on $\cH^{\ot k}_B$.

Since
\begin{eqnarray*}
\bsC^{AB}_\lambda = \frac{f^\lambda}{k!}\sum_{\pi\in
S_k}\chi_\lambda(\pi)\bP^{AB}(\pi) =
\frac{f^\lambda}{k!}\sum_{\pi\in
S_k}\chi_\lambda(\pi)\bP^A(\pi)\ot\bP^B(\pi),
\end{eqnarray*}
which, together with $\bsC^B_\nu\bP^B(\pi)\bsC^B_\nu =
\I_{\bQ_\nu}\ot \bP_\nu(\pi)$, implies that
\begin{eqnarray*}
\Ptr{B}{\bsC^{AB}_\lambda(\I_{\cH^{\ot k}_A}\ot \bsC^B_\nu)} &=&
\frac{f^\lambda}{k!}\sum_{\pi\in
S_k}\chi_\lambda(\pi)\bP^A(\pi)\Tr{\bP^B(\pi)\bsC^B_\nu}\\
&=&\frac{f^\lambda}{k!}\sum_{\pi\in
S_k}\chi_\lambda(\pi)\bP^A(\pi)s_\nu(1^{\times d_B})\chi_\nu(\pi)\\
&=&\frac{f^\lambda s_\nu(1^{\times d_B})}{k!}\sum_{\pi\in
S_k}\chi_\lambda(\pi)\chi_\nu(\pi)\bP^A(\pi)\\
&=& f^\lambda s_\nu(1^{\times d_B})\sum_{\gamma\vdash
k}\frac1{z_\gamma}\chi_{\lambda,\gamma}\chi_{\nu,\gamma}\bsC^\gamma\\
&=& f^\lambda s_\nu(1^{\times d_B})\sum_{\gamma\vdash
k}\frac1{z_\gamma}\chi_{\lambda,\gamma}\chi_{\nu,\gamma}\Pa{\sum_{\mu\vdash k}\frac1{f^\mu}\chi_{\mu,\gamma}\bsC^A_\mu}\\
&=&f^\lambda s_\nu(1^{\times d_B})\sum_{\mu\vdash
k}\frac1{f^\mu}\Pa{\sum_{\gamma\vdash k}\frac1{z_\gamma}\chi_{\lambda,\gamma}\chi_{\mu,\gamma}\chi_{\nu,\gamma}}\bsC^A_\mu\\
&=&f^\lambda s_\nu(1^{\times d_B})\sum_{\mu\vdash
k}\frac{g_{\lambda\mu\nu}}{f^\mu}\bsC^A_\mu.
\end{eqnarray*}
Therefore we have
\begin{eqnarray*}
\Ptr{B}{\bsC^{AB}_\lambda(\I_{\cH^{\ot k}_A}\ot \bsC^B_\nu)} =
f^\lambda s_\nu(1^{\times d_B})\sum_{\mu\vdash
k}\frac{g_{\lambda\mu\nu}}{f^\mu}\bsC^A_\mu.
\end{eqnarray*}
This fact implies that
\begin{eqnarray*}
\Ptr{B}{\bsC^{AB}_\lambda} = \sum_{\nu\vdash
k}\Ptr{B}{\bsC^{AB}_\lambda(\I_{\cH^{\ot k}_A}\ot \bsC^B_\nu)} =
\sum_{\nu\vdash k}f^\lambda s_\nu(1^{\times d_B})\sum_{\mu\vdash
k}\frac{g_{\lambda\mu\nu}}{f^\mu}\bsC^A_\mu.
\end{eqnarray*}
In particular, for $\lambda=(k)$,
\begin{eqnarray*}
\Ptr{B}{\bsC^{AB}_{(k)}} = \sum_{\mu\vdash k}\frac{s_\mu(1^{\times
d_B})}{f^\mu}\bsC^A_\mu.
\end{eqnarray*}
In addition, we also have
\begin{eqnarray*}
\Ptr{B}{\bsC^{AB}_\lambda(\bsC^A_\mu\ot \bsC^B_\nu)} = f^\lambda
s_\nu(1^{\times d_B})\frac{g_{\lambda\mu\nu}}{f^\mu}\bsC^A_\mu,
\end{eqnarray*}
implying
\begin{eqnarray*}
\Tr{\bsC^{AB}_\lambda(\bsC^A_\mu\ot \bsC^B_\nu)} = f^\lambda
g_{\lambda\mu\nu}s_\mu(1^{\times d_A})s_\nu(1^{\times d_B}).
\end{eqnarray*}
Summing over all $\lambda\vdash k$ gives rise to
\begin{eqnarray*}
\Pa{f^\mu s_\mu(1^{\times d_A})} \Pa{f^\nu s_\nu(1^{\times d_B})}
&=& \Tr{\bsC^A_\mu\ot \bsC^B_\nu} = \sum_{\lambda\vdash
k}\Tr{\bsC^{AB}_\lambda(\bsC^A_\mu\ot \bsC^B_\nu)}
\\
&=&\Pa{\sum_{\lambda\vdash k}f^\lambda
g_{\lambda\mu\nu}}s_\mu(1^{\times d_A})s_\nu(1^{\times d_B}),
\end{eqnarray*}
implying that $f^\mu f^\nu = \sum_{\lambda\vdash k}f^\lambda
g_{\lambda\mu\nu}$.

\section{Weyl integration formula}
This section is written based on Bump's book \cite{Bump}.
\subsection{Haar measure}

If $G$ is a locally compact group, there is, up to a constant
multiple, a unique regular Borel measure $\mu_L$ that is invariant
under left translation. Here \emph{left translation invariance}
means that $\mu(M) = \mu(gM)$ for all measurable sets $M$.
\emph{Regularity} means that
\begin{eqnarray*}
\mu(M) &=& \inf\set{\mu(\cO): M\subseteq \cO, \cO~\text{open}}\\
&=&\sup\set{\mu(\cC): M\supseteq \cC, \cC~\text{compact}}.
\end{eqnarray*}
Such a measure is called a \emph{left Haar measure}. It has the
properties that any compact set has finite measure and any nonempty
open set has positive measure.

I will not prove the existence and uniqueness of the Haar measure,
which has already established. Left-invariance of the measure
amounts to left-invariance of the corresponding integral,
\begin{eqnarray}
\int_G f(g'g)\dif\mu_L(g) = \int_G f(g)\dif\mu_L(g),
\end{eqnarray}
for any Haar integral function $g$ on $G$.

There is also a right-invariant measure $\mu_R$, unique up to
constant multiple, called a \emph{right Haar measure}. Left and
right Haar measures may or may not coincide. For example, if
$$
G=\Set{\left(
         \begin{array}{cc}
           y & x \\
           0 & 1 \\
         \end{array}
       \right): x,y\in \real,y>0
},
$$
then it is easy to see that the left- and right-invariant measures
are, respectively,
$$
\dif\mu_L=y^{-2}\dif x\dif y,~~~\dif\mu_R=y^{-1}\dif x\dif y.
$$
They are not the same. However, there are many cases where they do
coincide, and if the left Haar measure is also right-invariant, we
call $G$ \emph{unimodular}.

Conjugation is an automorphism of $G$, and so it takes a left Haar
measure to another left Haar measure, which must be a constant
multiple of the first. Indeed,
\begin{eqnarray*}
\int_G f(x^{-1}gx)\dif\mu_L(g) = \int_G f(g)\dif\mu_L(xgx^{-1}) =
\int_G f(g)\dif\mu_L(gx^{-1}).
\end{eqnarray*}
Clearly $d\mu^x_L(g) := d\mu_L(gx^{-1})$ defines a new left Haar
measure. By the uniqueness of left Haar measure, up to constant
multiple, $d\mu^x_L(g)=\delta(x)d\mu_L(g)$, which implies that
\begin{eqnarray}\label{eq:one-dimen-homo}
\int_G f(x^{-1}gx)\dif\mu_L(g) = \delta(x)\int_G f(g)\dif\mu_L(g).
\end{eqnarray}
\begin{prop}
The function $\delta:G\to\real^\times_+$ is a continuous
homomorphism. The measure $\delta(g)\mu_L(g)$ is a right-invariant,
denoted $\mu_R(g)$.
\end{prop}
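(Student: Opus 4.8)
The plan is to distil from the defining relation \eqref{eq:one-dimen-homo} a single scaling law for $\mu_L$ under right translation, and then to read off all three assertions from it. First I would test \eqref{eq:one-dimen-homo} on an indicator $f=\mathbf{1}_M$: since $f(x^{-1}gx)=\mathbf{1}_{xMx^{-1}}(g)$ and $\mu_L(xMx^{-1})=\mu_L(Mx^{-1})$ by left-invariance, this records the clean relation
\begin{equation}
\mu_L(Mx^{-1})=\delta(x)\,\mu_L(M),\qquad M\text{ measurable},\ x\in G.\tag{$\ast$}
\end{equation}
Choosing an $M$ with $0<\mu_L(M)<\infty$ (a nonempty relatively compact open set serves, as open sets have positive and compact sets finite measure) shows at once that $\delta(x)\in\real_+^\times$ and $\delta(e)=1$.

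The homomorphism property is then purely formal: applying $(\ast)$ twice,
\[
\delta(xy)\,\mu_L(M)=\mu_L(M(xy)^{-1})=\mu_L(My^{-1}x^{-1})=\delta(x)\,\mu_L(My^{-1})=\delta(x)\delta(y)\,\mu_L(M),
\]
and dividing by $\mu_L(M)\neq 0$ gives $\delta(xy)=\delta(x)\delta(y)$, whence also $\delta(x^{-1})=\delta(x)^{-1}$. By the standard passage from indicators to integrable functions, $(\ast)$ extends to
\begin{equation}
\int_G \psi(gx)\,d\mu_L(g)=\delta(x)\int_G \psi(g)\,d\mu_L(g)\tag{$\ast\ast$}
\end{equation}
for every $\mu_L$-integrable $\psi$, which I will use for the remaining two points.

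For continuity --- the one genuinely analytic step, and the part I expect to be the main obstacle --- I would fix $f\in C_c(G)$ with $f\ge 0$ and $\int_G f\,d\mu_L=1$ (such $f$ exists by Urysohn's lemma together with positivity of $\mu_L$ on nonempty open sets), so that $(\ast\ast)$ gives the representation $\delta(x)=\int_G f(gx)\,d\mu_L(g)$. Continuity of $\delta$ then reduces to continuity of $x\mapsto\int_G f(gx)\,d\mu_L(g)$: as $x$ ranges over a compact neighborhood $V$ of a fixed $x_0$, all the functions $g\mapsto f(gx)$ are supported in the single compact set $K:=(\supp f)V^{-1}$, and $x\mapsto f(\cdot\,x)$ is continuous into $(C_c(G),\norm{\cdot}_\infty)$ by uniform continuity of $f$; hence $\abs{\delta(x)-\delta(x_0)}\le \mu_L(K)\,\norm{f(\cdot\,x)-f(\cdot\,x_0)}_\infty\to 0$. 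The substance here is exactly the uniform right-continuity of translation on $C_c(G)$ together with the uniform control of supports, which is where local compactness of $G$ and finiteness of $\mu_L$ on compacta enter.

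Finally, to see that $\mu_R(M):=\int_M\delta(g)\,d\mu_L(g)$ is right-invariant, set $\Phi(g)=\mathbf{1}_{Mx}(g)\,\delta(g)=\mathbf{1}_M(gx^{-1})\delta(g)$, so that $\int_G\Phi\,d\mu_L=\mu_R(Mx)$ and $\Phi(gx)=\mathbf{1}_M(g)\delta(g)\delta(x)$. Applying $(\ast\ast)$ to $\psi=\Phi$ gives
\[
\delta(x)\,\mu_R(M)=\int_G\Phi(gx)\,d\mu_L(g)=\delta(x)\int_G\Phi(g)\,d\mu_L(g)=\delta(x)\,\mu_R(Mx),
\]
and cancelling $\delta(x)\in\real_+^\times$ yields $\mu_R(Mx)=\mu_R(M)$. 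Thus $\delta(g)\,\mu_L(g)$ is right-invariant, and the proof is complete once the continuity step above is carried out.
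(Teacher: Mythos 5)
Your proposal is correct, and its two core steps coincide with the paper's: the homomorphism property is obtained by iterating the modular relation (the paper composes conjugations by $x_1$ and $x_2$ using \eqref{eq:one-dimen-homo}, you iterate the equivalent right-translation scaling law $\mu_L(Mx^{-1})=\delta(x)\mu_L(M)$ on sets --- the same computation in measure-theoretic rather than functional form), and right-invariance of $\delta\,d\mu_L$ is in both cases the ``replace $f$ by $f\delta$'' substitution, which you carry out on the indicator $\mathbf{1}_{Mx}\delta$. Where you go beyond the paper: you verify explicitly that $\delta$ takes values in $\real^\times_+$ (by evaluating $(\ast)$ on a set of finite positive measure), and you actually prove continuity of $\delta$ via the representation $\delta(x)=\int_G f(gx)\,d\mu_L(g)$ for a normalized $f\in C_c(G)$, using uniform continuity of translation and the uniform compact support over a compact neighborhood. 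The paper asserts continuity in the statement but its proof addresses only the homomorphism identity and right-invariance, so your version is strictly more complete; the only point worth flagging is that when you apply $(\ast\ast)$ to $\Phi=\mathbf{1}_{Mx}\delta$ for an arbitrary measurable $M$ you should note that the identity extends to all nonnegative measurable functions by monotone convergence (allowing the value $+\infty$), since $\Phi$ need not be integrable.
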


\begin{proof}
Conjugation by first $x_1$ and then $x_2$ is the same as conjugation
by $x_1x_2$ in one step. This can be seen from the following
reasoning: Let $x=x_1x_2$ in \eqref{eq:one-dimen-homo}, we have
\begin{eqnarray*}
\int_G f(x_2^{-1}x_1^{-1}gx_1x_2)\dif\mu_L(g) = \delta(x_1x_2)\int_G
f(g)\dif\mu_L(g)
\end{eqnarray*}
and
\begin{eqnarray*}
&&\int_G f(x_2^{-1}x_1^{-1}gx_1x_2)\dif\mu_L(g) = \int_G
f_{x_2}(x_1^{-1}gx_1)\dif\mu_L(g)
= \delta(x_1)\int_G f_{x_2}(g)\dif\mu_L(g)\\
&&=\delta(x_1)\int_G f(x_2^{-1}gx_2)\dif\mu_L(g) =
\delta(x_1)\delta(x_2)\int_G f(g)\dif\mu_L(g)
\end{eqnarray*}
where $f_{x_2}(g):= f(x_2^{-1}gx_2)$. That is
$$
\delta(x_1x_2)=\delta(x_1)\delta(x_2).
$$
Replace $f$ by $f\delta$ in the following
\begin{eqnarray*}
\int_G f(gx)\dif\mu_L(g) = \delta(x)\int_G f(g)\dif\mu_L(g)
\end{eqnarray*}
we get
\begin{eqnarray*}
\int_G f(gx)\delta(gx)\dif\mu_L(g) = \delta(x)\int_G
f(g)\delta(g)\dif\mu_L(g),
\end{eqnarray*}
which gives rise to
\begin{eqnarray*}
\int_G f(gx)\delta(g)\dif\mu_L(g) = \int_G
f(g)\delta(g)\dif\mu_L(g),
\end{eqnarray*}
that is
\begin{eqnarray*}
\int_G f(gx)\dif\mu_R(g) = \int_G f(g)\dif\mu_R(g),
\end{eqnarray*}
completing the proof.
\end{proof}

\begin{prop}
If $G$ is compact, then $G$ is unimodular and $\mu_L(G)<\infty$.
\end{prop}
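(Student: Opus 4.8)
The plan is to treat the two assertions separately, since they rely on different facts. For the finiteness claim $\mu_L(G)<\infty$, I would simply invoke the stated property of a left Haar measure that any compact set has finite measure. Because $G$ is itself a compact set, taking $M=G$ in that property gives $\mu_L(G)<\infty$ immediately, with no further computation required.

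For unimodularity the goal is to show that the modular homomorphism $\delta\colon G\to\real^\times_+$ produced in the previous proposition is identically equal to $1$; once this is known, the relation $\mu_R=\delta\,\mu_L=\mu_L$ from that proposition shows that the left Haar measure is also right-invariant, which is exactly what it means for $G$ to be unimodular. The conceptual reason $\delta\equiv 1$ is that $\delta$ is a continuous homomorphism and $G$ is compact, so the image $\delta(G)$ is a compact subgroup of the multiplicative group $\real^\times_+$, and the only compact subgroup of $\real^\times_+$ is the trivial group $\set{1}$ (transport to $(\real,+)$ by $\log$: a subgroup of $\real$ containing a nonzero $t$ also contains every $nt$ for $n\in\integer$, hence is unbounded and not compact).

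To keep the argument elementary and self-contained, the route I would actually carry out avoids quoting the classification of compact subgroups and instead uses an extremal-value argument. Since $\delta$ is continuous and $G$ is compact, $\delta$ attains a maximum value $m=\delta(g_0)$ for some $g_0\in G$. On one hand $m\geqslant \delta(e)=1$ because the identity lies in $G$; on the other hand $g_0^2\in G$ gives $m^2=\delta(g_0)^2=\delta(g_0^2)\leqslant m$, forcing $m\leqslant 1$. Hence $m=1$, so $\delta(g)\leqslant 1$ for all $g\in G$. Applying this to $g^{-1}$ and using $\delta(g^{-1})=\delta(g)^{-1}$ yields $\delta(g)\geqslant 1$ as well, whence $\delta\equiv 1$.

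The main obstacle is concentrated in this one point: ruling out a nontrivial modular function. Everything else (finiteness from compactness, and the passage from $\delta\equiv 1$ to $\mu_L=\mu_R$) is formal. The extremal-value step is where compactness is genuinely used, and it is the only place requiring an idea rather than bookkeeping; the underlying principle is precisely that a compact group admits no nonconstant continuous homomorphism into $\real^\times_+$.
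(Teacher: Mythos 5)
Your proof is correct and follows essentially the same route as the paper: finiteness of $\mu_L(G)$ comes directly from the finiteness of Haar measure on compact sets, and unimodularity comes from showing the modular homomorphism $\delta$ is trivial because $G$ is compact. The only difference is cosmetic --- where the paper quotes the fact that the only compact subgroup of $\real^\times_+$ is $\set{1}$, you verify the same triviality by an extremal-value argument ($m=\delta(g_0)$ maximal forces $m^2\leqslant m$ and hence $m=1$), which is a perfectly sound self-contained substitute for that quoted fact.
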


\begin{proof}
Since $\delta$ is a homomorphism, the image of $\delta$ is a
subgroup of $\real^\times_+$. Since $G$ is compact, $\delta(G)$ is
also compact, and the only compact subgroup of $\real^\times_+$ is
just $\set{1}$. Thus $\delta$ is trivial, so a left Haar measure is
right-invariant. We have mentioned as assumed fact that the Haar
volume of any compact subset of a locally compact group is finite,
so if $G$ is finite, its Haar volume is finite.
\end{proof}
If $G$ is compact, then it is natural to normalize the Haar measure
so that $G$ has volume 1. To simplify our notation, we will denote
$\int_G f(g)\dif\mu_L(g)$ by $\int_Gf(g)\dif\mu(g)$.

\begin{prop}
If $G$ is unimodular, then the map $g\to g^{-1}$ is an isometry.
\end{prop}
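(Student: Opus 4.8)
The plan is to show that the push-forward of the left Haar measure $\mu_L$ under inversion $\iota\colon g\mapsto g^{-1}$ is again a right Haar measure, and then to exploit unimodularity together with the fact that $\iota$ is an involution to pin down the normalizing constant. Concretely, I would define a functional $\nu$ by $\int_G f\,d\nu := \int_G f(g^{-1})\,d\mu_L(g)$ for Haar-integrable $f$, so that $\nu=\iota_*\mu_L$ is the measure transported by inversion; proving the proposition then amounts to showing $\nu=\mu_L$, i.e. that inversion preserves the Haar integral.

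First I would verify that $\nu$ is right-invariant. For fixed $x\in G$, writing $k(g):=f(g^{-1})$ and noting that $f(g^{-1}x)=k(x^{-1}g)$, the left-invariance of $\mu_L$ (in the form $\int_G f(g'g)\,d\mu_L(g)=\int_G f(g)\,d\mu_L(g)$) gives
\begin{eqnarray*}
\int_G f(gx)\,d\nu(g) = \int_G f(g^{-1}x)\,d\mu_L(g) = \int_G k(x^{-1}g)\,d\mu_L(g) = \int_G k(g)\,d\mu_L(g) = \int_G f\,d\nu.
\end{eqnarray*}
Hence $\nu$ is a right Haar measure on $G$. Now the unimodularity hypothesis enters: since $G$ is unimodular, $\mu_L$ is itself right-invariant, so both $\nu$ and $\mu_L$ are right Haar measures. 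By the uniqueness of the right Haar measure up to a positive scalar, there is a constant $c>0$ with $\nu=c\,\mu_L$, that is, $\int_G f(g^{-1})\,d\mu_L(g)=c\int_G f(g)\,d\mu_L(g)$ for all integrable $f$.

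Finally I would determine $c$ by using that $\iota$ is an involution. Replacing $f$ by $f\circ\iota$ in the last identity and using $(g^{-1})^{-1}=g$ yields
\begin{eqnarray*}
\int_G f(g)\,d\mu_L(g) = c\int_G f(g^{-1})\,d\mu_L(g) = c^2\int_G f(g)\,d\mu_L(g),
\end{eqnarray*}
so $c^2=1$, and since $c>0$ we conclude $c=1$. Therefore $\int_G f(g^{-1})\,d\mu_L(g)=\int_G f(g)\,d\mu_L(g)$, which is exactly the assertion that inversion preserves the Haar measure, i.e. $g\mapsto g^{-1}$ is an isometry. The only genuinely delicate point is the right-invariance computation for $\nu$, where one must keep the order of multiplication straight and invoke left-invariance with the correct translating element $x^{-1}$; once that is secured, unimodularity and the involution property close the argument cleanly, and notably without any explicit appeal to the modular function $\delta$.
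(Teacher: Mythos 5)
Your argument is correct and is essentially the paper's proof, only written out in full: you show that inversion pushes the left Haar measure to a right Haar measure, use unimodularity and uniqueness to get a proportionality constant $c$, and use the involution property to force $c=1$. The paper compresses exactly these three steps into two sentences, so no further comment is needed.
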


\begin{proof}
It is easy to see that $g\to g^{-1}$ turns a left Haar measure into
a right Haar measure. If left and right Haar measures agree, then
$g\to g^{-1}$ multiplies the left Haar measure by a positive
constant, which must be 1 since the map has order 2.
\end{proof}

\subsection{Weyl integration formula}

Let $G$ be a compact, connected Lie group, and let $T$ be a maximal
torus. It is already known that every conjugacy class meets $T$.
Thus we should be able to compute the Haar integral over $G$. The
following formula that allows this, the \emph{Weyl integration
Formula}, is therefore fundamental in representation theory and in
other areas, such as random matrix theory.
\begin{eqnarray}
\int_G f(g)\dif\mu(g) = \frac1{\abs{W(G)}}\int_T
\Pa{\int_{G/T}f(gtg^{-1})\abs{\det(\I-\Ad(t))}\dif\mu(gT)}\dif t.
\end{eqnarray}

If $G$ is a locally compact group and $H$ a closed subgroup, then
the quotient space $G/H$ consisting of all cosets $gH$ with $g\in
G$, given the quotient topology, is a locally compact Hausdorff
space.

If $X$ is a locally compact Hausdorff space let $C_c(X)$ be the
space of continuous, compactly supported functions on $X$. If $X$ is
a locally compact Hausdorff space, a linear functional $\cI$ on
$C_c(X)$ is called \emph{positive} if $\cI(f)\geqslant0$ if $f$ is
nonnegative. According to the \emph{Riesz representation theorem},
every such $\cI$ is of the form
\begin{eqnarray*}
\cI(f) = \int_X f \dif\mu
\end{eqnarray*}
for some regular Borel measure $\dif\mu$.

\begin{prop}
Let $G$ be a locally compact group, and let $H$ be a compact
subgroup. Let $d\mu_G$ and $d\mu_H$ be left Haar measures on $G$ and
$H$, respectively. Then there exists a regular Borel measure
$d\mu_{G/H}$ on $G/H$ which is invariant under the action of $G$ by
left translation. The measure $d\mu_{G/H}$ may be normalized so
that, for $f\in C_c(G)$, we have
\begin{eqnarray}
\int_{G/H} \Pa{\int_H f(gh)d\mu_H(h)}\dif\mu_{G/H}(gH).
\end{eqnarray}
Here the function $g\mapsto \int_Hf(gh)\dif\mu_H(h)$ is constant on
the cosets $gH$, and we are therefore identifying it with a function
on $G/H$.
\end{prop}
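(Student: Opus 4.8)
The plan is to realize the desired measure on $G/H$ as the one supplied by the Riesz representation theorem applied to a suitable positive linear functional, and to obtain that functional by ``disintegrating'' the Haar integral on $G$ along the fibres $gH$. First I would introduce the \emph{averaging operator} $P\colon C_c(G)\to C(G/H)$ defined by
\[
(Pf)(gH) = \int_H f(gh)\,d\mu_H(h),
\]
and check its basic properties: it is well defined as a function on cosets, because replacing $g$ by $gh_0$ with $h_0\in H$ and using left-invariance of $\mu_H$ (substitute $h\mapsto h_0^{-1}h$) leaves the integral unchanged; it sends continuous compactly supported functions to continuous compactly supported functions on $G/H$, since $f$ is uniformly continuous on compacta, $H$ is compact, and $\pi(\supp f)$ is compact where $\pi\colon G\to G/H$ is the open continuous quotient map; and, crucially, $P$ is \emph{surjective} onto $C_c(G/H)$. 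Surjectivity is proved by lifting: given $\phi\in C_c(G/H)$, choose $f_1\geqslant 0$ in $C_c(G)$ with $Pf_1>0$ on $\supp\phi$, and set $f = (\phi\circ\pi)\,f_1/(Pf_1\circ\pi)$, with the value $0$ where $\phi$ vanishes; pulling the factors $\phi(gH)$ and $(Pf_1)(gH)$, which are constant along the fibre, out of the $H$-integral gives $Pf=\phi$, and the very same construction produces a \emph{nonnegative} lift whenever $\phi\geqslant0$.

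The heart of the argument, and the main obstacle, is the well-definedness lemma: if $Pf=0$ then $\int_G f\,d\mu_G = 0$. Here I would invoke the fact that the modular function $\delta$ of $G$ (in the notation of the Haar-measure section, $\int_G f(gx)\,d\mu_G(g)=\delta(x)\int_G f(g)\,d\mu_G(g)$) restricts trivially to $H$: since $H$ is compact and $\delta$ is a continuous homomorphism, $\delta(H)$ is a compact subgroup of $\real^\times_+$, hence $\delta(H)=\set{1}$, exactly as in the proof that a compact group is unimodular. Granting this, integrate the hypothesis $Pf\equiv0$ over $G$ and apply Fubini's theorem (legitimate because $f$ is continuous and compactly supported and both measures are Radon):
\[
\begin{aligned}
0 &= \int_G\Pa{\int_H f(gh)\,d\mu_H(h)}d\mu_G(g)
   = \int_H\Pa{\int_G f(gh)\,d\mu_G(g)}d\mu_H(h)\\
  &= \mu_H(H)\int_G f(g)\,d\mu_G(g),
\end{aligned}
\]
where the last equality uses $\delta(h)=1$ to replace the inner integral by $\int_G f\,d\mu_G$. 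Since $\mu_H(H)$ is a positive finite number, this forces $\int_G f\,d\mu_G=0$, as required.

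With this lemma in hand the construction closes quickly. I would define a linear functional $I$ on $C_c(G/H)$ by $I(\phi) = \int_G f\,d\mu_G$, where $f$ is any preimage of $\phi$ under $P$; the lemma, applied to the difference of two preimages, makes $I$ well defined, and the nonnegative-lift construction above shows $I$ is positive. The Riesz representation theorem then furnishes a regular Borel measure $d\mu_{G/H}$ on $G/H$ with $I(\phi)=\int_{G/H}\phi\,d\mu_{G/H}$. Left-invariance of $d\mu_{G/H}$ follows because $P$ intertwines left translations: writing $(L_x f)(g)=f(x^{-1}g)$ one has $P(L_x f)(gH) = (Pf)(x^{-1}gH)$, while $\int_G L_x f\,d\mu_G=\int_G f\,d\mu_G$ by left-invariance of $\mu_G$, so $I$ is translation-invariant and hence so is the representing measure. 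Finally, unwinding the definitions gives, for every $f\in C_c(G)$,
\[
\int_G f(g)\,d\mu_G(g) = I(Pf) = \int_{G/H}\Pa{\int_H f(gh)\,d\mu_H(h)}d\mu_{G/H}(gH),
\]
which is precisely the asserted quotient integration formula, the chosen $d\mu_{G/H}$ being the normalization that makes it hold.
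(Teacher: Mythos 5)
Your proposal is correct and follows essentially the same route as the paper's proof: the averaging operator $C_c(G)\to C_c(G/H)$, the key lemma that $\int_G f\,d\mu_G=0$ whenever the fibre averages vanish (proved via Fubini and the triviality of the modular function on the compact subgroup $H$), and the Riesz representation theorem. You supply a few details the paper leaves implicit (the partition-style lift for surjectivity, positivity of the functional, and left-invariance of the resulting measure), but the underlying argument is the same.
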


\begin{proof}
We may choose the normalization of $\dif\mu_H$ so that $H$ has total
volume 1. We define a map $\Lambda: C_c(G)\to C_c(G/H)$ by
\begin{eqnarray*}
(\Lambda f)(g) = \int_H f(gh)\dif\mu_H(h).
\end{eqnarray*}
Note that $\Lambda f$ is a function on $G$ which is right invariant
under translation by elements of $H$, so it may be regarded as a
function on $G/H$. Since $H$ is compact, $\Lambda f$ is compactly
supported. If $\phi\in C_c(G/H)$, regarding $\phi$ as a function on
$G$, we have $\Lambda\phi=\phi$ because
\begin{eqnarray*}
(\Lambda \phi)(g) = \int_H \phi(gh)\dif\mu_H(h) = \int_H
\phi(g)\dif\mu_H(h) = \phi(g).
\end{eqnarray*}
This shows that $\Lambda$ is surjective. We may therefore define a
linear functional $\cI$ on $C_c(G/H)$ by
\begin{eqnarray*}
\cI(\Lambda f) = \int_G f(g)\dif\mu_G(g),~~~f\in C_c(G)
\end{eqnarray*}
provided we check that this is well-defined. We must show that if
$\Lambda f=0$, then
\begin{eqnarray*}
\cI(\Lambda f) = 0,
\end{eqnarray*}
i.e. $\int_G f(g)d\mu_G(g)=0$. We note that the function
$(g,h)\mapsto f(gh)$ is compactly supported and continuous on
$G\times H$, so if $\Lambda f=0$, we may use Fubini's theorem to
write
\begin{eqnarray*}
0&=&\int_G (\Lambda f)(g) \dif\mu_G(g) = \int_G\Pa{\int_H
f(gh)d\mu_H(h)}\dif\mu_G(g)\\
&=&\int_H\Pa{\int_G f(gh)\dif\mu_G(g)}\dif\mu_H(h).
\end{eqnarray*}
In the inner integral on the right-hand side we make the variable
change $g\mapsto gh^{-1}$. Recalling that $d\mu_G(g)$ is \emph{left}
Haar measure, this produces a factor of $\delta_G(h)$, where
$\delta_G(h)$ is the modular homomorphism. Thus
$$
0=\int_H \delta_G(h) \Pa{\int_G f(g)d\mu_G(g)}\dif\mu_H(h).
$$
Now the group $H$ is compact, so its image under $\delta_G$ is a
compact subgroup of $\real^\times_+$, which must be $\set{1}$. Thus
$\delta_G(h)=1$ for all $h\in H$, and we obtain $\int_G
f(g)d\mu_G(g)=0$, justifying the definition of the functional $\cI$.
The existence of the measure on $G/H$ now follows from the Riesz
representation theorem.
\end{proof}

\begin{exam}
Suppose that $G=\U(d)$. A maximal torus is
$$
\mathbb{T} = \Set{\mathrm{diag}(t_1,\ldots,t_n):
\abs{t_1}=\cdots=\abs{t_n}=1 }.
$$
Its normalizer $N(\mathbb{T})$ consists of all monomial matrices
(matrices with a single nonzero entry in each row and column) so
that the quotient $N(\mathbb{T})/\mathbb{T}\cong S_n$.
\end{exam}

\begin{prop}
Let $T$ be a maximal torus in the compact connected Lie group $G$,
and let $\liet,\g$ be the Lie algebras of $T$ and $G$, respectively.
\begin{enumerate}[(i)]
\item Any vector in $\g$ fixed by $\Ad(T)$ is in $\liet$.
\item We have $\g=\liet\oplus \liet^\perp$, where $\liet^\perp$ is
invariant under $\Ad(T)$. Under the restriction of $\Ad$ to $T$,
$\liet^\perp$ decomposes into a direct sum of two-dimensional real
irreps of $T$.
\end{enumerate}
\end{prop}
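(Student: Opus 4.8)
The plan is to prove the two parts separately, using the maximality of $T$ for (i) and an averaging argument together with the representation theory of a torus for (ii).

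For part (i), I would begin with a vector $X \in \g$ satisfying $\Ad(t)X = X$ for all $t \in T$, and exploit the identity $t\exp(sX)t^{-1} = \exp(s\,\Ad(t)X) = \exp(sX)$. This shows that every element of the one-parameter subgroup $\set{\exp(sX): s \in \real}$ commutes with all of $T$. Let $S$ be the closure of the subgroup generated by $T$ together with this one-parameter subgroup. Since $T$ and $\exp(\real X)$ are both connected, both abelian, and both contain the identity, the group they generate is connected and abelian, so its closure $S$ is a compact connected abelian Lie group, that is, a torus, and it contains $T$. Maximality of $T$ then forces $S = T$, whence $\exp(sX) \in T$ for all $s$; differentiating at $s=0$ gives $X \in \liet$.

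For part (ii), I would first invoke compactness of $G$ to average an arbitrary inner product against the Haar measure, producing an $\Ad(G)$-invariant inner product on $\g$; in particular it is $\Ad(T)$-invariant. Because $T$ is abelian, $\Ad(t)$ fixes $\liet$ pointwise, so $\liet$ is an $\Ad(T)$-invariant subspace, and hence so is its orthogonal complement $\liet^\perp$, giving the decomposition $\g = \liet \oplus \liet^\perp$ with $\liet^\perp$ stable under $\Ad(T)$. It then remains to decompose $\liet^\perp$ into real irreducible $\Ad(T)$-subrepresentations and to recall that a compact torus admits only two kinds of real irreducibles: the trivial one-dimensional representation, and two-dimensional representations on which $T$ acts by rotations, arising from a nontrivial character paired with its complex conjugate. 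There is no nontrivial real one-dimensional representation, since the image of the connected group $T$ in $\real^\times_+$ would be a compact connected subgroup and hence trivial. Finally, the trivial components are excluded by part (i): any vector of $\liet^\perp$ fixed by $\Ad(T)$ must lie in $\liet$, forcing it to be zero. Therefore every irreducible summand of $\liet^\perp$ is two-dimensional.

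The step I expect to be the main obstacle is the maximality argument in part (i): one must verify carefully that the closure $S$ of the generated subgroup is genuinely a torus, namely connected (because it is generated by connected subgroups meeting at the identity), abelian (because $X$ commutes with $T$ and with itself), and compact (as a closed subset of the compact group $G$), before the maximality of $T$ can be brought to bear. The classification of the real irreps of a torus in part (ii) is comparatively routine, but should be stated precisely, since it is exactly the point at which the dimension two of the summands is pinned down.
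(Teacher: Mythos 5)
Your proof is correct. Note, however, that the paper states this proposition \emph{without proof} (it appears in the section transcribed from Bump's treatment of the Weyl integration formula), so there is no in-paper argument to compare against; your two-step argument --- for (i), observing that $\exp(s\tinyspace X)$ commutes with $T$, closing up the subgroup generated by $T$ and $\exp(\real X)$ into a compact connected abelian (hence toral) subgroup, and invoking maximality; for (ii), averaging to obtain an $\Ad(T)$-invariant inner product, splitting off $\liet^\perp$, and using the classification of real irreducibles of a torus together with part (i) to exclude trivial summands --- is the standard one and correctly fills the gap. The only facts you use implicitly (that the closure of a connected abelian subgroup of a compact Lie group is a torus, and complete reducibility of real representations of compact groups) are standard and safe to cite.
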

Let $W(G)$ be the Weyl group of $G$. The Weyl group acts on $T$ by
conjugation. Indeed, the elements of the Weyl group are cosets
$w=nT$ for $n\in N(T)$. If $t\in T$, the elements $ntn^{-1}$ depends
only on $w$ so by abuse of notation we denote it $wtw^{-1}$.

\begin{thrm}
(i) Two elements of $T$ are conjugate in $G$ if and only if they are
conjugate in $N(T)$.\\
(ii) The inclusion $T\to G$ induces a bijection between the orbits
of $W(G)$ on $T$ and the conjugacy classes of $G$.
\end{thrm}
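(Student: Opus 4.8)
The plan is to establish the substantive content of part (i) first, since part (ii) will then follow formally from (i) together with the already-established fact that every conjugacy class of $G$ meets $T$. For (i), one implication is immediate: if $s,t\in T$ are conjugate in $N(T)$, they are conjugate in $G$ because $N(T)\subseteq G$. For the converse, suppose $t=gsg^{-1}$ for some $g\in G$ with $s,t\in T$. First I would form the centralizer $Z_G(t)$ and pass to its identity component $H^0:=Z_G(t)^0$, which is again a compact connected Lie group. The key observation is that both $T$ and $gTg^{-1}$ are maximal tori of $H^0$: each is abelian, connected, and contains $t$ (note $t=gsg^{-1}\in gTg^{-1}$), hence lies in $Z_G(t)^0=H^0$; and maximality in $G$ forces maximality inside the subgroup $H^0$.

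Next I would invoke the conjugacy of maximal tori in the compact connected group $H^0$ to produce $h\in H^0$ with $h(gTg^{-1})h^{-1}=T$. Setting $n:=hg$, a direct check gives $nTn^{-1}=h(gTg^{-1})h^{-1}=T$, so $n\in N(T)$. Since $h\in H^0\subseteq Z_G(t)$ commutes with $t$, we obtain $nsn^{-1}=h\,t\,h^{-1}=t$, exhibiting $s$ and $t$ as conjugate inside $N(T)$. This completes (i).

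For (ii), I would define the candidate map by sending the $W(G)$-orbit of $t\in T$ to its $G$-conjugacy class. Well-definedness is clear, because any $w=nT\in W(G)$ acts by $t\mapsto ntn^{-1}$ with $n\in N(T)\subseteq G$, so all elements of a single $W(G)$-orbit are $G$-conjugate. Surjectivity is exactly the statement that every conjugacy class of $G$ meets $T$, which has already been established. Injectivity is where part (i) does the work: if $s,t\in T$ lie in $G$-conjugate classes, then by (i) they are conjugate via some $n\in N(T)$, so the corresponding $w=nT\in W(G)$ satisfies $wsw^{-1}=t$, placing $s$ and $t$ in the same $W(G)$-orbit.

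The main obstacle is the hard direction of (i), and within it the recognition that the correct ambient group to work in is $Z_G(t)^0$ rather than $G$ itself; once $T$ and $gTg^{-1}$ are seen to be maximal tori of this smaller compact connected group, the conjugacy-of-maximal-tori theorem applies and $h$ can be chosen to fix $t$, which is precisely what makes the resulting $n=hg$ both normalize $T$ and conjugate $s$ to $t$ simultaneously. I would therefore rely on two background results from the theory of compact Lie groups: the conjugacy of maximal tori in a compact connected Lie group, and the surjectivity fact that every conjugacy class meets $T$.
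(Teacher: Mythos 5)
Your argument is correct and is essentially identical to the paper's: both pass to the identity component of the centralizer of the target element, observe that $T$ and $gTg^{-1}$ are maximal tori of that compact connected subgroup, and apply conjugacy of maximal tori there to manufacture an element of $N(T)$ that still fixes the target, with (ii) then following formally from (i) plus the fact that every conjugacy class meets $T$. The only differences are notational (your $n=hg$ versus the paper's $w=h^{-1}g$) and that you spell out the well-definedness/surjectivity/injectivity of the map in (ii), which the paper dismisses in one line.
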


\begin{proof}
Suppose that $t,u\in T$ are conjugate in $G$, say $gtg^{-1}=u$. Let
$H$ be the connected component of the identity in the centralizer of
$u$ in $G$. It is a closed Lie subgroup of $G$. Both $T$ and
$gTg^{-1}$ are contained in $H$ since they are connected commutative
groups containing $u$. As they are maximal tori in $G$, they are
maximal tori in $H$, and so they are conjugate in the compact
connected group $H$. If $h\in H$ such that $hTh^{-1}=gTg^{-1}$, then
$w=h^{-1}g\in N(T)$. Since $wtw^{-1}=h^{-1}uh=u$, we see that $t$
and $u$ are conjugate in $N(T)$.

Since $G$ is the union of the conjugates of $T$, (ii) is a
restatement of (i).
\end{proof}

\begin{prop}
The centralizer $C(T)=T$.
\end{prop}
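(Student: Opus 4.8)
The plan is to prove the two inclusions separately. The inclusion $T\subseteq C(T)$ is immediate because $T$ is abelian, so the whole content is the reverse inclusion $C(T)\subseteq T$. I would attack it in two stages: first pin down the identity component $C(T)^\circ$ using the Lie algebra, and then rule out any extra connected components.

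First I would compute the Lie algebra $\mathfrak{c}$ of the closed subgroup $C(T)$. Since $C(T)=\{g\in G: gtg^{-1}=t \text{ for all } t\in T\}$, its Lie algebra is exactly the fixed space $\{X\in\g:\Ad(t)X=X \text{ for all } t\in T\}$. The preceding Proposition supplies both the decomposition $\g=\liet\oplus\liet^\perp$ and the statement that any $\Ad(T)$-fixed vector lies in $\liet$; combined with the obvious fact that $\Ad(T)$ acts trivially on $\liet$ (because $T$ is abelian), this forces $\mathfrak{c}=\liet$. Hence $\dim C(T)=\dim T$, and since $T$ is a connected subgroup of $C(T)$ of full dimension, $C(T)^\circ=T$.

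Second, I would show that $C(T)$ is connected, which then yields $C(T)=C(T)^\circ=T$. Given $g\in C(T)$, form the closure $A=\overline{\langle g\rangle\,T}$ of the group generated by $g$ and $T$; it is a compact abelian subgroup because $g$ centralizes $T$. Its identity component $A^\circ$ is a compact connected abelian group, i.e. a torus, containing $T$, so maximality of $T$ gives $A^\circ=T$. The finite component group $A/T$ is generated by the image of $g$, hence cyclic, say of order $m$, with $g^m\in T$. The key device is to replace $g$ by a single topological generator of $A$: since the power map $t\mapsto t^m$ is surjective on the torus $T$, one can choose $t\in T$ so that $s:=gt$ has $s^m$ equal to a topological generator of $T$ (such generators form a dense subset, by Kronecker's theorem); then $\overline{\langle s\rangle}\supseteq\overline{\langle s^m\rangle}=T$ while $s$ projects onto a generator of $A/T$, so $\overline{\langle s\rangle}=A$.

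Finally I would invoke the earlier fact that every element of $G$ lies in some maximal torus (every conjugacy class meets $T$, so each element is conjugate into $T$): let $T^*$ be a maximal torus containing $s$. As $T^*$ is closed it contains $\overline{\langle s\rangle}=A\supseteq T$, and maximality of $T$ then forces $T^*=T$. Thus $s\in T$, so its image in $A/T$ is trivial; but that image generates $A/T$, whence $A=T$ and $g\in T$, completing the argument. The main obstacle is precisely this connectedness step: a disconnected compact abelian subgroup need not lie inside any maximal torus, so the proof cannot simply drop $A$ into a torus, and one must first manufacture the monothetic structure (the single generator $s$) before a maximal torus can be used to absorb both $g$ and $T$ at once.
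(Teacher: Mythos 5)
The paper states this proposition without any proof (it appears bare in the Weyl integration section, which is drawn from Bump's book), so there is nothing to compare against; your argument stands on its own, and it is correct and complete. It is in fact the standard textbook proof. Both stages check out: the identification of $\mathrm{Lie}(C(T))$ with the $\Ad(T)$-fixed vectors is legitimate for a closed subgroup, and the preceding Proposition's part (i) then gives $C(T)^\circ=T$; the connectedness stage is the essential content, and your monothetic trick is exactly right --- since $g$ centralizes $T$, the group $A=\overline{\langle g\rangle T}$ is compact abelian with $A^\circ=T$ by maximality, $A/T$ is cyclic generated by $gT$, and adjusting $g$ by $t\in T$ with $t^m=(g^m)^{-1}u$ (using surjectivity of the $m$-th power map on a torus and existence of a topological generator $u$) produces a single generator $s$ of $A$, which the conjugacy of maximal tori then forces into $T$. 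Your closing remark correctly identifies why the detour through $s$ is unavoidable: one cannot place the possibly disconnected group $A$ inside a maximal torus directly. Note only that your first stage ($C(T)^\circ=T$) is logically redundant, since the second stage already shows $g\in T$ for every $g\in C(T)$; it does no harm, but the proof could be shortened by omitting it.
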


\begin{prop}
There exists a dense open set $\Omega$ of $T$ such that the
$\abs{W(G)}$ elements $wtw^{-1}(w\in W(G))$ are all distinct for
$t\in \Omega$.
\end{prop}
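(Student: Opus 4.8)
The plan is to exploit the fact, just established, that $C(T)=T$, which forces the conjugation action of each nontrivial Weyl group element to move $T$ nontrivially; the ``bad'' set on which two Weyl elements coincide then turns out to be a finite union of proper closed subgroups, and proper closed subgroups of a connected torus are nowhere dense. So $\Omega$ will be obtained as the complement of this bad set.

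First I would make the $W(G)$-action on $T$ explicit. Each $w=nT\in W(G)=N(T)/T$ acts by $\phi_w(t)=ntn^{-1}$; since $T$ is abelian this is independent of the chosen representative $n\in N(T)$, and $\phi_w$ is a continuous automorphism of $T$. For $w_1,w_2\in W(G)$ the two elements $\phi_{w_1}(t)$ and $\phi_{w_2}(t)$ coincide precisely when $\phi_w(t)=t$ with $w=w_2^{-1}w_1$. Hence it suffices to control, for each $w\neq e$, the fixed-point set
\[
T_w:=\Set{t\in T:\phi_w(t)=t}.
\]
I would show each $T_w$ with $w\neq e$ is a \emph{proper} closed subgroup. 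Closedness is immediate from continuity of $\phi_w$, and $T_w$ is a subgroup because $\phi_w$ is a homomorphism. If instead $T_w=T$, then $ntn^{-1}=t$ for all $t\in T$, so $n$ commutes with every element of $T$, i.e. $n\in C(T)=T$ by the preceding proposition; but then $w=nT=e$, contradicting $w\neq e$. Therefore $T_w\subsetneq T$.

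Next I would invoke the topological fact that a proper closed subgroup $H$ of the connected group $T$ has empty interior: an open subgroup is also closed, and in a connected group the only nonempty clopen set is the whole group, so a proper $H$ cannot contain any open set and is thus nowhere dense. Since $W(G)$ is finite (a standard consequence of the compactness of $G$), the union $\bigcup_{w\neq e}T_w$ is a \emph{finite} union of nowhere dense closed sets, hence closed with empty interior. Setting $\Omega:=T\setminus\bigcup_{w\neq e}T_w$ then gives an open dense subset of $T$. Finally, for $t\in\Omega$ and $w_1\neq w_2$, the element $w=w_2^{-1}w_1\neq e$ satisfies $t\notin T_w$, so $\phi_{w_1}(t)\neq\phi_{w_2}(t)$; thus the $\abs{W(G)}$ elements $wtw^{-1}$ are pairwise distinct, as required.

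The main obstacle is the step asserting that a proper closed subgroup of $T$ is nowhere dense, whose crux is the connectedness of $T$ together with the genuine nontriviality of $\phi_w$ for $w\neq e$ — and this nontriviality is exactly where the input $C(T)=T$ enters. Everything else is routine: the well-definedness of the action, the subgroup and closedness properties of $T_w$, and the Baire-type observation that a finite union of nowhere dense closed sets has dense complement.
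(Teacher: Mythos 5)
Your proof is correct, and it reaches the density of the good set by a different lemma than the paper. Both arguments reduce to the same bad set (the fixed points of the nontrivial Weyl elements, after the reduction $w_2^{-1}w_1$) and both use $C(T)=T$ as the decisive input showing that a nontrivial $w$ cannot fix all of $T$. The paper, however, proves that each $\Omega_w=\{t\in T: wtw^{-1}\neq t\}$ is dense by a different mechanism: it checks that every topological generator $t$ of $T$ lies in $\Omega_w$ (since a representative $n$ of $w$ fixing $t$ would lie in $C(t)=C(T)=T$), and then invokes Kronecker's theorem that generators are dense in the torus. You instead note that the fixed-point set $T_w$ is a closed subgroup, proper by $C(T)=T$, and that a proper closed subgroup of a connected group is nowhere dense (a subgroup with interior is open, hence clopen, hence everything). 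Your route is marginally more elementary and more general --- it needs no Kronecker-type density of monothetic generators and would apply verbatim to any compact connected abelian $T$ --- while the paper's version is shorter given that generators of a torus and the identity $C(t)=C(T)$ for a generator $t$ are already standard tools in its toolkit. One small remark: the ``Baire-type'' step is overkill for a finite union, where the fact that a finite union of closed nowhere dense sets is nowhere dense is elementary; this is cosmetic, not a gap.
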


\begin{proof}
If $w\in W(G)$, let
$$
\Omega_w=\set{t\in T: wtw^{-1}\neq t}.
$$
It is an open subset of $T$ since its complement is evidently
closed. If $w\neq\I$ and $t$ is a generator of $T$, then
$t\in\Omega_w$ because otherwise if $n\in N(T)$ represents $w$, then
$n\in C(t)=C(T)$, so $n\in T$. This is a contradiction since
$w\neq\I$. By Kronecker Theorem, it follows that $\Omega_w$ is a
dense open set. The finite intersection
$\Omega=\bigcap_{w\neq\I}\Omega_w$ thus fits our requirements.
\end{proof}

\begin{thrm}[Weyl]
If $f$ is a class function, and if $dg$ and $dt$ are Haar measures
on $G$ and $T$ (normalized so that $G$ and $T$ have volume 1), then
\begin{eqnarray}
\int_G f(g)\dif\mu(g) = \frac1{\abs{W(G)}}\int_T
f(t)\det\Pa{\Br{\Ad(t^{-1})-\I_{\liet^\perp}}|_{\liet^\perp}}\dif t
\end{eqnarray}
\end{thrm}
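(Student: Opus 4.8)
The plan is to realize the formula as a change of variables along the conjugation map. Define $\phi:(G/T)\times T\to G$ by $\phi(gT,t)=gtg^{-1}$. First I would check this is well defined: if $g'=gh$ with $h\in T$, then $g'tg'^{-1}=g(hth^{-1})g^{-1}=gtg^{-1}$, since $h,t\in T$ commute. The three structural facts I need are then supplied by the preceding results. Surjectivity of $\phi$ holds because every conjugacy class of the compact connected group $G$ meets $T$. For the fibre count, fix a regular $t_0$ lying in the dense open set $\Omega\subseteq T$ on which the $\abs{W(G)}$ elements $wt_0w^{-1}$ are distinct; if $gtg^{-1}=g_0t_0g_0^{-1}$ then $x:=g_0^{-1}g$ conjugates $t\in T$ to $t_0$, and since the centralizer of the regular element $t_0$ is $T$, we have $xTx^{-1}\subseteq C(t_0)=T$, forcing $xTx^{-1}=T$, i.e.\ $x\in N(T)$. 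Thus the fibre is $\set{(g_0xT,\ x^{-1}t_0x): xT\in N(T)/T=W(G)}$, consisting of $\abs{W(G)}$ distinct points. Hence, off a set of measure zero, $\phi$ is an $\abs{W(G)}$-fold covering, and everything reduces to computing its Jacobian.

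Next I would compute the differential of $\phi$. By left translation it suffices to work at a point $(eT,t)$. Using the $\Ad(T)$-invariant splitting $\g=\liet\oplus\liet^\perp$ from the preceding Proposition, parametrize a neighbourhood by $(\exp X\,T,\ t\exp H)$ with $X\in\liet^\perp$, $H\in\liet$, so that $\phi(\exp X\,T,\ t\exp H)=\exp X\,t\exp H\exp(-X)$. Translating on the left by $t^{-1}$ and differentiating at $X=H=0$ yields the linear map $(X,H)\mapsto(\Ad(t^{-1})-\I)X+H$. Since $\liet^\perp$ is $\Ad(T)$-invariant and $\Ad(t)$ fixes $\liet$ pointwise, this map is block diagonal for $\g=\liet^\perp\oplus\liet$: it equals $\Br{\Ad(t^{-1})-\I_{\liet^\perp}}|_{\liet^\perp}$ on the first summand and the identity on the second. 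Therefore the Jacobian of $\phi$ at $(eT,t)$ is exactly $\det\Pa{\Br{\Ad(t^{-1})-\I_{\liet^\perp}}|_{\liet^\perp}}$. Because $G$ is compact, hence unimodular with conjugation-invariant Haar measure, and because $\phi$ intertwines the left $G$-action on $G/T$ with conjugation on $G$ while the quotient measure is $G$-invariant, this Jacobian is independent of the coset and depends on $t$ alone.

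Finally I would invoke the area formula for the generically $\abs{W(G)}$-to-one map $\phi$. For a class function $f$ one has $(f\circ\phi)(gT,t)=f(gtg^{-1})=f(t)$, independent of the coset, so integrating out the $G/T$-variable contributes only $\op{vol}(G/T)=1$ in the chosen normalization and leaves
\[
\int_G f(g)\,dg=\frac1{\abs{W(G)}}\int_T f(t)\,\abs{\det\Pa{\Br{\Ad(t^{-1})-\I_{\liet^\perp}}|_{\liet^\perp}}}\,dt.
\]
To drop the absolute value I would use part (ii) of the preceding Proposition: $\liet^\perp$ decomposes under $\Ad(T)$ into two-dimensional real irreducibles on each of which $\Ad(t)$ acts as a rotation by an angle $\theta$, so the block determinant is $2-2\cos\theta\geq0$ and the total determinant, a product of such factors, is nonnegative.

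The main obstacle is the Jacobian computation together with the bookkeeping of the three measures: one must verify that the Haar measures on $G$ and $T$ and the invariant quotient measure on $G/T$ (furnished by the earlier quotient-measure Proposition) are mutually normalized so that $\phi^*(dg)$ equals the determinant above times $d(gT)\,dt$, with no stray constant, and one must excise the measure-zero locus of non-regular $t$ before the area formula applies. Once the differential is seen to be block diagonal and the singular set is negligible by density of $\Omega$, the remaining steps are routine.
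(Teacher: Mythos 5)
Your proof follows essentially the same route as the paper's: the conjugation map $\phi(gT,t)=gtg^{-1}$, its generic $\abs{W(G)}$-fold covering property over the regular set, and the block-diagonal Jacobian computation via the $\Ad(T)$-invariant splitting $\g=\liet\oplus\liet^\perp$ after translating back to the identity. You additionally make explicit the fibre count over a regular $t_0$ (via $C(t_0)=T$) and the nonnegativity of $\det\Pa{\Br{\Ad(t^{-1})-\I_{\liet^\perp}}|_{\liet^\perp}}$ coming from the two-dimensional rotation blocks, both of which the paper's proof leaves implicit.
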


\begin{proof}
Let $\cX=G/T$. We give $\cX$ the measure $d_\cX$ invariant under
left translation by $G$ such that $\cX$ has volume 1. Consider the
map
$$
\phi: \cX\times T\to G,~~~\phi(xT,t)=xtx^{-1}.
$$
Both $\cX\times T$ and $G$ are orientable manifolds of the same
dimension. Of course, $G$ and $T$ both are given the Haar measures
such that $G$ and $T$ have volume 1.

We choose volume elements on the Lie algebras $\g$ and $\liet$ of
$G$ and $T$, respectively, so that the Jacobians of the exponential
maps $\g\to G$ and $\liet\to T$ at the identity are $\I$.

We compute the Jacobian $J\phi$ of $\phi$. Parameterize a
neighborhood of $xT$ in $\cX$ by a chart based on a neighborhood of
the origin in $\liet^\perp$. This chart is the map
$$
\liet^\perp \ni A\mapsto xe^AT.
$$
We also make use of the exponential map to parameterize a
neighborhood of $t\in T$. This is the chart $\liet\ni B\mapsto
te^B$. We therefore have the chart near the point $(xT,t)$ in
$\cX\times T$ mapping
$$
\liet^\perp\times \liet\ni (A,B)\to (xe^AT,te^B)\in\cX\times T
$$
and, in these coordinates, $\phi$ is the map
$$
(A,B)\mapsto xe^Ate^Be^{-A}x^{-1}.
$$
To compute the Jacobian of this map, we translate on the left by
$t^{-1}x^{-1}$ and on the right by $x$. There is no harm in this
because these maps are Haar isometries. We are reduced to computing
the Jacobian of the map
$$
(A,B)\mapsto t^{-1}e^Ate^Be^{-A} = e^{\Ad(t^{-1}A)}e^Be^{-A}.
$$
Identifying the tangent space of the real vector space
$\liet^\perp\times \liet$ with itself (that is, with
$\g=\liet^\perp\times \liet$), the differential of this map is
$$
A\oplus B\mapsto \Pa{\Ad(t^{-1})-\I_{\liet^\perp}}A\oplus B.
$$
The Jacobian is the determinant of the differential, so
\begin{eqnarray}\label{eq:Jacobian}
(J\phi)(xT,t) =
\det\Pa{\Br{\Ad(t^{-1})-\I_{\liet^\perp}}|_{\liet^\perp}}.
\end{eqnarray}
The map $\phi: \cX\times T\to G$ is a $\abs{W(G)}$-fold cover over a
dense open set and so, for any function $f$ on $G$, we have
\begin{eqnarray*}
\int_G f(g)\dif\mu(g) = \frac1{\abs{W(G)}}\int_{\cX\times T}
f(\phi(xT,t))J(\phi(xT,t))\dif_\cX\times \dif t.
\end{eqnarray*}
The integrand
$f(\phi(xT,t))J(\phi(xT,t))=f(t)\det\Pa{\Br{\Ad(t^{-1})-\I_{\liet^\perp}}|_{\liet^\perp}}$
 is independent of $x$ since $f$ is a class function, and the result
 follows.
\end{proof}

\begin{remark}
Let $G$ be a Lie group and $\g$ its Lie algebra. Identify both
$\rT_0\g$ and $\rT_eG$ with $\g$. Then, $(d\exp)_0: \rT_0\g\to
\rT_eG$ is the identity map. Indeed,
\begin{eqnarray*}
(d\exp)_0(A) = \left.\frac{d}{dt}\right|_{t=0} \exp(0+tA) = A.
\end{eqnarray*}
That is $(d\exp)_0$ is the identity map over $\g$.
\end{remark}

\begin{prop}
Let $G=\unitary{n}$, and let $\mathbb{T}$ be the diagonal torus.
Writing
$$
t=\mathrm{diag}(t_1,\ldots,t_n)\in \mathbb{T},
$$
and letting $\int_\mathbb{T}dt$ be the Haar measure on $\mathbb{T}$
normalized so that its volume is 1, we have
\begin{eqnarray}
\int_G f(g)\dif\mu(g) =
\frac1{n!}\int_\mathbb{T}f(t)\prod_{i<j}\abs{t_i-t_j}^2\dif t.
\end{eqnarray}
\end{prop}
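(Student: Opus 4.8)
The plan is to derive this formula as a direct specialization of the Weyl integration formula proved just above (the preceding theorem for a compact connected Lie group, with the Jacobian computed in \eqref{eq:Jacobian}) to the case $G=\unitary{n}$ and $\mathbb{T}$ the diagonal torus. Since that general theorem is already in hand, the whole task reduces to two explicit computations: determining the order of the Weyl group, and evaluating the Jacobian determinant $\det\Pa{\Br{\Ad(t^{-1})-\I_{\liet^\perp}}|_{\liet^\perp}}$ in closed form. First I would record, as already observed in the Example above, that $N(\mathbb{T})/\mathbb{T}\cong S_n$, so $\abs{W(G)}=n!$; this accounts for the prefactor $1/n!$.

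The core of the argument is the determinant computation. I would take $\g=\u(n)$, the skew-Hermitian matrices, with $\liet$ the purely imaginary diagonal matrices and $\liet^\perp$ the skew-Hermitian matrices with vanishing diagonal. To diagonalize the adjoint action I would complexify, writing $\liet^\perp\otimes\complex=\bigoplus_{i\neq j}\complex E_{ij}$ as a sum of off-diagonal root spaces. A direct calculation gives $\Ad(s)E_{ij}=(s_i/s_j)E_{ij}$, so with $s=t^{-1}$ the operator $\Ad(t^{-1})$ acts on $E_{ij}$ by the scalar $t_j/t_i$. Since the determinant of a real operator equals that of its complexification, I obtain
$$
\det\Pa{\Br{\Ad(t^{-1})-\I_{\liet^\perp}}|_{\liet^\perp}}=\prod_{i\neq j}\Pa{\frac{t_j}{t_i}-1}.
$$

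Finally I would pair the factors indexed by $(i,j)$ and $(j,i)$. Using $\abs{t_k}=1$, hence $\bar t_k=t_k^{-1}$, a one-line computation yields $\Pa{t_j/t_i-1}\Pa{t_i/t_j-1}=2-t_i/t_j-t_j/t_i=\abs{t_i-t_j}^2$, so the product over all $i\neq j$ collapses to $\prod_{i<j}\abs{t_i-t_j}^2$ (automatically real and positive, consistent with the decomposition of $\liet^\perp$ into two-dimensional real $\Ad(\mathbb{T})$-irreps). Substituting this and the prefactor $1/n!$ into the Weyl integration formula, and noting that any $f$ may be treated as a class function in that formula, produces exactly the claimed identity.

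The main obstacle I expect is the determinant step, specifically (a) correctly identifying $\liet^\perp$ together with its $\Ad(\mathbb{T})$-decomposition into the root spaces $\complex E_{ij}$, and (b) justifying the passage to the complexification to read off the eigenvalues $t_j/t_i$, since the Weyl formula is stated over the \emph{real} Lie algebra. Once the eigenvalues are secured, the reorganization of $\prod_{i\neq j}$ into $\prod_{i<j}$ of a modulus-squared, and the verification of the measure normalizations, are routine.
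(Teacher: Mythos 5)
Your proposal is correct and follows essentially the same route as the paper: specialize the Weyl integration formula, use $N(\mathbb{T})/\mathbb{T}\cong S_n$ to get the factor $1/n!$, complexify $\liet^\perp$ into the root spaces $\complex E_{ij}$ ($i\neq j$) to read off the eigenvalues of $\Ad(t^{-1})$, and pair the $(i,j)$ and $(j,i)$ factors using $\abs{t_k}=1$ to obtain $\prod_{i<j}\abs{t_i-t_j}^2$. The only caveat is your closing remark that ``any $f$ may be treated as a class function'': the identity, like the Weyl formula it specializes, holds only for class functions, as the paper makes explicit in the corollary that follows.
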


\begin{proof}
We need to check that
$$
\det\Pa{\Br{\Ad(t^{-1})-\I_{\liet^\perp}}|_{\liet^\perp}} =
\prod_{i<j}\abs{t_i-t_j}^2.
$$
To compute this determinant, we may as well consider the linear
transformation induced by $\Ad(t^{-1})-\I_{\liet^\perp}$ on the
complexified vector space $\complex\ot \liet^\perp$. We may identify
$\complex\ot\u(n)$ with $\gl(n,\complex)=M_n(\complex)$. We recall
that $\complex\ot \liet^\perp$ is spanned by the $T$-eigenspaces in
$\complex\ot\u(n)$ corresponding to nontrivial characters of $T$.
There are spanned by the elementary matrices $E_{ij}$ with a 1 in
the $(i,j)$-th position and zeros elsewhere, where $1\leqslant
i,j\leqslant n$ and $i\neq j$. The eigenvalue of $t$ on $E_{ij}$ is
$t_it^{-1}_j$. Hence
\begin{eqnarray*}
\det\Pa{\Br{\Ad(t^{-1})-\I_{\liet^\perp}}|_{\liet^\perp}}
=\prod_{i\neq j}(t_it^{-1}_j - 1) = \prod_{i<j}(t_it^{-1}_j -
1)(t_jt^{-1}_i - 1).
\end{eqnarray*}
Since $\abs{t_i}=\abs{t_j}=1$, we have
$$
(t_it^{-1}_j - 1)(t_jt^{-1}_i - 1) =
(t_i-t_j)(t^{-1}_i-t^{-1}_j)=\abs{t_i-t_j}^2.
$$
This completes the proof.
\end{proof}

\begin{remark}
Let $G=\U(1)=\mathbb{S}^1$, $\rho_n:\mathbb{S}^1\to
\rG\rL(1,\complex)$ be given by $\rho_n(e^{\sqrt{-1}\theta}) =
e^{\sqrt{-1}n\theta}$. Then $\dif\mu(g)=\dif\theta/2\pi$ and
$$
\frac1{2\pi}\int^{2\pi}_{0}
e^{\sqrt{-1}n\theta}e^{-\sqrt{-1}m\theta}\dif\theta=\delta_{mn}.
$$
\end{remark}

\begin{cor}\label{cor:class-integration}
If $f$ is a class function over $\mathsf{U}(n)$, then
\begin{eqnarray}
\int_{\mathsf{U}(n)} f(\bsU)\dif\mu(\bsU) &=&
\frac1{n!}\int_{\mathbb{T}^n}f(\bsD(\theta))J(\theta)\dif\bsD(\theta)\\
&=&\frac1{(2\pi)^nn!}\overbrace{\int^{2\pi}_0\cdots\int^{2\pi}_0}^{n}
f(\bsD(\theta))J(\theta)\dif\theta_1\cdots \dif\theta_n,
\end{eqnarray}
where
$$
\bsD(\theta):=
\mathrm{diag}\Pa{e^{\sqrt{-1}\theta_1},\ldots,e^{\sqrt{-1}\theta_n}}~~\text{and}~~J(\theta)
:=
            \prod_{i<j}\abs{e^{\sqrt{-1}\theta_i}-e^{\sqrt{-1}\theta_j}}^2.
$$
\end{cor}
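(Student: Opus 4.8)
The plan is to obtain this corollary as an immediate specialization of the proposition just established for $\unitary{n}$, rewriting the integral over the diagonal torus in explicit angular coordinates. Since the delicate part of the work — the evaluation of the Jacobian determinant $\det\Pa{\Br{\Ad(t^{-1})-\I_{\liet^\perp}}|_{\liet^\perp}} = \prod_{i<j}\abs{t_i-t_j}^2$ — has already been carried out, what remains is little more than a change of variables, so I do not anticipate any real obstacle.

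First I would recall the preceding proposition, which asserts that for any class function $f$,
$$
\int_{\unitary{n}} f(u)du = \frac1{n!}\int_{\mathbb{T}}f(t)\prod_{i<j}\abs{t_i-t_j}^2\,dt,
$$
the factor $n!$ being the order of the Weyl group $W(\unitary{n}) = N(\mathbb{T})/\mathbb{T}\cong S_n$, as computed in the earlier example on monomial matrices. Writing each diagonal entry as $t_j = e^{\sqrt{-1}\theta_j}$ with $\theta_j\in[0,2\pi)$ identifies $\mathbb{T}$ with the $n$-torus $\mathbb{T}^n$ and gives $t = D(\theta)$ together with $\prod_{i<j}\abs{t_i-t_j}^2 = J(\theta)$; this already yields the first displayed equality of the corollary, with $dt = dD(\theta)$.

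Next I would pin down the normalized Haar measure on $\mathbb{T}$ in these coordinates. The key point is that $\mathbb{T}$ is a compact abelian group isomorphic to a product of $n$ copies of $\unitary{1} = \mathbb{S}^1$, and the normalized Haar measure on such a product is the product of the normalized Haar measures on the factors. Invoking the remark that the normalized Haar measure on $\unitary{1}$ is $d\theta/2\pi$, together with the uniqueness of Haar measure up to a positive scale, I conclude that $dD(\theta) = \frac1{(2\pi)^n}\,d\theta_1\cdots d\theta_n$.

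Finally I would substitute this product measure into the first equality to produce the second displayed form of the corollary. The whole argument is a direct unwinding of definitions; the only genuine ingredient is recognizing that the torus measure factorizes, and this is forced by the uniqueness of Haar measure, so the corollary is essentially a restatement of the preceding proposition in concrete coordinates.
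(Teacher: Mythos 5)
Your argument is correct and coincides with what the paper does: the corollary is obtained by specializing the preceding proposition for $\unitary{n}$ and writing the normalized Haar measure on the diagonal torus as the product measure $\frac1{(2\pi)^n}d\theta_1\cdots d\theta_n$, exactly as the paper spells out in the remark immediately following the corollary. No gap.
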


\begin{remark}
We know that for one-dimensional torus, the normalized Haar measure
is defined as $\dif\mu(u):=\frac{\dif\theta}{2\pi}$ over $\U(1)$.
This implies that for $n$-dimensional torus of $\U(n)$:
$$
\mathbb{T}^n=\overbrace{\U(1)\times\cdots\times \U(1)}^n,
$$
the normalized Haar measure is given by the product measure of $n$
one-dimensional measures of $\U(1)$. Thus for
$\bsD(\theta)\in\mathbb{T}^n$, described by $\bsD(\theta)=
u_1(\theta)\times\cdots u_n(\theta)$ with $\dif
u_j(\theta)=\dif\theta_j/2\pi$, the normalized Haar measure is
defined as
\begin{eqnarray*}
\dif \bsD(\theta) := \dif u_1\times\cdots \times \dif u_n =
\frac{\dif\theta_1}{2\pi}\times\cdots\times\frac{\dif\theta_n}{2\pi}
= \frac1{(2\pi)^n}\dif\theta_1\cdots \dif\theta_n =
\frac1{(2\pi)^n}\dif\theta,
\end{eqnarray*}
where $\dif\theta:= \dif\theta_1\cdots \dif\theta_n$.
\end{remark}
In Corollary~\ref{cor:class-integration}, assume that $f\equiv1$,
then we have
\begin{eqnarray*}
1 &=& \int_{\U(n)}\dif\mu(\bsU) =
\frac1{n!}\int_{\mathbb{T}^n}J(\theta)\dif \bsD(\theta)\\
&=&\frac1{(2\pi)^nn!}\overbrace{\int^{2\pi}_0\cdots\int^{2\pi}_0}^{n}
J(\theta)\dif\theta_1\cdots \dif\theta_n,
\end{eqnarray*}
implying
\begin{eqnarray}\label{eq:n!}
n!=\int_{\mathbb{T}^n}J(\theta)\dif\bsD(\theta) =
\frac1{(2\pi)^n}\overbrace{\int^{2\pi}_0\cdots\int^{2\pi}_0}^{n}
J(\theta)\dif\theta.
\end{eqnarray}
In what follows, we give a check on the identity in \eqref{eq:n!}.
Here is another way of writing $J(\theta)$, which is useful. Set
$e^{\sqrt{-1}\theta_j}=\zeta_j$. Then
\begin{eqnarray*}
J(\theta) = V(\zeta)V(\bar\zeta),
\end{eqnarray*}
where
$$
V(\zeta):=V(\zeta_1,\ldots,\zeta_n) = \prod_{1\leqslant i<j\leqslant
n} (\zeta_j-\zeta_i).
$$
Now $V(\zeta)$ is a Vandermonde determinant:
$$
V(\zeta) = \det\left(
             \begin{array}{cccc}
               1 & 1 & \cdots & 1 \\
               \zeta_1 & \zeta_2 & \cdots & \zeta_n \\
               \vdots & \vdots & \ddots & \vdots \\
               \zeta^{n-1}_1 & \zeta^{n-1}_2 & \cdots & \zeta^{n-1}_n \\
             \end{array}
           \right).
$$
Define $a_{ij}:=\zeta^{i-1}_j~(i,j\in\set{1,\ldots,n})$. We can form
a $n\times n$ matrix $\bsA=[a_{ij}]$ in terms of $a_{ij}$.
Apparently, $V(\zeta)=\det(\bsA)$. According to the definition of
determinant, the expansion of a determinant can be given by
\begin{eqnarray*}
\det(\bsA) = \sum_{\pi\in S_n}\sign(\pi)a_{\pi(1)1}\cdots
a_{\pi(n)n}.
\end{eqnarray*}
Now that $a_{\pi(i)j}=\zeta^{\pi(i)-1}_j$. We thus obtain that
\begin{eqnarray*}
V(\zeta) = \sum_{\pi\in S_n}
\sign(\pi)\zeta^{\pi(1)-1}_1\zeta^{\pi(2)-1}_2\cdots\zeta^{\pi(n)-1}_n.
\end{eqnarray*}
Now $\bar\zeta_j=\zeta^{-1}_j$ for $\zeta_j\in \unitary{1}$, so
\begin{eqnarray*}
J(\theta) = \sum_{(\pi,\sigma)\in S_n\times S_n}
\sign(\pi)\sign(\sigma)\zeta^{\pi(1)-\sigma(1)}_1\zeta^{\pi(2)-\sigma(2)}_2\cdots\zeta^{\pi(n)-\sigma(n)}_n.
\end{eqnarray*}
Hence
\begin{eqnarray*}
&&\int_{\mathbb{T}^n}J(\theta)\dif\bsD(\theta)=\sum_{(\pi,\sigma)\in
S_n\times S_n}\sign(\pi)\sign(\sigma)\int_{\mathbb{T}^n}\dif
\bsD(\theta)\Pa{
\zeta^{\pi(1)-\sigma(1)}_1\zeta^{\pi(2)-\sigma(2)}_2\cdots\zeta^{\pi(n)-\sigma(n)}_n}\\
&&=\sum_{(\pi,\sigma)\in S_n\times S_n}\sign(\pi)\sign(\sigma)
\Pa{\frac1{2\pi}\int^{2\pi}_0e^{\sqrt{-1}(\pi(1)-\sigma(1))\theta}\dif\theta_1}\times\cdots\times\Pa{\frac1{2\pi}\int^{2\pi}_0e^{\sqrt{-1}(\pi(n)-\sigma(n))\theta}\dif\theta_n}\\
&&=\sum_{(\pi,\sigma)\in S_n\times
S_n}\sign(\pi)\sign(\sigma)\delta_{\pi(1)\sigma(1)}\cdots\delta_{\pi(n)\sigma(n)}=\sum_{\pi\in
S_n}1 = n!,
\end{eqnarray*}
where we used the fact that
$$
\frac1{2\pi}\int^{2\pi}_0e^{\sqrt{-1}(\pi(k)-\sigma(k))\theta}\dif\theta_k=\delta_{\pi(k)\sigma(k)}.
$$
Denote $\theta = (\theta_1,\ldots,\theta_n)$ and define functionals
$\alpha_{ij}(\theta) = \theta_i-\theta_j$. We mention another way of
writing $J(\theta)$, i.e.
\begin{eqnarray*}
J(\theta) = A(\theta)\overline{A(\theta)},
\end{eqnarray*}
where
$$
A(\theta) := \prod_{i<j}\Pa{1 - e^{-\sqrt{-1}\alpha_{ij}(\theta)}}.
$$
\begin{remark}
In fact, by using Selberg's integral \cite[Eq.(17.7.1),
pp323]{Mehta}, one gets more in the following:
\begin{eqnarray*}
\frac1{(2\pi)^n}\int^{2\pi}_0\cdots \int^{2\pi}_0 \prod_{1\leqslant
i<j\leqslant
n}\abs{e^{\sqrt{-1}\theta_i}-e^{\sqrt{-1}\theta_j}}^{2\gamma}
\dif\theta_1\cdots \dif\theta_n = \frac{(n\gamma)!}{(\gamma!)^n},
\end{eqnarray*}
where $\gamma\in\natural$. Then, letting $\gamma=1$ gives that
\begin{eqnarray*}
\frac1{(2\pi)^n}\int^{2\pi}_0\cdots \int^{2\pi}_0 \prod_{1\leqslant
i<j\leqslant n}\abs{e^{\sqrt{-1}\theta_i}-e^{\sqrt{-1}\theta_j}}^2
\dif\theta_1\cdots \dif\theta_n = n!,
\end{eqnarray*}
as obtained in the above remark. In addition, there is another
integral of interest is the following \cite[Eq.(17.11.11),
pp331]{Mehta}:
\begin{eqnarray*}
\cI_n(k,\gamma):=\frac1{(2\pi)^n}\frac{(\gamma!)^n}{(n\gamma)!}\int^{2\pi}_0\cdots
\int^{2\pi}_0
\abs{\sum^n_{k=1}e^{\sqrt{-1}\theta_k}}^{2k}\prod_{1\leqslant
i<j\leqslant
n}\abs{e^{\sqrt{-1}\theta_i}-e^{\sqrt{-1}\theta_j}}^{2\gamma}
\dif\theta_1\cdots \dif\theta_n.
\end{eqnarray*}
In fact, if $\gamma=1$, then
\begin{eqnarray*}
\cI_n(k,1)= \int_{\U(n)} \abs{\Tr{\bsU}}^{2k}\dif\mu(\bsU).
\end{eqnarray*}
For $0\leqslant k\leqslant n$, we have $\cI_n(k,1)=k!$. But however
$\cI_n(k,\gamma)$ is not known for general $\gamma>1$.
\end{remark}

\section{Character tables of the permutation groups}\label{app:irrech}

If $K$ is a conjugacy class of a finite group $G$ and $\chi$ is a
character, we can define $\chi_K$ to be the value of the given
character on the given class: $\chi_K=\chi(\pi)$ for any $\pi\in K$.
\begin{definition}[Character table]
Let $G$ be a finite group. The \emph{character table} of $G$ is an
array with rows indexed by the inequivalent irreducible characters
of $G$ and columns indexed by the conjugacy classes. The table entry
in row $\chi$ and column $K$ is $\chi_K$:
\begin{table}[h]
\centering
\begin{tabular}{c|ccc}
   & $\cdots$ & $K$ & $\cdots$\\
 \hline $\vdots$ &  &  &  \\
  $\chi$ &  & $\chi_K$ & \\
  $\vdots$ &  &  &
\end{tabular}\caption{The character table of a group $G$.}
\end{table}

By convention, the first row corresponds to the trivial character,
and the first column corresponds to the class of the identity,
$K=\set{\mathrm{id}}$.
\end{definition}

Since the number of inequivalent irreducible representations of a
finite group $G$ is equal to the number of conjugacy classes, so
\blue{the character table is always square}. Due to this fact, we
can write such character table in terms of a square matrix $\bsT$
with some orders prescribed to rows and columns, respectively. We
will see this point in the following each example.

Next we turn to consider the character table for $G=S_k$, the
permutation group of $k$th order. We denote the character table of
$S_k$ by $\bsT_k$. All irreducible characters $\chi_{\lambda_i}$
(each corresponding to a partition $\lambda_i \vdash k$) of $S_k$
are abbreviated as $\chi_i$. The following properties of irreducible
characters will be used in the calculation:
\begin{itemize}
\item $\Inner{\chi_i}{\chi_j}=\delta_{ij}$, where $\chi_{\lambda_i}, \chi_{\lambda_j}$ are irreducible characters of
$S_k$.
\item
$d^{c(\pi)}=\sum_{\lambda\vdash_dk}d_\lambda\chi_\lambda(\pi)$,
where $d_\lambda=\dim(\bQ_\lambda)$.
\item $\sum_{\chi\in\widehat{S}_k}
\chi(K_i)\chi(K_j)=\frac{\abs{S_k}}{\abs{K_i}}\delta_{ij}$, where
$K_i,K_j$ are conjugacy classes of $S_k$.
\end{itemize}

Let $p_0(x_1,\ldots,x_k)\equiv1$ and
$$
p_j(x_1,\ldots,x_k):=x^j_1+x^j_2+\cdots+x^j_k
$$
for $j=1,2,\ldots$. If
$\lambda=(\lambda_1,\ldots,\lambda_d)\vdash k$, the power sum
symmetric polynomial associated with $\lambda$ is given by
\begin{eqnarray}
p_\lambda(x_1,\ldots,x_k) \defeq p_{\lambda_1}(x_1,\ldots,x_k)
p_{\lambda_2}(x_1,\ldots,x_k)\cdots p_{\lambda_d}(x_1,\ldots,x_k).
\end{eqnarray}
Once $k$ is fixed, $\delta$ denote the partition
$\delta=(k-1,k-2,\ldots,1,0)$. Then the partition
$\lambda=(\lambda_1,\lambda_2,\ldots,\lambda_k)$ satisfies the
condition
$\lambda_1>\lambda_2>\cdots>\lambda_{k-1}>\lambda_k\geqslant0$ if
and only if
$$
\lambda = \mu+\delta =
(\mu_1+k-1,\mu_2+k-2,\ldots,\mu_{k-1}+1,\mu_k),
$$
where $\mu=(\mu_1,\mu_2,\ldots,\mu_k)$ is an arbitrary $k$-parts
partition with $\mu_1\geqslant\mu_2\geqslant\cdots\geqslant
\mu_k\geqslant0$.

Denote $V_{\delta}(x_1,\ldots,x_k):=\prod_{1\leqslant i<j\leqslant
k}(x_i-x_j)$. In order to get character tables for the permutation
groups, we need the following result:
\begin{prop}[Frobenius character formula]
For all partitions $\lambda,\mu\vdash k$, we have that
\begin{eqnarray}
p_\mu(x_1,\ldots,x_k) V_{\delta}(x_1,\ldots,x_k)=
\sum_{\lambda\vdash
k}\chi_\lambda(K_\mu)V_{\lambda+\delta}(x_1,\ldots,x_k)\quad(\forall
\mu\vdash k),
\end{eqnarray}
where $K_\mu$ is the conjugacy class with given cycle type $\mu$;
and $\delta=(k-1,k-2,\ldots,1,0)$.
\end{prop}
Apparently $\chi_\lambda(K_\mu)$ is an entry from character table
matrix form $\bsT_k$. Note that by conventions in the character table, the first column $\chi_\lambda$ is odered lexicographically from top to bottom in terms of $\lambda$, i.e., $(n)>\cdots>(1,\ldots,1)$; and the first row $K_\mu$ is ordered lexicographically from left to right in terms of $\mu$, i.e., $(1^n)<\cdots<(n)$. We can see it below. In fact, we can use the mathematical software
\textsc{Mathematica} to obtain character tables of permutation
groups of various orders.

\subsection{The case where $k=2$}

For the permutation group $S_2$, let $K_{(1^2)}=\set{(1)(2)}$ and
$K_{(2^1)}=\set{(12)}$ such that $S_2=K_{(1^2)}\cup K_{(2^1)}$. We have
the following character table:
\begin{table}[h]
\centering
\begin{tabular}{c|cc}
  & $K_{(1^2)}$ & $K_{(2^1)}$ \\
 \hline $\chi_{(2)}$ & $1$ & $1$ \\
   $\chi_{(1,1)}$ & $1$ & $-1$
\end{tabular}\caption{The character table of $S_2$.}
\end{table}

\subsection{The case where $k=3$}

For the permutation group $S_3$, let $K_{(1^3)}=\set{(1)(2)(3)},
K_{(1^12^1)}=\set{(12),(13),(23)}$, and $K_{(3^1)}=\set{(123),(132)}$ such that
$S_2=K_{(1^3)}\cup K_{(1^12^1)}\cup K_{(3^1)}$. We have the following
character table:
\begin{table}[h]\label{tab:S3}
\centering
\begin{tabular}{c|ccc}
  & $K_{(1^3)}$ & $K_{(1^12^1)}$ & $K_{(3^1)}$\\
 \hline $\chi_{(3)}$ & $1$ & $1$ & $1$\\
   $\chi_{(2,1)}$ & $2$ & $0$ & $-1$\\
   $\chi_{(1,1,1)}$ & $1$ & $-1$ & $1$
\end{tabular}\caption{The character table of $S_3$.}
\end{table}

\subsection{The case where $k=4$}

For the permutation group $S_4$, let
\begin{eqnarray*}
K_1 &=& \set{(1)},\\
K_2 &=& \set{(12),(13),(14),(23),(24),(34)},\\
K_3 &=& \set{(12)(34),(13)(24),(14)(23)}, \\
K_4 &=& \Set{(123),(132),(124),(142),(134),(143),(234),(243)}, \\
K_5 &=& \Set{(1234),(1243),(1324),(1342),(1423),(1432)}.
\end{eqnarray*}
These five conjugacy classes are such that $S_4=K_1\cup K_2\cup
K_3\cup K_4\cup K_5$. Let
$\lambda_1=(4),\lambda_2=(3,1),\lambda_3=(2,2),\lambda_4=(2,1,1)$,
and $\lambda_5=(1,1,1,1)$ be the partitions of $4$. In fact,
$d^{c(\pi)}=\sum^5_{i=1}d_{\lambda_i}\chi_{\lambda_i}(\pi)$ implies
that
\begin{eqnarray*}
d^{c(\pi)}=\sum^5_{i=1}d_{\lambda_i}\chi_{\lambda_i}(\pi) =
(d_{\lambda_1},d_{\lambda_2},d_{\lambda_3},d_{\lambda_4},d_{\lambda_5})\Pa{\begin{array}{c}
                                                                  \chi_{\lambda_1}(\pi) \\
                                                                  \chi_{\lambda_2}(\pi) \\
                                                                  \chi_{\lambda_3}(\pi) \\
                                                                  \chi_{\lambda_4}(\pi) \\
                                                                  \chi_{\lambda_5}(\pi)
                                                                \end{array}
}.
\end{eqnarray*}
Let $\pi_1=(1),\pi_2=(12),\pi_3=(12)(34),\pi_4=(123),\pi_5=(1234)$.
Then $\pi_i\in K_i$ for $i=1,\ldots,5$. Thus
\begin{eqnarray*}
(d^{c(\pi_1)},\ldots,d^{c(\pi_5)}) =
(d_{\lambda_1},\ldots,d_{\lambda_5})\Pa{\begin{array}{ccccc}
                                                                  \chi_{\lambda_1}(\pi_1) &\chi_{\lambda_1}(\pi_2)&\chi_{\lambda_1}(\pi_3)&\chi_{\lambda_1}(\pi_4)&\chi_{\lambda_1}(\pi_5)\\
                                                                  \chi_{\lambda_2}(\pi_1) &\chi_{\lambda_2}(\pi_2)&\chi_{\lambda_2}(\pi_3)&\chi_{\lambda_2}(\pi_4)&\chi_{\lambda_2}(\pi_5)\\
                                                                  \chi_{\lambda_3}(\pi_1) &\chi_{\lambda_3}(\pi_2)&\chi_{\lambda_3}(\pi_3)&\chi_{\lambda_3}(\pi_4)&\chi_{\lambda_3}(\pi_5)\\
                                                                  \chi_{\lambda_4}(\pi_1) &\chi_{\lambda_4}(\pi_2)&\chi_{\lambda_4}(\pi_3)&\chi_{\lambda_4}(\pi_4)&\chi_{\lambda_4}(\pi_5)\\
                                                                  \chi_{\lambda_5}(\pi_1) &\chi_{\lambda_5}(\pi_2)&\chi_{\lambda_5}(\pi_3)&\chi_{\lambda_5}(\pi_4)&\chi_{\lambda_5}(\pi_5)\\
                                                                \end{array}
}.
\end{eqnarray*}
We have the following character table:
\begin{table}[h]
\centering
\begin{tabular}{c|ccccc}
   & $K_1$ & $K_2$ & $K_3$ & $K_4$ & $K_5$\\
 \hline $\chi_1$ & $1$ & $1$ & $1$ & $1$ & $1$\\
   $\chi_2$ & $3$ & $1$ & $-1$ & $0$ & $-1$\\
   $\chi_3$ & $2$ & $0$ & $2$ & $-1$ & $0$\\
   $\chi_4$ & $3$ & $-1$ & $-1$ & $0$ & $1$\\
   $\chi_5$ & $1$ & $-1$ & $1$ & $1$ & $-1$
\end{tabular}\caption{The character table of $S_4$.}
\end{table}

\subsection{The case where $k=5$}
For the permutation group $S_5$, let
\begin{eqnarray*}
K_1 &=& \set{(1)},\\
K_2 &=& \set{(12),(13),(14),(15),(23),(24),(25),(34),(35),(45)},\\
K_3 &=& \Big\{(12)(34),(12)(35),(12)(45),(13)(24),(13)(25),(13)(45),\\
&&~~~(14)(23),(14)(25),(14)(35),(15)(23),(15)(24),(15)(34),\\
&&~~~(23)(45),(24)(35),(25)(34)\Big\}, \\
K_4 &=& \Big\{(123),(132),(124),(142),(125),(152),(134),(143),(135),(153),\\
&&~~~(145),(154),(234),(243),(235),(253),(245),(254),(345),(354)\Big\},\\
K_5 &=& \Big\{(123)(45),(132)(45),(124)(35),(142)(35),(125)(34),(152)(34),(134)(25),\\
&&~~~(143)(25),(135)(24),(153)(24),(145)(23),(154)(23),(234)(15),(243)(15),\\
&&~~~(235)(14),(253)(14),(245)(13),(254)(13),(345)(12),(354)(12)\Big\}, \\
K_6 &=& \Big\{(1234),(1235),(1243),(1245),(1253),(1254),(1342),(1352),(1345),(1354),\\
&&~~~(1324),(1325),(1432),(1452),(1453),(1435),(1423),(1425),(1532),(1542),\\
&&~~~(1543),(1534),(1523),(1524),(2345),(2354),(2435),(2453),(2534),(2543)\Big\},\\
K_7 &=& \Big\{(12345),(12354),(12453),(12435),(12543),(12534),(13452),(13542),\\
&&~~~(13245),(13524),(13254),(13425),(14532),(14352),(14523),(14235),\\
&&~~~(14253),(14325),(15432),(15342),(15423),(15234),(15243),(15324)\Big\}
\end{eqnarray*}
These seven conjugacy classes are such that $S_5=\cup^7_{i=1}K_i$.
All partitions of $5$ are listed below:
\begin{eqnarray*}
\lambda_1=(5),\lambda_2=(4,1),\lambda_3=(3,2),\lambda_4=(3,1,1),\\
\lambda_5=(2,2,1),\lambda_6=(2,1,1,1),\lambda_7=(1,1,1,1,1).
\end{eqnarray*}
The character table of $S_5$ can be obtained immediately.
\begin{table}[h]
\centering
\begin{tabular}{c|ccccccc}
 & $K_1$ & $K_2$ & $K_3$ & $K_4$ & $K_5$ & $K_6$ & $K_7$\\
\hline $\chi_1$  & $1$ & $1$ & $1$ & $1$ & $1$ & $1$ & $1$ \\
 $\chi_2$  & $4$ & $2$ & $0$ & $1$ & $-1$ & $0$ & $-1$ \\
 $\chi_3$  & $5$ & $1$ & $1$ & $-1$ & $1$ & $-1$ & $0$ \\
 $\chi_4$  & $6$ & $0$ & $-2$ & $0$ & $0$ & $0$ & $1$ \\
 $\chi_5$  & $5$ & $-1$ & $1$ & $-1$ & $-1$ & $1$ & $0$ \\
 $\chi_6$  & $4$ & $-2$ & $0$ & $1$ & $1$ & $0$ & $-1$ \\
$\chi_7$  & $1$ & $-1$ & $1$ & $1$ & $-1$ & $-1$ & $1$
\end{tabular}\caption{The character table of $S_5$.}
\end{table}

\subsection{The case where $k=6$}

Since the order of the permutation group $S_6$ is $6!=720$, for
convenience, we just list a representative for each conjugacy class.
Let
\begin{eqnarray*}
K_1=\set{(1)}, K_2=\set{(12)}, K_3=\set{(12)(34)},
K_4=\set{(12)(34)(56)},\\
K_5=\set{(123)},K_6=\set{(123)(45)},K_7=\set{(123)(456)},K_8=\set{(1234)},\\
K_9=\set{(1234)(56)},K_{10}=\set{(12345)},K_{11}=\set{(123456)}.
\end{eqnarray*}
All partitions of $6$ are listed below:
\begin{eqnarray*}
\lambda_1=(6),\lambda_2=(5,1),\lambda_3=(4,2),\lambda_4=(4,1,1),\\
\lambda_5=(3,3),\lambda_6=(3,2,1),\lambda_7=(3,1,1,1),\lambda_8=(2,2,2)\\
\lambda_9=(2,2,1,1),\lambda_{10}=(2,1,1,1,1),\lambda_{11}=(1,1,1,1,1,1).
\end{eqnarray*}
The character table of $S_6$ can be given below:
\begin{table}[h]
\centering
\begin{tabular}{c|ccccccccccc}
 & $K_1$ & $K_2$ & $K_3$ &  $K_4$ & $K_5$ & $K_6$ & $K_7$ &
$K_8$ &  $K_9$ & $K_{10}$ &  $K_{11}$ \\
\hline $\chi_1$ & $1$ & $1$ & $1$ & $1$ & $1$ & $1$ & $1$ & $1$ & $1$ & $1$ & $1$  \\
 $\chi_2$ & $5$ & $3$ & $1$ & $-1$ & $2$ & $0$ & $-1$ & $1$ & $-1$ & $0$ & $-1$  \\
 $\chi_3$ & $9$ & $3$ & $1$ & $3$ & $0$ & $0$ & $0$ & $-1$ & $1$ & $-1$ & $0$  \\
 $\chi_4$ & $10$ & $2$ & $-2$ & $-2$ & $1$ & $-1$ & $1$ & $0$ & $0$ & $0$ & $1$  \\
 $\chi_5$ & $5$ & $1$ & $1$ & $-3$ & $-1$ & $1$ & $2$ & $-1$ & $-1$ & $0$ & $0$  \\
 $\chi_6$ & $16$ & $0$ & $0$ & $0$ & $-2$ & $0$ & $-2$ & $0$ & $0$ & $1$ & $0$  \\
 $\chi_7$ & $10$ & $-2$ & $-2$ & $2$ & $1$ & $1$ & $1$ & $0$ & $0$ & $0$ & $-1$  \\
 $\chi_8$ & $5$ & $-1$ & $1$ & $3$ & $-1$ & $-1$ & $2$ & $1$ & $-1$ & $0$ & $0$  \\
 $\chi_9$ & $9$ & $-3$ & $1$ & $-3$ & $0$ & $0$ & $0$ & $1$ & $1$ & $-1$ & $0$  \\
 $\chi_{10}$ & $5$ & $-3$ & $1$ & $1$ & $2$ & $0$ & $-1$ & $-1$ & $-1$ & $0$ & $1$  \\
 $\chi_{11}$ & $1$ & $-1$ & $1$ & $-1$ & $1$ & $-1$ & $1$ & $-1$ & $1$ & $1$ & $-1$
\end{tabular}\caption{The character table of $S_6$.}
\end{table}



\begin{thebibliography}{99}

\bibitem{Keyl}
M. Keyl and R.F. Werner, {\em Estimating the spectrum of a density
operator}, \pra
\href{http://doi.dx.org/10.1103/PhysRevA.64.052311}{\textbf{64},
052311 (2001).}

\bibitem{Harrow}
A. Harrow,
\newblock {\em Applications of coherent classical communication and
the Schur transform to quantum information theory,}
\newblock PhD thesis, \href{http://arxiv.org/abs/quant-ph/0512255}{arXiv: quant-ph/0512255}

\bibitem{Bacon}
D. Bacon, I.L. Chuang, A. Harrow, {\em Efficient Quantum Circuits
for Schur and Clebsch-Gordan Transforms}, \prl
\href{http://doi.dx.org/10.1103/PhysRevLett.97.170502}{\textbf{97},
170502 (2006).}

\bibitem{Christ2006}
M. Christandl,
\newblock {\em The Structure of Bipartite Quantum States: Insights from
Group Theory and Cryptography,}
\newblock PhD thesis, \href{http://arxiv.org/abs/quant-ph/0604183}{arXiv: quant-ph/0604183}

\bibitem{Christ2012}
M. Christandl, M.B. \c{S}ahino\v{g}lu, and M. Walter, {\em
Recoupling coefficients and quantum entropies}, Ann. Henri
Poincar\'{e}
\href{https://doi.org/10.1007/s00023-017-0639-1}{\textbf{19},
385-410 (2018)}

\bibitem{Gour}
G. Gour and N.R. Wallach, {\em Classification of Multipartite
Entanglement of All Finite Dimensionality}, \prl
\href{http://doi.dx.org/10.1103/PhysRevLett.111.060502}{\textbf{111},
060502 (2013).}

\bibitem{Mashhad}
I.M. Marvian, {\em Symmetry, Asymmetry and Quantum Information}, PhD
thesis, University of Waterloo (2012).

\bibitem{Spekkens}
I.M. Marvian, R.W. Spekkens, {\em A generalization of Schur-Weyl
duality with applications in quantum estimation}, \cmp
\href{http://doi.dx.org/10.1007/s00220-014-2059-0}{\textbf{331}, 431
(2014).}


\bibitem{Neeb}
D. Beltit\v{a}, and K.-H. Neeb, {\em Schur-Weyl Theory for
$C^*$-algebras}, Mathematische Nachrichten
\href{http://doi.dx.org/10.1002/mana.201100114}{\textbf{285}(10):
1170-1198 (2012).}

\bibitem{Sepanski}
M.R. Sepanski, {\em Compact Lie group}, Springer-Verlag New York
Inc. (2006).

\bibitem{Wallach}
R. Goodman and N.R. Wallach, {\em Symmetry, Representations, and Invariants}, Springer-Verlag New York Inc. (2009).

\bibitem{Frank}
R. Frank and E. Lieb, {\em Monotonicity of a relative R\'{e}nyi
entropy}, \jmp
\href{http://dx.doi.org/10.1063/1.4838835}{\textbf{54}, 122201
(2013).}

\bibitem{Lin}
L. Zhang, A.K. Pati, J. Wu, {\em Interference visibility as a
witness of quantum correlation}, \pra
\href{http://doi.dx.org/10.1103/PhysRevA.92.022316}{\textbf{92},
022316 (2015).}


\bibitem{Roga}
W. Roga, K. \.{Z}yczkowski, and M. Fannes, {\em Entropic
characterization of quantum operations}, Int. J. Quantum Inform.
\href{https://doi.org/10.1142/S0219749911007794}{\textbf{9}, 1031
(2011).}

\bibitem{Dupuis}
F. Dupuis, M. Berta, J. Wullschleger, R. Renner, {\em One-shot
decoupling}, \cmp
\href{http://doi.dx.org/10.1007/s00220-014-1990-4}{\textbf{328}, 251
(2014).}

\bibitem{Winter}
W. Matthews, S. Wehner, and A. Winter, {\em Distinguishability of
Quantum States Under Restricted Families of Measurements with an
Application to Quantum Data Hiding}, \cmp
\href{http://doi.dx.org/10.1007/s00220-009-0890-5}{\textbf{291}(3):
813-843 (2009).}

\bibitem{Benoit}
B. Collins, {\em Moments and Cumulants of Polynomial Random
Variables on Unitary Groups, the Itzykson-Zuber Integral, and Free
Probability}, Int. Math. Res. Not.
\href{http://doi.dx.org/10.1155/S107379280320917X}{\textbf{17}, 953
(2003).}

\bibitem{Collins}
B. Collins, P. \'{S}niady, {\em Integration with Respect to the Haar
Measure on Unitary, Orthogonal and Symplectic Group}, \cmp
\href{http://doi.dx.org/10.1007/s00220-006-1554-3}{\textbf{264}(3),
773-795 (2006).}


\bibitem{Horodecki}
M. Studzi\'{n}ski, M. Horodecki, and M. Mozrzymas, {\em Commutant
structuture of $U^{\ot n-1}\ot U^*$ transformations},\jpa: Math.
Theor.
\href{http://doi.dx.org/10.1088/1751-8113/46/39/395303}{\textbf{46},
395303 (2013).}

\bibitem{mhs}
M. Mozrzymas, M. Horodecki, M. Studzi\'{n}ski, {\em Structure and
properties of the algebra of partially transposed permutation
operators}, \jmp
\href{http://dx.doi.org/10.1063/1.4869027}{\textbf{55}, 032202
(2014).}

\bibitem{schm}
M. Studzi\'{n}ski, P. \'{C}wikli\'{n}ski, M. Horodecki, M.
Mozrzymas, {\em Group representation approach to $1$-$N$ universal
quantum cloning machines}, \pra
\href{http://dx.doi.org/10.1103/PhysRevA.89.052322}{\textbf{89},
052322 (2014).}

\bibitem{Tay}
M.E. Taylor, {\em Lectures on Lie groups}, preprint.

\bibitem{Pastur2004}
L. Pastur, V. Vasilchuk, {\em On the moments of traces of matrices
of classical groups}, \cmp
\href{http://dx.doi.org/10.1007/s00220-004-1231-3}{\textbf{252},
149-166 (2004).}

\bibitem{Diaconis2001}
P. Diaconis, {\em Linear functionals of eigenvalues of random
matrices}, Trans. Amer. Math. Soc.
\href{http://dx.doi.org/10.1090/S0002-9947-01-02800-8}{\text{353}(7),
2615-2633 (2001)}.

\bibitem{kmra}
K.M.R. Audenaert, {\em A digest on representation theory of the
symmetric group},
\url{http://personal.rhul.ac.uk/usah/080/QITNotes_files/Irreps_v06.pdf}

\bibitem{Bump}
D. Bump, {\em Lie Groups}, Springer-Verlag
New York, Inc. (2004).

\bibitem{Mehta}
M.L. Mehta, {\em Random Matrices}, Elsevier Academic Press (3nd
Edition) (2004).

\bibitem{Turner2017}
J. Turner and J. Morton, A  complete set of invariants for
LU-equivalence of density operators, SIGMA
\href{https://doi.org/10.3842/SIGMA.2017.028}{{\bf13}, 028 (2017).}

\bibitem{Grassl1998}
M. Grassl, M. R\"{o}tteler, T. Beth, Computing local invariants of
quantum-bit systems, \pra
\href{https://doi.org/10.1103/PhysRevA.58.1833}{{\bf58}, 1833
(1998).}

\bibitem{Oszmaniec2024}
M. Oszmaniec, D.J. Brod and E.F. Galv\~{a}o, Measuring relational
information between quantum states, and applications, \njp
\href{https://doi.org/10.1088/1367-2630/ad1a27}{{\bf26}, 013053
(2024).}

\end{thebibliography}
\end{document}